\documentclass{amsart}
\oddsidemargin 0pt
\evensidemargin 0pt
\marginparwidth 20pt
\textwidth 480pt

\textheight 720pt
\topmargin -60pt

\usepackage{pgfplots}
\usepackage{amsmath}
\pgfplotsset{compat=1.18}
\usepackage{sansmath} 
\sansmath
\usepackage{graphicx}
\usepackage{slashed}
\usepackage{enumerate}
\usepackage{dsfont}
\usepackage{mathtools,cancel}
\usepackage[bottom]{footmisc}
\usepackage[scr=rsfso]{mathalfa}
\usepackage{tikz-cd}
\usepackage[hidelinks]{hyperref}
{

}
\usepackage{amsfonts}
\usepackage{psfrag}
\usepackage{color}
\usepackage{mathtools}
\usepackage{pgf,tikz}
\usepackage{mathrsfs}
\usetikzlibrary{arrows}
\usepackage{fancyhdr}
\usepackage{amssymb}
\usepackage{slashed}
\usepackage{enumitem}

\usepackage[T1]{fontenc}
\usepackage[utf8]{inputenc}

\pagestyle{plain}
\newtheorem{definition}{Definition}
\newtheorem{remark}{Remark}

\newtheorem{proposition}{Proposition}[section]

\newtheorem{theorem}{Theorem}[section]
\newtheorem{corollary}{Corollary}[section]
\newtheorem{lemma}{Lemma}[section]

\newcommand{\Lbar}{\underline{L}}
\newcommand{\Hbar}{\underline{H}}
\newcommand{\psig}{\psi_g}
\newcommand{\aln}{(\al \nabla)^i}
\newcommand{\ali}{a^{\frac{i}{2}}}

\newcommand{\twoSuprime}[1]{\lVert #1 \rVert_{L^2(S_{u^{\prime},\ubar})}}

\newcommand{\tbeta}{\beta^R}

\newcommand{\tbetabar}{\betabar^R}
\newcommand{\hsp}{\hspace{.5mm}}
\newcommand{\omegabar}{\underline{\omega}}
\newcommand{\gslash}{\slashed{g}}

\newcommand{\dubarprime}{\hspace{.5mm} \text{d} \ubar^{\prime}}

\newcommand{\intu}{\int_{u_{\infty}}^u}
\newcommand{\intubar}{\int_0^{\ubar}}

\newcommand{\chihat}{\hat{\chi}}

\newcommand{\upr}{\lvert u^\prime \rvert}
\newcommand{\al}{a^{\frac{1}{2}}}
\newcommand{\chibar}{\underline{\chi}}
\newcommand{\chibarhat}{\underline{\hat{\chi}}}
\newcommand{\ubar}{\underline{u}}

\newcommand{\sumitm}{\sum_{i_1+i_2+i_3=i-1}}

\newcommand{\sumif}{\sum_{i_1+i_2+i_3+i_4=i}}
\newcommand{\sumifi}{\sum_{i_1+i_2+i_3+i_4=i-1}}
\newcommand{\sumifim}{\sum_{i_1+i_2+i_3+i_4=i-2}}

\newcommand{\be}{\begin{equation}}
\newcommand{\ee}{\end{equation}}

\newcommand{\bm}{\begin{align}*}
\newcommand{\enm}{\end{align}*}

\newcommand{\supp}{\operatorname{supp}}

\newcommand{\scaleinfinitySu}[1]{\lVert{#1} \rVert_{L^\infty_{(sc)}(S_{u,\ubar})}}

\newcommand{\scaleinfinitySuprimeubarprime}[1]{\lVert{#1} \rVert_{L^\infty_{(sc)}(S_{u^{\prime},\ubar^\prime})}}

\newcommand{\tildetr}{\widetilde{\tr \chibar}}

\newcommand{\Y}{\mathrm{\Upsilon}}

\newcommand{\nablap}{\nabla^{i_1}\psi_g^{i_2}}
\newcommand{\nablapp}{\nabla^{i_1}\psi_g^{i_2+1}}

\newcommand{\twoSu}[1]{\lVert{#1} \rVert_{L^2(S_{u,\ubar})}}

\newcommand{\inftySu}[1]{\lVert{#1} \rVert_{L^{\infty}(S_{u,\ubar})}}
\newcommand{\oneSu}[1]{\lVert{#1}\rVert_{L^1(S_{u,\ubar})}}

\newcommand{\bespeq}{\begin{equation}\begin{split}}
\newcommand{\espeq}{\end{split}\end{equation}}
\newcommand{\scaletwoSu}[1]{\lVert{#1} \rVert_{L^2_{(sc)}(S_{u,\ubar})}}
\newcommand{\scaleoneSu}[1]{\lVert{#1} \rVert_{L^1_{(sc)}(S_{u,\ubar})}}

\newcommand{\scaletwoSuprime}[1]{\lVert{#1} \rVert_{L^2_{(sc)}(S_{u^\prime,\ubar})}}
\newcommand{\scaletwoSuzprime}[1]{\lVert{#1} \rVert_{L^2_{(sc)}(S_{u^\prime,0})}}

\newcommand{\scaleoneSuprimeubarprime}[1]{\lVert{#1} \rVert_{L^1_{(sc)}(S_{u^\prime,\ubar^\prime})}}
\newcommand{\scaletwoSuprimeubarprime}[1]{\lVert{#1} \rVert_{L^2_{(sc)}(S_{u^\prime,\ubar^\prime})}}

\newcommand{\scaletwoSuubarprime}[1]{\lVert{#1} \rVert_{L^2_{(sc)}(S_{u,\ubar^\prime})}}
\newcommand{\scaletwoSuzubarprime}[1]{\lVert{#1} \rVert_{L^2_{(sc)}(S_{u_{\infty},\ubar^\prime})}}

\newcommand{\LpSu}[1]{\lVert #1 \rVert_{L^p(\Suu)}}
\newcommand{\LinftySu}[1]{\lVert #1 \rVert_{L^{\infty}(\Suu)}}
\newcommand{\LpHu}[1]{\lVert #1 \rVert_{L^p(\Hu)}}
\newcommand{\LpHbaru}[1]{\lVert #1 \rVert_{L^p(\Hbu)}}

\newcommand{\alphabar}{\underline{\alpha}}
\newcommand{\betabar}{\underline{\beta}}
\newcommand{\etabar}{\underline{\eta}}

\newcommand{\scaletwoHzero}[1]{\lVert{#1} \rVert_{L^2_{(sc)}(H_{u_{\infty}}^{(0,\underline{u})})}}
\newcommand{\scaletwoHu}[1]{\lVert{#1} \rVert_{L^2_{(sc)}(H_u^{(0,\underline{u})})}}
\newcommand{\scaletwoHbarzero}[1]{\lVert{#1} \rVert_{L^2_{(sc)}(\underline{H}_{0}^{(u_{\infty},u)})}}
\newcommand{\scaletwoHbaru}[1]{\lVert{#1} \rVert_{L^2_{(sc)}(\underline{H}_{\underline{u}}^{(u_{\infty},u)})}}

\newcommand{\duprime}{\hspace{.5mm} \text{d}u^{\prime}}

\newcommand{\Suu}{S_{u,\ubar}}

\newcommand{\aupr}{\frac{a}{\upr^2}}

\newcommand{\nablat}{\nabla^{i_3}}

\newcommand{\nablaf}{\nabla^{i_4}}

\newcommand{\Hodge}[1]{\prescript{*}{}{#1}}

\def\Hb {\underline{H}}



\def\Hu{H_u^{(0,\underline{u})}}
\def\Hbu{\underline{H}_{\underline{u}}^{(u_{\infty},u)}}

\renewcommand{\div}{\mbox{div }}
\newcommand{\curl}{\mbox{curl }}


\newcommand{\tr}{\mbox{tr}}


\newcommand\restri[2]{{
		\left.\kern-\nulldelimiterspace 
		#1 
		\right|_{#2} 
}}

\definecolor{ffqqqq}{rgb}{1.,0.,0.}
\definecolor{uuuuuu}{rgb}{0.26666666666666666,0.26666666666666666,0.26666666666666666}

\delimitershortfall=-0.1pt
	
\makeatletter
\def\ps@pprintTitle{%
  \let\@oddhead\@empty
  \let\@evenhead\@empty
  \let\@oddfoot\@empty
  \let\@evenfoot\@oddfoot
}
\def\@author#1{\g@addto@macro\elsauthors{\normalsize%
    \def\baselinestretch{1}%
    \upshape\authorsep#1\unskip\textsuperscript{%
      \ifx\@fnmark\@empty\else\unskip\sep\@fnmark\let\sep=,\fi
      \ifx\@corref\@empty\else\unskip\sep\@corref\let\sep=,\fi
      }%
    \def\authorsep{\unskip,\space}%
    \global\let\@fnmark\@empty
    \global\let\@corref\@empty  
    \global\let\sep\@empty}%
    \@eadauthor={#1}
}
\makeatother
\delimitershortfall=-0.1pt

\begin{document}

\title{Dynamical Formation of Black Holes due to Boundary Effect in Vacuum}

\author{Puskar Mondal}\footnote{e-mail:pushkarmondal@gmail.com}
\author{Shing-Tung Yau} 

\maketitle

\begin{abstract}
\noindent  We prove the dynamical formation of a marginally outer trapped surface in pure vacuum spacetime from smooth asymptotically flat Cauchy data which initially contain no MOTS. The mechanism is a boundary effect rather than a collapse mechanism. We work in a Cauchy--double-null framework and use Yau's boundary criterion \cite{yau}, which gives the existence of an interior MOTS from a lower bound for the generalized boundary mean curvature relative to the Schoen--Yau radius of the domain. We construct an explicit class of vacuum initial data for which this criterion is strictly subcritical on the initial hypersurface, while the Einstein evolution drives the same domain into the supercritical regime. More precisely, a mild incoming gravitational radiation field increases the generalized boundary mean curvature of an isotropically large interior region sufficiently to force the formation of a MOTS in its future development. A characteristic feature of the initial data is a large interior anisotropic curvature component: the trace-free Ricci curvature is of larger order than the scalar curvature, which is balanced at the vacuum constraint scale. Thus the MOTS forms not from matter concentration or standard gravitational collapse, but from the interaction between boundary geometry, large-scale interior geometry, and the vacuum Einstein dynamics. This gives a rigorous realization of a long-suspected physical idea that apparent horizons may form from global geometric effects in vacuum general relativity.

\end{abstract}

\setcounter{tocdepth}{2}
{\hypersetup{linkcolor=black}
\small
\tableofcontents
}

\section{Introduction and the Main Theorem}
\noindent In this article, we consider the $3+1$ dimensional pure vacuum Einstein's equation and investigate the issue of large data semi-global existence and dynamical existence of an MOTS. Consider a $3+1$ dimensional globally hyperbolic $C^{\infty}$ connected oriented Lorentzian manifold $(\mathcal{M},g)$. The vacuum Einstein equations correspond to the vanishing of the Ricci curvature of $(\mathcal{M},g)$
\begin{align}
\label{eq:1}
 \text{Ric}[g]=0   
\end{align}
and therefore the \textit{free} gravity is described by the Weyl curvature components of the spacetime Riemann curvature. In the context of these equations, the formation of a black hole is one of the central issues and deserves deep attention from a rigorous analytic perspective. We briefly recall the following historical note 
\subsection{Background} \noindent Marginally outer trapped surfaces (MOTSs) have played a central role in General Relativity since the mid--twentieth century. Although Schwarzschild’s solution was discovered in 1915, its global causal structure was understood only decades later, when it became clear that it contains a region $\mathcal B$ from which no signal can reach future null infinity $\mathcal I^{+}$, and in which all timelike observers encounter geodesic incompleteness in finite proper time.\footnote{See Sbierski \cite{Sb} for a sharp formulation showing divergence of tidal forces along incomplete timelike geodesics.}

\noindent For many years, these features were widely regarded as artifacts of the high degree of symmetry of the Schwarzschild metric, rather than as generic consequences of the Einstein equations.\footnote{At that stage, a rigorous formulation of the Einstein initial value problem was not yet available.} This view was decisively overturned by Penrose’s incompleteness theorem in the 1960s, which showed that spacetime singularities arise under minimal and physically natural assumptions. Central to this breakthrough was Penrose’s introduction of \emph{trapped surfaces} \cite{P73}, a quasi-local geometric notion encoding the irreversible focusing of null geodesics. Modern developments place marginally outer trapped surfaces at the heart of this theory, providing the natural boundary between dispersive and trapped gravitational dynamics.

\begin{definition}
Given a $(3+1)$- dimensional Lorentzian manifold $(\mathcal{M}, g)$, a closed spacelike $2-$surface $S$ is caled \textbf{trapped} if the following two fundamental forms $\chi$ and $\chibar$ have everywhere pointwise negative expansions on $S$:

\[  \chi(X,Y) := g(D_X L, Y), \hspace{2mm} \chibar(X, Y) := g(D_X \Lbar, Y).    \]Here $D$ denotes the Levi-Civita connection of $g$, $L$ and $\Lbar$ denote a null basis of the 2-dimensional orthogonal complement of $T_p S$ in $T_p \mathcal{M}$, extended as smooth vector fields and $X, Y$ are arbitrary $S-$tangent vector fields. 
\end{definition}
\noindent In other words, a surface is called trapped if both $\tr\chi$ and $\tr\chibar$ are pointwise negative everywhere on $S$. These traces signify the infinitesimal changes in area along the null generators normal to $S$, whence one can interpret trapped surfaces as closed, spacelike $2-$surfaces that infinitesimally decrease in area "along any possible future direction". 

\noindent Closely related to the trapped surface is the notion of \textit{MOTS}. The definition of an MOTS differs from that of a trapped surface by the fact that the trace $\tr\chi$ of the null outgoing second fundamental form $\chi$ vanishes point-wise, while $\tr\chibar$, the trace of the null incoming second fundamental form, is point-wise negative. The formal definition is as follows 
\begin{definition}
Given a $(3+1)$- dimensional Lorentzian manifold $(\mathcal{M}, g)$, a closed spacelike $2-$surface $S$ is caled \textbf{MOTS} if the fundamental forms $\chi$ and $\chibar$ have everywhere zero and negative expansions on $S$, respectively i.e., 
\begin{align}
 \tr\chi=0,~\tr\chibar<0~on~S.   
\end{align}
\end{definition}
\noindent Formally, a MOTS can be interpreted as the outermost boundary of a domain containing closed trapped surfaces in a Cauchy slice. 

\par\noindent The incompleteness theorem is now presented.

\begin{theorem}[Penrose Incompleteness] Let $(\mathcal{M} ,g)$ be a spacetime containing a non-compact Cauchy hypersurface. If $(\mathcal{M}, g)$ moreover satisfies the null energy codition and contains a closed trapped surface, it is geodesically incomplete. 

\end{theorem}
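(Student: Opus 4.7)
The plan is to argue by contradiction: assume that $(\mathcal{M},g)$ is future null geodesically complete, and derive a contradiction with the existence of a non-compact Cauchy hypersurface $\Sigma$. The strategy follows Penrose's classical argument, combining a focusing estimate from the Raychaudhuri equation with a global topological statement about achronal boundaries.

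First, I would study the congruence of future-directed null geodesics emanating orthogonally from the trapped surface $S$, in both the outgoing ($L$) and incoming ($\Lbar$) directions, parametrized by an affine parameter $\lambda$. Along each such generator, the expansion $\theta(\lambda)$ of the corresponding null Jacobi field satisfies the Raychaudhuri equation
\begin{equation}
\frac{d\theta}{d\lambda} = -\frac{\theta^2}{2} - |\sigma|^2 - \text{Ric}(k,k),
\end{equation}
where $k$ is the null tangent and $\sigma$ is the shear. The null energy condition gives $\text{Ric}(k,k)\geq 0$, so $d\theta/d\lambda \leq -\theta^2/2$. Since $S$ is compact and trapped, both $\tr\chi$ and $\tr\chibar$ are strictly negative and bounded away from zero on $S$, so there exists $\theta_0<0$ with $\theta(0)\leq\theta_0$ uniformly. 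Integrating the Riccati-type inequality, $\theta$ blows up to $-\infty$ within affine parameter at most $2/|\theta_0|$, so every null generator orthogonal to $S$ develops a focal point within a uniform affine distance, assuming completeness.

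Next, I would use the standard fact that, beyond a focal point, the null geodesic from $S$ enters the timelike future $I^+(S)$ and hence leaves the achronal boundary $\partial J^+(S)\setminus S$. Since $\partial J^+(S)$ is generated by future-directed null geodesics from $S$ (without past endpoints other than on $S$), it is contained in the image of the compact set $\{(p,v)\in N^+S : \lambda(v)\leq 2/|\theta_0|\}$ under the exponential map, and is therefore compact. Global hyperbolicity moreover implies that $\partial J^+(S)$ is a closed, achronal, embedded topological $3$-manifold without boundary.

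Finally, I would push $\partial J^+(S)$ onto the non-compact Cauchy hypersurface $\Sigma$ by flowing along the integral curves of any smooth timelike vector field (such vector fields exist by time-orientability, and each integral curve intersects $\Sigma$ exactly once because $\Sigma$ is Cauchy). This defines a continuous injection $\Phi : \partial J^+(S)\to\Sigma$, whose image is compact, hence closed in $\Sigma$, and open by invariance of domain (both sides are topological $3$-manifolds). Since $\Sigma$ is connected and non-compact, a non-empty clopen proper subset cannot exist, a contradiction. The main obstacle, in my view, is the topological step: making rigorous both the manifold structure of $\partial J^+(S)$ and the openness of $\Phi(\partial J^+(S))$ in $\Sigma$ via invariance of domain, which is where one truly needs the non-compact Cauchy hypothesis — the Raychaudhuri focusing is comparatively routine once the trapping is quantified.
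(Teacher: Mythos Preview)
Your sketch is the classical Penrose argument and is correct in outline; the paper itself does not supply a proof of this theorem at all --- it is stated purely as historical background and motivation (with citation to \cite{P73}), and the paper's own contributions begin only afterward. So there is no ``paper's proof'' to compare against: your write-up is precisely the standard proof one would find in a causal-structure textbook, and the paper implicitly defers to that literature.
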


\noindent The existence of an MOTS (and trapped surfaces contained in it) is a stable feature in the context of dynamics. Indeed, sufficiently small perturbations of Schwarzschild initial data must also contain such surfaces, by Cauchy stability. As such, incompleteness is not an accident, but rather a recurring theme in the dynamics of the Einstein equations. 

\vspace{3mm}

\noindent In this article, we focus on the study by S-T Yau \cite{yau} and proving that indeed an MOTS can form in a dynamical manner, starting from a regular configuration. This is a completely new idea in the sense that we want to obtain the existence of a MOTS as a result of boundary effects in a Cauchy slice that is a product of evolution according to the vacuum Einstein equations. The key idea is that if a three-dimensional manifold M has a
boundary with strongly positive mean curvature, the effect of this mean curvature can influence the internal geometry of $M$. Let us make this idea precise. Let $\mathcal{M}_{t}$ be a domain with boundary $S_{t}$ in the Cauchy slice $\mathcal{M}_{t}$. On the Cauchy slice $\mathcal{M}_{t}$, the Einstein constraint equations 
\begin{align}
\text{Scal}[h]-|k|^{2}+(\tr_{h}k)^{2}=0,~
\div_{h}(k-\tr_{h}kh)=0
\end{align}
are verified, where $h$ is the induced metric on the slice $\mathcal{M}_{t}$ and $k$ is second fundamental form of this slice. Here $\text{Scal}[h]$ denotes the scalar curvature of the metric $h$. The boundary $S_{t}$ of the domain $\mathcal{M}_{t}\subset \mathcal{M}_{t}$ is co-dimension $2$ in the spacetime $(\mathcal{M},g)$ and hence possesses a time-like and a spacelike second fundamental form. Let $(e_{T},e_{S})$ be the orthonormal pair spanning the tangent bundle of $S_{t}$ where $e_{T}$ is time-like and $e_{S}$ is spacelike. We define the spacelike and timelike second fundamental form of $S_{t}$ as 
\begin{align}
s_{AB}:=\langle \nabla_{e_{A}}e_{S},e_{B}\rangle, k_{AB}:=\langle \nabla_{e_{A}}e_{T},e_{B}\rangle,~A=1,2    
\end{align}
where $\{e_{A}\}_{A=1,2}$ are the orthonormal frame tangential to $S_{t}$. If the induced metric on $S_{t}$ is denoted by $\Sigma_{AB}$, then $H=s_{AB}(\Sigma^{-1})^{AB}$ is the spacelike mean curvature of $S_{t}$ while $\kappa:=k_{AB}(\Sigma^{-1})^{AB}$ is the time-like mean curvature of $S_{t}$ also same as the trace of the restriction of second fundamental form $k_{ij}$ of $\mathcal{M}_{t}$ to $S_{t}$. In addition, we also recall the notion of radius of $\mathcal{M}_{t}$ defined by \cite{yau}. The two entities that we are interested are $H-|\kappa|$ of $S_{t}$ and the $H-$radius or Schoen-Yau radius of a domain $\mathcal{M}_{t}$ with boundary $S_{t}$. In particular, there is a sharp threshold on $H-|\kappa|$ of $S_{t}:=\partial \mathcal{M}_{t}$ in terms of $\text{Rad}(\mathcal{M}_{t})$ that allows for an existence (and non-existence) of a MOTS inside $\widetilde{\mathcal{M}}_{1}$-see theorem \ref{motivation} for the specifics. The vital question that arises ``\textit{Can one realize this condition of \cite{yau} for the existence of MOTS in the interior of a domain in a dynamical manner}?" 
We answer affirmatively to this question in this article. Let us state the main theorem that we prove in this article 
\begin{theorem}[Main Theorem]
\label{thm:main}
There exists an open class of smooth, asymptotically flat vacuum initial data for the Einstein equations, containing no trapped surfaces and no marginally outer trapped surfaces, with the following property.

\noindent The corresponding maximal Cauchy development admits a spacelike hypersurface $\Sigma_{t_*}$, reached in finite proper time $t_*>0$, that contains a compact embedded two--sphere $S\subset \Sigma_{t_*}$ satisfying
\[
\tr\chi(S)=0, \tr\chibar(S)<0
\]
that is, $S$ is a marginally outer trapped surface.

\noindent The initial data are not assumed to be small in any global norm. The formation of $S$ occurs dynamically and prior to any breakdown of the spacetime, and is driven by a boundary--induced concentration mechanism: along a suitable interior region, the generalized mean curvature
\[
c := \sup_{S}\bigg(H-|\kappa|\bigg)
\]
crosses the Yau threshold \ref{motivation} while the intrinsic radius of the region remains uniformly large.

\noindent In particular, the marginally outer trapped surface is not generated by forcing the outgoing null expansion to become negative, but instead arises through a dynamic realization of the Yau boundary criterion \ref{motivation} in vacuum general relativity.
\end{theorem}

\begin{center}
\begin{figure}
\begin{center}
\includegraphics[width=15cm,height=68cm,keepaspectratio,keepaspectratio]{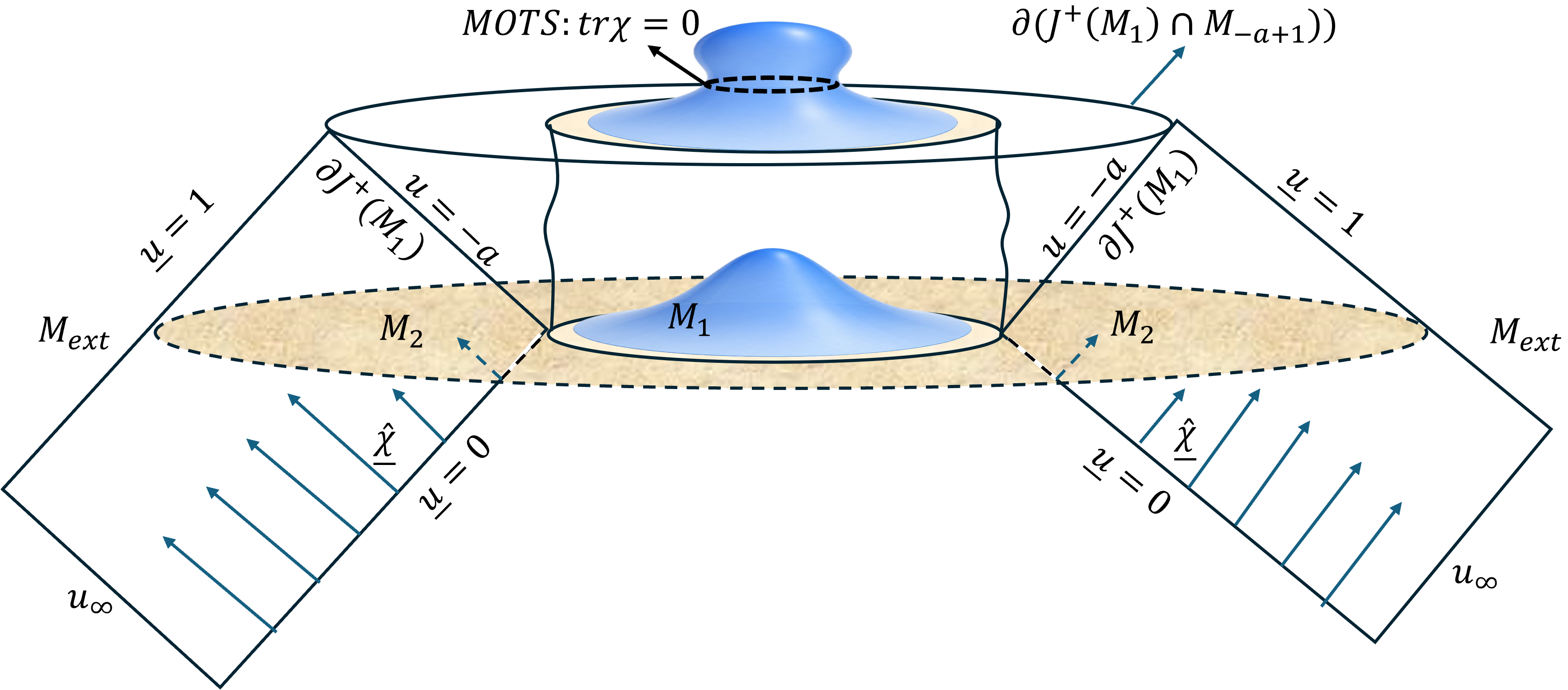}
\end{center}
\begin{center}
\caption{Schematic of the boundary mechanism yielding a MOTS through the evolution of pure vacuum Einstein equations. The main idea is to maintain  Yau's generalized mean curvature along $u=-a$ while increasing the thickness of the interior $\mathcal{M}_{1}$ uniformly in its causal future under vacuum Einstein evolution. The remarkable fact is that the interior $\mathcal{M}_{1}$ is isotropically thick. It is natural to expect a mass-length relation for the black hole formation since, due to the scale invariance of the background (Minkowski), there is no lower bound on the mass for black hole formation. The well-known expectation is that the concentration of sufficiently large mass in a suitably defined small enough domain would lead to the formation of a black hole. The main point of this study is that a large mass can be distributed within an isotropically large domain and still form MOTS.}
\label{fig:2}
\end{center}
\end{figure}
\end{center}

\noindent We begin by contrasting our result with the seminal work of Christodoulou~\cite{christodoulou}.  
In a double null framework, Christodoulou showed that sufficiently strong incoming gravitational radiation at past null infinity—quantified through the largeness of the shear $\hat\chi$ in a suitable norm—forces the formation of a trapped surface in the interior purely through vacuum Einstein evolution.

\noindent The mechanism considered in the present work is fundamentally different from the gravitational collapse via the concentration of gravitational radiation (e.g., the short pulse data technique of Christodoulou \cite{christodoulou}). We do not assume largeness of the incoming shear in the sense of~\cite{christodoulou}, nor do we seek trapped surfaces generated by driving the outgoing null expansion negative. Instead, we introduce a distinct class of initial data and a different dynamical mechanism leading to the formation of a marginally outer trapped surface. The construction exploits a boundary-induced concentration phenomenon, in which the generalized mean curvature crosses the Yau threshold while the intrinsic radius of the relevant region remains uniformly controlled.

\noindent From a technical perspective, the argument rests on three complementary components: the construction of a semi-global solution to a characteristic initial value problem (where focusing of the generalized Yau mean curvature of a topological $2-$sphere is achieved), the local-in-time evolution of a Cauchy problem compatible with this characteristic development, and a quantitative control of the Schoen-Yau radius under the ensuing evolution. This combination of ideas appears to be new and does not arise in previous approaches to trapped surface or MOTS formation in vacuum general relativity. A corollary of the main theorem is stated below

\begin{corollary}[Black hole formation without trapped-surface initial data]
\label{cor:blackhole}
There exists an open class of smooth, asymptotically flat vacuum initial data for the Einstein equations that contain no trapped surfaces and no marginally outer trapped surfaces, whose maximal Cauchy development nevertheless contains an apparent horizon.

\noindent More precisely, for such initial data, the spacetime admits a marginally outer trapped surface that forms dynamically in finite time, and hence the spacetime is causal future incomplete.
\end{corollary}

\noindent It is natural to expect a mass-length relation for the black hole formation since, due to the scale invariance of the background (Minkowski), there is no lower bound on the mass for black hole formation. This issue is deeply tied to the fact that $ADM$ mass is quite rough in terms of regularity, and hence the positive mass theorem is not stable in a smooth sense (one can have black holes with arbitrarily small masses). The well-known expectation is that the concentration of sufficiently large mass in a suitably defined small enough domain would lead to the formation of a black hole. The main point of this study is that a large mass can be distributed within an isotropically large domain and still form MOTS.

\noindent The formation of the marginally outer trapped surface occurs strictly prior to any breakdown of the spacetime and does not arise from driving the outgoing null expansion negative via concentration of curvature. Instead, the mechanism is geometric: along a dynamically evolving spacelike domain $\Omega_t\subset\Sigma_t$, the generalized boundary mean curvature $H-|\kappa|$ crosses the Schoen--Yau threshold \eqref{motivation} while the intrinsic radius $\text{Rad}(\Omega_t)$ remains uniformly controlled.

\noindent This provides the first realization, within vacuum general relativity, of a horizon formation mechanism driven by boundary geometry rather than gravitational collapse in the sense of Christodoulou. In particular, the marginally outer trapped surface arises without requiring large incoming radiation, large null shear, or short-pulse initial data. The following three are the main points that are addressed through our study, where we take this different approach of dynamical formation of the apparent horizon \\ 
(a) It decouples horizon formation from collapse.\\
(b) It allows for quasi-stationary or mild dynamics to trigger a horizon.\\
(c) It provides a rigorous PDE realization of a long-suspected idea in GR:
horizons can form because of global geometry, not just quasi-local gravity or matter energy density.

\noindent From a physical point of view, this study possesses the potential to explain an important open problem in physics: the formation of supermassive black holes (mass $\approx 10^{10}M_{\odot}$). In particular, the possibility of formation through short pulse gravitational collapse as in \cite{christodoulou} is ruled out, and our mechanism provides a potential route for such formation.  

\noindent \section{Overview and strategy of the proof}

\noindent The first part of our argument is to establish a semi-global development of regular characteristic initial data free of trapped surfaces or MOTS. Let $(\mathcal{M},g)$ be a smooth, time-oriented, four-dimensional Lorentzian manifold. 
We assume that $(\mathcal{M},g)$ admits a smooth double null foliation in the following sense: 
there exist smooth optical functions
\[
(u,\ubar) : \mathcal{M} \to \mathbb{R} \times \mathbb{R}
\]
such that for each fixed value of $u$ (respectively $\ubar$), the level set
\[
H_u := \{ p \in \mathcal{M} \,:\, u(p)=\mathrm{const} \},
\qquad
\Hbar_{\ubar} := \{ p \in \mathcal{M} \,:\, \ubar(p)=\mathrm{const} \}
\]
is a smooth null hypersurface, which we refer to as an outgoing (respectively incoming) null hypersurface.

\noindent For each pair $(u,\ubar)$ for which $H_u \cap \Hbar_{\ubar} \neq \emptyset$, we define
\[
S_{u,\ubar} := H_u \cap \Hbar_{\ubar}.
\]
We assume that $S_{u,\ubar}$ is a smooth, embedded, spacelike $2$--surface diffeomorphic to $\mathbb{S}^2$, 
and we denote by $\slashed{\gamma}_{u,\ubar}$ the Riemannian metric on $S_{u,\ubar}$ induced by $g$. 
When convenient, we abbreviate $\Sigma_{u,\ubar}:=(S_{u,\ubar},\slashed{\gamma}_{u,\ubar})$. To define angular coordinates on each $S_{u,\ubar}$ in a smooth way, we begin by defining angular coordinates	on $S_{u_{\infty},0}$. Since this is a standard 2-sphere in Minkowki space, we can use the stereographic projection coordinates $(\theta^1, \hsp \theta^2)$ on $S_{u_{\infty},0}$. We first extend this coordinate to the whole of $\Hbar_0$ by insisting that $\slashed{\mathcal{L}}_{\Lbar}\theta^A = 0$ on $\Hbar_0$ for $A=1, \hsp 2$ and then to the whole spacetime by insisting that, for all $u$, $\slashed{\mathcal{L}}_L \theta^A =0$, where $L$ initially starts normal to some $S_{u,0}$. As such we have established a coordinate system $(u,\ubar, \theta^1, \theta^2)$ in a neighbourhood of the initial sphere. In these coordinates, the vectors $e_3, \hsp e_4$ become

\[e_3 = \Omega^{-1}\left( \frac{\partial}{\partial u} + b^A \hsp \frac{\partial}{\partial \theta^A}\right), \hspace{2mm} e_4 = \Omega^{-1} \frac{\partial}{\partial \ubar} \] and the metric now takes the following form:

\be g= -2\Omega^2 \left(\text{d}u \otimes \text{d}\ubar + \text{d}\ubar \otimes \text{d}u \right) + \gslash_{AB} \left(\text{d}\theta^A - b^A \text{d}u\right)\otimes\left(\text{d}\theta^B- b^B \text{d}u\right)  \ee The section that maps $p\in \mathcal{M} \mapsto \left(\restri{\theta^1}{p},\restri{\theta^2}{p} \restri{e_3}{p}, \restri{e_4}{p}\right)$ is the double null gauge that we want. We begin by decomposing curvature components and Ricci coefficients with respect to the frame $(e_1, e_2,e_3,e_4)$. 
    Let $A, B$ take values in $\begin{Bmatrix} 1,2 \end{Bmatrix}$.

\medskip\noindent Let us recall the definitions of the connection coefficients in this double null framework   \[  \chi_{AB} := g(D_A e_4, \hsp e_B), \hspace{2mm} \chibar_{AB} := g(D_A e_3, e_B),            \] \[   \eta _A := -\frac{1}{2} g(D_A e_3 , \hsp e_4), \hspace{2mm} \etabar_A := -\frac{1}{2} g(D_A e_4, e_3), \]\[  \omega := -\frac{1}{4} g(D_4 e_3 ,\hsp e_4),  \hspace{2mm} \omegabar := -\frac{1}{4} g(D_3 e_4, \hsp e_3),                \]\[ \zeta_A := \frac{1}{2} g(D_A e_4, \hsp e_3).  \]Moreover, if $\gamma$ denotes the induced metric on $S_{u,\ubar}$, we make the following further decomposition:

    \[ \chi = \chihat + \frac{1}{2} \tr\chi  \gamma, \hspace{2mm} \chibar = \chibarhat + \frac{1}{2} \tr\chibar \hsp \gamma.    \]
The Weyl curvature components read
 \[ \alpha_{AB} := W(e_A, e_4, e_B, e_4), \hspace{2mm} \alphabar_{AB}:= W(e_A, e_3, e_B, e_3),       \] \[ \beta_A := \frac{1}{2} W(e_A, e_4 , e_3 ,e_4), \hspace{2mm} \betabar_A := \frac{1}{2} W(e_A, e_3, e_3, e_4),     \] \[  \rho := \frac{1}{4} W(e_3, e_4, e_3, e_4), \hspace{2mm} \sigma  = \frac{1}{4} \Hodge{W}(e_3, e_4, e_3 ,e_4).      \]

\noindent We prescribe characteristic initial data on two intersecting null hypersurfaces, namely on the incoming null hypersurface
\[
\Hbar_{0} := \{ \ubar = 0 \}
\]
and the outgoing null hypersurface
\[
H_{u_\infty} := \{ u = u_\infty \},
\]
where $u_\infty \in \mathbb{R}$ is fixed. In addition, the Cauchy data is provided in the interior Cauchy slice $\mathcal{M}_{1}$ as in the diagram \ref{fig:1}.
The hierarchy and size of the data that we impose on $H_{u_\infty} \cup \Hbar_{0}$ differ in an essential way 
from those considered in \cite{A19,An,AL17,NMY1,NMY2}. 
In particular, we allow large (nonperturbative) radiation along one null direction, 
while retaining only a degenerate smallness along the transverse direction. 
This constitutes the first principal new feature of the present work.

\medskip

\noindent From these data, we construct a semi-global causal development, which we denote by
\[
D_{a,1} := [u_\infty,-a] \times [0,1] \times S,~D^{'}:=[u_{\infty},-a-\frac{1}{a}]\times [0,1]\times S
\]
where $S \simeq \mathbb{S}^2$ is a fixed reference two--sphere, $a \gg 1$ is a large parameter, and $\epsilon>0$ is a small parameter. 
The notation above is to be interpreted as follows: the variables $(u,\ubar)$ range in the rectangle
\[
u_\infty \le u \le -a,
\qquad
0 \le \ubar \le 1,
\]
and the angular variables range in $S$. 
Hence $D_{a,1}$ is the spacetime region covered by the portion of the double null foliation determined by these bounds. 
A schematic representation of $D_{a,1}$ is provided in Figure~\ref{fig:1}.

\medskip

\noindent We now introduce the associated canonical spacelike foliation. 
Define the function
\[
t := u + \ubar.
\]
For each constant $t \in \mathbb{R}$, we denote by
\[
\mathcal{M}_{t} := \{ p \in \mathcal{M} \,:\, u(p) + \ubar(p) = t \}
\]
the corresponding level set. 
We assume that, for the range of $t$ under consideration, $\mathcal{M}_{t}$ is a smooth spacelike Cauchy hypersurface 
for the relevant portion of $(\mathcal{M},g)$; in particular, $\{ \mathcal{M}_{t} \}$ defines a spacelike foliation compatible with the above double null foliation.

\medskip

\noindent Fix $a \gg 1$ as above and consider the slice $\mathcal{M}_{t=-a}$. 
We define the intersection of this slice with the semi-global development by
\[
\mathcal{M}_{2} 
:= 
\mathcal{M}_{t=-a} \cap D_{a,1}.
\]
It is convenient to decompose the entire Cauchy slice $\mathcal{M}_{t=-a}$ into three regions: 
an ``interior'' region $\mathcal{M}_{1}$, lying in the interior; 
the interaction region $\mathcal{M}_{-a}$ defined above; 
and an ``exterior'' region $\mathcal{M}_{\mathrm{ext}}$, lying to the outside of $D_{a,1}$. 
In particular, we write
\begin{equation}\label{eq:Cauchy-decomp}
    \mathcal{M}_{t=-a}
    \;=\;
    \mathcal{M}_{1}
    \;\cup\;
    \mathcal{M}_{2}
    \;\cup\;
    \mathcal{M}_{\mathrm{ext}}.
\end{equation}

\noindent The decomposition \eqref{eq:Cauchy-decomp} isolates the portion $\mathcal{M}_{2}$ of the Cauchy hypersurface 
on which the large characteristic data propagate and interact, 
separating it from the complementary interior and exterior regions of $\mathcal{M}_{t=-a}$. One of the main parts of the current work is the proof of this semi-global development with a new hierarchy of data and concentration of the generalized Yau mean curvature along $u=-a$ (or $u=-a-1/a$) null cone. Now we recall the second ingredient in the construction of \cite{yau}, the $H-$radius of Schoen-Yau radius $\text{Rad}(\Omega)$ of a domain $\Omega$. 
\begin{definition}
\label{Hradius}
Let $\Gamma$ be a simply closed curve in $\Omega$ that bounds a disk in $\Omega$. We let $N_{r}(\Gamma)$ denote the set of points within a distance $r$ of $\Gamma$. We define the $H-$radius or Schoen-Yau radius of $\Omega$ with respect to $\Gamma$ as 
    \[
    \text{Rad}(\Omega,\Gamma):=\sup \{r: \text{dist}(\Gamma,\partial\Omega)>r,~\Gamma \text{does not bound a disk in} N_{r}(\Gamma)\}.
    \]
    We define the Schoen-Yau radius or H-radius of $\Omega$ denoted by $\text{Rad}(\Omega)$ as follows 
    \[
    \text{Rad}(\Omega):=\sup\{\text{Rad}(\Omega,\Gamma):\Gamma \text{as above}\}.
    \]
\end{definition}
\noindent The guiding principle behind the introduction of this radius is to quantify, in a geometrically intrinsic manner, the effective \emph{interior thickness} of a domain $\Omega$, as opposed to its global diameter or volume. More precisely, the radius is designed to discriminate between domains that are uniformly thick in all directions and those that exhibit pronounced geometric anisotropy, such as long, thin, tube-like regions.

\noindent This distinction is already apparent in basic model geometries. If $\Omega \subset \mathbb{R}^{3}$ is a Euclidean ball of radius $R$, then the $H$-radius or  Schoen-Yau radius satisfies
\[
\operatorname{Rad}(\Omega)=\frac{R}{2},
\]
reflecting the fact that the domain possesses comparable thickness in every direction. By contrast, if $\Omega$ is a cylindrical region of the form $\mathbb{S}^{2}_{R}\times(-L,L)$, where $\mathbb{S}^{2}_{R}$ denotes the round sphere of radius $R$, then a direct computation yields
\[
\operatorname{Rad}(\Omega)=\min\!\left(\frac{\pi R}{2},\,L\right).
\]
In particular, in the highly elongated regime $L\gg R$, the $H$--radius is governed entirely by the transverse scale $R$, and is insensitive to the longitudinal extent of the domain. This behavior precisely captures the intended geometric content: the radius detects the maximal scale on which the domain remains uniformly thick, while deliberately ignoring directions along which the geometry degenerates into a thin tube.

\noindent Accordingly, the $H$-radius provides a robust quantitative measure of interior thickness that sharply distinguishes globally round geometries from elongated or neck-like configurations, a feature that will play a crucial role in the subsequent geometric and analytic arguments.
  With these geometric-topological notions, \cite{yau} proves a remarkable theorem on the existence of an MOTS on a Cauchy slice. More precisely, the theorem is as follows 
\begin{theorem}\cite{yau}
\label{motivation}
Let $M$ be a space-like hypersurface in a spacetime. Let
$g_{ij}$ be its induced metric and $k_{ij}$ be its second fundamental form verifying the Einstein constraint equations. Assume that the spacelike mean curvature $H$ of $\partial M$ is strictly greater than its time-like mean curvature $\tr_{\partial M}k=\kappa$. Let $c:=\min_{\partial M}\bigg({H-|\kappa|}\bigg)$. If $c\geq \frac{3\pi}{2\text{Rad}(M)}$, then $M$ must admit a MOTS in its interior.  
\end{theorem}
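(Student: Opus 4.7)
The plan is to use Jang's equation as the main analytic tool, following the Schoen--Yau blow-up philosophy. Recall that Jang's equation on $M$ asks that the graph $\Sigma_f=\{(x,f(x)):x\in M\}\subset M\times\mathbb{R}$, equipped with the Riemannian product metric $g+dt^2$ and the second fundamental form tensor $\tilde k$ obtained by extending $k$ to be constant in the $\mathbb{R}$-direction, satisfy
\[
H_{\Sigma_f}-\mathrm{tr}_{\Sigma_f}\tilde k=0.
\]
The defining virtue of this equation for the present problem is that a vertical blow-up set of a solution (a surface across which $f\to\pm\infty$) is forced to be a marginally outer/inner trapped surface in $M$, i.e.\ an apparent horizon in the sense of the definition above. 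The strategy is therefore to construct a family of solutions whose boundary data tends to infinity and then to argue that either such a blow-up must occur in the interior, producing the desired apparent horizon, or else a sharp geometric inequality involving $\mathrm{Rad}(M)$ is violated.

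First I would set up a Dirichlet barrier at $\partial M$. The hypothesis $H>|\kappa|$ makes $\partial M$ non-characteristic for the Jang operator: the quantity $H-\kappa$ controls an outward barrier while $H+\kappa$ controls an inward one, so for each $n\in\mathbb{N}$ one can solve the regularized equation (adding a term $\varepsilon f_\varepsilon$ to restore coercivity, then letting $\varepsilon\to 0$) with boundary value $f_n|_{\partial M}=n$, obtaining a smooth solution $f_n$ on $M$. The standard Schoen--Yau compactness theory, applied to the sequence of subgraphs of $f_n$ viewed as locally area-minimizing hypersurfaces in $M\times\mathbb{R}$, produces a subsequential smooth limit away from a closed set $\Sigma_\infty\subset M$, and whenever $\Sigma_\infty$ is non-empty its components are smooth compact marginally outer trapped surfaces. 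If $\Sigma_\infty\neq\emptyset$, the outermost component is precisely an apparent horizon in the interior of $M$ and we are done.

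It remains to exclude the alternative $\Sigma_\infty=\emptyset$. In that case, $f_n\to+\infty$ uniformly on compact subsets and the level sets $\{f_n=s\}$ provide a sweep-out of a neighbourhood of $\partial M$ by surfaces whose mean curvatures are constrained by Jang's equation. Testing this foliation against a Jordan curve $\gamma\subset\partial M$ that bounds an embedded disk inside a tube of radius $R<\mathrm{Rad}(M)$, and using a second-variation/stability computation on the corresponding cylindrical model (whose extremal constant is the half-period $\pi/R$ of the associated Jacobi equation, combined with a factor $3/2$ coming from the mixed $H$--$\kappa$ structure of Jang's operator), one obtains the integral inequality
\[
\int_\gamma (H-|\kappa|)\,ds\;\leq\;\tfrac{3\pi}{2R}\cdot\mathrm{length}(\gamma).
\]
Combined with $\min_{\partial M}(H-|\kappa|)\geq \tfrac{3\pi}{2\mathrm{Rad}(M)}$ and letting $R\uparrow\mathrm{Rad}(M)$, this forces a contradiction with the very definition of $\mathrm{Rad}(M)$ as the supremum of tube radii inside which $\gamma$ does not bound an embedded disk.

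The hard part is the simultaneous treatment of boundary barriers and interior blow-up: one must prescribe Dirichlet data $f_n|_{\partial M}=n\to\infty$ and obtain uniform up-to-the-boundary estimates on $f_n$ that do not forbid interior divergence, all while preserving the minimal-surface compactness of the subgraphs in $M\times\mathbb{R}$. The sharp constant $\tfrac{3\pi}{2}$ is the other delicate ingredient; it is not a soft isoperimetric constant but the extremal value of an eigenvalue problem on a half-cylinder, and matching this model to the genuine foliation produced by Jang's equation along $\partial M$ is what actually consumes most of the analytic work.
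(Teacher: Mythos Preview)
This theorem is quoted from \cite{yau} rather than proved in the present paper; the only proof indications the paper gives appear later, in the ``Proof notes for the boundary barrier and the Yau radius threshold'' subsection, where the mechanism is sketched via the weighted brane functional
\[
\mathcal{L}_f(\Sigma)=\int_\Sigma f - c\int_{\Omega_\Sigma} f,
\]
with $f$ a first eigenfunction of a mixed operator with Robin boundary data, minimized over disks $\Sigma$ spanning a Jordan curve $\Gamma\subset\partial M$ that (nearly) realizes $\mathrm{Rad}(M)$. The sharp constant $\tfrac{3\pi}{2}$ is produced by the second variation of this functional together with a near-boundary barrier built from the distance function to $\partial M$.

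Your proposal correctly identifies the Jang-equation/blow-up dichotomy as the opening move, and that much is indeed part of Yau's argument: if Jang's equation blows up one reads off a MOTS and is done. The genuine gap is in your treatment of the non-blow-up branch. You propose to extract a sweep-out from level sets $\{f_n=s\}$ and ``test it against a Jordan curve $\gamma\subset\partial M$'', but level sets of a Jang solution are $2$-surfaces in $M$, not curves, and they do not naturally pair against $\gamma$ in the way you describe; nor does the Jang operator by itself produce the factor $\tfrac{3}{2}$ --- your ``$\pi/R$ times $3/2$ from the mixed $H$--$\kappa$ structure'' is heuristic and does not match the actual computation. In Yau's argument one passes (via the global Jang graph, which has controlled scalar curvature) to a Riemannian problem and then minimizes the brane functional $\mathcal{L}_f$ among disks with boundary $\Gamma$; the stability inequality for that minimizer, combined with the Gauss--Bonnet contribution from the disk topology and the $cf$ volume term, is what yields the precise threshold $c\,\mathrm{Rad}(M)\ge \tfrac{3\pi}{2}$. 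Without the brane-functional step (or an equivalent replacement) your contradiction in the non-blow-up case is not established.
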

\noindent Note that this is purely a boundary effect in the sense that if the (generalized) boundary mean curvature of a large domain (in the sense of large Schoen-Yau radius) in the Cauchy slice is significantly higher, then an MOTS must exist inside. This leads to the following natural question: \textit{Can one start from a regular configuration free of any MOTSs and form an MOTS in an evolutionary manner in finite time}. This statement needs to be made more precise, such as exactly what it means by a \textit{regular configuration,} etc. We will do this momentarily.

\begin{center}
\begin{figure}
\begin{center}
\includegraphics[width=20cm,height=68cm,keepaspectratio,keepaspectratio]{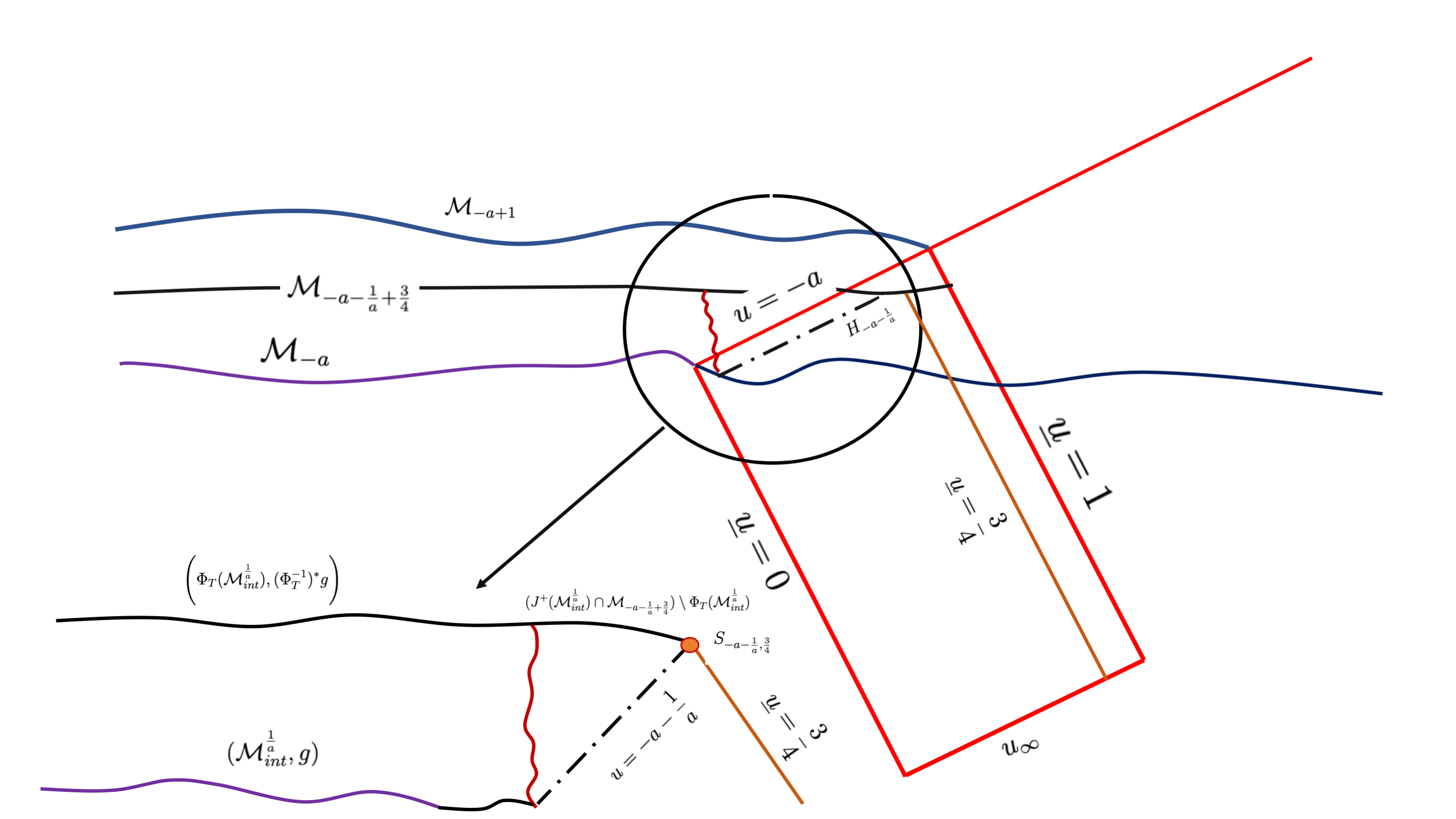}
\end{center}
\begin{center}
\caption{A detailed diagram depicting the mechanism of producing a MOTS in an evolutionary manner: The schematics of the current framework: concentration of the generalized mean curvature $H-|\kappa|>0$ while increasing the radius. Consider the domain $\mathcal{M}^{1/a}_{int}\subset \mathcal{M}_{-a}$. The choice of initial data on this interior and the characteristic evolution are designed to produce quasi-isometry between $(\Phi_{T}(M^{1/a}_{int}),(\Phi^{-1}_{T})^{*}g)$ and $(M^{1/a}_{int},g)$ for $T=-a-1/a+3/4$ up to $\frac{g}{a^{\frac{3}{2}}}$ while the extra collar $\bigg(J^{+}(\mathcal{M}^{1/a}_{int})\cap \mathcal{M}_{-a-1/a+\epsilon}\bigg)\setminus \Phi(\mathcal{M}^{1/a}_{int})_{-a-1/a+\epsilon}$ in the causal future contributes an additional $O(1)$ thickness gain while the generalized mean curvature $H-|\kappa|$ is preserved up to $O(a^{-5/2})$ along $u=-a-1/a$ up to $\ubar=3/4$ triggering the sign flip of the entity $(H-|\kappa|)-\frac{3\pi}{2\text{Rad}(\Omega)}$ from $\mathcal{M}^{1/a}_{int}$ to $J^{+}(\mathcal{M}^{1/a}_{int})\cap \mathcal{M}_{-a-1/a+3/4}$ and subsequent formation of a MOTS inside the later.}
\label{fig:1}
\end{center}
\end{figure}
\end{center}

\noindent \subsection{Yau \cite{yau} criterion for MOTS formation}
\noindent By the semi-global existence result established in the previous section, 
the development $D_{a,1}$ induces canonical Cauchy data on the portion 
$\mathcal{M}_{-a} := \mathcal{M}_{t=-a} \cap D_{a,1}$.
On the complementary regions 
$\mathcal{M}_{1}$ and $\mathcal{M}_{\mathrm{ext}}$,
one retains the freedom to prescribe data independently, 
subject only to the constraint and rigidity conditions imposed by 
the positive mass theorem~\cite{schoenyau1,schoenyau2,schoenyau3}. 
In particular, the data on the exterior region $\mathcal{M}_{\mathrm{ext}}$ 
are far more constrained than those on the interior piece $\mathcal{M}_{1}$, 
and a natural choice is to attach a Kerr exterior with prescribed ADM mass and angular momentum 
via the Corvino--Schoen gluing construction. 
The objective is then to glue the data on the three components
\[
\mathcal{M}_{\mathrm{1}}, \qquad 
\mathcal{M}_{2}, \qquad 
\mathcal{M}_{\mathrm{ext}},
\]
so as to produce a smooth global Cauchy data set on $\mathcal{M}_{t=-a}$ 
satisfying the Einstein constraint equations. Note that in the context of dynamical trapped surface formation, \cite{Li} first constructed an initial data set free of trapped surfaces for the Cauchy problem compatible with Christodoulou's short pulse data type. Later, \cite{AL} constructed an initial data set for the dynamical formation of the MOTS and addressing a spacetime Penrose inequality.
\medskip

\noindent The guiding idea is to choose the initial data on the null cones 
$H_{u_\infty}$, $\Hbar_{0}$, and on the interior Cauchy region $\mathcal{M}_{1}$ 
in such a way that the boundary of $\mathcal{M}_{1}$ fails to satisfy 
the MOTS condition of~\cite{yau}, namely,
\begin{equation}\label{eq:less}
c_{\partial \mathcal{M}_{1}}
:=
\min_{\partial \mathcal{M}_{1}}
\big[H - |\kappa|\big]
<
\frac{3\pi}{2\,\mathrm{Rad}(\mathcal{M}_{1})}.
\end{equation}
The strict inequality~\eqref{eq:less} ensures that 
$\mathcal{M}_{1}$ may not contain any MOTS. We construct this data (with the desired estimates) explicitly (see section \ref{decomp}). In particular, the data in the interior has large H-radius. This prescription is perfectly compatible with Causality since the data in the interior $\widetilde{\mathcal{M}}_{1}$ is not affected by that on $H_{u_{\infty}}$ and $\Hbar_{0}$.
After having proven that throughout the semi--global region $D_{a,1}$ one has 
\[
\tr\chi > 0, \qquad \tr\chibar < 0,
\]
and the fact that the exterior Kerr portion $\mathcal{M}_{\mathrm{ext}}$ 
is also free of closed trapped surfaces, 
it follows that no MOTS occurs anywhere in the spacetime 
before or up to the slice $t=-a$. 

\noindent The vital point to note here is that it is rather difficult to simultaneously control the generalized mean curvature $c$ and the radius $\text{Rad}(M)$ such that $c\cdot \text{Rad}(M)\geq \frac{3\pi}{2}$. This is precisely why we consider the two-step process. Let us revert our attention back to the diagram \ref{fig:1}. The initial data is provided on the hypersurfaces $\ubar=0$ (or $\Hbar_{0}$), $u=u_{\infty}$ (or $H_{u_{\infty}}$), and $\mathcal{M}_{1}$ (or $\widetilde{\mathcal{M}}_{1}$). The strategy that we adopt here is that first, we want to control the generalized mean curvature $c$ along $u=-a$. In the next step, we will control the radius of slice $\mathcal{M}_{1}$ (or in pactice we will consider a modified domain $\mathcal{M}^{1/a}_{int}$) as it evolves in the future. Notice that along $u=-a$, evolution from $s_{-a,0}$ to $S_{-a,\epsilon}$ indicates moving in the outgoing direction and hence naively expect a non-concentration of the generalized mean curvature $c$. The first idea is whether one can preserve the generalized mean curvature along $u=-a$ up to error terms $O(a^{-5/2})$. Then, naturally, one can impose the initial condition on $\mathcal{M}_{1}$ in such a way that \ref{eq:less} holds, but since along $u=-a$, the radius of the bounding Cauchy slices increases, the condition as stated in the theorem \ref{motivation} can be satisfied for $J^{+}(\mathcal{M}^{1/a}_{int})\cap \mathcal{M}_{t=-a-1/a+\epsilon}$. Therefore, the two main challenges in this study are the proof of the semi-global development with the choice of a new hierarchy of initial data and controlling the Cauchy evolution up to time $-a+\epsilon$ for $\epsilon=O(1)$. We illustrate these steps next. 

\medskip

\subsection{Controlling $H-|\tr_{S}k|$}
\noindent Let us now try to understand how this mechanism would unfold in a heuristic manner. First, recall that in the double null gauge that we are considering here 
\begin{align}
e_{4}=\Omega^{-1}\partial_{\ubar},~e_{3}=\Omega^{-1}\bigg(\partial_{u}+b^{A}\partial_{\theta^{A}}\bigg).   
\end{align}
The corresponding connection coefficients that appear in this study are $(\chihat,\tr\chi,\chibarhat,\tr\chibar,\omega,\omegabar,\eta,\etabar)$. In this coordinate, one may explicitly compute the $H-|\kappa|$ for a topological sphere $S_{u,\ubar}$ constituting the double null foliation 
\begin{align}
\label{eq:c}
H-|\kappa|=H-|\tr_{S}K|=\frac{1}{2}\bigg(\tr\chi-\tr\chibar\bigg)-\frac{1}{2}\bigg|\tr\chi+\tr\chibar\bigg|.    
\end{align}
This leads us to directly control $\tr\chi$ and $\tr\chibar$ instead of working with separate evolution equations for $H$ and $\kappa$. 
The main equations that we focus on are the following 
\begin{align}
\label{eq:trchibar}
 \nabla_{3}\tr\chibar+\frac{1}{2}(\tr\chibar)^{2}=-|\chibarhat|^{2}-2\omegabar\tr\chibar,\\
 \nabla_{4}\tr\chi+\frac{1}{2}(\tr\chi)^{2}=-|\chihat|^{2}-2\omega\tr\chi,\\
 \nabla_{4}\hat{\underline{\chi}}+\frac{1}{2}\tr\chi\hat{\underline{\chi}}=\nabla\hat{\otimes}\underline{\eta}+2\omega\hat{\underline{\chi}}-\frac{1}{2}tr\underline{\chi}\hat{\chi}+\underline{\eta}\hat{\otimes}\underline{\eta}
\end{align}
Now, the scaling hierarchy of norm that we will use, the lapse $\Omega$, the shift $b$, and the Ricci coefficients will verify the following estimates 
\begin{align}
\label{eq:finalbound}
||\Omega-1||_{L^{\infty}(S_{u,\ubar})} \lesssim \frac{a^{\frac{1}{2}}}{|u|^{2}},||\omegabar||_{L^{\infty}(S_{u,\ubar})}\lesssim \frac{a^{\frac{1}{2}}}{|u|^{3}},~||\tr\chibar||_{L^{\infty}(S_{u,\ubar})}\lesssim \frac{1}{|u|},~||\widetilde{\tr\chibar}||_{L^{\infty}(S_{u,\ubar})}\lesssim \frac{1}{|u|^{2}},~||b||_{L^{\infty}(S_{u,\ubar})}\lesssim \frac{ a^{\frac{1}{2}}}{|u|^{2}},\\
||\eta,\etabar||_{L^{\infty}(S_{u,\ubar})}\lesssim \frac{a^{\frac{1}{2}}}{|u|^{2}},~||\omega||_{L^{\infty}(S_{u,\ubar})}\lesssim \frac{a^{\frac{1}{2}}}{|u|^{2}},~||\chihat||_{L^{\infty}(S_{u,\ubar})}\lesssim \frac{a^{-\frac{1}{2}}}{|u|}.
\end{align}
and the Weyl curvature scaling 
\begin{align}
    |\alpha|=O(a^{-\frac{1}{2}}|u|^{-1}),~|\beta|=O(|u|^{-2}),~|\betabar|=O(a|u|^{-4}),~|(\rho,\sigma)|=O(a^{\frac{1}{2}}|u|^{-3}),~|\alphabar|=O(a^{\frac{3}{2}}|u|^{-5})
\end{align}
Here, note importantly that we will have to prove that the involved constants in $\lesssim$ depend only on the initial scale-invariant data of the respective Ricci coefficients, and this hierarchy is propagated in the domain of spacetime constructed via semi-global development.  
 With this scaling hierarchy, we can integrate $\tr\chibar$ along the incoming direction, i.e., work with the following equation
\begin{align}
 \nabla_{3}\tr\chibar+\frac{1}{2}(\tr\chibar)^{2}=-|\chibarhat|^{2}-2\omegabar\tr\chibar.   
\end{align}
Now use the fact that $e_{3}=\Omega^{-1}\bigg(\partial_{u}+b^{A}\partial_{\theta^{A}}\bigg)$ and write
\begin{align}
 \nabla_{3}(|u|\tr\chibar)=-\bigg(\tr\chibar+\frac{2}{|u|}\bigg)\frac{|u|\tr\chibar}{2}-|u||\chibarhat|^{2}-2|u|\omegabar\tr\chibar+(1-\Omega^{-1})\tr\chibar   
\end{align}
which, upon integration and using the bounds (\ref{eq:finalbound}) 
one obtains for every $\ubar\in [0,\epsilon]$ (provided one have the semi-global construction already)
\begin{align}
\label{eq:fist1}
 \tr\chibar(-a,\ubar)=\frac{|u_{\infty}|\tr\chibar(u_{\infty},\ubar)}{a}-\frac{1}{a}\int_{u_{\infty}}^{-a}|u^{'}||\chibarhat|^{2}(u^{'},\ubar)du^{'}+\frac{1}{a}\int_{u_{\infty}}^{-a}\bigg(\tr\chibar+\frac{2}{|u^{'}|}\bigg)du^{'}+\mathscr{E},   
\end{align}
where the error term $\mathscr{E}$ is $O(a^{-\frac{5}{2}})$. Now we need to control $\widetilde{\tr\chibar}:=\tr\chibar+\frac{2}{|u|}$. Recall the following equation verified by $\widetilde{\tr\chibar}$
\[
\nabla_{3}\widetilde{\tr\chibar}+\tr\chibar\widetilde{\tr\chibar}=\frac{2}{|u|^{2}}(\Omega^{-1}-1)+|\widetilde{\tr\chibar}|^{2}+2\omegabar\tr\chibar-|\chibarhat|^{2}    
\]
and subsequently 
\[
 \nabla_{3}(|u|^{2}\widetilde{\tr\chibar})=2(\Omega^{-1}-1)+2|u|^{2}\omegabar\tr\chibar-|u|^{2}|\chibarhat|^{2}
\]
which yields 
\begin{align}
\label{eq:first2}
|u|^{2}\widetilde{\tr\chibar}(u,\ubar)=|u_{\infty}|^{2}\widetilde{\tr\chibar}(u_{\infty},\ubar)-\int_{u_{\infty}}^{u}|u^{'}|^{2}|\chibarhat|^{2}du^{'}+O(\frac{a^{\frac{1}{2}}}{|u|}).
\end{align}
Substitute (\ref{eq:first2}) into (\ref{eq:fist1}) and obtain 
\begin{align}
 \tr\chibar(-a,\ubar)=\frac{|u_{\infty}|\tr\chibar(u_{\infty},\ubar)}{a}+\frac{|u_{\infty}|^{2}\widetilde{\tr\chibar}(u_{\infty},\ubar)}{a^{2}}-\frac{1}{a}\int_{u_{\infty}}^{-a}|u^{'}||\chibarhat|^{2}(u^{'},\ubar)du^{'}-\frac{1}{a}\int_{u_{\infty}}^{-a}\frac{1}{|u^{'}|^{2}}\int_{u_{\infty}}^{u^{'}}|u^{''}|^{2}|\chibarhat|^{2}du^{''}du^{'}+\mathscr{E},
\end{align}
where the error term $\mathscr{E}=O(a^{-\frac{5}{2}})$ and negligible for sufficiently large $a\gg 1$.
Now we control $\tr\chi$ through integrating along $\nabla_{4}$ direction. Recall that the Raichoudhury equation 
\begin{align}
\nabla_{4}\tr\chi+\frac{1}{2}(\tr\chi)^{2}=-|\chihat|^{2}-2\omega\tr\chi.    
\end{align}
But the presence of $|\chihat|^{2}$ term with a negative sign can reduce $\tr\chi$. As it turns out, controlling the size of $\epsilon$ and the new hierarchy where $||\chihat||_{L^{\infty}(S_{u,\ubar})}\lesssim a^{-\frac{1}{2}}|u|^{-1}$, the reduction of $\tr\chi$ can be controlled and this potentially dangerous term contributes to a negligible amount.    
Integration and using the bounds \ref{eq:finalbound} yield for $0<\epsilon\leq 1$
\begin{align}
  \tr\chi(u,\epsilon)=\tr\chi(u,0)+\int_{0}^{\epsilon}\bigg(-\frac{1}{2}(\tr\chi)^{2}-|\chihat|^{2}-2\omega\tr\chi\bigg)du^{'}\approx \tr\chi(u,0)-\frac{4\epsilon}{|u|^{2}}.  
\end{align}
Here $\approx$ means up to terms that decay strictly faster than $|u|^{-2}$. Note that these terms can be made much smaller than the $O(|u|^{-2})$ terms and therefore can be ignored at a heuristic level.
Now we are at a place to actually compute $H-|\kappa|$ at $(-a,\epsilon)$ and $(-a,0)$. Using (\ref{eq:c}) and absorbing $\frac{1}{2}$ to the left, we obtain  
\begin{align}
&2\bigg(H-|\tr_{\Sigma}K|\bigg)(-a,\epsilon)=\bigg(\tr\chi-\tr\chibar\bigg)(-a,\epsilon)-\bigg|\tr\chi+\tr\chibar\bigg|(-a,\epsilon)\\\nonumber
&=\bigg(\tr\chi(-a,0)-\frac{C\epsilon}{a^{2}}-\frac{|u_{\infty}|\tr\chibar(u_{\infty},\epsilon)}{a}-\frac{|u_{\infty}|^{2}\widetilde{\tr\chibar}(u_{\infty},\epsilon)}{a^{2}}\\\nonumber
&+\frac{1}{a}\int_{u_{\infty}}^{-a}|u^{'}||\chibarhat|^{2}(u^{'},\epsilon)du^{'}+\frac{1}{a}\int_{u_{\infty}}^{-a}\frac{1}{|u^{'}|^{2}}\int_{u_{\infty}}^{u^{'}}|u^{''}|^{2}|\chibarhat|^{2}du^{''}du^{'}+\mathscr{E}\bigg)\\\nonumber
&-\bigg|\tr\chi(-a,0)-\frac{C\epsilon}{a^{2}}+\frac{|u_{\infty}|\tr\chibar(u_{\infty},\epsilon)}{a}+\frac{|u_{\infty}|^{2}\widetilde{\tr\chibar}(u_{\infty},\epsilon)}{a^{2}}\\\nonumber
&-\frac{1}{a}\int_{u_{\infty}}^{-a}|u^{'}||\chibarhat|^{2}(u^{'},\epsilon)du^{'}-\frac{1}{a}\int_{u_{\infty}}^{-a}\frac{1}{|u^{'}|^{2}}\int_{u_{\infty}}^{u^{'}}|u^{''}|^{2}|\chibarhat|^{2}du^{''}du^{'}+\mathscr{E}\bigg|
\end{align}
and similarly
\begin{align}
2\bigg(H-|\tr_{\Sigma}K|\bigg)(-a,0)
&=\bigg(\tr\chi(-a,0)-\frac{|u_{\infty}|\tr\chibar(u_{\infty},0)}{a}-\frac{|u_{\infty}|^{2}\widetilde{\tr\chibar}(u_{\infty},0)}{a^{2}}\\\nonumber
&+\frac{1}{a}\int_{u_{\infty}}^{-a}|u^{'}||\chibarhat|^{2}(u^{'},0)du^{'}+\frac{1}{a}\int_{u_{\infty}}^{-a}\frac{1}{|u^{'}|^{2}}\int_{u_{\infty}}^{u^{'}}|u^{''}|^{2}|\chibarhat|^{2}du^{''}du^{'}+\mathscr{E}\bigg)\\\nonumber
&-\bigg|\tr\chi(-a,0)+\frac{|u_{\infty}|\tr\chibar(u_{\infty},0)}{a}+\frac{|u_{\infty}|^{2}\widetilde{\tr\chibar}(u_{\infty},0)}{a^{2}}\\\nonumber
&-\frac{1}{a}\int_{u_{\infty}}^{-a}|u^{'}||\chibarhat|^{2}(u^{'},0)du^{'}-\frac{1}{a}\int_{u_{\infty}}^{-a}\frac{1}{|u^{'}|^{2}}\int_{u_{\infty}}^{u^{'}}|u^{''}|^{2}|\chibarhat|^{2}du^{''}du^{'}+\mathscr{E}\bigg|.     
\end{align}
Integrating the $\nabla_{4}$ equation one can control $|\chibarhat|^{2}(u,\epsilon)$ in terms of $|\chibarhat|^{2}(u,0)$. Recall the following $\nabla_{4}$ equation verified by $\chibarhat$
\begin{align}
\nabla_{4}\hat{\underline{\chi}}+\frac{1}{2}\tr\chi\hat{\underline{\chi}}=\nabla\hat{\otimes}\underline{\eta}+2\omega\hat{\underline{\chi}}-\frac{1}{2}\tr\underline{\chi}\hat{\chi}+\underline{\eta}\hat{\otimes}\underline{\eta}    
\end{align}
implying 
\begin{align}
 \nabla_{4}\bigg[|u||\chibarhat|^{2}\bigg]=-|u|\tr\chi|\chibarhat|^{2}+2|u|\chibarhat\nabla\otimes \etabar+4|u|\omega|\chibarhat|^{2}-|u|\tr\chibar\chihat\cdot\chibarhat+2|u|\chibarhat\etabar\otimes\etabar   
\end{align}
which, after integration and utilizing the estimates (\ref{eq:finalbound})
\begin{align}
 |u||\chibarhat|^{2}(u,\epsilon)\approx |u||\chibarhat|^{2}(u,0)+\frac{C\epsilon a^{\frac{3}{2}}}{|u|^{4}}+\frac{C\epsilon a^{\frac{3}{4}}}{|u|^{4}}+\frac{C\epsilon a^{\frac{3}{4}}}{|u|^{3}}+\frac{C\epsilon a^{\frac{7}{4}}}{|u|^{5}}.   
\end{align}
Therefore,  $|u||\chibarhat|^{2}(u,\epsilon)\approx |u||\chibarhat|^{2}(u,0)$ up to a negligible error term at least $O(a^{-5/2})$ since each of the error terms decay much faster than $|u|^{-2}$. Therefore, we observe the following 
\begin{align}
2\bigg(H-|\tr_{\Sigma}K|\bigg)(-a,\epsilon)=&=\bigg(\tr\chi(-a,0)-\frac{4\epsilon}{a^{2}}-\frac{|u_{\infty}|\tr\chibar(u_{\infty},0)}{a}-\frac{|u_{\infty}|^{2}\widetilde{\tr\chibar}(u_{\infty},0)}{a^{2}}\\\nonumber
&+\frac{1}{a}\int_{u_{\infty}}^{-a}|u^{'}||\chibarhat|^{2}(u^{'},0)du^{'}+\frac{1}{a}\int_{u_{\infty}}^{-a}\frac{1}{|u^{'}|^{2}}\int_{u_{\infty}}^{u^{'}}|u^{''}|^{2}|\chibarhat|^{2}du^{''}du^{'}+\mathscr{E}\bigg)\\\nonumber
&-\bigg|\tr\chi(-a,0)-\frac{4\epsilon}{a^{2}}+\frac{|u_{\infty}|\tr\chibar(u_{\infty},0)}{a}+\frac{|u_{\infty}|^{2}\widetilde{\tr\chibar}(u_{\infty},0)}{a^{2}}\\\nonumber
&-\frac{1}{a}\int_{u_{\infty}}^{-a}|u^{'}||\chibarhat|^{2}(u^{'},0)du^{'}-\frac{1}{a}\int_{u_{\infty}}^{-a}\frac{1}{|u^{'}|^{2}}\int_{u_{\infty}}^{u^{'}}|u^{''}|^{2}|\chibarhat|^{2}du^{''}du^{'}+\mathscr{E}\bigg|
\end{align}
Now, our goal is to obtain a strictly positive lower bound for $\bigg(H-|\tr_{\Sigma}K|\bigg)(-a,\epsilon)$. But the second term can be potentially problematic. Note on the other hand that we prescribe regular data on $H_{u_{\infty}}$. Given $\chibarhat$ on $\Hbar_{0}$, the remaining Ricci coefficients are determined on $\Hbar_{0}$. Therefore, one ought to integrate the $\nabla_{3}$ transport equation for $\tr\chi$ to estimate $\tr\chi(-a,0)$ using the data $\tr\chi(u_{\infty},0)$.  
Subsequently, we would want 
\begin{align}
\label{eq:trchi}
&\tr\chi(-a,0)+\frac{|u_{\infty}|\tr\chibar(u_{\infty},0)}{a}+\frac{|u_{\infty}|^{2}\widetilde{\tr\chibar}(u_{\infty},0)}{a^{2}}-\frac{4\epsilon}{a^{2}}\\\nonumber
&-\frac{1}{a}\int_{u_{\infty}}^{-a}|u^{'}||\chibarhat|^{2}(u^{'},0)du^{'}-\frac{1}{a}\int_{u_{\infty}}^{-a}\frac{1}{|u^{'}|^{2}}\int_{u_{\infty}}^{u^{'}}|u^{''}|^{2}|\chibarhat|^{2}(u^{'},0)du^{''}du^{'}+\mathscr{E}=O(a^{-\frac{5}{2}})   
\end{align}
which is compatible with the data choice. Now we ought to obtain $\tr\chi(-a,0)$ in terms of its data $\tr\chi(u_{\infty},0)$ since it is prescribed on $u=u_{\infty}$ hypersurface. Direct integration of the equation 
\[
\nabla_{3}\tr\chi+\frac{1}{2}tr\underline{\chi}\tr\chi=2\underline{\omega}\tr\chi+2\text{div}\eta+2|\eta|^{2}+2\rho-\hat{\chi}\cdot \hat{\underline{\chi}}
\]
yields 
\begin{align}
\label{eq:trchi2}
\tr\chi(-a,0)=\frac{|u_{\infty}|\tr\chi(u_{\infty},0)}{a}+2\int_{u_{\infty}}^{-a}|u^{'}|\bigg(\div\eta+\rho-\frac{1}{2}\chihat\cdot\chibarhat\bigg)(u^{'},0)du^{'}+O(a^{-\frac{5}{2}}).
\end{align}
Importantly, $\div\eta+\rho-\frac{1}{2}\chihat\cdot\chibarhat$ is nothing but the negative of the mass aspect function $\mu$. Consequently 
\begin{align}
\tr\chi(-a,0)=\frac{|u_{\infty}|\tr\chi(u_{\infty},0)}{a}-\frac{2}{a} \int_{u_{\infty}}^{-a}|u^{'}|\mu(u^{'},0) du^{'}+O(a^{-\frac{5}{2}}).   
\end{align}
Now this forces the choice of $\tr\chi(u_{\infty},0)$ since we want to keep the data $\chibarhat$ on $\Hbar_{0}$ free. Substituting $\tr\chi(-a,0)$ from (\ref{eq:trchi2}) into (\ref{eq:trchi}) yields
\[
\frac{|u_{\infty}|}{a}\tr\chi(u_{\infty},0)-\frac{4\epsilon}{a^{2}}-\frac{2}{a} \int_{u_{\infty}}^{-a}|u^{'}|\mu(u^{'},0) du^{'} +\frac{|u_{\infty}|\tr\chibar(u_{\infty},0)}{a}+\frac{|u_{\infty}|^{2}\widetilde{\tr\chibar}(u_{\infty},0)}{a^{2}}
\]\[-\frac{1}{a}\int_{u_{\infty}}^{-a}|u^{'}||\chibarhat|^{2}(u^{'},0)du^{'}-\frac{1}{a}\int_{u_{\infty}}^{-a}\frac{1}{|u^{'}|^{2}}\int_{u_{\infty}}^{u^{'}}|u^{''}|^{2}|\chibarhat|^{2}(u^{'},0)du^{''}du^{'}=\mathscr{E},
\]
where $\mathscr{E}=O(a^{-\frac{5}{2}})$.
The idea here is that we prescribe the data on $H_{u_{\infty}}$ in terms of data on $H_{0}$ that is consistent with the null structure equations. This is because once we prescribe $\chibarhat$ on $H_{0}$, the remaining Ricci coefficients are determined. This yields 
\begin{align}
 \bigg(\tr\chi(u_{\infty},0)+\tr\chibar(u_{\infty},0)\bigg)-\frac{4\epsilon}{|u_{\infty}|a^{2}}-\frac{2}{|u_{\infty}|}\int_{u_{\infty}}^{-a}|u^{'}|\mu du^{'}+\frac{a}{|u_{\infty}|}\bigg[\frac{|u_{\infty}|^{2}}{a^{2}}\widetilde{\tr\chibar}(u_{\infty})\\
 -\frac{1}{a}\int_{u_{\infty}}^{-a}|u^{'}||\chibarhat|^{2}(u^{'},0)du^{'}-\frac{1}{a}\int_{u_{\infty}}^{-a}\frac{1}{|u^{'}|^{2}}\int_{u_{\infty}}^{u^{'}}|u^{''}|^{2}|\chibarhat|^{2}(u^{'},0)du^{''}du^{'}\bigg]=\frac{a}{|u_{\infty}|}\mathscr{E}.
\end{align}
One aims for the following for $\widetilde{\tr\chibar}(u_{\infty})$
\begin{align}
\frac{|u_{\infty}|^{2}}{a^{2}}\widetilde{\tr\chibar}(u_{\infty})=\frac{1}{10a}\bigg[\int_{u_{\infty}}^{-a}|u^{'}||\chibarhat|^{2}(u^{'},0)du^{'}+\int_{u_{\infty}}^{-a}\frac{1}{|u^{'}|^{2}}\int_{u_{\infty}}^{u^{'}}|u^{''}|^{2}|\chibarhat|^{2}(u^{'},0)du^{''}du^{'} \bigg]    
\end{align}
since $\chibarhat$ on $\Hbar_{0}$ is free data and therefore 
\begin{align}
\bigg(\tr\chi(u_{\infty},0)+\tr\chibar(u_{\infty},0)\bigg)=\frac{4\epsilon}{|u_{\infty}|a^{2}}+\frac{2}{|u_{\infty}|}\int_{u_{\infty}}^{-a}|u^{'}|\mu (u^{'},0)du^{'}-\frac{9}{10|u_{\infty}|}\bigg[\int_{u_{\infty}}^{-a}|u^{'}||\chibarhat|^{2}(u^{'},0)du^{'}\\\nonumber +\int_{u_{\infty}}^{-a}\frac{1}{|u^{'}|^{2}}\int_{u_{\infty}}^{u^{'}}|u^{''}|^{2}|\chibarhat|^{2}(u^{'},0)du^{''}du^{'} \bigg]+\frac{a}{|u_{\infty}|}\mathscr{E}   
\end{align}
Now, in terms of the scaling (\ref{eq:finalbound}) 
\begin{align}
\int_{u_{\infty}}^{-a}|u^{'}|\mu(u^{'},0) du^{'}=O(a^{-\frac{1}{2}}),~\mathscr{E}=O(a^{-\frac{5}{2}}),~\int_{u_{\infty}}^{-a}|u^{'}||\chibarhat|^{2}(u^{'},0)du^{'}+\int_{u_{\infty}}^{-a}\frac{1}{|u^{'}|^{2}}\int_{u_{\infty}}^{u^{'}}|u^{''}|^{2}|\chibarhat|^{2}(u^{'},0)du^{''}du^{'}=O(a^{-1}),    
\end{align}
which, in light of the estimates
\begin{align}
 |\widetilde{\tr\chibar}(u_{\infty})|=O(|u_{\infty}|^{-2}) ~i.e.,~\bigg|\tr\chibar(u_{\infty})+\frac{2}{|u_{\infty}|}\bigg|=O(|u_{\infty}|^{-2})   
\end{align}
yields the following asymptotics of $\tr\chi$
\begin{align}
 \bigg|\tr\chi(u_{\infty})-\frac{2}{|u_{\infty}|}\bigg|=O(a^{-\frac{1}{2}}|u_{\infty}|^{-1})  
\end{align}
and a more precise value of $\tr\chi(u_{\infty})$ would be 
\begin{align}
 \tr\chi(u_{\infty})=\frac{2}{|u_{\infty}|}+\frac{4\epsilon}{|u_{\infty}|a^{2}}+\frac{2}{|u_{\infty}|}\int_{u_{\infty}}^{-a}|u^{'}|\mu(u^{'},0) du^{'}-\frac{9}{10|u_{\infty}|}\bigg[\int_{u_{\infty}}^{-a}|u^{'}||\chibarhat|^{2}(u^{'},0)du^{'}\\\nonumber +\int_{u_{\infty}}^{-a}\frac{1}{|u^{'}|^{2}}\int_{u_{\infty}}^{u^{'}}|u^{''}|^{2}|\chibarhat|^{2}(u^{'},0)du^{''}du^{'} \bigg]   
\end{align}
up to an error term $O(a^{-\frac{3}{2}}|u_{\infty}|^{-1})$ and $a\gg 1$. Recall the following facts about the characteristic data on $\Hbar_{0}$
\begin{enumerate}
\item The conformal class/metric on the corner sphere $S_{u_\infty,0}$

\item The incoming shear 
$\chibarhat$ \text{prescribed on the null hypersurface } $\underline{H}_{0}$.
\item A gauge normalization, for example:
$\Omega$  fixed on $S_{u_\infty,0}$, together with a choice of the shift vector $b$ along the initial null hypersurface $\Hbar_{0}$.
\end{enumerate}
There data are precisely provided on $\Hbar_{0}$ and on the corner sphere $S_{u_{\infty},0}$. See chapter $2$ of \cite{christodoulou}for the related concepts in their framework. 
This would lead to the following two expressions for the generalized mean curvature
\begin{align}
\bigg(H-|\tr_{\Sigma}K|\bigg)(-a,0)&=-\frac{|u_{\infty}|\tr\chibar(u_{\infty},0)}{a}+\frac{9}{10a}\int_{u_{\infty}}^{-a}|u^{'}||\chibarhat|^{2}(u^{'},0)du^{'}\\&+\frac{9}{10a}\int_{u_{\infty}}^{-a}\frac{1}{|u^{'}|^{2}}\int_{u_{\infty}}^{u^{'}}|u^{''}|^{2}|\chibarhat|^{2}(u^{'},0)du^{''}du^{'}+\mathscr{E}
\end{align}
and 
\begin{align}
\bigg(H-|\tr_{\Sigma}K|\bigg)(-a,\epsilon)&=-\frac{|u_{\infty}|\tr\chibar(u_{\infty},0)}{a}+\frac{9}{10a}\int_{u_{\infty}}^{-a}|u^{'}||\chibarhat|^{2}(u^{'},0)du^{'}\\\nonumber
&+\frac{9}{10a}\int_{u_{\infty}}^{-a}\frac{1}{|u^{'}|^{2}}\int_{u_{\infty}}^{u^{'}}|u^{''}|^{2}|\chibarhat|^{2}(u^{'},0)du^{''}du^{'}+\mathscr{E}    
\end{align}
and the error terms $\mathscr{E}$ decay at least $O(a^{-\frac{5}{2}})$ and therefore negligible compared to $O(a^{-1})$ term $\frac{|u_{\infty}|\tr\chibar(u_{\infty},0)}{a}$ and $O(a^{-2})$ terms involving $\chibarhat$. The two main leading order terms that we are concerned with are the $O(a^{-1})$ term and $O(a^{-2})$ term. In addition note that these estimates are stable up to at least $O(a^{-\frac{5}{2}})$ upon perturbing $u$ up to a factor $1/a$.

\subsection{Interior Cauchy development and compatibility with the characteristic region}

A central component of the construction is the controlled Cauchy evolution of the interior region
$J^{+}(\mathcal M_{1})$ over a fixed time interval of length $O(1)$, starting from initial data that are
pointwise small but supported on a spatial domain of diameter $\sim a$. The data are arranged so as to be
compatible, along a common interface sphere, with the characteristic development constructed in
Section~\ref{semiglobal}, while the far exterior is completed by a Kerr end with prescribed ADM parameters.
We now formulate this precisely.

\medskip

\noindent Let $\mathcal M_{-a}$ be a smooth spacelike Cauchy hypersurface, decomposed as
\[
\mathcal M_{-a}
=
\mathcal M_{1}\;\cup\;\mathcal M_{2}\;\cup\;\mathcal M_{\mathrm{ext}},
\qquad
\partial \mathcal M_{1}=S_{-a,0},
\]
where:

\begin{itemize}
\item $\mathcal M_{1}$ is a connected interior region with $H-$radius or Schoen-Yau radius comparable to $a$;
\item $\mathcal M_{2}\subset D_{a,1}$ is the portion contained in the characteristic development region
constructed earlier, foliated by a double--null optical pair $(u,\ubar)$ satisfying $t=u+\ubar=-a$ on
$\mathcal M_{-a}$;
\item $\mathcal M_{\mathrm{ext}}$ denotes the asymptotic exterior region.
\end{itemize}

\noindent Fix a truncated characteristic subdomain $D'\subset D_{a,1}$ bounded by
\[
u=-a-\tfrac{1}{a},\qquad \ubar=0,\qquad \ubar=1,\qquad u=u_{\infty},
\]
and define the enlarged interior domain
\[
\mathcal M^{1/a}_{\mathrm{int}}
:=
\mathcal M_{1}\;\cup\;\bigl(\mathcal M_{2}\setminus (D'\cap \mathcal M_{2})\bigr).
\]

\medskip

\noindent We prescribe smooth vacuum Cauchy data $(g,k)$ on $\mathcal M^{1/a}_{\mathrm{int}}$ with the following quantitative bounds: for some fixed integer $s\le 3$,
\begin{align}
\label{eq:int-small-Linf}
\|k\|_{L^\infty(\mathcal M^{1/a}_{\mathrm{int}})}
+
\|\partial(g-\delta)\|_{L^\infty(\mathcal M^{1/a}_{\mathrm{int}})}
&\le C\,a^{-3/2},\\
\label{eq:int-small-ul}
\|k\|_{H^{s-1}_{\mathrm{ul}}(\mathcal M^{1/a}_{\mathrm{int}})}
+
\|\partial(g-\delta)\|_{H^{s-1}_{\mathrm{ul}}(\mathcal M^{1/a}_{\mathrm{int}})}
&\le C\,a^{-3/2},
\end{align}
where $H^{m}_{\mathrm{ul}}$ denotes the uniformly local Sobolev norm defined by
\[
\|F\|_{H^{m}_{\mathrm{ul}}(\Omega)}
:=
\sup_{p\in\Omega}\|F\|_{H^{m}(B_{1}(p))},
\]
with $B_{1}(p)$ the unit geodesic ball in the induced metric. In addition, the lapse and shift associated with
the ADM decomposition satisfy
\begin{equation}
\label{eq:int-lapse-shift}
\|N-1\|_{L^\infty}
+
\|\nabla N\|_{L^\infty}
+
\|X\|_{L^\infty}
\le C\,a^{-3/2}
\qquad \text{on } \mathcal M^{1/a}_{\mathrm{int}}.
\end{equation}

\medskip

\noindent The spacetime metric in a neighborhood of $\mathcal M_{-a}$ is expressed in spacetime harmonic gauge,
$\Box_{g}x^{\mu}=0$, and in ADM form
\[
\mathbf g
=
- N^{2}dt^{2}
+
g_{ij}\,(dx^{i}+X^{i}dt)(dx^{j}+X^{j}dt),
\]
so that the vacuum Einstein equations reduce to a quasilinear hyperbolic system for $g_{\mu\nu}$. In this gauge,
local existence and continuation criteria depend only on uniformly local Sobolev norms of the initial data and are
stable under perturbations of size $O(a^{-3/2})$ (cf.\ \cite{lindblad}).

\medskip

\noindent We impose a compatibility condition across the interface sphere $S_{-a,0}=\partial\mathcal M_{1}$ with the data
induced from the characteristic development $D_{a,1}$. Precisely, letting $(g_{\mathrm{char}},k_{\mathrm{char}})$
denote the Cauchy data induced on $\mathcal M_{-a}\cap D_{a,1}$, we require:

\begin{enumerate}
\item The first and second fundamental forms induced on $S_{-a,0}$ by $(\mathcal M_{1},g,k)$ and by
$(\mathcal M_{2},g_{\mathrm{char}},k_{\mathrm{char}})$ agree exactly.
\item All tangential covariant derivatives along $S_{-a,0}$ up to order $N$ (for some fixed $N\gg1$) coincide.
\end{enumerate}
To be completely rigorous, this is accomplished by further decomposing $\mathcal{M}_{1}$ into a compact subset $\widetilde{\mathcal{M}}_{1}\subset \mathcal{M}_{1}$ and the collar region $\mathcal{M}_{1}\setminus \widetilde{\mathcal{M}}_{1}$. In the compact set $\widetilde{\mathcal{M}}_{1}$, we provide the data $(g,k)$ which has $H-$radius or Schoen-Yau radius $\text{Rad}_{g}$ while the data in the collar $\mathcal{M}_{1}\setminus \widetilde{\mathcal{M}}_{1}$ is the transition region that smoothly matches the data with the characteristic development ($D_{a,1}$) induced data on $\mathcal{M}_{2}$. In particular, $(g,k)$ on the compact subset $\widetilde{\mathcal{M}}_{1}$ has the estimate
\begin{align}
\label{eq:first}
\|\partial^{k} k\|_{L^\infty(\widetilde{\mathcal{M}}_{1})}\le C\,a^{-k-3/2},~
\|\partial^{k+1} g\|_{L^\infty(\widetilde{\mathcal{M}}_{1})}\le Ca^{-k-1}\end{align}
\[\|\partial^{k}k\|_{H^{s-1}_{\mathrm{ul}}(\widetilde{\mathcal{M}}_{1})}
\le C\,a^{-k-3/2},
\|\partial^{k+1} g\|_{H^{s-1}_{\mathrm{ul}}(\widetilde{\mathcal{M}}_{1})}
\le Ca^{-k-1},~k\geq 0
\]
where on the collar $\mathcal{M}_{1}\setminus \widetilde{\mathcal{M}}_{1}$ the estimate (same estimate for $K$) reads 
\begin{align}
\label{eq:k1}
\|\partial^{k}k\|_{L^\infty(\mathcal{M}_{1}\setminus \widetilde{\mathcal{M}}_{1})}
\le C a^{-k-3/2},\\
\label{eq:k2}
\|\partial^{k}k\|_{H^{s-1}_{\mathrm{ul}}(\mathcal{M}_{1}\setminus \widetilde{\mathcal{M}}_{1})}
\le C a^{-k-3/2},~k\geq 0
\end{align}
with $\partial g$ smoothly interpolated in the collar between $\widetilde{\mathcal{M}}_{1}$ (verifying (\ref{eq:first})) and the estimate on $\mathcal{M}_{2}$ induced by the characteric development (where in harmonic coordinates, $||\partial g||\lesssim a^{-\frac{3}{2}}$) along with the same estimates for the lapse and the shift. This part constitutes a significant portion of our work. In section (\ref{decomp}), we provide the explicit construction of the Cauchy data through a series of propositions \ref{cauchy1}-\ref{subcritical-barrier-open}. There are several novel aspects to the construction of the regular initial data compatible with characteristic development that are explained in the relevant propositions \ref{cauchy1}-\ref{subcritical-barrier-open}. One vital aspect to notice that the construction of Cauchy data on $\mathcal{M}_{-a}$ in this framework is much more flexible. If these are understood, if there is no further confusion, we will not separately work with this further decomposition. 
\begin{remark}
Note that the largeness of the data is contained in the constants $C$ (which are independent of $a$ and are $O(1)$) and the smallness is encoded in the inverse power of $a$. For $a\gg1$, the slice $t=-a$ can be considered to be at a reverse late time, where one utilizes the decay `in time' property of wave equations to build a large data theory where the non-linear terms are typically subdominant compared to the linear counterparts due to larger decay. This is the essential large-small control in the long time dynamics of Einstein equations.  
\end{remark}

\begin{remark}\footnote{Xuantao Chen brought this to our attention}
Notice that this data verify the scale-critical estimate $||\partial^{3/2}(\mathbf{g}-m)||_{L^{2}(\mathcal{M}_{1})}=O(1)$, $m$ being the Minkowski metric and $\mathbf{g}$ is the physical spacetime metric in harmonic gauge, since one derivative here costs $a^{-1}$ at best over the large domain of size $O(a)$ under study. Therefore the scale-critical norm $\dot{H}^{3/2}$ is $O(1)$ in our study. An interesting question would be to understand how small the $\dot{H}^{3/2}$ norm of $\mathbf{g}-m$ can be in order for the MOTS formation result to persist. This would help one understand if Minkowski space is stable under scale-critical small perturbations.      
\end{remark}

\noindent Under this matching condition, $(g,k)$ and the characteristic data glue to a smooth vacuum data set
$(\tilde g,\tilde k)$ on
\[
\mathcal M_{1}\cup \mathcal M_{2}.
\]

\medskip

\noindent Finally, we complete the data in the far exterior region $\mathcal M_{\mathrm{ext}}$ by a Corvino--Schoen type
gluing to a Kerr initial data set outside a compact set. The glued end has ADM parameters satisfying
\[
m_{\mathrm{ADM}}\sim a^{1/2},
\qquad
|J|=O(a),
\]
and agrees with $(\tilde g,\tilde k)$ to high order across the gluing annulus. In particular, the resulting global
data set is smooth, satisfies the vacuum constraint equations, is pointwise small on the interior region
$\mathcal M_{1}$ in the sense of \eqref{eq:int-small-Linf}--\eqref{eq:int-lapse-shift}, yet may carry large total
mass due to the spatial scale $\sim a$ of the interior domain.

\medskip

\noindent The two analytic tasks are therefore separated as follows: (i) a controlled quasilinear hyperbolic evolution of
the interior data over a time interval of length $O(1)$ (independent of $a$) in spacetime harmonic gauge, with deformation tensor and
ADM variables remaining uniformly small, and (ii) a quantitative lower bound for the $H$--radius along the evolved
interior slices, treated in the next section.

\subsection{Radius Comparison} 
\label{radius}
In the previous section we showed that along the outgoing null hypersurface $H_{-a}$ the generalized mean curvature quantity $H-|\tr_{\Sigma}K|$ is preserved up to errors of size $O(|u_{\infty}|^{-2}+a^{-5/2})$, which are negligible in the large–$a$ regime under consideration. Although monotonic decay of this quantity along outgoing null directions might be expected a priori, the admissible open class of characteristic initial data constructed here ensures that it remains effectively constant along $H_{-a}$.

\noindent The subsequent objective is to obtain quantitative control of the Schoen–Yau (or $H$–) radius of the interior region of the Cauchy slice. This constitutes the central geometric step in the argument. More precisely, one must prove that the Schoen–Yau radius of the evolved interior domain
\[
J^{+}(\mathcal{M}^{1/a}_{\mathrm{int}})\cap \mathcal{M}_{-a-1/a+\epsilon}
\]
strictly exceeds the initial Schoen–Yau radius of $\mathcal{M}^{1/a}_{\mathrm{int}}$, and moreover to derive a sharp lower bound for this increase in terms of the large parameter $a$. We outline the geometric mechanism at a heuristic level here; the complete quantitative argument is given later, see in particular Proposition~\ref{prop:interior-propagation}.

\noindent We refer to Figure~\ref{fig:1}. Consider first the compact manifold with boundary
$\mathcal M_{1}$, whose boundary is the interface sphere $S_{-a,0}$. One must prescribe
Cauchy data $(g,k)$ on $\mathcal M_{1}$ that are compatible across $S_{-a,0}$ with the
data induced on the adjacent region $\mathcal M_{2}$ by the characteristic development in
the slab $D_{a,1}$. The resulting data are then matched in the exterior region
$\mathcal M_{3}$ with a Kerr end. This matching is achieved by a localized
Corvino–Schoen type gluing construction, carried out in
Section~\ref{decomp}, which produces a global Cauchy data set on the slice
$\mathcal M_{-a}$ satisfying the constraint equations and agreeing with the
characteristic data and the Kerr data in their respective domains.

\noindent The glued initial data are then evolved from $\mathcal M_{-a}$ to the nearby slice
$\mathcal M_{-a-1/a+\epsilon}$. For $\epsilon>0$ sufficiently small, depending only on
the constructed initial norms, local well–posedness yields a unique spacetime
development in this time interval. The data are pointwise small in the precise sense
required for compatibility with the characteristic solution: the second fundamental form,
shift vector field, and lapse gradient satisfy
\[
|k| + |X| + |\nabla N| = O(a^{-3/2})
\]
on $\mathcal M_{-a}$, while the associated ADM mass is of order $a^{1/2}$ and therefore
large in the regime $a\gg 1$. In view of the quasi-linear wave equations, the allowable
time of evolution may in fact be chosen uniformly of order one for example, in spacetime harmonic gauge.

\noindent Let $\Phi_t$ denote the flow generated by the time vector field $\partial_t$, which is
globally defined by global hyperbolicity. Our objective is to obtain quantitative control
of the transported Riemannian metric $(\Phi_t^{-1})^{*}g$ on the interior future domain
\[
J^{+}(\mathcal M^{1/a}_{\mathrm{int}})\cap \mathcal M_{-a-1/a+\epsilon},
\]
which will be used to estimate the geometric radius and related interior quantities on the
evolved slice.

\noindent During the Cauchy evolution, it is not \emph{a priori} excluded that the transported
metric $(\Phi_t^{-1})^{*}g$ on the future interior domain
\[
J^{+}(\mathcal M^{1/a}_{\mathrm{int}})\cap \mathcal M_{-a-1/a+\epsilon}
\]
may decrease relative to its initial size, even though $(\Phi_t^{-1})^{*}g$ remains
quasi–isometric to the initial metric on $\mathcal M^{1/a}_{\mathrm{int}}$ under the
smallness assumptions on the deformation tensor. At the same time, the evolved image
$\Phi_t(\mathcal M^{1/a}_{\mathrm{int}})_{-a-1/a+\epsilon}$ is a proper subset of
$J^{+}(\mathcal M^{1/a}_{\mathrm{int}})\cap \mathcal M_{-a-1/a+\epsilon}$. Consequently,
the collar region
\[
\Big(J^{+}(\mathcal M^{1/a}_{\mathrm{int}})\cap \mathcal M_{-a-1/a+\epsilon}\Big)
\setminus
\Phi_t(\mathcal M^{1/a}_{\mathrm{int}})_{-a-1/a+\epsilon}
\]
has strictly positive thickness and therefore contributes a positive amount to the
geometric size measured by the Schoen–Yau ($H$–) radius. The objective of this section
is to show that this collar contribution dominates any possible metric contraction in the
interior region; see Figure~\ref{fig:2} for a schematic representation.

\noindent The argument relies on precise pointwise control of the second fundamental form $k$ on
$\mathcal M_{1}$, dictated by the matching requirements with the data on $\mathcal M_{2}$
arising from the characteristic development in $D_{a,1}$. In the adapted frame
$(T,S,e_A)$ one has
\[
k_{ST}=0,
\qquad
|k_{SA}|+|k_{AT}|+|k_{AB}|+|k_{SS}|
\;\lesssim\; a^{-3/2},
\]
and hence
\[
|k|\lesssim a^{-3/2}.
\]
The critical component is the tangential block $k_{AB}$, which admits the null
decomposition
\[
k_{AB}
=
\chi_{AB}+\underline\chi_{AB}
=
\frac12\big(\tr\chi+\tr\underline\chi\big)\,\gslash_{AB}
+\hat\chi_{AB}+\hat{\underline\chi}_{AB}.
\]
Individually, the traces $\tr\chi$ and $\tr\underline\chi$ are of size $O(a^{-1})$ near
the interface sphere $S_{-a,0}$ and therefore represent potentially dangerous terms.
A naive estimate based solely on these trace bounds would yield $|k|=O(a^{-1})$ on
$\mathcal M_{-a}$. Standard short–time existence theory for quasilinear wave systems
would then permit a metric variation of the same order, comparable to the geometric gain
coming from the collar region. Therefore, we need to turn to refined estimates.

\noindent A priori, quasilinear wave energy estimates together with direct integration of the
metric transport equation along timelike and null directions suggests the comparison bound
\begin{equation}
\label{eq:metric-qiso}
g\Big(1-\frac{C\epsilon}{a}\Big)
\;\le\;
(\Phi_t^{-1})^{*}g
\;\le\;
g\Big(1+\frac{C\epsilon}{a}\Big)
\end{equation}
on the transported interior region
$\Phi_t(\mathcal M^{1/a}_{\mathrm{int}})_{-a-1/a+\epsilon}$, for a universal numerical
constant $C>0$. Correspondingly, the geometric thickness contributed by the collar region
\[
\Big(J^{+}(\mathcal M^{1/a}_{\mathrm{int}})\cap \mathcal M_{-a-1/a+\epsilon}\Big)
\setminus
\Phi_t(\mathcal M^{1/a}_{\mathrm{int}})_{-a-1/a+\epsilon}
\]
is expected to satisfy a lower bound of the form
\begin{equation}
\label{eq:collar-radius-naive}
\text{Rad}_{\mathrm{collar}}
\;\ge\;
\text{Rad}(\mathcal M^{1/a}_{\mathrm{int}})
\Big(1+\frac{C\epsilon}{a}\Big),
\end{equation}
possibly with a different constant $C$. At this level of precision, the potential metric
contraction in \eqref{eq:metric-qiso} and the collar gain in
\eqref{eq:collar-radius-naive} occur at the same order, and no effective lower bound for
the radius of
$J^{+}(\mathcal M^{1/a}_{\mathrm{int}})\cap \mathcal M_{-a-1/a+\epsilon}$
can be deduced.

\noindent The decisive structural improvement arises from the refined behavior of the combined null
expansion $\tr\chi+\tr\underline\chi$. Owing to the matching construction and the null
structure equations (see \eqref{eq:trchi}), a cancellation occurs near the interface
sphere $S_{-a,0}$ which yields the sharper estimate
\[
|\tr\chi+\tr\underline\chi|
=
O\!\big(a^{-5/2}\big),
\]
rather than the individually expected $O(a^{-1})$ bounds. This strengthened control is
precisely what enforces the required focusing of the generalized mean curvature
$H-|\tr_{\Sigma}K|$ and breaks the apparent balance between interior metric loss and
collar gain.

The cancellation mechanism described above yields the sharpened bound
$|k|\lesssim a^{-3/2}$, consistent with the estimates for the remaining Ricci
coefficients and stable under the matching construction. As a consequence, the metric
transport estimates improve to
\begin{equation}
\label{eq:metric-refined}
g\Big(1-\frac{C\epsilon}{a^{3/2}}\Big)
\;\le\;
(\Phi_t^{-1})^{*}g
\;\le\;
g\Big(1+\frac{C\epsilon}{a^{3/2}}\Big)
\end{equation}
on the evolved interior image
$\Phi_t(\mathcal M^{1/a}_{\mathrm{int}})_{-a-1/a+\epsilon}$, for a universal numerical
constant $C>0$. In contrast, the geometric gain contributed by the collar region
\[
\Big(J^{+}(\mathcal M^{1/a}_{\mathrm{int}})\cap \mathcal M_{-a-1/a+\epsilon}\Big)
\setminus
\Phi_t(\mathcal M^{1/a}_{\mathrm{int}})_{-a-1/a+\epsilon}
\]
remains of order $O(1)$ and therefore satisfies the lower bound
\begin{equation}
\label{eq:collar-refined}
\text{Rad}_{\mathrm{collar}}
\;\ge\;
\text{Rad}(\mathcal M^{1/a}_{\mathrm{int}})
\Big(1+\frac{C\epsilon}{a}\Big).
\end{equation}
Since $a^{-3/2}\ll a^{-1}$ for $a\gg1$, the collar contribution dominates the possible
metric contraction in the interior. It follows that for sufficiently large $a$ one obtains
a strict radius increase,
\begin{equation}
\label{eq:radius-growth-final}
\text{Rad}\!\Big(J^{+}(\mathcal M_{1})\cap \mathcal M_{-a-1/a+\epsilon}\Big)
\;\ge\;
\text{Rad}(\mathcal M_{1})\,
\sqrt{1+\frac{1}{10a}},
\end{equation}
after adjusting constants.

\noindent Invoking the Schoen–Yau barrier criterion \cite{yau}, the existence of a MOTS in
$J^{+}(\mathcal M_{1})\cap \mathcal M_{-a-1/a+\epsilon}$ together with the absence of
trapped surfaces in $\mathcal M_{1}$ is reduced to verifying the sufficient condition
\begin{align}
\label{eq:yau-condition-final}
\frac{3\pi}{2\,\text{Rad}(\mathcal M_{1})\sqrt{1+\frac{1}{a}}}
\;&<\;
-\frac{|u_{\infty}|\tr\underline\chi(u_{\infty},0)}{a}
+\frac{9}{10a}\int_{u_{\infty}}^{-a}|u'|\,|\hat{\underline\chi}|^{2}(u',\epsilon)\,du'
\\
&\quad
+\frac{9}{10a}\int_{u_{\infty}}^{-a}\frac{1}{|u'|^{2}}
\int_{u_{\infty}}^{u'} |u''|^{2}\,
|\hat{\underline\chi}|^{2}(u'')\,du''\,du'
\;<\;
\frac{3\pi}{2\,\text{Rad}(\mathcal M_{1})}.
\nonumber
\end{align}
In particular, if the initial outgoing shear and the magnitude of the incoming expansion
at past null infinity $u=u_{\infty}$ are sufficiently large, then the boundary sphere
\[
\partial\!\Big(J^{+}(\mathcal{M}^{1/a}_{int})
\cap \mathcal M_{t=-a-1/a+\epsilon}\Big)
=
S_{-a-1/a,3/4}
\]
acquires sufficiently large generalized mean curvature $H-|\kappa|$. The
Schoen–Yau criterion therefore guarantees the existence of a MOTS in the interior of
$J^{+}(\mathcal M^{1/a}_{int})
\cap \mathcal M_{t=-a-1/a+\epsilon}$.

\noindent Let us define the following entity 
\begin{align}
\label{eq:gmc}
\mathbf{H}:=-\frac{|u_{\infty}|\tr\chibar(u_{\infty},0)}{a}+\frac{9}{10a}\int_{u_{\infty}}^{-a}|u^{'}||\chibarhat|^{2}(u^{'},\epsilon)du^{'}
+\frac{9}{10a}\int_{u_{\infty}}^{-a}\frac{1}{|u^{'}|^{2}}\int_{u_{\infty}}^{u^{'}}|u^{''}|^{2}|\chibarhat|^{2}du^{''}du^{'}    
\end{align}
which is nothing but the generalized mean curvature $(H-|\tr_{\Sigma}K|)(-a-1/a,\epsilon)$ for $\epsilon \in [0,1]$ upto negligible $O(a^{-5/2})$ error terms. Before stating the technical version of the main theorem, let us define the locally uniform Sobolev norms.Fix a smooth cutoff $\chi\in C_c^\infty(B_2(0))$ with $\chi\equiv1$ on $B_1(0)$, and for each
$y\in\mathcal{M}^{1/a}_{int}$ define $\chi_y(x):=\chi(x-y)$. For an integer $s\ge 4$ set
\[
\|u(t)\|_{H^s_{\mathrm{ul}}(\mathcal{M}^{1/a}_{int}))}
:=\sup_{y\in\mathcal{M}^{1/a}_{int}}\ \|\chi_y\,u(t)\|_{H^s(\mathcal{M}^{1/a}_{int})}
\]
(and similarly for $L^2_{\mathrm{ul}},\,L^\infty_{\mathrm{ul}}$). Our main result is the dynamical formation of MOTS starting from a configuration that does not contain any MOTS or trapped surface. Moreover, we explicitly prove construction of the initial Cauchy data used in the problem in later section (\ref{decomp}). The main theorem that we prove is the combination of the following three theorems.

\begin{theorem}
\label{main1}
Let \(N\) be sufficiently large.  For every \(\mathcal I>0\) there exists
\(a_{0}=a_{0}(\mathcal I,N)\gg1\) with the following property.

\noindent Let \(a\ge a_{0}\), \(u_{\infty}<0\), \(|u_{\infty}|\gg a\), and prescribe smooth
vacuum characteristic data on
\[
\underline H_{0}\cup H_{u_{\infty}}
\]
satisfying the scale-invariant hierarchy
\[
\sup_{u_{\infty}\le u\le -a}
\sum_{|I|\le N+7,\ m\le 3}
a^{-1/2}
\bigl\|
(|u|\nabla_{3})^{m}(|u|\nabla)^{I}
(|u|^{2}\underline{\hat\chi})
\bigr\|_{L^{\infty}(S_{u,0})}
\le \mathcal I
\]
and the corresponding dispersive hierarchy on \(H_{u_{\infty}}\):
\[
\Omega=1,\qquad
b=O(a^{1/2}|u_{\infty}|^{-2}),\qquad
\gamma=|u_{\infty}|^{2}\gamma_{0}
+O(a^{-1/2}|u_{\infty}|^{-1}),
\]
\[
\tr\chi-\frac{2}{|u_{\infty}|}
=O(a^{-1/2}|u_{\infty}|^{-1}),\qquad
\hat\chi=O(a^{-1/2}|u_{\infty}|^{-1}),
\]
\[
\tr\underline\chi+\frac{2}{|u_{\infty}|}
=O(|u_{\infty}|^{-2}),\qquad
\underline{\hat\chi}=O(a^{1/2}|u_{\infty}|^{-2}),
\]
\[
\omega=O(a^{1/2}|u_{\infty}|^{-2}),\qquad
\underline\omega=O(a^{1/2}|u_{\infty}|^{-3}),\qquad
\eta,\underline\eta=O(a^{1/2}|u_{\infty}|^{-2}),
\]
\[
\alpha=O(a^{-1/2}|u_{\infty}|^{-1}),\quad
\beta=O(|u_{\infty}|^{-2}),\quad
\underline\beta=O(a|u_{\infty}|^{-4}),
\]
\[
(\rho,\sigma)=O(a^{1/2}|u_{\infty}|^{-3}),\qquad
\underline\alpha=O(a^{3/2}|u_{\infty}|^{-5}).
\]
Then the vacuum Einstein equations admit a unique smooth double-null development
on
\[
D_{a,1}
=
\{(u,\ubar,\theta)\colon u_{\infty}\le u\le -a,\ 0\le \ubar\le 1\},
\]
and the above hierarchy propagates throughout \(D_{a,1}\), with constants depending
only on \(N\) and the initial characteristic norm.
\end{theorem}

\noindent Next we provide the theorem that depicts the uniform Cauchy development. 
\begin{theorem}
\label{main2}
Let \(D_{a,1}\) be the characteristic development of
Theorem \ref{main1}.  Let
\[
\mathcal M_{-a}
=
\mathcal M_{1}\cup\mathcal M_{2}\cup\mathcal M_{\mathrm{ext}},
\qquad
\partial\mathcal M_{1}=S_{-a,0},
\]
where
\[
\mathcal M_{2}
=
\mathcal M_{-a}\cap D_{a,1},
\qquad
t=u+\ubar=-a.
\]
Let
\[
\mathcal M_{\mathrm{int}}^{1/a}
=
\mathcal M_{1}\cup
\bigl(\mathcal M_{2}\setminus(D'\cap\mathcal M_{2})\bigr),
\]
where \(D'\subset D_{a,1}\) is bounded by
\[
u=-a-\frac1a,\qquad \ubar=0,\qquad \ubar=1,\qquad u=u_{\infty}.
\]

\noindent Assume that smooth vacuum Cauchy data \((g,k)\) on
\(\widetilde{\mathcal{M}}_{1}\subset\mathcal M_{\mathrm{int}}^{1/a}\) satisfy the Einstein constraints, the harmonic
gauge constraints, and, for every \(k\ge0\),
\[
\|\partial^{k+1}g\|_{L^\infty\cap H^{s-1}_{\mathrm{ul}}}
\lesssim a^{-k-1},
\qquad
\|\partial^{k}k\|_{L^\infty\cap H^{s-1}_{\mathrm{ul}}(\widetilde{M}_{1})}
\lesssim a^{-k-3/2},
\]
\[
\|\partial^{k}(N-1)\|_{L^\infty}
+
\|\partial^{k}X\|_{L^\infty}
\lesssim a^{-k-3/2},
\]
while in the collar $\mathcal{M}^{\frac{1}{a}}_{int}\setminus\widetilde{\mathcal{M}}_{1}$ the second fundamental form $k$, lapse $N$, and the shift vector field $X$ verify 
\begin{eqnarray}
 ||\partial^{k}k||_{L^\infty\cap H^{s-1}_{\mathrm{ul}}(\mathcal{M}^{1/a}_{int}\setminus \widetilde{M}_{1})}\lesssim a^{-(k+\frac{3}{2})},~||\partial^{k}(N-1)||_{L^\infty\cap H^{s-1}_{\mathrm{ul}}(\mathcal{M}^{1/a}_{int}\setminus \nonumber\widetilde{M}_{1})}\lesssim a^{-(k+\frac{3}{2})},~||\partial^{k}X||_{L^\infty\cap H^{s-1}_{\mathrm{ul}}(\mathcal{M}^{1/a}_{int}\setminus \widetilde{M}_{1})}\lesssim a^{-(k+\frac{3}{2})}   
\end{eqnarray}
with smooth interpolation to the Cauchy data induced by \(D_{a,1}\) on the
overlap through $\mathcal{M}^{1/a}_{int}\setminus \widetilde{\mathcal{M}}_{1}$.  Assume moreover that the induced metric, second fundamental form, and
all tangential derivatives up to order \(N\) agree across \(S_{-a,0}\).

\noindent Complete the exterior by gluing to a Kerr end satisfying
\[
m_{\mathrm{ADM}}\sim a^{1/2},
\qquad
J=O(a),
\]
while preserving the vacuum constraints and the above Sobolev bounds.  Denote the
resulting global vacuum data by \((\widetilde g,\widetilde k)\).

\noindent Then the vacuum Einstein equations admit a unique spacetime-harmonic Cauchy
development
\[
(\mathcal M\times[-a,-a+\varepsilon],\mathbf g)
\]
with
\[
\mathbf g\in C^{0}_{t}H^{N+1}_{x}\cap C^{1}_{t}H^{N}_{x}.
\]
The harmonic gauge constraints propagate, the development agrees with
\(D_{a,1}\) in the overlap region, and, for all sufficiently large \(a\), one may
take
\[
\varepsilon=\frac34.
\]
\end{theorem}

\noindent Finally, we provide the theorem that depicts the formation of MOTS due to the boundary effect. In particular, this is the main result after the previous two theorems are proven. 

\begin{theorem}
\label{main3}
Assume the hypotheses of Theorems
\ref{main1} and \ref{main2}.  Suppose
that the initial interior domain satisfies
\[
\text{Rad}_{\mathrm{SY}}(\mathcal M_{1},g)
=
\frac{3\pi}{4}(a-1)+O(a^{-1}),
\qquad
a\gg1,
\]
and that the incoming shear obeys the isotropic focusing condition
\[
\frac{17}{9a}
<
\int_{u_{\infty}}^{-a}
|u'|\,|\underline{\hat\chi}|^{2}(u',\tau)\,du'
+
\int_{u_{\infty}}^{-a}
\frac{1}{|u'|^{2}}
\int_{u_{\infty}}^{u'}
|u''|^{2}|\underline{\hat\chi}|^{2}(u'',\tau)\,du''\,du'
<
\frac{19}{9a},
\]
with
\[
\tau=\frac34.
\]

\noindent Then the initial slice \(\mathcal M_{-a}\) contains no marginally outer trapped
surface, however, the latter domain
\[
\Omega_{\tau}
:=
J^{+}(\mathcal M_{\mathrm{int}}^{1/a})
\cap
\mathcal M_{t=-a-\frac1a+\tau}
\]
satisfies the strict Schoen--Yau boundary inequality
\[
\min_{\partial\Omega_{\tau}}
\bigl(H-|\tr_{\partial\Omega_{\tau}}k|\bigr)
>
\frac{3\pi}{2\,\text{Rad}_{\mathrm{SY}}(\Omega_{\tau})}.
\]
Consequently \(\Omega_{\tau}\) contains a marginally outer trapped surface in
its interior.  Thus a MOTS forms dynamically in finite time from smooth vacuum
initial data containing no MOTS.
\end{theorem}

\begin{remark}
Notice that given $\chibarhat$ on $\Hbar_{0}$, its geometry is determined once boundary condition in $S_{u_{\infty},0}$ is given. More precisely along with the conjugate shear, (a) the conformal class/metric on the corner sphere $S_{u_\infty,0}$ and (b)a gauge normalization, for example:
$\Omega$  fixed on $S_{u_\infty,0}$, together with a choice of the shift vector $b$ along the initial null hypersurface $\Hbar_{0}$.  In addition, the decay rate for the Weyl curvature components are fixed by the peeling property \cite{nicolo}, while the connection coefficients are less fundamental and dependent on the choice of frame.    
\end{remark}

\begin{remark} The scaling $a^{-3/2}$ is critical for compatibility with the Bel--Robinson energy hierarchy on time slabs of size $O(1)$.
\item The uniformity of $\varepsilon$ relies on the effective point-wise smallness of the initial $k,|N-1|$, and $\nabla N$; largeness is absorbed in the largeness of the Hawking mass of the spheres foliating the characteristic development.
\end{remark}

\begin{remark}
 Notice that the condition on the strict upper bound on the Schoen-Yau radius of the interior $\mathcal{M}_{1}$ is very flexible in terms of the choice of the corresponding metrics in light of constructions of \cite{yau}. On the other hand, $a^{\frac{1}{2}}$ is roughly related to the ADM parameters of the initial slice $\mathcal{M}_{-a}$. This is precisely the idea of a large mass contained in an isotropically (in terms of the $H-$radius or Schoen-Yau radius) large domain. In particular, the interior domain $\mathcal{M}_{1}$ needs to be uniformly thick or isotropically large measured in terms of $H-$radius.    
\end{remark}

\begin{remark}
Notice the vital importance of the interior estimate $|\partial g|=O(a^{-1})$ (in harmonic coordinates) in the interior $\widetilde{\mathcal{M}}_{1}$. This leads to the length estimate $|d_{g}(l)-d_{\delta}(l)|=O(a)$ on $\widetilde{\mathcal{M}}_{1}$ of a curve $l$ of $\delta$-length $O(a)$. This is necessary for $H-$radius to be $\text{Rad}(\mathcal{M}_{1})=\frac{3\pi}{4}(a-1)+O(a^{-1})$ in the interior-in particular the largeness of the interior $H-$radius originates form the largeness of the weighted entity $|a\partial g|$. Notice that if instead one had $|\partial g|=O(a^{-1-\epsilon})$ in the interior, then similarly one would have obtained $|d_{g}(l)-d_{\delta}(l)|=O(a^{1-\epsilon})$ which would be insufficient to satisfy the condition $\text{Rad}(\mathcal{M}_{1})=\frac{3\pi}{4}(a-1)+O(a^{-1})$ due to $a^{-\epsilon}$ loss. 
\end{remark}

\begin{remark}
The construction does not require the Hawking mass of the interior
comparison surfaces to agree with the ADM mass of the exterior Kerr end.
Indeed, for a closed two-surface \(S\subset (M,g,K)\),
\[
        m_H(S)
        =
        \frac{r(S)}2
        \left(
        1
        -
        \frac1{16\pi}\int_S H^2\,d\mu
        +
        \frac1{16\pi}\int_S(\operatorname{tr}_S K)^2\,d\mu
        \right).
\]
In general this quantity is not monotone under outward motion of the
surface. Monotonicity holds only under additional hypotheses, for
example in the time-symmetric nonnegative-scalar-curvature setting
along weak inverse mean curvature flow. Since the present data are not
time-symmetric and no such foliation is used, an interior surface may
have \(m_H(S)=O(a)\) while the exterior Kerr mass is only
\(O(a^{1/2})\). The compatibility condition is instead the pointwise
\[
        R_g=|K|_g^2-(\operatorname{tr}_gK)^2=O(a^{-3}),
\]
which is enforced by balancing the \(O(a^{-2})\) trace-free Ricci
created by the TT perturbation with a scalar-curvature correction in section (\ref{decomp}).
\end{remark}

\subsection{Comparison with Previous Studies, Novelty, and a brief summary}
\par\noindent Following the singularity theorem of Penrose~\cite{P73}, the classical mechanism guaranteeing the presence of a black hole region requires the prior existence of a closed trapped surface. For a considerable period, the only available route to such a surface was to impose its existence directly at the level of the initial data set. This requirement is itself highly nontrivial, since the trapped surface condition is a nonlinear, fully geometric constraint involving both the intrinsic and extrinsic geometry of the initial slice.

\noindent A decisive advance at the level of initial data was achieved by Schoen--Yau~\cite{SY83}, who established the first general existence theorem for marginally outer trapped surfaces (MOTS) under suitable geometric and energy conditions. Subsequently, Yau~\cite{yau} discovered a substantially stronger principle: the existence of a MOTS can be forced purely by boundary geometry. More precisely, certain quantitative boundary convexity conditions imply the existence of a MOTS in the interior, independent of any positivity assumption on the matter density (and in particular allowing negative energy densities). This boundary–driven mechanism revealed a fundamentally new geometric effect and indicated that trapped surface formation is not exclusively tied to bulk matter concentration. This is the primary motivating point of this current study (See \cite{lars1,lars2} for studies related to dynamic and trapping horizon, later works on Jang's equation and its relationship with MOTS). 

\noindent These results naturally lead to a deeper dynamical question: are MOTSs genuinely evolutionary objects, in the sense that they arise from the Einstein evolution of regular initial data containing no trapped or marginally trapped surfaces? From both mathematical and physical perspectives, this issue is central. The geometric definition of a black hole region in general relativity derives its significance from its predictive and observational content; it must correspond to objects that can form dynamically from physically admissible configurations. In this sense, dynamical trapped surface formation constitutes a stringent consistency requirement linking the analytic theory of the Einstein equations with the physical interpretation of black holes. This is particularly relevant for astrophysical scenarios such as supermassive black holes, whose formation is not adequately modeled by short–pulse collapse mechanisms and therefore demands a large–scale, genuinely dynamical geometric theory of trapped surface formation.

\noindent The first results along this direction were obtained by Christodoulou for the Einstein equations coupled to a massless scalar field in spherical symmetry. Through a series of works \cite{C91}, \cite{C93}, \cite{C94}, and \cite{C99}, Christodoulou managed to not only prove trapped surface formation, but to understand the picture of gravitational collapse in its entirety for the given model and under the given symmetry. The breakthrough in the absence of symmetry came in \cite{christodoulou} by the same author. In this work, Christodoulou introduced a hierarchy of small and large components in the initial data which (almost) persists under the evolution of the Einstein equations. He termed his method the \textit{short pulse} method. After Christodoulou, another breakthrough work by Klainerman-Rodnianski \cite{Kl-Rod} reduces the size of Christodoulou's work from about 600 to approximately 120 pages by using a slightly different hierarchy. Moreover, it reduces the number of derivatives of curvature required to prove semi-global existence from two to one through refined trace estimates. A substantial extension of the result of Christodoulou (which required a uniform condition along all null geodesic generators instead) was executed by Klainerman-Luk-Rodnianski \cite{LKR}, proving a fully genuine anisotropic criterion for the formation of trapped surfaces in vacuum. More precisely, they provide local conditions on null data, concentrated in a neighborhood of a short null geodesic segment (possibly flat in all other directions) whose future development contains a trapped surface.

\noindent A few years later, An  \cite{AnThesis} introduces the signature for decay rates $s_2$ on his way to proving an extension of \cite{Kl-Rod} from a finite region to a region close to past null infinity-this method proved to be very efficient in handling large data problems in general. In 2014, An and Luk \cite{AL17} proved the first \textit{scale-critical} trapped surface formation criterion for the vacuum equations in the absence of symmetry. While Christodoulou's data in \cite{christodoulou} were large in $\dot{H}^1(\mathbb{R}^3)$, An and Luk give data which only have to be large in $\dot{H}^{\frac{3}{2}}(\mathbb{R}^3)$, which is a scale-critical norm for the initial data. Taking advantage of the scale criticality in \cite{AL17}, An \cite{A17} constructs initial data that give rise not merely to trapped surfaces, but an apparent horizon, a smooth 3-dimensional hypersurface consisting of marginally outer trapped surfaces. In 2019, An \cite{A19} produces a 55-page proof of trapped surface formation for the vacuum equations, making use of the signature for decay rates and obtaining an existence result from a region close to past null infinity. In \cite{AnAth}, An and Athanasiou extended \cite{A19} to the case of the Einstein-Maxwell system.
Several other studies exist in the context of Einstein-Yang-Mills \cite{NMY1}, Einstein-Vlasov system \cite{dafermos,andr1,andr2,NMY2}, Einstein-Scalar field system, and Einstein-Spinor field system. Recently, \cite{chen}, in a fundamental study, provided a short proof of the formation of a trapped surface in geodesic foliation.

\noindent The principal analytical difficulty of the present work lies in the \emph{prescription of the initial data}, whether prescribed on a characteristic or a Cauchy hypersurface. One of the main novelties of this study and the difference with the previous studies is the delicate nature (cancellation) of the estimates associated with the transport equation. Arbitrary initial data, however, fail to exhibit such cancellation structures, and in fact, by the celebrated small-data result of Christodoulou--Klainerman~\cite{christodoulou}, one already knows that for sufficiently small perturbations of Minkowski data (in global sense), no MOTS or trapped surface can form. 

\noindent In the large-data regime, the situation is profoundly different: the system is fully nonlinear, no global small parameter exists, and long-time uniform control cannot, in general, be expected in the hyperbolic setting. As a result, one faces the possibility of geometric pathologies, including singularity formation and breakdown of the foliation. A central challenge, therefore, consists in constructing a hierarchy of \emph{large} but \emph{controlled} initial data $(g,k)$ that is compatible with a semi-global existence theorem for the Einstein vacuum equations, and whose evolution can be followed up to the onset of an MOTS.

\medskip
\noindent
From a structural viewpoint, the formation of a trapped surface is governed by monotonicity property of the Raychaudhuri equation for the null expansion $\tr\chi$ along the outgoing null direction:
\begin{equation}
\label{eq:raic-annals}
\nabla_{4}\tr\chi + \tfrac{1}{2}(\tr\chi)^{2} = -\,|\widehat{\chi}|^{2}_{\slashed{g}} - 2\omega\,\tr\chi.
\end{equation}
In the previous approach (cf.~\cite{C91,Kl-Rod,A19,AnThesis}), one attempts to produce a negative expansion $\tr\chi<0$ within a finite affine parameter time along the null generators. This mechanism is driven by the largeness of the incoming gravitational shear (radiation) $|\widehat{\chi}|^{2}$, which acts as a source in~\eqref{eq:raic-annals}. 

\medskip
\noindent
By contrast, the present work adopts a complementary perspective. Rather than directly forcing $\tr\chi$ to become negative, we study the evolution of the \emph{generalized mean curvature} quantity
\begin{equation}
\label{eq:c-def}
c := H - |\kappa|,
\end{equation}
as introduced in Section~\ref{motivation} and considered by Schoen-Yau \cite{SY83} in the context of matter sourced gravity and by \cite{yau} in a more general framework that included the case of negative energy density (and pure vacuum in particular). Our objective is to obtain a strictly positive value of $c$ on an appropriate boundary hypersurface of an \textit{isotropically} large (appropriately defined) domain (i.e., the $H-$radius or the Schoen-Yau radius is large), thus triggering the dynamical emergence of an MOTS from a previously horizon-free configuration. In the double-null gauge, one has the decomposition
\begin{equation}
\label{eq:Hkappa}
H - |\kappa| = \tfrac{1}{2}\big(\tr\chi - \tr\chibar\big)
              - \tfrac{1}{2}\big|\tr\chi + \tr\chibar\big|.
\end{equation}
The guiding principle is therefore to drive the difference $\tr\chi - \tr\chibar$ to become large and positive while simultaneously suppressing the absolute term $|\tr\chi+\tr\chibar|$ along the incoming null direction. This is essential not to simply control the generalized mean curvature $c=H-|\kappa|$, but also to control the radius of the domain under consideration (and gluing) as seen in the section \ref{radius} of introduction.

\medskip
\noindent
The construction is implemented by prescribing the incoming shear $\underline{\hat\chi}$ on the initial incoming null hypersurface $\underline H_{0}$ together with a normalized dispersive profile for the null expansions $\tr\chi$ and $\tr\underline\chi$ on the distant outgoing hypersurface $H_{u_\infty}$. The data are arranged so as to satisfy a precise scale–invariant hierarchy which departs substantially from the Minkowskian regime while retaining dispersive decay in $|u|$. The parameter $a\gg 1$ measures the amplitude of the deviation from flat data, whereas inverse powers of $|u|$ encode null dispersion along the foliation.

\noindent More precisely, the Ricci coefficients are assumed to obey the asymptotic hierarchy
\[
\bigg|\tr\chi-\frac{2}{|u|}\bigg|=O\!\left(a^{-\frac12}|u|^{-1}\right),
\qquad
|\hat\chi|=O\!\left(a^{-\frac12}|u|^{-1}\right),
\qquad
\bigg|\tr\underline\chi+\frac{2}{|u|}\bigg|=O\!\left(|u|^{-2}\right),
\qquad
|\underline{\hat\chi}|=O\!\left(a^{\frac12}|u|^{-2}\right),
\]
\[
|\omega|=O\!\left(a^{\frac12}|u|^{-2}\right),
\qquad
|\underline\omega|=O\!\left(a^{\frac12}|u|^{-3}\right),
\qquad
|\eta|+|\underline\eta|
=O\!\left(a^{\frac12}|u|^{-2}\right).
\]
This scaling is consistent with the null structure equations and is chosen so that the dominant large component is the incoming shear, while all remaining Ricci coefficients remain perturbative relative to their Minkowskian leading orders.

\noindent A principal analytic difficulty is that the null transport equations for the Ricci coefficients contain Weyl curvature components as source terms. These curvature components must satisfy decay estimates in $|u|$ compatible with the peeling behavior (cf.\ \cite{nicolo}) and, simultaneously, an amplitude hierarchy compatible with the above Ricci coefficient scaling. In particular, the curvature components are arranged to satisfy
\[
|\alpha|=O\!\left(a^{-\frac12}|u|^{-1}\right),
|\beta|=O\!\left(|u|^{-2}\right),
|\underline\beta|=O\!\left(a|u|^{-4}\right),
|\rho|+|\sigma|=O\!\left(a^{\frac12}|u|^{-3}\right),
|\underline\alpha|=O\!\left(a^{\frac32}|u|^{-5}\right).
\]
The hierarchy is closed in the sense that, under the null structure and Bianchi equations, these weights are stable under propagation in the semi–global region of existence up to controlled losses. Ensuring this compatibility and propagation of scale is a central structural requirement in the argument.

\begin{remark}
Note how the largeness of $a$ manifests. In simpler geometries such as an asymptotically Schwarchild case, $\rho=O(m|u|^{-3})$ and so $a^{\frac{1}{2}}$ in our context roughly behaves like mass, in fact the sphere $S_{-a,0}$ has Hawking mass $\approx a^{1/2}$. Thus, for $a\gg1$, the data manifestly falls under the category of the large data or in the moduli space, $a^{\frac{1}{2}}$ essentially measures a notion of `distance' from the Minkowski space. Contrast this scaling with the previous works \cite{An,AnAth,NMY1}, where $a$ appears instead of $a^{\frac{1}{2}}$ and the scaling of the radiation fields are different. In addition the Hawking mass of the interior is large ($O(a)$) as well. Therefore, this data contains genuinely large interior while dispersive exterior.     
\end{remark}

\noindent It is instructive to compare the incoming gravitational energy in the short–pulse framework of Christodoulou and its later refinements (see \cite{christodoulou,An,AnAth}) with the present construction. In the double–null formalism, the incoming gravitational energy flux through an outgoing null hypersurface $H_{u_\infty}$ over the slab $\ubar\in[0,1]$ is measured by
\[
E_\infty
:=
\int_{0}^{1}\!\int_{S_{u_\infty,\ubar}}
|\hat\chi|^{2}\, d\mu_{\gamma}\, d\ubar,
\qquad
d\mu_{\gamma}=\sqrt{\det(\gamma_{AB})}\,.
\]

\noindent In the short–pulse regime of \cite{An,AnAth}, one prescribes
\[
|\hat\chi(u_\infty,\ubar)|\sim a^{1/2}|u_\infty|^{-1},
\qquad
\gamma(u_\infty,\ubar)\sim |u_\infty|^{2}\gamma_{0},
\]
so that $|S_{u_\infty,\ubar}|\sim |u_\infty|^{2}$ and therefore
\[
E_\infty^{\mathrm{short\text{-}pulse}}
\;\approx\;
\int_{0}^{1}\!\int_{S_{u_\infty,\ubar}}
a\,|u_\infty|^{-2}\, d\mu_{\gamma}
\;\sim\;
a.
\]
Thus the incoming gravitational energy from past null infinity is large.

\noindent In contrast, in the present hierarchy one imposes the dispersive scaling
\[
|\hat\chi(u_\infty,\ubar)|\sim a^{-1/2}|u_\infty|^{-1},
\]
with the same area scale for $S_{u_\infty,\ubar}$. Consequently,
\[
E_\infty^{\mathrm{present}}
\;\approx\;
\int_{0}^{1}\!\int_{S_{u_\infty,\ubar}}
a^{-1}|u_\infty|^{-2}\, d\mu_{\gamma}
\;\sim\;
a^{-1},
\]
which is small. The present construction therefore operates in a regime of weak incoming radiation from null infinity, in sharp contrast with the short–pulse mechanism.

\medskip

\noindent We next examine the Hawking mass of the boundary sphere $S_{-a,0}\subset \underline H_{0}$. Recall that
\[
m_H(S)
=
\frac{r}{2}\left(1+\frac{1}{16\pi}\int_{S}\tr\chi\,\tr\underline\chi\, d\mu_{\gamma}\right),
\qquad
r=\Big(\frac{|S|}{4\pi}\Big)^{1/2}.
\]
On $S_{-a,0}$ we have $r\sim a$ and, by the Gauss equation,
\[
K = -\rho - \frac14 \tr\chi\,\tr\underline\chi + \frac12 \hat\chi\cdot\underline{\hat\chi}.
\]
Using $K\sim a^{-2}$ and $|\hat\chi\cdot\underline{\hat\chi}|\ll a^{-2}$ under the assumed hierarchy, one obtains
\[
\int_{S_{-a,0}}\tr\chi\,\tr\underline\chi\, d\mu_{\gamma}
=
-4\int_{S_{-a,0}}\rho\, d\mu_{\gamma}
+ O(1).
\]
With $|\rho|\sim a^{1/2}|u|^{-3}\sim a^{-5/2}$ at $u=-a$ and $|S_{-a,0}|\sim a^{2}$, it follows that
\[
\int_{S_{-a,0}}\rho\, d\mu_{\gamma}\sim a^{-1/2},
\qquad
m_H(S_{-a,0})\sim a^{1/2}\gg 1.
\]
Thus the Hawking mass of the initial spheres foliating $\underline H_{0}$ is large despite the small incoming radiation from $H_{u_\infty}$.

\medskip

\noindent This should be contrasted with the characteristic setup in Christodoulou’s short–pulse framework~\cite{christodoulou}, where the incoming hypersurface $\underline H_{0}$ is taken to be exactly Minkowskian and all large effects arise from concentrated incoming radiation. The mechanism realized here is different: it is not a collapse driven by a high–energy short pulse, but a boundary–driven large–scale configuration in which substantial mass is distributed over an isotropically large domain while outgoing radiation remains mild.

\noindent This viewpoint is consistent with the boundary–driven geometric mechanisms established by Schoen and Yau \cite{syincompressible} in the Riemannian setting, where quantitative lower bounds on boundary mean curvature impose rigid constraints on the admissible interior geometry and topology. In that context, sufficiently strong boundary convexity forces global interior consequences independent of any concentration of bulk energy. An analogous phenomenon was later identified by Yau in the Lorentzian setting, where appropriate boundary geometric conditions guarantee the existence of marginally outer trapped surfaces at the level of initial data, without requiring positive matter density assumptions.

\noindent The present analysis gives a genuinely dynamical realization of this boundary–effect principle within the Einstein vacuum equations. The mechanism developed here does not rely on short–pulse–type concentration of incoming gravitational radiation. Instead, the decisive inputs are the persistence of strong boundary mean curvature along the outgoing null hypersurface and the presence of large total mass distributed over an isotropically large spatial domain. Under these conditions, the null evolution amplifies the boundary geometry in a controlled manner and leads to the formation of a marginally outer trapped surface in the interior of the spacetime development.

\noindent In particular, this establishes that black hole formation is not restricted to collapse scenarios driven by highly concentrated energy flux. It also occurs in a complementary large–scale regime characterized by dispersed mass and mild radiation, thereby identifying a distinct and robust geometric pathway to trapped surface formation.

\medskip
\noindent A key structural feature in our argument is the isotropic largeness of the strong–field region, quantified through a lower bound for the Schoen–Yau ($H$–) radius. What is essential is not largeness of total mass alone, but largeness of the domain on which curvature and mean–convexity are simultaneously controlled in an essentially direction–independent manner.

\noindent This distinction is clarified by the localized gluing construction of Carlotto–Schoen~\cite{carloto}. Their theorem produces asymptotically flat, scalar–flat (time–symmetric) vacuum initial data sets with arbitrarily large ADM mass whose geometry agrees with a large–mass Schwarzschild end inside a prescribed cone, while remaining exactly Euclidean outside a slightly larger cone. In particular, the gravitational field can be made strong yet highly anisotropic, being confined to a narrow angular sector and completely shielded elsewhere.

\noindent Such examples show that large ADM mass by itself does not enforce any uniform, isotropic geometric control on large coordinate balls or quasi-round domains. In particular, one cannot deduce from mass alone the presence of a large mean–convex barrier region or a domain with large Schoen–Yau radius to which boundary–driven minimal or trapped surface arguments apply. The obstruction is geometric: curvature concentration that is strongly directional can be separated from large portions of the manifold by exactly flat regions.

\noindent This behavior stands in sharp contrast with the regime considered here. Our hypotheses impose quantitative mean–curvature and radius control on an isotropically large domain, ensuring that the dominant curvature and energy flux are not confined to a thin sector but are distributed over a region with uniform geometric thickness. It is precisely this combination — large-scale together with strong positive boundary generalized mean curvature-that allows one to convert curvature concentration into a marginally outer trapped surface through evolution. The result, therefore, isolates a mechanism for MOTS formation that depends on global geometric size and boundary convexity, rather than on total mass alone or quasi-local energy concentration.

\begin{center}
\begin{figure}
\begin{center}
\includegraphics[width=17cm,height=68cm,keepaspectratio,keepaspectratio]{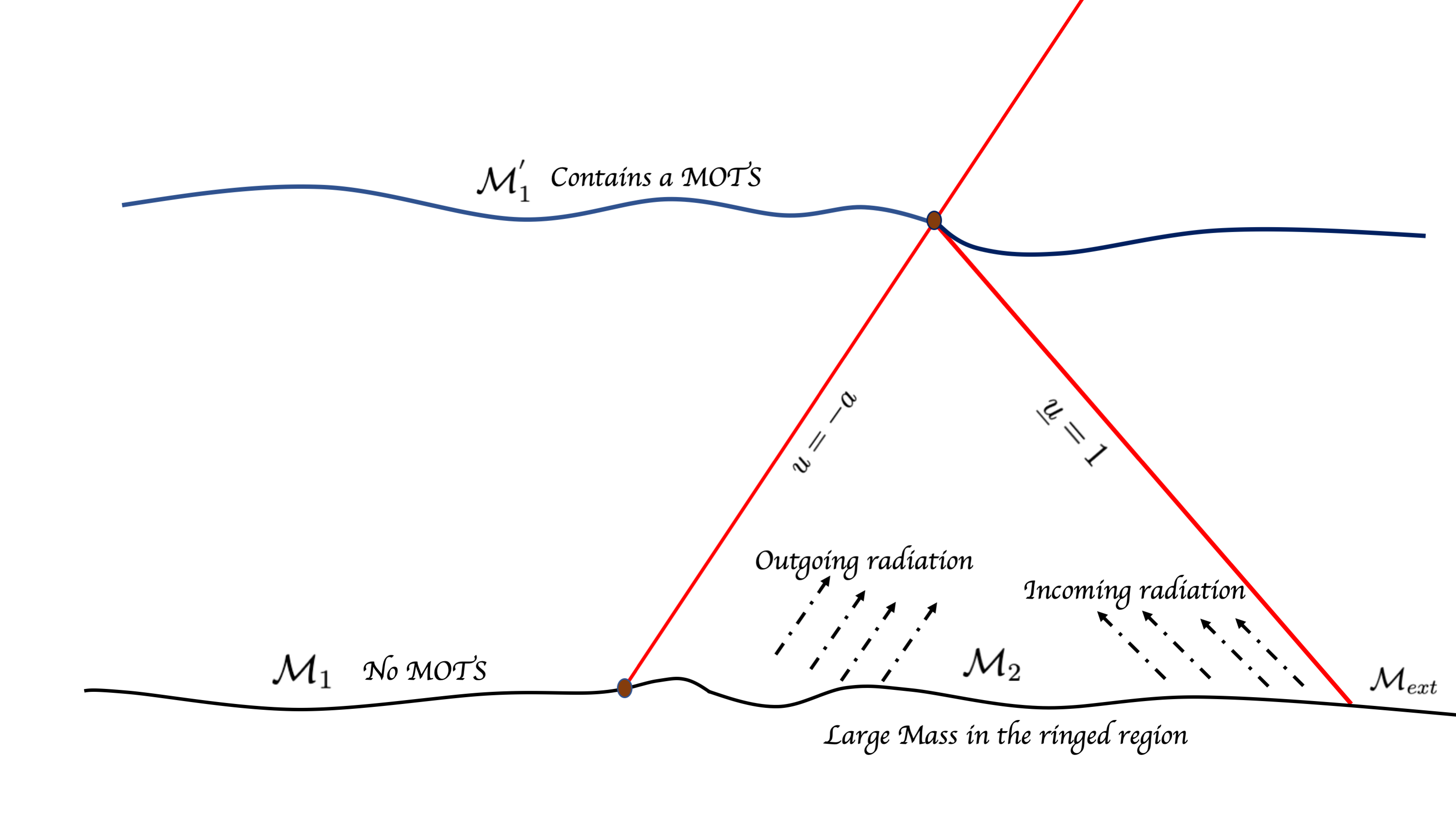}
\end{center}
\begin{center}
\caption{An equivalent alternative configuration yielding MOTS in an evolutionary manner. One may technically not need the characteristic evolution that is performed in this work. Recall that the data on middle part $\mathcal{M}_{2}$ is induced by the characteristic development. However, since we already know the type of data induced on $\mathcal{M}_{2}$, one can in principle start with this data on $\mathcal{M}_{2}$, the interior data on $\mathcal{M}_{1}$ and Kerr exterior on the outside $\mathcal{M}_{3}$ and initiate the time evolution. By uniqueness of the solution of the Cauchy problem of the Einstein's equations one would have MOTS in the interior of isotropically large domain $\mathcal{M}^{'}_{1}$.}
\label{fig:2}
\end{center}
\end{figure}
\end{center}

\subsection{Acknowledgements}
\noindent This work is supported by the Beijing Institute of Mathematical Sciences and Applications of the Yau Mathematical Science Center at Tsinghua University and Beijing Municipal Govt. We thank Professor Sergiu Klainerman and Xuantao Chen for pointing out several important points.

 \section{Part 1: Solution of the Characteristic Part}
 \label{semiglobal}
 \noindent This is an important aspect of this study. But the study of characteristic initial value problems and proving a semi-global development with appropriate smooth initial data is well-established. We will follow the signature of decay rate technology developed by An \cite{An} that is designed to handle large data regime in a systematic way.
	\subsection{Construction of the double null gauge}

\noindent Denote by $L\mathcal{M}$ the \textit{frame bundle} of $\mathcal{M}$. We construct a double null gauge, meaning a smooth section of this bundle such that, through it, each point $p \in \mathcal{M}$ maps to a renormalized frame $(e_1, e_2, e_3, e_4) \in L\mathcal{M}$ with $g(e_3, e_4) =-2$, $g(e_A, e_B) = \delta_{AB}$ and $g(e_3, e_A) = g(e_4, e_A) =0$.
    
    \vspace{3mm}
  \noindent  We begin with two null hypersurfaces $H_{u_{\infty}}, \Hbar_0$ and their intersection $S_{u_{\infty},0}$, a topological 2-sphere. For any point $q$ on this $2-$sphere, the tangent space $T_qS_{u_{\infty},u}$ is 2-dimensional and admits a $2-$dimensional orthogonal complement $T_q^{\text{Per}p}{S_{u_{\infty},0}}$, on which we can find two future-directed null vectors $L^{\prime}_q$ and $\Lbar^{\prime}_q$, normalized so that  \[ g(L^{\prime}_q, \Lbar^{\prime}_q) = -2. \]The pair $\begin{Bmatrix} L^{\prime}_q, \Lbar^{\prime}_q\end{Bmatrix}$ is uniquely determined up to a scaling factor $s >0$ \[\begin{Bmatrix} L^{\prime}_q, \Lbar^{\prime}_q\end{Bmatrix} \mapsto \begin{Bmatrix} s L^{\prime}_q, s^{-1}\Lbar^{\prime}_q\end{Bmatrix}.\]
	Starting from $q$ and initially tangent to $L^{\prime}_q$, a unique geodesic is sent out. Call this geodesic, $l_q$. We extend the vectorfield $L^{\prime}$ along $l_q$ by parallel transport: $D_{L^{\prime}}L^{\prime} =0$. It then follows by simple calculation that $l_q$ is null, so that $g(L^{\prime}, L^{\prime}) =0$ along $l_q$. Gathering the $\begin{Bmatrix} l_q \end{Bmatrix}$ together we get a null hypersurface $H_{u_{\infty}}$. The null hypersurface $\Hbar_0$ is obtained similarly.
	Note that, by construction, given a point $p$ on $H_{u_{\infty}}$ or $\Hbar_0$, in the corresponding tangent spaces, there is a preferred null vector $L^{\prime}_p$ or $\Lbar^{\prime}_p$.
	
	\vspace{3mm}
	\noindent We next choose a lapse function $\Omega$, which we define to be equal to 1 on $S_{u_{\infty},0}$ and then extend as a continuous function along both initial null hypersurfaces\footnote{Indeed, there is a gauge freedom in choosing $\Omega$ on the initial hypersurfaces.}. Define the vector fields
	
	\[ L := \Omega^2 L^{\prime} \hspace{2mm} \text{along}\hspace{2mm} H_{u_{\infty}} \hspace{2mm} \text{and} \hspace{2mm} \Lbar := \Omega^2 \Lbar^{\prime} \hspace{2mm} \text{along}
	\hspace{2mm} 
\Hbar_0.\]We use these vector fields to define two functions \[ \ubar \hspace{2mm} \text{on} \hspace{2mm} H_{u_{\infty}} \hspace{2mm} \text{satisfying} \hspace{2mm} L\ubar=1 \hspace{2mm} \text{on} \hspace{2mm}  H_{u_{\infty}} \hspace{2mm} \text{and} \hspace{2mm} \ubar =0 \hspace{2mm} \text{on} \hspace{2mm} S_{u_{\infty},0},\]\[ u \hspace{2mm} \text{on} \hspace{2mm} \Hbar_0 \hspace{2mm} \text{satisfying} \hspace{2mm} \Lbar u=1 \hspace{2mm} \text{on} \hspace{2mm}  \Hbar_0 \hspace{2mm} \text{and} \hspace{2mm} u =0 \hspace{2mm} \text{on} \hspace{2mm} S_{u_{\infty},0}.\]We now use these (so-called \textit{optical}) functions to proceed further with the construction. Let $S_{u_{\infty}, \ubar^{\prime}}$ be the embedded $2-$surface on $H_{u_{\infty}}$ on which $\ubar = \ubar^{\prime}$ and define $S_{u,0}$ similarly. At each point $p \in S_{u_{\infty}, \ubar^{\prime}}$ we have constructed a preferred null vector $L^{\prime}_p$. It follows that we can uniquely determine an incoming $g-$null vector $\Lbar^{\prime}_p$ satisfying $g(L^{\prime}_p, \Lbar^{\prime}_p)=-2$. Let $\underline{l}_p$ be the unique geodesic emanating from $p$ with tangent vector $\Lbar^{\prime}_p$. We extend the definition of $\Lbar^{\prime}$ along $\underline{l}_q$ by parallel transport, so that $D_{\underline{L}^{\prime}}\Lbar^{\prime}=0$. Gathering all the $\begin{Bmatrix} \underline{l}_p\end{Bmatrix}$ on $S_{u_{\infty}, \ubar^{\prime}}$, we thus obtain the null hypersurface $\Hbar_{\ubar^{\prime}}$. We obtain the null hypersurface $H_{u^{\prime}}$ in an analogous way and define $S_{u,\ubar}:= H_u\cap \Hbar_{\ubar}$. Having constructed the vector fields $L^{\prime}$ and $\Lbar^{\prime}$ in all of the spacetime region, we extend the definition of the lapse function $\Omega$ by requiring, at each point $p \in S_{u, \ubar}$ that \[ g(L^{\prime}_p, \Lbar^{\prime}_p) = -2 \restri{\Omega^{-2}}{p}. \]The incoming null hypersurfaces $\begin{Bmatrix} \Hbar_{\ubar} \end{Bmatrix}_{0\leq \ubar \leq 1}$ and outgoing null hypersurfaces $\begin{Bmatrix} H_u \end{Bmatrix}_{-a \leq u \leq u_{\infty}}$ along with their pairwise intersections $S_{u, \ubar}$ together define a \textit{double null foliation} on the spacetime. On a given $S_{u,\ubar}$, we have $g(\Omega L^{\prime}, \Omega \Lbar^{\prime}) = -2$ and hence the vectors \[e_3:= \Omega \Lbar^{\prime}, e_4:= \Omega L^{\prime}\] define a normalized null pair at each point on the sphere. We make the gauge choice $\Omega \equiv 1$ along both initial hypersurfaces.
	
\subsection{Choice of coordinates and expression of the metric}	
\noindent To define angular coordinates on each $S_{u,\ubar}$ in a smooth way, we begin by defining angular coordinates	on $S_{u_{\infty},0}$. Since this is a standard 2-sphere in Minkowki space, we can use the stereographic projection coordinates $(\theta^1, \hsp \theta^2)$ on $S_{u_{\infty},0}$. We first extend this coordinate to the whole of $\Hbar_0$ by insisting that $\slashed{\mathcal{L}}_{\Lbar}\theta^A = 0$ on $\Hbar_0$ for $A=1, \hsp 2$ and then to the whole spacetime by insisting that, for all $u$, $\slashed{\mathcal{L}}_L \theta^A =0$, where $L$ initially starts normal to some $S_{u,0}$. As such we have established a coordinate system $(u,\ubar, \theta^1, \theta^2)$ in a neighbourhood of the initial sphere. In these coordinates, the vectors $e_3, \hsp e_4$ become

\[e_3 = \Omega^{-1}\left( \frac{\partial}{\partial u} + b^A \hsp \frac{\partial}{\partial \theta^A}\right), \hspace{2mm} e_4 = \Omega^{-1} \frac{\partial}{\partial \ubar} \] and the metric now takes the following form:

\be g= -2\Omega^2 \left(\text{d}u \otimes \text{d}\ubar + \text{d}\ubar \otimes \text{d}u \right) + \gslash_{AB} \left(\text{d}\theta^A - b^A \text{d}u\right)\otimes\left(\text{d}\theta^B- b^B \text{d}u\right)  \ee The section that maps $p\in \mathcal{M} \mapsto \left(\restri{\theta^1}{p},\restri{\theta^2}{p} \restri{e_3}{p}, \restri{e_4}{p}\right)$ is the double null gauge we wanted to construct.

\subsection{The vacuum Einstein equations expressed in the double null gauge}
	\noindent In this section we are going to express the Einstein equations in the double null gauge given above.

 \vspace{3mm}

 \par\noindent Before we are ready to present the equations, we introduce a few basic definitions. First of all, denote by $\nabla$ the covariant derivative operators induced by $D$ on $S_{u,\ubar}$. Let $\nabla_3, \nabla_4$ denote the projections of the covariant derivatives $D_3$ and $D_4$ to $S_{u,\ubar}$. For two $1-$forms $\phi_A^{1}, \hsp \phi_A^{2}$, we define 

 \[ (\phi_1 \hat{\otimes} \phi_2)_{AB} := \phi_A^1 \hsp \phi_B^2 + \phi_B^1 \hsp \phi_A^2 - \gamma_{AB}\hsp (\phi^1 \cdot \phi^2),   \]while for symmetric $2-$tensors $\phi_{AB}^1, \hsp \phi_{AB}^2$, we define 
 \[  (\phi^1 \wedge \phi^2)_{AB} := \slashed{\epsilon}^{AB} \hsp(\gamma^{-1})^{CD} \hsp\phi_{AB}^1 \hsp \phi_{CD}^2.    \]Here $\slashed{\epsilon}$ is the volume form associated with the metric $\gamma$. Moreover, by $\phi^1 \cdot \phi^2$ we mean an arbitrary contraction of the tensor product of $\phi^1$ and $\phi^2$ with respect to the metric $\gamma$. We also define suitable trace, divergence, and curl operators. For totally symmetric tensors $\phi$, we define these operators as follows:
 \[  (\div \phi)_{A_1 \dots A_r}:= \nabla^{B}\phi_{B A_1 \dots A_r},  (\curl \phi)_{A_1\dots A_r} := \slashed{\epsilon}^{BC}\nabla_B \phi_{C A_1\dots A_r}, (\tr\phi)_{A_1 \dots A_{r-1}} :=    (\gamma^{-1})^{BC} \phi_{BC A_1 \dots A_{r-1}}.        \]Be it noted that the operators $\widehat{\div}$ and $\widehat{\curl}$ can be defined similarly on sections of the mixed bundle. Furthermore, we introduce the $*$ operator on $1-$forms and $2-$tensors:
 \[ \Hodge{\phi}_A := \gamma_{AC}\slashed{\epsilon}^{CB}\hsp \phi_B,        \] \[ \Hodge{\phi}_{AB} := \gamma_{BD} \hsp \slashed{\epsilon}^{DC}\hsp \phi_{AC}.  \]
 Finally, on a $1-$form $\phi$, the operator $\nabla \hat{\otimes}$ is defined as follows:  \[(\nabla \hat{\otimes} \phi)_{A} := \nabla_B \phi_A + \nabla_A \phi_B - \gamma_{AB} \hsp \div \phi. \] 

\noindent The vacuum Einstein equations take the following form in this double null gauge
\begin{gather}
\nabla_{4}\tr\chi+\frac{1}{2}(\tr\chi)^{2}=-|\hat{\chi}|^{2}_{\gamma}-2\omega \tr\chi\\
\nabla_{4}\hat{\chi}+\tr\chi \hat{\chi}=-2\omega\hat{\chi}-\alpha\\
\nabla_{3}tr\underline{\chi}+\frac{1}{2}(tr\underline{\chi})^{2}=-|\hat{\underline{\chi}}|^{2}_{\gamma}-2\underline{\omega}tr\underline{\chi}\\
\nabla_{3}\hat{\underline{\chi}}+tr\underline{\chi}\hat{\underline{\chi}}=-2\underline{\omega}\hat{\underline{\chi}}-\underline{\alpha}\\
\label{eq:eta}
\nabla_{4}\eta_{a}=-\chi\cdot(\eta-\underline{\eta})-\beta\\
\nabla_{3}\underline{\eta}_{a}=-\underline{\chi}\cdot(\underline{\eta}-\eta)+\underline{\beta} \\
\nabla_{4}\underline{\omega}=2\omega\underline{\omega}+\frac{3}{4}|\eta-\underline{\eta}|^{2}-\frac{1}{4}(\eta-\underline{\eta})\cdot(\eta+\underline{\eta})-\frac{1}{8}|\eta+\underline{\eta}|^{2}+\frac{1}{2}\rho \\
\nabla_{3}\omega=2\omega\underline{\omega}+\frac{3}{4}|\eta-\underline{\eta}|^{2}+\frac{1}{4}(\eta-\underline{\eta})\cdot(\eta+\underline{\eta})-\frac{1}{8}|\eta+\underline{\eta}|^{2}+\frac{1}{2}\rho \\
\nabla_{4}tr\underline{\chi}+\frac{1}{2}\tr\chi tr\underline{\chi}=2\omega tr\underline{\chi}+2\text{div}\underline{\eta}+2|\underline{\eta}|^{2}_{\gamma}+2\rho-\hat{\chi}\cdot\hat{\underline{\chi}}\\
\nabla_{3}\tr\chi+\frac{1}{2}tr\underline{\chi}\tr\chi=2\underline{\omega}\tr\chi+2\text{div}\eta+2|\eta|^{2}+2\rho-\hat{\chi}\cdot \hat{\underline{\chi}}\\
\nabla_{4}\hat{\underline{\chi}}+\frac{1}{2}\tr\chi\hat{\underline{\chi}}=\nabla\hat{\otimes}\underline{\eta}+2\omega\hat{\underline{\chi}}-\frac{1}{2}tr\underline{\chi}\hat{\chi}+\underline{\eta}\hat{\otimes}\underline{\eta}\\
\nabla_{3}\hat{\chi}+\frac{1}{2}tr\underline{\chi}\hat{\chi}=\nabla\hat{\otimes}\eta+2\underline{\omega}\hat{\chi}-\frac{1}{2}\tr\chi\hat{\underline{\chi}}+\eta\hat{\otimes}\eta\\
\label{eq:1}
\text{div}\hat{\chi}=\frac{1}{2}\nabla \tr\chi-\frac{1}{2}(\eta-\underline{\eta})\cdot(\hat{\chi}-\frac{1}{2}\tr\chi\gamma)-\beta\\
\text{div}\hat{\underline{\chi}}=\frac{1}{2}\nabla tr\underline{\chi}-\frac{1}{2}(\underline{\eta}-\eta)\cdot(\hat{\underline{\chi}}-\frac{1}{2}tr\underline{\chi}\gamma)-\underline{\beta}\\
\text{curl}\eta=\hat{\underline{\chi}}\wedge\hat{\chi}+\sigma\epsilon=-\text{curl} \underline{\eta}\\
\label{eq:4}
K-\frac{1}{2}\hat{\chi}\cdot \hat{\underline{\chi}}+\frac{1}{4}\tr\chi tr\underline{\chi}=-\rho.
\end{gather}
The Bianchi equations read in this gauge as follows
\begin{equation}
	    \begin{split}
	        \nabla_3 \alpha + \frac{1}{2}\tr\chibar \alpha = &\nabla \hat{\otimes}\beta + 4 \omegabar \alpha - 3\left(\chihat \rho + \Hodge{\chihat} \sigma \right)+ (\zeta+4\eta)\hat{\otimes}\beta 
	    \end{split}
	\end{equation}
	\begin{equation}
	        \nabla_4 \beta + 2 \tr\chi \beta = \text{div} \alpha -2 \omega \beta + \left(\eta - 2 \zeta\right)\cdot \alpha,
	\end{equation}
	\begin{equation}
	    \nabla_3 \beta + \tr\chibar \beta = \nabla \rho+ \Hodge{\nabla}\sigma + 2\omegabar \beta +2 \chihat \cdot \betabar + 3 \left(\eta \rho+ \Hodge{\eta}\sigma \right),
	\end{equation}
	
	\begin{equation}
	    \nabla_4 \sigma + \frac{3}{2} \tr\chi \sigma = - \text{div} \Hodge{\beta} + \frac{1}{2}\hsp \chibarhat \cdot \Hodge{\alpha} - (\zeta+2 \etabar) \cdot \Hodge{\beta},
	\end{equation}
	\begin{equation}
	    \nabla_3 \sigma + \frac{3}{2}\tr\chibar \sigma = -\text{div}\Hodge{\betabar} + \frac{1}{2}\hsp\chihat \cdot \Hodge{\alphabar} -(\zeta + 2\eta)\cdot \Hodge{\betabar},
	\end{equation}
	\begin{equation}
	    \nabla_4 \rho + \frac{3}{2}\tr\chi \rho = \text{\div}\beta - \frac{1}{2}\chibarhat \cdot \alpha + (\zeta + 2\etabar) \cdot \beta, 
	\end{equation}
	\begin{equation}
	    \nabla_3 \rho + \frac{3}{2}\tr\chibar \rho = -\text{div}\betabar - \frac{1}{2}\chihat \cdot \alphabar +(\zeta-2\eta)\cdot \betabar, 
	\end{equation}
	
	\begin{equation}
	    \nabla_4 \betabar + \tr\chi \betabar = -\nabla \rho + \Hodge{\nabla} \sigma + 2 \omega \betabar + 2\chibarhat \cdot \beta -3\left(\etabar \rho - \Hodge{\etabar}\sigma \right), 
	\end{equation}
	\begin{equation}
	    \nabla_3 \betabar + 2 \tr\chibar \betabar = -\text{div}\alphabar -2 \omegabar \betabar + \etabar\cdot\alphabar,
	\end{equation}
	\begin{equation}
	    \begin{split}
	        \nabla_4 \alphabar + \frac{1}{2}\tr\chi \hsp \alphabar = &-\nabla \hat{\otimes}\betabar +4 \omega\hsp \alphabar -3\left(\chibarhat \rho - \Hodge{\chibarhat}\sigma \right) +\left(\zeta - 4\etabar\right)\hat{\otimes}\betabar.
	    \end{split}
	\end{equation}

\subsection{Integration}
Let $U$ be a coordinate patch on a $2-$sphere $S_{u,\ubar}$ and let $p_U$ be a partition of unity subordinate to $U$. For a function $\phi$, we define its integral on a $2-$sphere as well as on the null hypersurfaces $H_u$ and $\Hbar_{\ubar}$. 

\begin{equation}
    \int_{S_{u,\ubar}} \phi := \sum_{U} \int_{-\infty}^{\infty}\int_{-\infty}^{\infty} \phi \cdot p_U \cdot \sqrt{\text{det}\gamma}\hsp \text{d}\theta^1 \text{d}\theta^2,
\end{equation}
\begin{equation}
    \int_{\Hu} := \sum_{U} \int_{0}^{\ubar} \int_{-\infty}^{\infty}\int_{-\infty}^{\infty} \phi \cdot2 \hsp p_U \cdot \Omega \cdot \sqrt{\text{det}\gamma}\hsp \text{d}\theta^1 \text{d}\theta^2 \dubarprime,
\end{equation}
\begin{equation}
    \int_{\underline{H}_{\ubar}^{(u_{\infty},u)}} := \sum_{U} \int_{u_{\infty}}^{u} \int_{-\infty}^{\infty}\int_{-\infty}^{\infty} \phi \cdot2 \hsp p_U \cdot \Omega \cdot \sqrt{\text{det}g}\hsp \text{d}\theta^1 \text{d}\theta^2 \duprime,
\end{equation}For a spacetime region $D_{u,\ubar} : =  \begin{Bmatrix} \left( u^{\prime}, \ubar^{\prime}, \theta^1, \theta^2 \right) \hsp \mid \hsp u_{\infty} \leq u^{\prime} \leq u, 0\leq \ubar^{\prime} \leq \ubar \end{Bmatrix}$, we define the spacetime integral

\begin{equation}
      \int_{D_{u,\ubar}} \phi := \sum_{U} \int_{u_{\infty}}^{u} \int_{0}^{\ubar}  \int_{-\infty}^{\infty}\int_{-\infty}^{\infty} \phi \cdot  p_U \cdot \Omega^2 \cdot \sqrt{-\text{det}g}\hsp \text{d}\theta^1 \text{d}\theta^2 \dubarprime\duprime.
\end{equation}We proceed with the definition of $L^p$ norms $(1\leq p < \infty)$ for an arbitrary tensorfield $\phi$:

\begin{equation}
 \LpSu{\phi}^p :=   \int_{\Suu} \langle \phi, \phi \rangle_{\gamma}^{\frac{p}{2}}
\end{equation}
\begin{equation}
    \LpHu{\phi}^p := \int_{\Hu}\langle \phi, \phi \rangle_{\gamma}^{\frac{p}{2}}
\end{equation}
\begin{equation}
    \LpHbaru{\phi}^p := \int_{\Hbu}\langle \phi, \phi \rangle_{\gamma}^{\frac{p}{2}}.
\end{equation}For the case $p=\infty$, we separately define 
\begin{equation}
    \LinftySu{\phi} := \underset{(\theta^1, \theta^2) \in \Suu}{\text{sup}}\langle \phi, \phi \rangle_{\gamma}^{\frac{1}{2}}(\theta^1,\theta^2).
\end{equation}

\subsection{Signature for decay rates and scale-invariant norms}
\label{signature}
\noindent Perhaps the most challenging aspect of trapped surface formation results, historically, has been the attempt to find initial data that are, in an appropriate sense, large (this is by necessity, as is implied by the monumental work of \cite{ChrKl}) but also small enough to allow for an existence result of a spacetime region that gives trapped surfaces the time they would require to form. The first such initial data set, in the absence of symmetries, was given by \cite{christodoulou}. Later contributions include \cite{Kl-Rod}, \cite{AL17} and \cite{A17}. Moreover, one would have to construct norms that preserve, at least approximately, the hierarchy present in the initial data upon evolution of the Einstein equations. The signature for decay rates, which was first introduced in \cite{AnThesis}, is the tool we will use in the present paper to build \textit{scale-invariant norms}. These will be norms that, upon evolution of the initial data, remain bounded above by a uniform constant (with the exception of a few anomalous terms). In particular, the dispersive estimates for the Ricci coefficients and the Weyl curvature components are restricted by their transport equations, and the signature of decay rates allows one to systematically obtain such estimates. For another application of this framework, see \cite{AnAth}.

\vspace{3mm}

\noindent To each $\phi \in \begin{Bmatrix}
\alpha, \alphabar, \tbeta, \tbetabar, \rho, \sigma, \eta, \etabar, \chi, \chibar, \omega, \omegabar, \zeta,\gamma
\end{Bmatrix}$ we associate its \textit{signature for decay rates} $s_2(\phi)$:

\[ s_2(\phi) = 0\cdot N_4(\phi) + \frac{1}{2}N_A(\phi) + 1\cdot N_3(\phi)-1.\]Here $N_\alpha(\phi)$ $(\alpha = 1,2,3,4)$ denotes the number of times $e_\alpha$ appears in the definition of $\phi$. We get the following tables of signatures:

{\renewcommand{\arraystretch}{1.25}
\begin{center}
\begin{tabular}{||c || c c c c c c c c c c c c c c||} 
 \hline$\phi$ &
 $\alpha$ & $\alphabar$ & $\beta$ & $\betabar$ &$\rho$ &$\sigma$ & $\eta$ & $\etabar$ & $\chi$ &$\chibar$ & $\omega$ & $\omegabar$ & $\zeta$ & $\gamma$  \\ [1ex]  \hline\hline
 $s_2(\phi)$ & 0 & 2 & 0.5 & 1.5 & 1  & 1 & 0.5 & 0.5  &0 & 1&0 & 1&0.5 &0 \\ 
 \hline
\end{tabular}
\end{center}

\par\noindent Several properties of $s_2$ follow:

\[ s_2(\nabla_4 \phi) = s_2(\phi), \hspace{2mm} \]\[s_2(\nabla \phi) = s_2(\phi) +\frac{1}{2}, \hspace{2mm}\]\[s_2(\nabla_3 \phi) = s_2(\phi) +1, \hspace{2mm} \]Finally, perhaps the most important property of $s_2$ is \textit{signature conservation}: \be \label{sc} s_2(\phi_1 \cdot\phi_2) = s_2(\phi_1)+ s_2(\phi_2), \hspace{2mm}. \ee This allows for the (almost)-preservation of the scale-invariant norms upon evolution, as we shall see. 

\vspace{3mm}

\noindent For any horizontal tensor-field $\phi$, we define the following norms:

\begin{equation}
    \scaleinfinitySu{\phi} := a^{-s_2(\phi)} \lvert u \rvert^{2s_2(\phi)+1}\inftySu{\phi},
\end{equation}
\begin{equation}
       \scaletwoSu{\phi} := a^{-s_2(\phi)} \lvert u \rvert^{2s_2(\phi)}\twoSu{\phi},
\end{equation}
\begin{equation}
       \scaleoneSu{\phi} := a^{-s_2(\phi)} \lvert u \rvert^{2s_2(\phi)-1}\oneSu{\phi},
\end{equation}
Notice the difference in the $u$-weights amongst the definitions. 

\vspace{3mm}

\noindent A crucial property of the above norms is the \textit{scale-invariant H\"older's inequalities} that they satisfy. For $\Y$ denoting an arbitrary $\phi$there hold:
\begin{equation}
\scaleoneSu{\Y_1 \cdot \Y_2} \leq \frac{1}{\lvert u \rvert} \scaletwoSu{\Y_1}\scaletwoSu{\Y_2},
\end{equation}
\begin{equation}
    \scaleoneSu{\Y_1 \cdot \Y_2} \leq \frac{1}{\lvert u \rvert} \scaleinfinitySu{\Y_1}\scaleoneSu{\Y_2},
\end{equation}
\begin{equation}
    \label{257}\scaletwoSu{\Y_1 \cdot \Y_2} \leq \frac{1}{\lvert u \rvert} \scaleinfinitySu{\Y_1}\scaletwoSu{\Y_2}.
\end{equation}Notice that this is possible partly thanks to the signature conservation property \eqref{sc}. In the region of study, the factor $\frac{1}{\lvert u\rvert}$ plays the role of measuring the \textit{smallness} of the nonlinear terms. The above inequalities are the primary tools that will be used to close the bootstrap argument required for the existence part.

\subsection{Norms}\label{Norms}
\noindent Let $N \geq 3$ be a natural number. Let $\psi_g \in \begin{Bmatrix}\tr\chi, \eta,\etabar \end{Bmatrix}$, $\Psi_u \in \begin{Bmatrix} \beta, \rho, \sigma,\betabar \end{Bmatrix}$ and $\Psi_{\ubar} \in \begin{Bmatrix} \rho, \sigma, \betabar, \alphabar \end{Bmatrix}$. Moreover, we will sometimes use $\Psi$ to denote an arbitrary $\Psi_u$ or a $\Psi_{\ubar}$. Also, define $\tildetr := \tr\chibar + \frac{2}{\lvert u\rvert}$. For $0\leq i \leq N$, we define \footnote{In this setting, scale-invariant norm of $\nabla_{3}$ derivative on $\chihat$ would be $\frac{a^{\frac{1}{2}}}{|u|} \bigg|\bigg|(a|u|^{-1}\nabla_{3})^{I}\chibarhat\bigg|\bigg|_{L^{\infty}_{sc}(S_{u,\ubar})}$-$\nabla_{3}$ derivative costs $|u|^{-1}$. This is not necessary to include in this study.}

\begin{align}
        \Gamma_{i,\infty}(u,\ubar) := & \scaleinfinitySu{\aln \psi_g} + \frac{a^{\frac{1}{2}}}{\lvert u \rvert} \scaleinfinitySu{\aln \chibarhat}\\ &+ \frac{a}{\lvert u \rvert^2}\scaleinfinitySu{\aln \tr\chibar} + \frac{a}{\lvert u \rvert} \scaleinfinitySu{\aln \tildetr}\\
        &+a^{\frac{1}{2}}\scaleinfinitySu{\aln (\chihat,\omega,\omegabar)}
 \end{align}
   \[ \mathcal{R}_{i, \infty}(u,\ubar) := a^{\frac{1}{2}}\scaleinfinitySu{\aln \alpha} + a^{\frac{1}{2}}\scaleinfinitySu{\aln \Psi_{u}}+a^{\frac{1}{2}}\scaleinfinitySu{\aln \alphabar},\] 
Furthermore, for $0\leq i \leq N+4$ and $0\leq j \leq N+4$, we define

\begin{equation}
      \begin{split}
        \Gamma_{j,2}(u,\ubar) := & \scaletwoSu{\aln \psi_g} + \frac{a^{\frac{1}{2}}}{\lvert u \rvert} \scaletwoSu{\aln \chibarhat}\\ &+ \frac{a}{\lvert u \rvert^2}\scaletwoSu{\aln \tr\chibar} + \frac{a}{\lvert u \rvert} \scaletwoSu{\aln \tildetr}+a^{\frac{1}{2}}\scaletwoSu{\aln (\chihat,\omega,\omegabar)},
    \end{split}
\end{equation}
\begin{equation}
    \mathcal{R}_{i,2}(u,\ubar):= a^{\frac{1}{2}}\scaletwoSu{\aln \alpha} + a^{\frac{1}{2}}\scaletwoSu{\aln (\beta,\betabar,\rho,\sigma)}+a^{\frac{1}{2}}\scaletwoSu{\aln \alphabar},
\end{equation}
Finally, for $0\leq i \leq N+4$, we define the norms along the null hypersurfaces:
\begin{equation}
    \mathcal{R}_i(u,\ubar) := a^{\frac{1}{2}}\scaletwoHu{\aln \alpha} + a^{\frac{1}{2}}\scaletwoHu{\aln \Psi_{u}},
\end{equation}

\begin{equation}
    \underline{\mathcal{R}}_i(u,\ubar) := a^{\frac{1}{2}}\scaletwoHbaru{\aln \beta} + a^{\frac{1}{2}}\scaletwoHbaru{\aln \Psi_{\ubar}}
\end{equation}

\subsection{Commutation Formulae}
\noindent Use the definition of the covariant derivatives and project it onto the topological 2-sphere $S_{u,\ubar}$ to yield
\begin{eqnarray}
[\nabla_{4},\nabla_{B}]\mathcal{G}^{P}~_{QA_{1}A_{2}A_{3}\cdot\cdot\cdot\cdot A_{n}}=[D_{4},D_{B}]\mathcal{G}^{P}~_{QA_{1}A_{2}A_{3}\cdot\cdot\cdot\cdot A_{n}}\nonumber+(\nabla_{B}\log\Omega)\nabla_{4}\mathcal{G}^{P}~_{QA_{1}A_{2}A_{3}\cdot\cdot\cdot\cdot A_{n}}\\\nonumber 
-\gamma^{CD}\chi_{BD}\nabla_{C}\mathcal{G}^{P}~_{QA_{1}A_{2}A_{3}\cdot\cdot\cdot\cdot A_{n}}-\sum_{i=1}^{n}\gamma^{CD}\chi_{BD}\underline{\eta}_{A_{i}}\mathcal{G}^{P}~_{QA_{1}A_{2}A_{3}\cdot\cdot\hat{A}_{i}C\cdot\cdot A_{n}}\\\nonumber 
+\sum_{i=1}^{n}\gamma^{CD}\chi_{A_{i}B}\underline{\eta}_{D}\mathcal{G}^{P}~_{QA_{1}A_{2}A_{3}\cdot\cdot\hat{A}_{i}C\cdot\cdot A_{n}}
\end{eqnarray}
\begin{eqnarray}
[D_{4},D_{A}]\mathcal{G}^{P}~_{QA_{1}A_{2}\cdot\cdot\cdot\cdot A_{n}}=-\sum_{i}R(e_{C},e_{A_{i}},e_{4},e_{A})\mathcal{G}^{P}~_{QA_{1}\cdot\cdot\hat{A}_{i},\cdot\cdot A_{n}}
\nonumber+(\nabla_{A}\log\Omega)\nabla_{4}\mathcal{G}^{P}~_{QA_{1}A_{2}\cdot\cdot\cdot\cdot A_{n}}.
\end{eqnarray}
Notice that the last term is redundant since it already appears in the previous expression. We need to take care of the curvature terms.

\begin{eqnarray}
[\nabla_{4},\nabla_{A}]\mathcal{G}\sim \beta\mathcal{G}+(\eta+\underline{\eta})\nabla_{4}\mathcal{G}-\chi\nabla\mathcal{G}+\chi\underline{\eta}\mathcal{G}.
\end{eqnarray}
 For higher order commutation, we have the following lemma:\\
\begin{lemma} 
\label{commutation}
\textit{Suppose $\mathcal{G}$ is a section of the product vector bundle $~^{k}\otimes T^{*}\mathbb{S}^{2}$, $k\geq 1$,  that satisfies $\nabla_{4}\mathcal{G}=\mathcal{F}_{1}$ and $\nabla_{4}\nabla^{I}\mathcal{G}=\mathcal{F}^{I}_{1}$, then $\mathcal{F}^{I}_{1}$ verifies the following schematic expression:}
\begin{equation}
\begin{split}
\mathcal{F}^{I}_{1}\sim&\sum_{J_{1}+J_{2}+J_{3}+J_{4}=I-1}\nabla^{J_{1}}(\eta+\underline{\eta})^{J_{2}}\nabla^{J_{3}}\beta\nabla^{J_{4}}\mathcal{G}\\
&+\sum_{J_{1}+J_{2}+J_{3}=I}\nabla^{J_{1}}(\eta+\underline{\eta})^{J_{2}}\nabla^{J_{3}}\mathcal{F}_{1}\\+&\sum_{J_{1}+J_{2}+J_{3} +J_{4}=I}\nabla^{J_{1}}(\eta+\underline{\eta})^{J_{2}}\hat{\nabla}^{J_{3}}\chi\nabla^{J_{4}}\mathcal{G}.
\end{split}
\end{equation}
Similarly, for $\nabla_{3}\mathcal{G}=\mathcal{F}_{2}$, and $\nabla_{3}\nabla^{I}\mathcal{G}=\mathcal{F}^{I}_{2}$, 
\begin{equation}
\begin{split}
\mathcal{F}^{I}_{2}\sim &\sum_{J_{1}+J_{2}+J_{3}+J_{4}=I-1}\nabla^{J_{1}}(\eta+\underline{\eta})^{J_{2}}\nabla^{J_{3}}\underline{\beta}\nabla^{J_{4}}\mathcal{G}\\
+&\sum_{J_{1}+J_{2}+J_{3}=I}\nabla^{J_{1}}(\eta+\underline{\eta})^{J_{2}}\nabla^{J_{3}}\mathcal{F}_{2}\\+&\sum_{J_{1}+J_{2}+J_{3} +J_{4}=I}\nabla^{J_{1}}(\eta+\underline{\eta})^{J_{2}}\hat{\nabla}^{J_{3}}\underline{\chi}\nabla^{J_{4}}\mathcal{G}.
\end{split}
\end{equation}
\end{lemma}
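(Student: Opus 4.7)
\medskip

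\noindent \textbf{Proof Proposal.} The plan is to establish both schematic identities by induction on $I$, taking as the base ingredient the first--order commutator identity
\[
[\nabla_{4},\nabla_{A}]\mathcal{G}\sim \beta\cdot\mathcal{G}+(\eta+\underline{\eta})\,\nabla_{4}\mathcal{G}-\chi\cdot\nabla\mathcal{G}+\chi\cdot\underline{\eta}\cdot\mathcal{G},
\]
which is already derived immediately above the statement of the lemma from the definition of $D$ and the projection of $[D_{4},D_{A}]$ onto $S_{u,\ubar}$. I will prove the $\nabla_{4}$ version first; the $\nabla_{3}$ version then follows verbatim by the manifest symmetry of the double null gauge, replacing $\beta$ by $\underline{\beta}$, $\chi$ by $\underline{\chi}$, and swapping $\eta\leftrightarrow\underline{\eta}$ where it matters schematically.

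For the base case $I=1$, I would write $\nabla_{4}\nabla\mathcal{G}=\nabla(\nabla_{4}\mathcal{G})+[\nabla_{4},\nabla]\mathcal{G}=\nabla\mathcal{F}_{1}+[\nabla_{4},\nabla]\mathcal{G}$. The $\nabla\mathcal{F}_{1}$ piece lands in the sum with $J_{1}+J_{2}+J_{3}=I$; the $\beta\cdot\mathcal{G}$ piece in the sum with $J_{1}+J_{2}+J_{3}+J_{4}=I-1$; the $\chi\cdot\nabla\mathcal{G}$ and $\chi\cdot\underline{\eta}\cdot\mathcal{G}$ pieces in the sum with $J_{1}+J_{2}+J_{3}+J_{4}=I$; and the $(\eta+\underline{\eta})\,\nabla_{4}\mathcal{G}=(\eta+\underline{\eta})\,\mathcal{F}_{1}$ piece is absorbed into the middle sum by re-expressing $\nabla_{4}\mathcal{G}$ as $\mathcal{F}_{1}$. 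For the inductive step, I would split
\[
\nabla_{4}\nabla^{I}\mathcal{G}=\nabla\bigl(\nabla_{4}\nabla^{I-1}\mathcal{G}\bigr)+[\nabla_{4},\nabla]\nabla^{I-1}\mathcal{G},
\]
apply the inductive hypothesis to the first term and then distribute one further $\nabla$ using the Leibniz rule (which schematically increments the total derivative count on every family by exactly one), and apply the base commutator to the second term with $\nabla^{I-1}\mathcal{G}$ in place of $\mathcal{G}$. Each term produced by either operation can then be matched, using the inductive hypothesis once more on any resulting $\nabla_{4}\nabla^{\le I-1}\mathcal{G}$ factor, to one of the three schematic sums.

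The main obstacle is purely combinatorial: tracking the index decomposition carefully so that the three sums $\sum_{J_{1}+J_{2}+J_{3}+J_{4}=I-1}$, $\sum_{J_{1}+J_{2}+J_{3}=I}$, and $\sum_{J_{1}+J_{2}+J_{3}+J_{4}=I}$ absorb every generated term without overlap or omission. The clean rule that makes this bookkeeping work is that a curvature factor ($\beta$ resp.\ $\underline{\beta}$) always costs one derivative slot relative to a Ricci factor ($\chi$, $\eta+\underline{\eta}$), which is exactly why the $\beta$-sum carries constraint $I-1$ while the $\chi$-sum carries constraint $I$. Once this convention is set, the inductive propagation is routine, and the asserted schematic identities follow. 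The argument is entirely formal with respect to the schematic notation and does not require the quantitative scaling hierarchy introduced in the earlier sections.
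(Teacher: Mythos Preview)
Your proposal is correct and follows essentially the same approach as the paper: induction on $I$, with the base case supplied by the first-order commutator identity derived immediately before the lemma, and the inductive step carried by writing $\nabla_{4}\nabla^{I}\mathcal{G}=\nabla(\nabla_{4}\nabla^{I-1}\mathcal{G})+[\nabla_{4},\nabla]\nabla^{I-1}\mathcal{G}$. In fact you give more detail than the paper itself, which simply states the $I=1$ case follows from the preceding calculations, asserts the inductive structure, and then refers the reader to \cite{AnAth} for the omitted bookkeeping.
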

\begin{proof}
For $I=1$, this identity is clearly satisfied due to the calculations above. Assume it holds for $J=I-1$ and show that it holds for $J=I$. We omit the proof and refer to \cite{AnAth}. 
\end{proof} 

\begin{remark}
 By moving the top derivatives of $\mathcal{G}$ multiplied by $\tr\underline{\chi}$ from the right-hand side to the left-hand side, one may also obtain 
\begin{equation}
\label{eq:c2}
\begin{split}
\mathcal{F}^{I}_{2}+\frac{I}{2}\tr\underline{\chi}\hat{\nabla}^{I}\mathcal{G}\sim&\sum_{J_{1}\nonumber+J_{2}+J_{3}+J_{4}=I-1}\nabla^{J_{1}}(\eta+\underline{\eta})^{J_{2}}\nabla^{J_{3}}\beta\nabla^{J_{4}}\mathcal{G}
\\+&\sum_{J_{1}+J_{2}\nonumber+J_{3}=I}\nabla^{J_{1}}(\eta+\underline{\eta})^{J_{2}}\nabla^{J_{3}}\mathcal{F}_{2}\\ +&\sum_{J_{1}+J_{2}+J_{3} +J_{4}=I}\nabla^{J_{1}}(\eta+\underline{\eta})^{J_{2}}\hat{\nabla}^{J_{3}}\widehat{\underline{\chi}}\nabla^{J_{4}}\mathcal{G}\\\nonumber+&\sum_{J_{1}+J_{2}+J_{3} +J_{4}=I-1}\nabla^{J_{1}}(\eta+\underline{\eta})^{J_{2}+1}\hat{\nabla}^{J_{3}}\tr\underline{\chi}\nabla^{J_{4}}\mathcal{G}.
\end{split}
\end{equation}
\end{remark} 

\section{Preliminary estimates} \label{s3}
\subsection{Preliminary bootstrap assumptions}
\noindent We derive the a priori estimates for the geometric and curvature norms by a bootstrap
argument on the double–null development region
\[
D
:=
\Big\{(u,\ubar,\theta^1,\theta^2)\ \Big|\ 
u_\infty \le u \le -a,\ \ 0\le \ubar \le 1\Big\}.
\]
Throughout, all implicit constants are universal and independent of $a$, depending only on fixed structural constants of the equations and on the number of derivatives $N$.

\medskip
\noindent
We employ the scale–invariant norms $\Gamma$ (Ricci coefficients) and $\mathcal R$
(curvature components) introduced in Section~\ref{Norms}. These are defined as sums of weighted $L^\infty_{sc}$ and $L^2_{(sc)}$ norms of angular derivatives up to order $N+4$ and $N+3$, respectively. In particular, $\Gamma$ controls all Ricci coefficients
\[
\psi \in \{\tr\chi,\hat\chi,\tr\underline\chi,\underline{\hat\chi},
\eta,\underline\eta,\omega,\underline\omega\}
\]
in scale–invariant norms, while $\mathcal R$ controls the Weyl curvature components
\[
\Psi\in\{\alpha,\beta,\rho,\sigma,\underline\beta,\underline\alpha\}
\]
through the corresponding null energy fluxes and supremum norms.

\medskip
\noindent
\textbf{Initial bounds.}
Along the initial null hypersurfaces $H_{u_\infty}$ and $\underline H_0$, the data are prescribed so that the full hierarchy of Ricci coefficients and curvature components satisfies the dispersive scaling assumptions of Theorem~\ref{main1}. A direct analysis of the null constraint and transport equations along the initial hypersurfaces (cf.\ \cite{AL17} and references therein) yields the quantitative bound
\begin{equation}\label{eq:initial-norm-bound}
\Gamma_0 + \mathcal R_0 \;\lesssim\; \mathcal I,
\end{equation}
where $\Gamma_0,\mathcal R_0$ denote the initial values of the norms on $H_{u_\infty}\cup \underline H_0$ and $\mathcal I$ is the size of the prescribed data.

\medskip
\noindent
\textbf{Bootstrap assumptions.}
We assume, on the spacetime region $D$, the bootstrap bounds
\begin{equation}\label{bootstrap}
\Gamma \le \Gamma_\ast,
\qquad
\mathcal R \le R_\ast,
\end{equation}
for fixed constants $\Gamma_\ast,R_\ast\ge 1$ to be chosen. These constants are taken sufficiently large so that
\begin{equation}\label{eq:bootstrap-gap}
\mathcal I^4+\mathcal I^2+\mathcal I+1
\;\ll\;
\min\{\Gamma_\ast, R_\ast, M_\ast\},
\end{equation}
where $M_\ast$ denotes the corresponding metric norm bound, and at the same time satisfy the compatibility condition
\begin{equation}\label{eq:a-compat}
(\Gamma_\ast+R_\ast)^{20}\le a^{1/16}.
\end{equation}
The smallness encoded in \eqref{eq:a-compat} ensures that all error terms produced by nonlinear interactions of Ricci and curvature components remain perturbative after integration in $u$ and $\ubar$.

\medskip
\noindent
The aim is to prove that the estimates implied by \eqref{bootstrap} can in fact be improved to
\[
\Gamma+\mathcal R
\;\lesssim\;
c(\mathcal I),
\qquad
c(\mathcal I):=\mathcal I^4+\mathcal I^2+\mathcal I+1,
\]
throughout $D$. By a standard continuity argument, this yields a closed a priori bound and therefore semi–global control of the solution.

\medskip
\subsection*{Estimates on the metric components}

We first control the metric quantities $\Omega$, the induced sphere metric $\gamma$, and the area radius of $S_{u,\ubar}$. We begin with the null lapse $\Omega$.

\begin{proposition}\label{31}
Under the assumptions of Theorem~\ref{main1} and the bootstrap bounds
\eqref{bootstrap}, the null lapse satisfies, for every sphere $S_{u,\ubar}\subset D$,
\[
\|\Omega-1\|_{L^\infty(S_{u,\ubar})}
\;\lesssim\;
\frac{\Gamma_\ast\, a^{1/2}}{|u|^{2}}.
\]
\end{proposition}

\begin{proof}
In double–null gauge the lapse obeys the transport equation
\begin{equation}\label{eq:Omega-transport}
\nabla_3 \log \Omega = -\,\underline\omega.
\end{equation}
On the initial outgoing hypersurface $H_{u_\infty}$ the data normalization gives
\[
\Omega(u_\infty,\ubar,\theta)=1
\quad\text{for all }\ubar\in[0,1],
\]
hence $\log\Omega=0$ there. Integrating \eqref{eq:Omega-transport} along the
incoming null generators from $u_\infty$ to $u$ at fixed $(\ubar,\theta)$ yields
\[
\log\Omega(u,\ubar,\theta)
=
- \int_{u_\infty}^{u}
\underline\omega(u',\ubar,\theta)\,du'.
\]
By the bootstrap bound on Ricci coefficients and the scale–invariant weights,
\[
|\underline\omega(u',\ubar,\theta)|
\;\lesssim\;
\frac{\Gamma_\ast\, a^{1/2}}{|u'|^{3}}.
\]
Therefore,
\[
|\log\Omega(u,\ubar,\theta)|
\;\lesssim\;
\Gamma_\ast a^{1/2}
\int_{u_\infty}^{u}\frac{du'}{|u'|^{3}}
\;\lesssim\;
\frac{\Gamma_\ast a^{1/2}}{|u|^{2}}.
\]
Since the right–hand side is $\ll 1$ by \eqref{eq:a-compat}, we conclude
\[
|\Omega(u,\ubar,\theta)-1|
\;\lesssim\;
\frac{\Gamma_\ast a^{1/2}}{|u|^{2}}.
\]
Taking the supremum over $S_{u,\ubar}$ gives the stated estimate.
\end{proof}

\begin{proposition}
Under the assumptions of Theorem \ref{main1} and the bootstrap assumptions \eqref{bootstrap},  there exist two constants $c$ and $C$ depending only on the initial data such that the bounds 
\[ c \leq \text{det}\hsp \gamma \leq C.\]and \[  \lvert \gamma_{AB} \rvert + \lvert \gamma^{-1}_{AB}\rvert \leq C  \]hold throughout the slab of existence $D$.\label{32}
\end{proposition}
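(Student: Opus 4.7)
The plan is to propagate the coordinate components of $\gamma$, its determinant, and its inverse from the initial sphere $S_{u_\infty,0}$ to the whole slab $\mathcal{D}$ by integrating the first variation equations along the two null directions and invoking Gronwall. As a preliminary, I would cover the reference sphere $S\simeq \mathbb{S}^2$ by finitely many stereographic patches with a subordinate partition of unity, and propagate these coordinates as in the preceding subsection by $\sL_{\Lbar}\theta^A=0$ along $\Hbar_0$ and $\sL_L\theta^A=0$ along each $H_u$; on $S_{u_\infty,0}$ the matrices $\gamma_{AB}$, $\gamma^{AB}$, and $\det\gamma$ are then bounded above and below by constants depending only on the initial data.

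The structure equations I would use are
\[
\frac{\partial}{\partial\ubar}\gamma_{AB} \;=\; 2\Omega\,\chi_{AB} \;=\; 2\Omega\,\chihat_{AB} + \Omega\,\tr\chi\,\gamma_{AB},
\]
and, since $b\equiv 0$ on $\Hbar_0$ by the gauge choice in Theorem~\ref{main1},
\[
\frac{\partial}{\partial u}\gamma_{AB} \;=\; 2\Omega\,\chibar_{AB} \;=\; 2\Omega\,\chibarhat_{AB} + \Omega\,\tr\chibar\,\gamma_{AB} \quad \text{on } \Hbar_0.
\]
First I would integrate the second equation from $u=u_\infty$ along $\Hbar_0$. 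Using Proposition~\ref{31} for $\Omega$ and the bootstrap assumption \eqref{bootstrap} through the scale--invariant norms---which yield $\inftySu{\tr\chibar}\lesssim 1/|u|$ and $\inftySu{\chibarhat}\lesssim a^{1/2}/|u|^{2}$---the coefficient $\Omega\,\tr\chibar$ multiplying $\gamma_{AB}$ is integrable in $u$ over $[u_\infty,-a]$, with integral at most a constant depending only on the initial data. Gronwall then produces uniform two--sided bounds on $\gamma_{AB}(u,0)$ and $\gamma^{AB}(u,0)$ along $\Hbar_0$.

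Next, for any $(u,\ubar)\in\mathcal{D}$, I would integrate the first structure equation in $\ubar\in[0,\epsilon]$, starting from the just--established bounds on $\gamma_{AB}(u,0)$. The bootstrap gives $\inftySu{\tr\chi}$ and $\inftySu{\chihat}$ of order $1/|u|$, so the perturbation accumulated in traveling from $\ubar=0$ to $\ubar\leq\epsilon$ is of order $\epsilon/|u|\ll 1$, and Gronwall in $\ubar$ returns $|\gamma_{AB}(u,\ubar)|\leq C$. Running the analogous argument on $\partial_\ubar\gamma^{AB}=-2\Omega\,\chi^{AB}$ yields $|\gamma^{AB}(u,\ubar)|\leq C$. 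For the determinant, Jacobi's formula combined with the above structure equations gives $\partial_\ubar\log\sqrt{\det\gamma}=\Omega\,\tr\chi$ and, on $\Hbar_0$, $\partial_u\log\sqrt{\det\gamma}=\Omega\,\tr\chibar$; integrating these with the same $L^\infty$ bounds together with the strictly positive initial value $\det\gamma(u_\infty,0)>0$ then produces the two--sided bound $c\leq\det\gamma\leq C$.

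I do not foresee a serious analytical obstruction here: the bootstrap bounds on the Ricci coefficients, the closeness of $\Omega$ to $1$ from Proposition~\ref{31}, and the smallness of $\epsilon/|u|$ throughout $\mathcal{D}$ are already in hand. The main bookkeeping difficulty is to verify that none of the Gronwall exponentials swell into a large factor over $u\in[u_\infty,-a]$; this is secured by the $|u|^{-1}$ decay of $\tr\chi$ and $\tr\chibar$ (modulo the Minkowski contribution absorbed in $\widetilde{\tr\chibar}$) together with the smallness of $\epsilon$, so everything ultimately reduces to a routine transport--equation estimate once the partition of unity and propagated angular coordinates are set up.
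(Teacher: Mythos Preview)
Your approach is correct and is precisely the standard one: the paper does not give its own argument here but defers to \cite{A19}, where exactly this scheme---integrate the first variation $\partial_u\gamma_{AB}=2\Omega\chibar_{AB}$ along $\Hbar_0$ (where $\Omega\equiv1$ and $b\equiv0$), then $\partial_{\ubar}\gamma_{AB}=2\Omega\chi_{AB}$ in the $e_4$ direction, together with Jacobi's formula for $\det\gamma$ and Gr\"onwall---is carried out. The only point to be explicit about is that the Gr\"onwall exponent along $\Hbar_0$ does pick up the logarithm $\int_{u_\infty}^{-a}|u'|^{-1}\,du'=\log(|u_\infty|/a)$ from the Minkowski part $-2/|u|$ of $\tr\chibar$, so the resulting constants $c,C$ genuinely depend on $u_\infty$ and $a$ (which is permitted, as these are fixed parameters of the initial data); your parenthetical about $\widetilde{\tr\chibar}$ shows you already see this.
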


\begin{proposition}\label{33}
Under the assumptions of Theorem \ref{main1} and the bootstrap assumptions \eqref{bootstrap}, fix a point $(u, \theta)$ on the initial hypersurface $H_{\infty}$. Let $\Lambda(u)$ and $\lambda(u)$ be the largest and smallest eigenvalues of $\gamma^{-1}(u_{\infty},\ubar,\theta)\hsp \gamma(u,\ubar,\theta) $ respectively, along the incoming null geodesics emanating from $(\ubar,\theta)$. There holds 

\[    \lvert \Lambda(u)-1 \rvert + \lvert \lambda(u)-1 \rvert \lesssim \frac{1}{\al}. \]
\end{proposition}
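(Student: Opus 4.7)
The strategy is to use the evolution of $\slashed{\gamma}$ along the outgoing null direction together with the $L^\infty$ bootstrap control on $\chi$, and then read off the eigenvalue bound as a routine matrix perturbation statement. The plan splits naturally into (i) a transport estimate for $\slashed{\gamma}$, (ii) a scale-invariant $L^\infty$ bound for the integrand via the bootstrap, and (iii) a perturbation argument for the eigenvalues of $\slashed{\gamma}^{-1}(u,0,\theta)\slashed{\gamma}(u,\ubar,\theta)$.

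First, I would invoke the basic transport identity $\partial_{\ubar} \slashed{\gamma}_{AB} = 2\Omega\,\chi_{AB}$, which holds in the coordinate system $(u,\ubar,\theta^{1},\theta^{2})$ because $e_{4}=\Omega^{-1}\partial_{\ubar}$ carries no angular components (the shift $b^A$ multiplies only $\partial_u$). An outgoing null geodesic from $(u,0,\theta)$ is an integral curve of $\partial_{\ubar}$ along which $(u,\theta)$ remain fixed, so integration yields
\[
\slashed{\gamma}_{AB}(u,\ubar,\theta) - \slashed{\gamma}_{AB}(u,0,\theta) \;=\; 2 \int_{0}^{\ubar} \Omega\,\chi_{AB}(u,\ubar',\theta)\,\dubarprime.
\]

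Next, I would bound the right-hand side pointwise. Proposition~\ref{31} gives $|\Omega| \le 1+C\Gamma/|u|\lesssim 1$, and from the bootstrap~\eqref{bootstrap} together with the decomposition $\chi=\hat\chi+\tfrac{1}{2}\tr\chi\,\slashed{\gamma}$, Proposition~\ref{32}, and the definitions in Section~\ref{signature} (with $s_{2}(\chi)=0$, so $\scaleinfinitySu{\chi}=|u|\,\inftySu{\chi}$), one obtains $|\chi_{AB}(u,\ubar',\theta)|\lesssim \Gamma/|u|$. Hence
\[
\big|\slashed{\gamma}_{AB}(u,\ubar,\theta) - \slashed{\gamma}_{AB}(u,0,\theta)\big| \;\lesssim\; \frac{\ubar\,\Gamma}{|u|}.
\]
Since $|u|\ge a$, $\ubar\le \epsilon \le 1$, and the bootstrap enforces $(\Gamma+R)^{20}\le a^{1/16}$, the right-hand side is bounded by $C\,a^{-15/16}$, which in particular is much smaller than $1/\al=a^{-1/2}$.

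Finally, write $E_{AB}(u,\ubar,\theta) := \slashed{\gamma}_{AB}(u,\ubar,\theta)-\slashed{\gamma}_{AB}(u,0,\theta)$, so that
\[
\slashed{\gamma}^{-1}(u,0,\theta)\,\slashed{\gamma}(u,\ubar,\theta) \;=\; I + \slashed{\gamma}^{-1}(u,0,\theta)\,E(u,\ubar,\theta).
\]
Proposition~\ref{32} gives $|\slashed{\gamma}^{-1}(u,0,\theta)|\lesssim 1$, so the operator norm of $\slashed{\gamma}^{-1}(0)\,E$ is $\lesssim 1/\al$. Because $\slashed{\gamma}^{-1}(0)\,\slashed{\gamma}(\ubar)$ is similar to the symmetric positive-definite matrix $\slashed{\gamma}^{-1/2}(0)\,\slashed{\gamma}(\ubar)\,\slashed{\gamma}^{-1/2}(0)$, its eigenvalues $\Lambda(u),\lambda(u)$ are real and positive; by standard $2\times 2$ perturbation of the identity (equivalently, Bauer--Fike), they satisfy $|\Lambda-1|+|\lambda-1|\lesssim 1/\al$, as asserted.

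The only bookkeeping concern is ensuring that the combination of bootstrap largeness, the weight $|u|\ge a$, and the slab width $\ubar\le\epsilon$ produces a quantity $\le 1/\al$; as shown above, this is not tight and follows immediately from $(\Gamma+R)^{20}\le a^{1/16}$. No serious analytic difficulty arises beyond Propositions~\ref{31}--\ref{32} and the bootstrap on $\chi$, so I expect this to be a purely routine estimate.
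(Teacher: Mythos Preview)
Your proposal is correct and follows precisely the standard route: integrate the first variation $\partial_{\ubar}\slashed{\gamma}_{AB}=2\Omega\chi_{AB}$, use Proposition~\ref{31} and the bootstrap $L^\infty$ control on $\chi$ to bound the perturbation $E$, and conclude by elementary $2\times2$ matrix perturbation of the identity. The paper does not give its own argument here but simply refers to \cite{A19}, whose proof is exactly this one; your only slip is the numerical value ``$a^{-15/16}$'' (from $(\Gamma+R)^{20}\le a^{1/16}$ one gets $\Gamma\le a^{1/320}$, hence $\ubar\Gamma/|u|\le a^{-319/320}$), but the conclusion $\lesssim a^{-1/2}$ is unaffected.
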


\subsection{Estimates for transport equations}
We shall be using two fundamental bounds on transport equations throughout this work. 

 \begin{proposition} \label{3.5}
Under the assumptions of Theorem \ref{main1} and the bootstrap assumptions \eqref{bootstrap}, the following hold for an arbitrary $\mathcal{G} \in \Gamma(^{N}\otimes T^{*}S):$
\begin{equation}
\scaletwoSu{\mathcal{G}} \lesssim \lVert \mathcal{G} \rVert_{L^2_{(sc)}(S_{u,\ubar^{\prime\prime}})} + \int_{\ubar^{\prime\prime}}^{\ubar}  \scaletwoSuubarprime{\nabla_4 \mathcal{G}} \dubarprime 
\end{equation}

\begin{equation}
\scaletwoSu{\mathcal{G}} \lesssim \lVert \mathcal{G} \rVert_{L^2_{(sc)}(S_{u^{\prime\prime},\ubar})} + \int_{u^{\prime\prime}}^{u} \aupr  \scaletwoSuprime{\nabla_3 \mathcal{G}} \duprime
\end{equation}
\end{proposition}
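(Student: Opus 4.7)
The plan is to establish both inequalities by the same transport-energy scheme, then to pass to scale-invariant norms via the signature bookkeeping of Section~\ref{signature}. The bootstrap hypotheses \eqref{bootstrap}, together with the metric propositions \ref{31}--\ref{34}, supply the uniform control of $\Omega$, $\tr\chi$, and $\tr\chibar$ needed to close the Gr\"onwall step.

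For the first inequality, I differentiate $\lVert\mathcal{G}\rVert_{L^2(S_{u,\ubar'})}^2$ in $\ubar'$. Using $\partial_{\ubar'}\sqrt{\det\gamma}=\Omega\,\tr\chi\,\sqrt{\det\gamma}$ and the pointwise identity for the null covariant derivative of $|\mathcal{G}|^2_\gamma$, one arrives at a differential inequality of the schematic form
\[
\Bigl|\tfrac{d}{d\ubar'}\lVert\mathcal{G}\rVert_{L^2(S_{u,\ubar'})}\Bigr|
\le C\,\lVert\Omega\,\tr\chi\rVert_{L^\infty(S_{u,\ubar'})}\,\lVert\mathcal{G}\rVert_{L^2(S_{u,\ubar'})}
+ C\,\lVert\nabla_3\mathcal{G}\rVert_{L^2(S_{u,\ubar'})}.
\]
The zero-order term is absorbed by a Gr\"onwall factor that is uniformly $O(1)$ via the bound $\lVert\Omega\,\tr\chi\rVert_{L^\infty}\lesssim |u|^{-1}$ coming from Proposition~\ref{31} and \eqref{eq:finalbound}. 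Integrating from $\ubar''$ to $\ubar$ yields the unweighted form of the target inequality; multiplying both sides by the weight $a^{-s_2(\mathcal{G})}|u|^{2\,s_2(\mathcal{G})}$ converts them into scale-invariant form, and the signature identity $s_2(\nabla_3\mathcal{G})=s_2(\mathcal{G})+1$, together with the fact that $u$ is fixed during the $\ubar'$-integration, identifies the integrand precisely as $\scaletwoSuubarprime{\nabla_3\mathcal{G}}$.

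For the second inequality the same template is applied along $u'$, now using $e_3=\Omega^{-1}(\partial_{u'}+b^A\partial_{\theta^A})$ and the $u'$-evolution of $\sqrt{\det\gamma}$ governed by $\tr\chibar$ plus $b$-corrections. The tangential $b^A\partial_{\theta^A}$ contribution is controlled by the bootstrap bound $\lVert b\rVert_{L^\infty}\lesssim \epsilon a^{1/2}|u|^{-2}$ in \eqref{eq:finalbound}, and the $\tr\chibar$-piece is likewise handled by \eqref{eq:finalbound}, so the Gr\"onwall step is again harmless. The weight $\aupr=a/|u'|^2$ appearing in the stated inequality arises naturally as the quotient of the scale-invariant weights evaluated at the fixed value $u$ against the sliding variable $u'$, combined with the signature identity $s_2(\nabla_4\mathcal{G})=s_2(\mathcal{G})$; this change of weight produces exactly the factor $\aupr$ inside the $u'$-integral.

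The main obstacle is the weight bookkeeping in the second estimate: the left-hand side carries a fixed $|u|$-weight whereas the integrand naturally carries a $|u'|$-weight, and the two must be reconciled through the signature shift together with the Gr\"onwall-absorbed lower-order factor. The signature conservation property \eqref{sc} ensures that these manipulations are dimensionally consistent, and the resulting implicit constant in $\lesssim$ depends only on the initial-data norm $\mathcal{I}$ and not on $a$. No elliptic input is required; the present lemma is a mild adaptation of the transport-type scale-invariant calculus developed in \cite{AnThesis,AnAth} to the double-null gauge of the present paper.
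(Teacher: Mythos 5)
Your overall strategy—transport the $L^2(S)$ norm along the relevant null direction via the first variation of area, Gr\"onwall away the zero--order term, then restore the scale--invariant weights through the signature calculus—is the standard one and is exactly the template the paper itself uses for the weighted analogue, Proposition \ref{3.6} (the paper gives no separate proof of Proposition \ref{3.5}). However, as written your argument contains an internal inconsistency in the pairing of integration variable with null derivative. Differentiating $\lVert\mathcal{G}\rVert_{L^2(S_{u,\ubar'})}^2$ in $\ubar'$ at fixed $u$ uses $\partial_{\ubar}=\Omega e_4$ and therefore produces $\nabla_4\mathcal{G}$, not $\nabla_3\mathcal{G}$; likewise the $u'$--transport produces $\nabla_3\mathcal{G}$, not $\nabla_4\mathcal{G}$. (The statement as printed has the subscripts transposed; this is visible from how the proposition is applied in Propositions \ref{omegaprop} and \ref{trchibarbound}, where the $\nabla_3$--equations are integrated in $u'$ against the weight $\aupr$.) Because you force the derivative labels to match the printed statement, your weight bookkeeping breaks: the factor $\aupr$ in the $u'$--integral is generated precisely by the shift $s_2(\nabla_3\mathcal{G})=s_2(\mathcal{G})+1$, which converts $a^{-s_2}\lvert u'\rvert^{2s_2}$ into $a^{-s_2-1}\lvert u'\rvert^{2s_2+2}$ and leaves the harmless ratio $\lvert u\rvert^{2s_2}/\lvert u'\rvert^{2s_2}\le 1$; it cannot come from $s_2(\nabla_4\mathcal{G})=s_2(\mathcal{G})$ as you assert. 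Symmetrically, in the $\ubar'$--integral the absence of any weight is exactly the statement $s_2(\nabla_4\mathcal{G})=s_2(\mathcal{G})$; if the integrand really were $\nabla_3\mathcal{G}$ the integrand could not be ``precisely'' $\scaletwoSuubarprime{\nabla_3\mathcal{G}}$ without an extra factor $a/\lvert u\rvert^{2}$.

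The second, more substantive gap is your claim that the Gr\"onwall step along $u'$ is ``harmless'' because $\lVert\tr\chibar\rVert_{L^\infty}\lesssim\lvert u\rvert^{-1}$. That bound alone gives $\int_{u_\infty}^{u}\lvert u'\rvert^{-1}\duprime=\log(\lvert u_\infty\rvert/\lvert u\rvert)$, which is not uniformly bounded on the slab, so exponentiating it does not yield an $O(1)$ constant. One must use the structure $\tr\chibar=-\tfrac{2}{\lvert u\rvert}+\widetilde{\tr\chibar}$: the leading $-2/\lvert u\rvert$ piece enters the first variation with a favorable sign (it contributes a factor $(\lvert u\rvert/\lvert u_\infty\rvert)^{2}\le 1$, i.e.\ the areas decrease along $e_3$), while the remainder satisfies $\lvert\widetilde{\tr\chibar}\rvert\lesssim\Gamma/\lvert u\rvert^{2}$ by the bootstrap and is genuinely integrable. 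This is exactly the cancellation the paper exhibits explicitly in the proof of Proposition \ref{3.6} when it shows $\frac{-2\lambda_1 e_3(u)}{\lvert u\rvert}+(1-2\lambda_0)\tr\chibar\le\Gamma/\lvert u\rvert^{2}$. With these two corrections—pair $\ubar'$ with $\nabla_4$ and $u'$ with $\nabla_3$, and treat the $\tr\chibar$ first--variation term via its sign and the $\widetilde{\tr\chibar}$ decomposition rather than a crude $L^\infty$ bound—your argument closes.
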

\noindent There are, however, cases that are borderline and require more delicate control than what the above Proposition provides. These have to do with components $X$ satisfying an equation of the form $\nabla_3 X= - \lambda \tr\chibar X + \dots$, where $\lambda >0$. Keeping in mind that $\tr\chibar$ is the worst Ricci coefficient in terms of peeling, one would hope to be able to get rid of its appearance and thus obtain stronger bounds regarding the peeling properties of $X$. The following weighted transport inequality achieves this.

\begin{proposition} \label{3.6}
Let $\mathcal{G}, \mathcal{H} \in \Gamma(^{N}\otimes T^{*}S)$ and assume that the following equation holds:

\[ \nabla_3 \mathcal{G} + \lambda_0 \hsp \tr\chibar \hsp \mathcal{G} = \mathcal{H}.\]\label{36}
\noindent Then, under the assumptions of Theorem \ref{main1} and the bootstrap assumptions \eqref{bootstrap}, the following is true:
\[  \lvert u \rvert^{\lambda_1}\twoSu{\mathcal{G}}  \lesssim \lvert u_{\infty}\rvert^{\lambda_1}\lVert \mathcal{G} \rVert_{L^2(S_{u_{\infty}, \ubar})} + \intu \lvert u^{\prime}\rvert^{\lambda_1 } \twoSuprime{\mathcal{H}}\duprime      \]
for $\lambda_{1}=2\lambda_{0}-1$.
\end{proposition}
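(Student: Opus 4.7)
The guiding principle is that the weight $|u|^{\lambda_1}$ with $\lambda_1 = 2\lambda_0 - 1$ is exactly the one that cancels the leading order $-\tfrac{2}{|u|}$ behaviour of $\tr\chibar$ against the area form evolution along the incoming direction. First I would compute the evolution of the pointwise norm $|\mathcal{G}|^2$ along $e_3$. Using the transport equation and the fact that $e_3$ is compatible with the induced metric up to terms involving the Ricci coefficients already present, one gets schematically
\begin{equation*}
e_3 |\mathcal{G}|^2 \;=\; 2\langle \mathcal{G},\nabla_3 \mathcal{G}\rangle \;=\; -2\lambda_0\,\tr\chibar\,|\mathcal{G}|^2 + 2\langle \mathcal{G},\mathcal{H}\rangle.
\end{equation*}
Combining this with the standard identity for integration over the varying sphere, namely $\tfrac{d}{du}\int_{S_{u,\ubar}} f = \int_{S_{u,\ubar}}\Omega\bigl(e_3 f + \tr\chibar\, f\bigr)$, yields
\begin{equation*}
\frac{d}{du}\int_{S_{u,\ubar}} |\mathcal{G}|^2
\;=\; \int_{S_{u,\ubar}} \Omega\Bigl[(1-2\lambda_0)\tr\chibar\,|\mathcal{G}|^2 + 2\langle \mathcal{G},\mathcal{H}\rangle\Bigr].
\end{equation*}

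Next I would decompose $\tr\chibar = -\tfrac{2}{|u|} + \widetilde{\tr\chibar}$ and write $\Omega = 1 + (\Omega-1)$. The key algebraic observation is that multiplying by $|u|^{2\lambda_1}$ produces the exact cancellation
\begin{equation*}
\frac{d}{du}\bigl(|u|^{2\lambda_1}\bigr) \;+\; |u|^{2\lambda_1}\cdot \frac{-2(1-2\lambda_0)}{|u|} \;=\; 0,
\end{equation*}
since $\lambda_1 = 2\lambda_0 - 1$. Consequently,
\begin{equation*}
\frac{d}{du}\Bigl[|u|^{2\lambda_1}\!\!\int_{S_{u,\ubar}}\!|\mathcal{G}|^2\Bigr]
\;\lesssim\; |u|^{2\lambda_1}\!\int_{S_{u,\ubar}}\!\bigl[\,|\widetilde{\tr\chibar}|\,|\mathcal{G}|^2 + |\Omega-1||\tr\chibar||\mathcal{G}|^2 + |\mathcal{G}||\mathcal{H}|\,\bigr].
\end{equation*}

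The main obstacle—and the reason this proposition is delicate—is closing the Gronwall argument in the presence of the residual $\widetilde{\tr\chibar}$ and $\Omega-1$ terms. Using Proposition~\ref{31} and the bootstrap control of $\widetilde{\tr\chibar}$ (from the $\Gamma$-norms with the $a/|u|$ weight), both quantities decay strictly faster than $1/|u|$, so after integration from $u_\infty$ to $u$ they contribute only a factor of the form $1 + O(c(\mathcal{I})/a^{1/2})$ which is absorbed into the implicit constant. Setting $Y(u) := |u|^{\lambda_1}\|\mathcal{G}\|_{L^2(S_{u,\ubar})}$ and integrating gives
\begin{equation*}
Y(u)^2 \;\lesssim\; Y(u_\infty)^2 \;+\; \sup_{u_\infty\le u'\le u} Y(u')\cdot\!\int_{u_\infty}^u |u'|^{\lambda_1} \twoSuprime{\mathcal{H}}\,du'.
\end{equation*}

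Finally I would conclude using the elementary algebraic fact that $x^2 \le a^2 + 2bx$ with $x,a,b\ge 0$ implies $x \le a + 2b$. Applying this to $x = \sup_{u_\infty\le u'\le u} Y(u')$ gives the claimed bound. No elliptic estimate is required; the argument is purely a weighted Gr\"onwall computation, and the essential point is the exact algebraic cancellation of the worst $1/|u|$ term that is built into the choice $\lambda_1 = 2\lambda_0 - 1$.
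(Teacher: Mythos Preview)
Your proposal is correct and follows essentially the same approach as the paper: both arguments differentiate the weighted $L^2(S_{u,\ubar})$ norm along $\underline{L}$ using the variation of area formula, exploit the exact cancellation between the $\tr\chibar$ coefficient and the derivative of $|u|^{2\lambda_1}$ via $\lambda_1=2\lambda_0-1$, and absorb the $\widetilde{\tr\chibar}$ and $\Omega-1$ residuals (both $O(|u|^{-2})$) through Gr\"onwall. The only cosmetic difference is that the paper carries the weight $|u|^{2\lambda_1}$ inside the integrand as part of $f$ (so the $e_3(u)=\Omega^{-1}$ factor appears explicitly), whereas you multiply outside and use the product rule; the error bookkeeping is identical.
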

\begin{proof}
The variation of area formula for a scalar function $f$ reads:

\begin{equation}
\underline{L} \int_{S_{u,\ubar}} f = \int_{S_{u,\ubar}} \Lbar f + \Omega \hsp \tr\chibar \hsp f = \int_{S_{u,\ubar}} \Omega \hsp \left(e_3(f) + \tr\chibar \hsp f \right).
\end{equation}Plugging in $f= \lvert u \rvert^{2 \lambda_1} \lvert \mathcal{G} \rvert_{\gamma}^2$, we calculate:

\begin{equation}
\begin{split}
&\underline{L} \int_{S_{u,\ubar}}\lvert u \rvert^{2 \lambda_1} \lvert \mathcal{G} \rvert_{\gamma}^2 \\ = &\int_{S_{u,\ubar}} \Omega \left( -2 \hsp \lambda_1 \lvert u \rvert^{2 \hsp \lambda_1 -1} e_3(u) \lvert \mathcal{G} \rvert_{\gamma}^2 + 2 \lvert u \rvert^{2\hsp \lambda_1} \langle \mathcal{G}, \nabla_3 \mathcal{G} \rangle_{\gamma} + \tr\chibar \hsp \lvert u \rvert^{2\lambda_1}\hsp \lvert \mathcal{G} \rvert_{\gamma}^2 \right) \\ =& \int_{S_{u,\ubar}} \Omega \left( 2 \lvert u \rvert^{2\lambda_1} \langle \mathcal{G}, \nabla_3 \mathcal{G} + \lambda_0 \tr\chibar \mathcal{G} \rangle_{\gamma,\epsilon }\right) + \int_{S_{u,\ubar}} \Omega \lvert u \rvert^{2\lambda_1} \left( \frac{-2\lambda_1 e_3(u)}{\lvert u \rvert} + (1-2\lambda_0)\hsp \tr\chibar \right) \lvert \mathcal{G} \rvert_{\gamma}^2. \label{3.3}
\end{split}
\end{equation}Notice that

\begin{equation}
\begin{split}
&\frac{-2\lambda_1 e_3(u)}{\lvert u \rvert} + (1-2\lambda_0)\hsp \tr\chibar \\ = & \frac{-2\lambda_1 (\Omega^{-1}-1)}{\lvert u \rvert} +(1-2\lambda_0)(\tr\chibar + \frac{2}{\lvert u \rvert}) - \frac{2\lambda_1 + 2 - 4 \lambda_0}{\lvert u \rvert} \\ \leq& \frac{\Gamma}{\lvert u \rvert^2},
\end{split}
\end{equation}where we have used the bootstrap assumption $\inftySu{\tr\chibar + \frac{2}{\lvert u \rvert}} \leq\frac{\Gamma}{\lvert u \rvert^2}$ and the definition of $\lambda_1$. For the first term in the last line of \eqref{3.3} we then use Cauchy-Schwartz and for the second we apply Gr\"onwall's inequality to get:

\begin{equation}
\begin{split}
&\lvert u \rvert^{2 \lambda_1} \twoSu{ \mathcal{G} } \\ \lesssim &\mathrm{e}^{\Gamma \lVert u^{-2} \rVert_{L_u^1}}\left( \lvert u_{\infty}\rvert^{\lambda_1}\lVert \mathcal{G} \rVert_{L^2(S_{u_{\infty}, \ubar})} + \intu \lvert u^{\prime}\rvert^{\lambda_1 } \twoSuprime{\mathcal{H}}\duprime \right) \\ \lesssim &\lvert u_{\infty}\rvert^{\lambda_1}\lVert \mathcal{G} \rVert_{L^2(S_{u_{\infty}, \ubar})} + \intu \lvert u^{\prime}\rvert^{\lambda_1 } \twoSuprime{\mathcal{H}}\duprime, 
\end{split} 
\end{equation}where we have used the fact that $\mathrm{e}^{\Gamma \lVert u^{-2} \rVert_{L_u^1}} \lesssim \mathrm{e}^{\Gamma/ a} \lesssim  1$.

\end{proof}

\subsection{Sobolev embedding}
\noindent With the derived estimates for the metric $\gamma$, we can obtain a bound on the isoperimetric constant for a topological $2-$sphere $S$: 
\begin{eqnarray}
I(S):=\sup_{U\subset S,~\partial U\in C^{1}}\frac{\min\left\{Area(U),Area(U^{c})\right\}}{[Perimeter(\partial U)]^{2}}.
\end{eqnarray}
The following proposition yields an upper bound for $I(S)$.\\
\begin{proposition}[Uniform control of the isoperimetric constant of $S_{u,\ubar}$]
\label{prop:isoperimetric-uniform}
Assume the initial data hypotheses and the bootstrap bounds \textnormal{(2.10)} for the double--null development. Then for every sphere of the foliation
\[
S_{u,\ubar},\qquad u\in [u_{\infty},-a],\ \ubar\in[0,1],
\]
the isoperimetric constant satisfies the uniform estimate
\begin{equation}\label{eq:isoperimetric-bound}
I(S_{u,\ubar})\;\le\;\frac{1}{\pi}.
\end{equation}
Here $I(S,\gamma)$ denotes the isoperimetric constant of the Riemannian $2$–sphere $(S,\gamma)$, defined by
\[
I(S,\gamma)
:=
\sup_{U\subset S}
\frac{\min\{\text{Area}(U),\text{Area}(U^{c})\}}{\text{Per}(\partial U)^2},
\]
where the supremum ranges over all domains $U\subset S$ with smooth boundary.
\end{proposition}
\begin{proof}
We first note that the isoperimetric constant is scale invariant: if $\gamma'=\lambda^{2}\gamma$ on a surface $S$, then
\[
\frac{\min\{\text{Area}_{\gamma'}(U),\text{Area}_{\gamma'}(U^{c})\}}{\text{Per}_{\gamma'}(\partial U)^2}
=
\frac{\lambda^{2}\min\{\text{Area}_{\gamma}(U),\text{Area}_{\gamma}(U^{c})\}}{\lambda^{2}\text{Per}_{\gamma}(\partial U)^2},
\]
hence $I(S,\gamma')=I(S,\gamma)$. Therefore it suffices to prove the estimate for the renormalized metric
\[
\tilde\gamma_{u,\ubar}:=|u|^{-2}\gamma_{u,\ubar}
\quad\text{on }S_{u,\ubar}.
\]

\medskip
\noindent
Suppose a metric $\gamma$ on $S^2$ satisfies, for some $\varepsilon\in(0,1)$,
\begin{equation}\label{eq:bilip-proof}
(1-\varepsilon)\gamma_{0}\le \gamma \le (1+\varepsilon)\gamma_{0}
\end{equation}
as quadratic forms, where $\gamma_{0}$ is the unit round metric. Then for every smooth domain $U\subset S$,
\[
\text{Area}_{\gamma}(U)\le (1+\varepsilon)\text{Area}_{\gamma_{0}}(U), 
\qquad
\text{Area}_{\gamma}(U^{c})\le (1+\varepsilon)\text{Area}_{\gamma_{0}}(U^{c}),
\]
and
\[
\text{Per}_{\gamma}(\partial U)\ge (1-\varepsilon)^{1/2}\text{Per}_{\gamma_{0}}(\partial U),
\quad\Rightarrow\quad
\text{Per}_{\gamma}(\partial U)^2\ge (1-\varepsilon)\text{Per}_{\gamma_{0}}(\partial U)^2.
\]
Hence
\[
\frac{\min\{\text{Area}_{\gamma}(U),\text{Area}_{\gamma}(U^{c})\}}{\text{Per}_{\gamma}(\partial U)^2}
\le
\frac{1+\varepsilon}{1-\varepsilon}
\frac{\min\{\text{Area}_{\gamma_{0}}(U),\text{Area}_{\gamma_{0}}(U^{c})\}}{\text{Per}_{\gamma_{0}}(\partial U)^2}.
\]
Taking the supremum gives
\begin{equation}\label{eq:I-bilip-proof}
I(S,\gamma)\le \frac{1+\varepsilon}{1-\varepsilon} I(S,\gamma_{0}).
\end{equation}
For the unit round sphere one has $I(S^{2},\gamma_{0})=\frac{1}{2\pi}$. Thus if $\varepsilon\le \frac13$, then
\begin{equation}\label{eq:I-target-proof}
I(S,\gamma)\le \frac{1+\varepsilon}{1-\varepsilon}\frac{1}{2\pi}\le \frac{1}{\pi}.
\end{equation}

\medskip
\noindent
In the double–null foliation the induced metric satisfies the transport equations
\[
\nabla_{4}\gamma_{AB}=2\chi_{AB},
\qquad
\nabla_{3}\gamma_{AB}=2\underline{\chi}_{AB},
\]
that is,
\[
\nabla_{4}\gamma_{AB}=(\tr\chi)\gamma_{AB}+2\hat\chi_{AB},
\qquad
\nabla_{3}\gamma_{AB}=(\tr\underline{\chi})\gamma_{AB}+2\underline{\hat\chi}_{AB}.
\]
Define $\tilde\gamma_{AB}=|u|^{-2}\gamma_{AB}$. Using $\nabla_{4}u=0$ and the standard normalization of $\nabla_{3}u$, one obtains schematically
\begin{align}
\nabla_{4}\tilde\gamma_{AB}
&=
\Big(\tr\chi-\frac{2}{|u|}\Big)\tilde\gamma_{AB}
+2|u|^{-2}\hat\chi_{AB},
\label{eq:tildegamma4-proof}
\\
\nabla_{3}\tilde\gamma_{AB}
&=
\Big(\tr\underline{\chi}+\frac{2}{|u|}\Big)\tilde\gamma_{AB}
+2|u|^{-2}\underline{\hat\chi}_{AB}
+\text{l.o.t.}
\label{eq:tildegamma3-proof}
\end{align}
By the initial data assumptions on $H_{u_\infty}$ we have
\[
\tilde\gamma(u_\infty,0)=\gamma_{0}+O(a^{-1/2}|u_\infty|^{-3}).
\]
The bootstrap bounds (2.10) give, uniformly for $u\in[u_\infty,-a]$, $\ubar\in[0,1]$,
\[
\Big|\tr\chi-\frac{2}{|u|}\Big|\lesssim a^{-1/2}|u|^{-1},
\qquad
|\hat\chi|\lesssim a^{-1/2}|u|^{-1},
\]
\[
\Big|\tr\underline{\chi}+\frac{2}{|u|}\Big|\lesssim |u|^{-2},
\qquad
|\underline{\hat\chi}|\lesssim a^{1/2}|u|^{-2}.
\]
Since $\ubar\in[0,1]$ and $|u|\ge a\gg1$, Grönwall’s inequality applied to
\eqref{eq:tildegamma4-proof} along the $\nabla_{4}$ direction and to
\eqref{eq:tildegamma3-proof} along the $\nabla_{3}$ direction yields
\begin{equation}\label{eq:C0-close-proof}
\sup_{u\in[u_\infty,-a],\ \ubar\in[0,1]}
\|\tilde\gamma_{u,\ubar}-\gamma_{0}\|_{C^{0}(S_{u,\ubar})}
\lesssim a^{-1/2}.
\end{equation}
For $a\ge a_{0}$ sufficiently large, this implies the bilipschitz comparison
\eqref{eq:bilip-proof} with some $\varepsilon\le \frac13$, uniformly for all
$S_{u,\ubar}$ in the slab.

\medskip
\noindent
By scale invariance,
\[
I(S_{u,\ubar},\gamma_{u,\ubar})
=
I(S_{u,\ubar},\tilde\gamma_{u,\ubar}).
\]
Combining \eqref{eq:C0-close-proof} with \eqref{eq:I-target-proof} gives
\[
I(S_{u,\ubar})\le \frac{1}{\pi}
\]
uniformly for all $u\in[u_\infty,-a]$, $\ubar\in[0,1]$. This proves the proposition.
\end{proof}

\noindent Throughout this work, we will be using an $L^2-L^{\infty}$ Sobolev estimate. To obtain it, utilizing the basic estimates above, we may proceed to write down the following gauge-invariant Sobolev inequalities for the topological $2-$ sphere $S$.\\
\begin{proposition}[Sobolev inequality for tensorfields in terms of the isoperimetric constant]
\label{prop:sobolev-tensor-isoper}
Let $(S,\gamma)$ be a smooth compact Riemannian $2$--manifold. Denote by $\text{Area}(S)$ its $\gamma$--area and by $I(S)$ its isoperimetric constant
\[
I(S):=\sup_{U\subset S}\frac{\min\{\text{Area}(U),\text{Area}(U^c)\}}{\text{Per}(\partial U)^2},
\]
where the supremum is taken over all domains $U\subset S$ with smooth boundary.
Then for every $p\in(2,\infty)$ there exists a constant $C_p<\infty$ depending only on $p$ such that the following holds.

\noindent Let $\mathcal G$ be any smooth tensorfield of type $(0,N)$ on $S$ (in particular, $\mathcal G\in\Gamma(\otimes^N T^*S)$; more generally the same statement holds for any tensor bundle equipped with the metric induced by $\gamma$). Then
\begin{equation}\label{eq:Lp-annals}
\text{Area}(S)^{-\frac1p}\,\|\mathcal G\|_{L^p(S)}
\;\le\;
C_p\big(\max\{1,I(S)\}\big)^{\frac12}
\Big(
\|\nabla \mathcal G\|_{L^2(S)}
+
\text{Area}(S)^{-\frac12}\,\|\mathcal G\|_{L^2(S)}
\Big).
\end{equation}
\end{proposition}

\begin{proof}
We use the scalar Sobolev inequality on $(S,\gamma)$ in the formulation controlled by the isoperimetric constant: for each $p\in(2,\infty)$ there exists $C_p$ such that for every $f\in W^{1,2}(S)$,
\begin{equation}\label{eq:standard-annals}
\text{Area}(S)^{-\frac1p}\,\|f\|_{L^p(S)}
\;\le\;
C_p\big(\max\{1,I(S)\}\big)^{\frac12}
\Big(
\|\nabla f\|_{L^2(S)}
+
\text{Area}(S)^{-\frac12}\,\|f\|_{L^2(S)}
\Big).
\end{equation}
(See, e.g., the Federer--Fleming/Maz'ya isoperimetric--Sobolev inequality specialized to dimension $2$.)

\medskip
\noindent To deduce \eqref{eq:Lp-annals} from \eqref{eq:standard-annals}, we apply \eqref{eq:standard-annals} to the \emph{regularized norm} of $\mathcal G$.
For $\delta>0$ define the Lipschitz function
\[
f_\delta:=\sqrt{|\mathcal G|_\gamma^{2}+\delta}\,,
\qquad
|\mathcal G|_\gamma^{2}
:=
\gamma^{A_1B_1}\cdots\gamma^{A_NB_N}\,
\mathcal G_{A_1\cdots A_N}\,\mathcal G_{B_1\cdots B_N}.
\]
Since $f_\delta\in W^{1,2}(S)$ and $S$ is compact, \eqref{eq:standard-annals} applies to $f_\delta$.

\medskip

\noindent\emph{Estimate of $\nabla f_\delta$.}
By metric compatibility $\nabla\gamma=0$ and the Leibniz rule,
\[
\nabla(|\mathcal G|_\gamma^2)
=
2\langle \mathcal G,\nabla\mathcal G\rangle_\gamma,
\]
where $\langle\cdot,\cdot\rangle_\gamma$ denotes the pointwise inner product induced by $\gamma$ on the relevant tensor bundle.
Hence, in the weak sense (and pointwise a.e.\ since $f_\delta$ is Lipschitz),
\[
\nabla f_\delta
=
\frac{1}{2}(|\mathcal G|_\gamma^{2}+\delta)^{-1/2}\,\nabla(|\mathcal G|_\gamma^2)
=
\frac{\langle \mathcal G,\nabla\mathcal G\rangle_\gamma}{\sqrt{|\mathcal G|_\gamma^{2}+\delta}}.
\]
By Cauchy--Schwarz,
\[
|\nabla f_\delta|
\le
\frac{|\mathcal G|_\gamma\,|\nabla\mathcal G|_\gamma}{\sqrt{|\mathcal G|_\gamma^{2}+\delta}}
\le
|\nabla\mathcal G|_\gamma,
\]
and therefore
\begin{equation}\label{eq:grad-fdelta}
\|\nabla f_\delta\|_{L^2(S)}\le \|\nabla\mathcal G\|_{L^2(S)}.
\end{equation}
Moreover, since $f_\delta\ge |\mathcal G|_\gamma$ and $f_\delta\downarrow |\mathcal G|_\gamma$ pointwise as $\delta\to0$,
\begin{equation}\label{eq:L2-fdelta}
\|f_\delta\|_{L^2(S)}\to \|\mathcal G\|_{L^2(S)},
\qquad
\|f_\delta\|_{L^p(S)}\to \|\mathcal G\|_{L^p(S)}
\quad\text{for every }p\in[1,\infty),
\end{equation}
by monotone convergence (or dominated convergence on the compact manifold $S$).

\medskip
\noindent Applying \eqref{eq:standard-annals} to $f_\delta$ and using \eqref{eq:grad-fdelta} gives
\[
\text{Area}(S)^{-\frac1p}\,\|f_\delta\|_{L^p(S)}
\;\le\;
C_p\big(\max\{1,I(S)\}\big)^{\frac12}
\Big(
\|\nabla \mathcal G\|_{L^2(S)}
+
\text{Area}(S)^{-\frac12}\,\|f_\delta\|_{L^2(S)}
\Big).
\]
Letting $\delta\to0$ and using \eqref{eq:L2-fdelta} yields \eqref{eq:Lp-annals}.
\end{proof}

\begin{proposition}[$L^\infty$ Sobolev embedding in terms of the isoperimetric constant]
\label{prop:Linfty-isoper}
Let $(S,\gamma)$ be a smooth compact Riemannian $2$--manifold, with isoperimetric constant $I(S)$ and area $\text{Area}(S)$. Fix $p\in(2,\infty)$. Then there exists a constant $C_p<\infty$, depending only on $p$, such that the following holds.

\noindent For every smooth tensorfield $\mathcal G$ of type $(0,N)$ on $S$ (in particular $\mathcal G\in\Gamma(\otimes^N T^*S)$, or more generally any tensor bundle endowed with the norm induced by $\gamma$),
\begin{equation}\label{eq:Linfty-annals}
\|\mathcal G\|_{L^\infty(S)}
\;\le\;
C_p\big(\max\{1,I(S)\}\big)^{\frac12}\,
\text{Area}(S)^{\frac12-\frac1p}
\Big(
\|\nabla\mathcal G\|_{L^p(S)}
+
\text{Area}(S)^{-\frac12}\,\|\mathcal G\|_{L^p(S)}
\Big).
\end{equation}
\end{proposition}

\begin{proof}
We reduce to the scalar inequality and then pass to tensorfields by a gauge--invariant regularization argument, as in Proposition~\ref{prop:sobolev-tensor-isoper}.

\medskip

\noindent First control scalar $W^{1,p}\hookrightarrow L^\infty$ with isoperimetric control in $2-$dimensions.
Let $f\in W^{1,p}(S)$ with $p>2$. By the isoperimetric inequality, the $L^1$--Sobolev inequality holds with constant controlled by $\max\{1,I(S)\}^{1/2}$:
\begin{equation}\label{eq:L1-sobolev}
\|f-\bar f\|_{L^2(S)}
\;\le\;
C\,\big(\max\{1,I(S)\}\big)^{\frac12}\,\|\nabla f\|_{L^1(S)},
\qquad
\bar f:=\text{Area}(S)^{-1}\int_S f.
\end{equation}
Interpolating between $L^1$ and $L^p$ norms of $\nabla f$ and using Hölder's inequality yields, with a constant depending only on $p$,
\begin{equation}\label{eq:L2-control}
\|f-\bar f\|_{L^2(S)}
\;\le\;
C_p\,\big(\max\{1,I(S)\}\big)^{\frac12}\,
\text{Area}(S)^{\frac12-\frac1p}\,
\|\nabla f\|_{L^p(S)}.
\end{equation}
Next, using the standard Morrey embedding in dimension $2$ on compact manifolds (which can be proved by covering $S$ with finitely many coordinate charts and reducing to the Euclidean Morrey inequality), we have
\begin{equation}\label{eq:morrey}
\|f-\bar f\|_{L^\infty(S)}
\;\le\;
C_p\Big(
\|\nabla f\|_{L^p(S)}+\text{Area}(S)^{-1/2}\|f-\bar f\|_{L^2(S)}
\Big).
\end{equation}
Combining \eqref{eq:L2-control} and \eqref{eq:morrey} gives
\begin{equation}\label{eq:scalar-Linfty}
\|f-\bar f\|_{L^\infty(S)}
\;\le\;
C_p\,\big(\max\{1,I(S)\}\big)^{\frac12}\,
\text{Area}(S)^{\frac12-\frac1p}\,\|\nabla f\|_{L^p(S)}.
\end{equation}
Finally, control $\bar f$ by Hölder:
\[
|\bar f|
\le
\text{Area}(S)^{-1}\|f\|_{L^1(S)}
\le
\text{Area}(S)^{-1/p}\|f\|_{L^p(S)}.
\]
Thus, from \eqref{eq:scalar-Linfty},
\begin{equation}\label{eq:scalar-Linfty-full}
\|f\|_{L^\infty(S)}
\;\le\;
C_p\,\big(\max\{1,I(S)\}\big)^{\frac12}\,
\text{Area}(S)^{\frac12-\frac1p}
\Big(
\|\nabla f\|_{L^p(S)}
+
\text{Area}(S)^{-\frac12}\|f\|_{L^p(S)}
\Big).
\end{equation}

\medskip
\noindent Now let $\mathcal G$ be a smooth tensorfield. For $\delta>0$ define
\[
f_\delta:=\sqrt{|\mathcal G|_\gamma^2+\delta}\in W^{1,p}(S).
\]
As in the proof of Proposition~\ref{prop:sobolev-tensor-isoper}, metric compatibility and the chain rule yield (a.e.)
\[
\nabla f_\delta
=
\frac{\langle\mathcal G,\nabla\mathcal G\rangle_\gamma}{\sqrt{|\mathcal G|_\gamma^2+\delta}},
\qquad
|\nabla f_\delta|
\le
|\nabla\mathcal G|_\gamma,
\]
hence
\begin{equation}\label{eq:grad-fdelta-p}
\|\nabla f_\delta\|_{L^p(S)}\le \|\nabla\mathcal G\|_{L^p(S)}.
\end{equation}
Moreover, $f_\delta\downarrow |\mathcal G|_\gamma$ pointwise and monotonically as $\delta\to0$, so
\begin{equation}\label{eq:conv-fdelta}
\|f_\delta\|_{L^p(S)}\to \|\mathcal G\|_{L^p(S)},
\qquad
\|f_\delta\|_{L^\infty(S)}\to \|\mathcal G\|_{L^\infty(S)}.
\end{equation}

\noindent Apply the scalar inequality \eqref{eq:scalar-Linfty-full} to $f_\delta$ and use \eqref{eq:grad-fdelta-p}:
\[
\|f_\delta\|_{L^\infty(S)}
\;\le\;
C_p\,\big(\max\{1,I(S)\}\big)^{\frac12}\,
\text{Area}(S)^{\frac12-\frac1p}
\Big(
\|\nabla \mathcal G\|_{L^p(S)}
+
\text{Area}(S)^{-\frac12}\|f_\delta\|_{L^p(S)}
\Big).
\]
Letting $\delta\to0$ and using \eqref{eq:conv-fdelta} yields \eqref{eq:Linfty-annals}.
\end{proof}

\noindent The two inequalities above, together with Propositions \ref{31}-\ref{34}, allow us to control the $L^{2}$-norm of $\mathcal{G}$ in terms of its $H^{2}$-norm. Following the area estimates, we have $Area(S_{u,\ubar})\approx u^{2}$. Therefore, we obtain the following important inequality.\\ 
\begin{proposition}[Sobolev embedding on $S_{u,\ubar}$]\label{Sobolev}
Assume the hypotheses of Theorem~\ref{main1} and the bootstrap assumptions~\eqref{bootstrap}. Let
\[
S_{u,\ubar},\qquad u\in[u_\infty,-a],\ \ubar\in[0,1],
\]
be the $2$--spheres of the canonical double--null foliation with induced metric $\gamma=\gamma(u,\ubar)$, Levi--Civita connection $\nabla$, and area form $d\mu_\gamma$. Then for every smooth tensorfield
$\mathcal G\in\Gamma(\otimes^N T^*S_{u,\ubar})$ one has the uniform Sobolev estimate
\begin{equation}\label{eq:Sobolev-Linfty-annals}
\|\mathcal G\|_{L^\infty(S_{u,\ubar})}
\;\lesssim\;
\sum_{I=0}^{2}\big\||u|^{I-1}\nabla^{I}\mathcal G\big\|_{L^{2}(S_{u,\ubar})},
\end{equation}
where the implicit constant is universal (independent of $u,\ubar,a$) and depends only on the fixed bootstrap constants.

\noindent Equivalently, in the scale--invariant norms associated to the spherical scale $|u|$ (and the parameter $a$ fixed in Theorem~\ref{main1}$)$,
\begin{equation}\label{eq:Sobolev-sc-annals}
\|\mathcal G\|_{L^\infty_{sc}(S_{u,\ubar})}
\;\lesssim\;
\sum_{I=0}^{2}\big\|(a^{1/2}\nabla)^{I}\mathcal G\big\|_{L^{2}_{sc}(S_{u,\ubar})}.
\end{equation}
\end{proposition}

\begin{proof}[Proof sketch]
Fix $(u,\ubar)$. Apply the $L^\infty$ Sobolev inequality of Proposition~\ref{prop:Linfty-isoper} to $(S_{u,\ubar},\gamma)$ with $p=4$:
\[
\|\mathcal G\|_{L^\infty(S_{u,\ubar})}
\;\lesssim\;
\big(\max\{1,I(S_{u,\ubar})\}\big)^{1/2}\,
\text{Area}(S_{u,\ubar})^{1/4}
\Big(
\|\nabla\mathcal G\|_{L^4(S_{u,\ubar})}
+
\text{Area}(S_{u,\ubar})^{-1/2}\|\mathcal G\|_{L^4(S_{u,\ubar})}
\Big).
\]
Next, estimate the $L^4$--terms by the $L^2$--Sobolev inequality of Proposition~\ref{prop:sobolev-tensor-isoper} with $p=4$, yielding control by
$\|\nabla^2\mathcal G\|_{L^2}+\text{Area}(S_{u,\ubar})^{-1/2}\|\nabla\mathcal G\|_{L^2}+\text{Area}(S_{u,\ubar})^{-1}\|\mathcal G\|_{L^2}$.
Finally, use the geometric bounds provided by Proposition~\ref{33} (area comparability $\text{Area}(S_{u,\ubar})\sim |u|^2$) and Proposition~\ref{prop:isoperimetric-uniform} (uniform isoperimetric control $I(S_{u,\ubar})\lesssim 1$) to rewrite the resulting estimate in the form \eqref{eq:Sobolev-Linfty-annals}. The scale-invariant formulation \eqref{eq:Sobolev-sc-annals} is the same inequality expressed in the rescaled norms by factoring out the natural spherical scale $|u|$ (equivalently, the $a^{1/2}\nabla$ normalization used throughout the bootstrap).
\end{proof}

\subsection{Estimates on the Ricci coefficients}

\begin{proposition}[Scale--invariant $L^{2}$ control of $\nabla\omega$]\label{omegaprop}
Assume the hypotheses of Theorem~\ref{main1} and the bootstrap assumptions~\eqref{bootstrap}. Then, for all
\[
u\in[u_\infty,-a],\qquad \ubar\in[0,\epsilon],
\]
the lapse/vorticity coefficient $\omega$ satisfies the scale--invariant estimate
\begin{equation}\label{eq:omega-prop-annals}
\sum_{i\le N+4}\,\scaletwoSu{\aln\omega}
\;\lesssim\;
\frac{a^{1/2}}{|u|}
\;+\;
\frac{a^{1/2}}{|u|}\,\underline{\mathcal R}[\rho].
\end{equation}
The implicit constant depends only on the fixed bootstrap constants and on $N$.
\end{proposition}

\begin{proof}
We begin by recalling that $\omega$ satisfies the schematic equation

\[\nabla_3 \omega =\frac{1}{2}\rho+ \psi_g \psi_g. \] Using the commutation formula \ref{eq:c2} and the notation of Section \ref{Norms}, we have, for a general $i$:

\begin{equation}
    \begin{split}
        \nabla_3\nabla^i\omega + \frac{i}2\tr\chibar \nabla^i \omega = &\nabla^i\rho + \sum_{i_1+i_2+i_3=i-1}\nabla^{i_1}\psi_g^{i_2+1}\nabla^{i_3}\rho+ \sum_{i_1+i_2+i_3+i_4=i} \nabla^{i_1}\psi_g^{i_2}\nabla^{i_3}\psi_g \nabla^{i_4}\omega  \\ & + \sum_{i_1+i_2+i_3+i_4=i} \nabla^{i_1}\psi_g^{i_2}\nabla^{i_3}(\chibarhat,\tildetr)\nabla^{i_4}\omega\\&+ \sum_{i_1+i_2+i_3+i_4=i-1}\nabla^{i_1}\psi_g^{i_2+1}\nabla^{i_3}\tr\chibar \nabla^{i_4}\omega \\&+ \sum_{i_1+i_2+i_3+i_4=i-2}\nabla^{i_1}\psi_g^{i_2+1}\nabla^{i_3}(\chibarhat,\tr\chibar)\nabla^{i_4}\omega
    \end{split}
\end{equation}Note that since $\omega$ is not a section of the vector bundle $~^{k}\otimes T^{*}S\otimes P_{Ad,\mathfrak{g}}$, the last term on the right-hand side of \ref{eq:c2} does not appear. Passing to scale-invariant norms, we get 
\begin{align}
     \nonumber     &\scaletwoSu{\aln \omega}\\  \nonumber \lesssim &\lVert (\al\nabla)^i\omega \rVert_{L^2_{(sc)}(S_{u_\infty},0)} + \intu \frac{a}{\upr^2}\scaletwoSuprime{\aln \rho}\duprime\\  \nonumber &+ \intu \sum_{i_1+i_2+i_3=i-1} \frac{a}{\upr^2}\scaletwoSuprime{(\al)^i \nabla^{i_1}\psi_g^{i_2+1}\nabla^{i_3}\rho}\duprime \\  \nonumber &+ \intu \sum_{i_1+i_2+i_3+i_4=i} \frac{a}{\upr^2}\scaletwoSuprime{(\al)^i \nabla^{i_1}\psi_g^{i_2}\nabla^{i_3} \psi_g \nabla^{i_4}\omega} \duprime  \\  \nonumber &+\intu \sum_{i_1+i_2+i_3+i_4=i} \frac{a}{\upr^2}\scaletwoSuprime{(\al)^i \nabla^{i_1}\psi_g^{i_2}\nabla^{i_3}(\chibarhat, \tildetr) \nabla^{i_4}\omega} \duprime \\  \nonumber &+\intu \sum_{i_1+i_2+i_3+i_4=i-1} \frac{a}{\upr^2}\scaletwoSuprime{(\al)^i \nabla^{i_1}\psi_g^{i_2+1}\nabla^{i_3}\tr\chibar \nabla^{i_4}\omega} \duprime \\  \nonumber &+\intu \sum_{i_1+i_2+i_3+i_4=i-2} \frac{a}{\upr^2}\scaletwoSuprime{(\al)^i \nabla^{i_1}\psi_g^{i_2+1}\nabla^{i_3}(\chibarhat, \tr\chibar) \nabla^{i_4}\omega} \duprime. 
    \end{align} For $0\leq i \leq N+4$, the first term, by virtue of the fact that $\Omega\equiv 1$ initially, vanishes. The second term can be bounded, using H\"older's inequality, by $ \frac{\al}{\lvert u \rvert^{\frac{1}{2}}} \underline{\mathcal{R}}[\rho].$ The third, fourth, and fifth terms can be bounded above by
\[\intu \frac{a}{\upr^2}\frac{\Gamma^2}{\upr}\frac{a^{\frac{1}{2}}}{|u^{'}|}\duprime \leq \frac{a^{\frac{3}{2}}\hsp  \Gamma^2}{\lvert u \rvert^3}.\]The sixth term is controlled as follows:

\begin{equation}
    \begin{split}
        &\intu \sum_{i_1+i_2+i_3+i_4=i} \frac{a}{\upr^2}\scaletwoSuprime{(\al)^i \nabla^{i_1}\psi_g^{i_2}\nabla^{i_3}(\chibarhat, \tildetr) \nabla^{i_4}\omega} \duprime \\ \lesssim &\intu \frac{a}{\upr^2} \frac{\upr}{a^{\frac{1}{2}}} \frac{\Gamma^2}{\upr}\frac{a^{\frac{1}{2}}}{|u^{'}|}\duprime
 \lesssim \frac{a \hsp \Gamma^2}{\lvert u \rvert^{3}}.  \end{split}
\end{equation}The seventh term is controlled as follows:

\begin{align}
     \nonumber     &\intu \sum_{i_1+i_2+i_3+i_4=i-1} \frac{a}{\upr^2}\scaletwoSuprime{(\al)^i \nabla^{i_1}\psi_g^{i_2+1}\nabla^{i_3}\tr\chibar \nabla^{i_4}\omega} \duprime \\ \lesssim &\intu \frac{a}{\upr^2} \frac{\upr^2}{a} \frac{\Gamma^3}{\upr^2}\frac{a^{\frac{1}{2}}}{|u^{'}|}\duprime
 \lesssim \frac{a^{\frac{1}{2}}\Gamma^3}{\lvert u \rvert^{2}}.  \end{align} For the eighth and most borderline term, we estimate:

\begin{align}
       \nonumber   &\intu \sum_{i_1+i_2+i_3+i_4=i-2} \frac{a}{\upr^2}\scaletwoSuprime{(\al)^i \nabla^{i_1}\psi_g^{i_2+1}\nabla^{i_3}(\chibarhat,\tr\chibar) \nabla^{i_4}\omega} \duprime \\ \nonumber  =  &\intu \sum_{i_1+i_2+i_3+i_4=i-2} \frac{a^{\frac{3}{2}}}{\upr^2}\scaletwoSuprime{(\al)^{i-1} \nabla^{i_1}\psi_g^{i_2+1}\nabla^{i_3}(\chibarhat,\tr\chibar) \nabla^{i_4}\omega} \duprime \\ \lesssim &\intu \int_{u_{\infty}}^{u}\frac{a^{2}}{|u^{'}|^{2}}\frac{1}{|u^{'}|^{2}}\frac{|u^{'}|}{a^{\frac{1}{2}}}\frac{a^{\frac{1}{2}}}{|u^{'}|}\Gamma^{3}du^{'}\lesssim \frac{a^{2}\Gamma^{3}}{|u|^{3}}.  \end{align}
       Therefore collecting every terms and by boot-strap
       \[ \sum_{i\leq N+4} \scaletwoSu{\aln \omega} \lesssim \frac{a^{\frac{1}{2}}}{\lvert u \rvert}+\frac{a^{\frac{1}{2}}}{\lvert u \rvert} \underline{\mathcal{R}}[\rho]. \]
       This completes the proof.
\end{proof}

\begin{proposition}\label{chihats}
Under the assumptions of Theorem \ref{main1} and the bootstrap assumptions \eqref{bootstrap}, there hold

\[ \sum_{i\leq N+4} \frac{a^{\frac{1}{2}}}{\lvert u \rvert}\scaletwoSu{\aln \chibarhat} \lesssim 1, \hspace{2mm} \sum_{i\leq N+4}\scaletwoSu{\aln\chihat} \lesssim \frac{1}{a^{\frac{1}{2}}}(\mathcal{R}[\alpha]+1).\]
\end{proposition}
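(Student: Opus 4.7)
\noindent The plan is to prove both bounds by transport estimates along the \emph{outgoing} null direction, i.e.\ by commuting the $\nabla_{4}$ equations with $(\al\nabla)^{i}$ and applying Proposition \ref{3.5}. The choice of $\nabla_{4}$ (rather than $\nabla_{3}$) is essential for the first bound, because the $\nabla_{3}$ equation $\nabla_{3}\chibarhat+\tr\chibar\,\chibarhat=-2\omegabar\,\chibarhat-\alphabar$ would force $\alphabar$ on the right-hand side and produce a $\underline{\mathcal{R}}[\alphabar]$ contribution not present in the claimed bound.

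\medskip
\noindent For $\chibarhat$, I would use
\[
\nabla_{4}\chibarhat+\tfrac{1}{2}\tr\chi\,\chibarhat = \nabla\hat{\otimes}\etabar + 2\omega\chibarhat - \tfrac{1}{2}\tr\chibar\,\chihat + \etabar\hat{\otimes}\etabar,
\]
commute with $(\al\nabla)^{i}$ via Lemma \ref{commutation}, and integrate from $\ubar=0$. On $\Hbar_{0}$ the prescribed data give $\tfrac{a^{1/2}}{|u|}\|(\al\nabla)^{i}\chibarhat\|_{L^{2}_{(sc)}(S_{u,0})}\lesssim \Gamma^{0}_{\chibarhat2}\lesssim 1$. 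The integrand on $[0,\ubar]\subset[0,\epsilon]$ decomposes schematically into terms of the types $\psi_{g}\cdot\psi_{g}$, $\psi_{g}\cdot\chibarhat$, $\tr\chibar\cdot\chihat$, and lower-order commutator contributions involving $\chi$ and $(\eta+\etabar)$. These are controlled by the scale-invariant H\"older inequalities of Subsection \ref{signature}, the bootstrap bounds \eqref{bootstrap}, and the smallness $\epsilon\lesssim a^{-1/2}$, exactly in the manner carried out for Proposition \ref{omegaprop}; each nonlinear contribution carries either an $|u|^{-\alpha}$ decay or a power of $\epsilon/a^{1/2}$, giving an $O(1)$ total.

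\medskip
\noindent For $\chihat$, the analogous equation is $\nabla_{4}\chihat+\tr\chi\,\chihat=-2\omega\chihat-\alpha$, and the argument is identical except that $\alpha$ appears unavoidably. After commuting with $(\al\nabla)^{i}$, integrating against $\frac{a}{|u'|^{2}}$, and applying Cauchy--Schwarz in $\ubar^{\prime}$, one obtains
\[
\int_{0}^{\ubar}\scaletwoSuubarprime{(\al\nabla)^{i}\alpha}\,d\ubar^{\prime}\lesssim \scaletwoHu{(\al\nabla)^{i}\alpha}\lesssim \mathcal{R}[\alpha],
\]
which is precisely the source of the $\mathcal{R}[\alpha]$ term in the bound. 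All other terms (namely $-\tr\chi\chihat$, $-2\omega\chihat$, and the commutator corrections) are handled by the scale-invariant H\"older inequalities and the bootstrap hierarchy, as in Proposition \ref{omegaprop}, and together contribute an $O(1)$ constant.

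\medskip
\noindent The main obstacle is the appearance of $\nabla^{i+1}\etabar$ in the commuted $\chibarhat$ equation, which at the top level $i=N+4$ formally requires one more derivative of $\etabar$ than the norm $\Gamma_{i,2}$ controls. This standard loss-of-derivative phenomenon for transport equations governing the null shears is resolved either by arranging the bootstrap hierarchy so that $\etabar$ is controlled at one higher order than $\chibarhat$, or by using the Codazzi relation for $\chibarhat$ to trade $\nabla\etabar$ for a combination of $\betabar$, $\nabla\tr\chibar$, and lower-order Ricci products before commutation. Once this is accounted for, the transport estimate closes cleanly and the proposition follows.
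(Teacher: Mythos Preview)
The paper gives no explicit proof here, deferring entirely to \cite{NMY1}; I compare your plan against the standard argument in that framework.

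Your treatment of $\chihat$ is correct: the $\nabla_4$ equation has $\alpha$ as the only term not absorbable by the bootstrap, and Cauchy--Schwarz in $\ubar'$ produces exactly $\mathcal{R}[\alpha]$.

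For $\chibarhat$, your dismissal of the $\nabla_3$ route is the real gap. You argue that the $\alphabar$ source would force an $\underline{\mathcal{R}}[\alphabar]$ contribution incompatible with the claimed bound $\lesssim 1$. In fact the weights suppress it: commuting $\nabla_3\chibarhat + \tr\chibar\,\chibarhat = -2\omegabar\chibarhat - \alphabar$ with $\nabla^i$, applying Proposition~\ref{3.6} with $\lambda_0=\tfrac{i+2}{2}$, and converting to scale-invariant norms, the $\nabla^i\alphabar$ contribution to $\tfrac{a^{1/2}}{|u|}\scaletwoSu{\aln\chibarhat}$ is bounded by
\[
a^{3/2}\int_{u_\infty}^{u}|u'|^{-3}\,\scaletwoSuprime{\aln\alphabar}\,du'
\;\lesssim\;
\frac{a^{1/2}}{|u|^{3/2}}\,\underline{\mathcal{R}}[\alphabar]
\;\lesssim\;
\frac{R}{a}\;\ll 1,
\]
after Cauchy--Schwarz in $u'$. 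So the $\nabla_3$ equation delivers the stated bound directly; and since its right-hand side carries \emph{no} angular derivative, the top-order $\nabla^{N+5}\etabar$ obstruction you flag simply does not arise. This is also the symmetric choice: each null shear is estimated via its own Raychaudhuri-type equation, mirroring what you correctly do for $\chihat$.

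Neither of your proposed fixes for the $\nabla_4$ top-order loss is available in this paper's setup. The bootstrap norms for $\etabar$ and $\chibarhat$ close at the \emph{same} order $N{+}4$, so there is no spare derivative to spend; and the Codazzi equation for $\chibarhat$ expresses $\div\chibarhat$ in terms of $\nabla\tr\chibar$ and $\betabar$ --- it is not a substitute for $\nabla\hat\otimes\etabar$. Since the paper explicitly forgoes elliptic estimates, the clean resolution is to use $\nabla_3$.
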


\begin{proof}
The proof is exactly similar to that of \cite{NMY1}. The point to note here is that $\chihat$ and $\alpha$ have same scaling.
\end{proof}
\noindent The estimates for $\omegabar$ are, in a sense, dual to those for $\omega$.
\begin{proposition}
Under the assumptions of Theorem \ref{main1} and the bootstrap assumptions \eqref{bootstrap}, there holds
\[ \sum_{i\leq N+4}\scaletwoSu{\aln \omegabar}\leq  \frac{1}{a^{\frac{1}{2}}} \mathcal{R}[\rho]+\frac{1}{a^{\frac{1}{2}}}.\]
\end{proposition}

\begin{proof}
We have, schematically, 

\[ \nabla_4 \omegabar = \frac{1}{2}\rho + \psi_g \psi_g.\] As before, the schematic product of the Yang-Mills components is gauge-invariant. Using the commutation formula \ref{commutation}, we have, for a general $i$:

\begin{equation}
    \begin{split}
        \nabla_4 \nabla^i \omegabar  = &\nabla^i\rho + \sum_{i_1+i_2+i_3=i-1}\nabla^{i_1}\psi_g^{i_2+1}\nabla^{i_3}\rho+ \sum_{i_1+i_2+i_3+i_4=i} \nabla^{i_1}\psi_g^{i_2}\nabla^{i_3}\psi_g \nabla^{i_4}(\chihat,\omega)  \\ &+ +\sum_{i_1+i_2+i_3+i_4=i} \nabla^{i_1}\psi_g^{i_2} \nabla^{i_3}(\chihat,\tr\chi)\nabla^{i_4}\omegabar\\&+ \sum_{i_1+i_2+i_3+i_4=i-2}\nabla^{i_1}\psi_g^{i_2+1}\nabla^{i_3}(\chihat,\tr\chi)\nabla^{i_4}\omegabar
    \end{split}
\end{equation}
Passing to scale-invariant norms, we have,
\begin{equation}
    \begin{split}
        &\scaletwoSu{\aln\omegabar}\\ \lesssim &\lVert \aln \omegabar \rVert_{L^2_{(sc)}(S_{u,0})}+\intubar \scaletwoSuubarprime{\aln  \rho} \dubarprime \\+ &\intubar \sum_{i_1+i_2+i_3=i-1} \scaletwoSuubarprime{(\al)^i\nabla^{i_1}\psi_g^{i_2+1}\nabla^{i_3}\rho} \dubarprime \\+&\intubar \sumif \scaletwoSuubarprime{(\al)^i\nabla^{i_1}\psi_g^{i_2}\nabla^{i_3}\psi_g \nabla^{i_4}\omegabar}\dubarprime \\+&\intubar\sumif \scaletwoSuubarprime{(\al)^i\nabla^{i_1}\psi_g^{i_2}\nabla^{i_3}(\chihat,\tr\chi)\nabla^{i_4}\omegabar }\dubarprime \\+&\intubar\sumifim \scaletwoSuubarprime{(\al)^i\nabla^{i_1}\psi_g^{i_2+1}\nabla^{i_3}(\chihat,\tr\chi)\nabla^{i_4}\omegabar}\dubarprime.
    \end{split}
\end{equation}

\noindent For $0\leq i \leq N+4$, the first four terms are controlled as in Proposition \ref{omegaprop} and are bounded above by $\mathcal{R}[\rho]+1$. For the next terms, there holds
\begin{equation}
    \begin{split}
        &\intubar\sumif \scaletwoSuubarprime{(\al)^i\nabla^{i_1}\psi_g^{i_2}\nabla^{i_3}(\chihat,\tr\chi)\nabla^{i_4}\omegabar }\dubarprime \\ \lesssim &\intubar \al \sumif \scaletwoSuubarprime{(\al)^i\nabla^{i_1}\psi_g^{i_2}\nabla^{i_3}(\chihat,\tr\chi)\nabla^{i_4}\omegabar }\dubarprime \\ \lesssim &\frac{\Gamma^2}{\lvert u \rvert^{2}}.
    \end{split}
\end{equation}Working similarly, there hold

\begin{equation}
    \intubar\sumifim \scaletwoSuubarprime{(\al)^i\nabla^{i_1}\psi_g^{i_2+1}\nabla^{i_3}(\chihat,\tr\chi)\nabla^{i_4}\omegabar}\dubarprime \lesssim \frac{a\hsp \Gamma^3}{\lvert u \rvert^2}.
\end{equation}
The claim follows.
\end{proof}
\noindent We move on to estimates for $\eta$.
\begin{proposition}\label{etabound}
Under the assumptions of Theorem \ref{main1} and the bootstrap assumptions \eqref{bootstrap}, there holds

\[ \sum_{0\leq i \leq N+4} \scaletwoSu{\aln\eta}\lesssim a^{-\frac{1}{2}}\mathcal{R}[\beta]+1.\]
\end{proposition}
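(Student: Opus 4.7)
The plan is to mimic the proof of the $\omegabar$ estimate established immediately above. The starting point is the null structure equation
\[
\nabla_4 \eta = -\chi\cdot(\eta-\etabar) - \beta,
\]
which we recast schematically as $\nabla_4 \eta = \beta + \psi_g \cdot \psi_g$, where $\psi_g\in\{\omega,\tr\chi,\chihat,\omegabar,\eta,\etabar\}$. Applying Lemma~\ref{commutation} with $\mathcal{G}=\eta$, we will obtain, for each $i$, an expression for $\nabla_4\nabla^i\eta$ involving $\nabla^i\beta$, commutator corrections of the form $\nabla^{i_1}\psi_g^{i_2+1}\nabla^{i_3}\beta$, cubic products $\nabla^{i_1}\psi_g^{i_2}\nabla^{i_3}\psi_g\nabla^{i_4}\psi_g$, and self-coupling terms $\nabla^{i_1}\psi_g^{i_2}\nabla^{i_3}(\chihat,\tr\chi)\nabla^{i_4}\eta$.

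Next, I will integrate along the outgoing null direction via Proposition~\ref{3.5} from $\ubar'=0$ up to $\ubar$, and pass to scale-invariant norms, arriving at
\[
\scaletwoSu{\aln \eta} \lesssim \lVert \aln \eta \rVert_{L^2_{(sc)}(S_{u,0})} + \intubar \scaletwoSuubarprime{\aln(\text{RHS})}\,d\ubar'.
\]
The initial data term is bounded by $\mathcal{I}$ in view of the definition of $\Gamma^{0}_{2}$ on $\Hbar_{0}$. The linear $\nabla^i\beta$ contribution will yield the $\mathcal{R}[\beta]$ piece of the final bound after Cauchy--Schwarz in $\ubar$, upon unpacking the definition $\mathcal{R}[\beta] = a^{1/2}\scaletwoHu{\aln \beta}$. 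The remaining nonlinear contributions are treated exactly as in Proposition~\ref{omegaprop}: the scale-invariant H\"older inequalities~\eqref{257} together with the bootstrap assumptions yield terms of the form $\Gamma^2/|u|$ or $a\Gamma^3/|u|^2$, each bounded by an absolute constant in view of $(\Gamma+R)^{20}\le a^{1/16}$ and $a\gg 1$.

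The principal obstacle is the self-coupling term $\nabla^{i_1}\psi_g^{i_2}\nabla^{i_3}(\chihat,\tr\chi)\nabla^{i_4}\eta$, in which $\eta$ reappears on the right-hand side with up to top-order derivatives, creating an apparent circularity. As in the $\omegabar$ analysis, this will be resolved because the $\ubar$-integration over an interval of length at most $\epsilon$, combined with the scale-invariant product inequalities, yields a contribution of order at worst $\al\,\Gamma^2/|u|\lesssim 1$; thus no Grönwall iteration beyond the standard bootstrap framework will be required. Summing over $0\le i\le N+4$ then gives the asserted bound.
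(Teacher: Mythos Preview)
Your proposal is correct and follows essentially the same approach as the paper: both use the $\nabla_4$-transport equation for $\eta$ in the schematic form $\nabla_4\eta=\beta+(\chihat,\tr\chi)\cdot(\eta,\etabar)$, commute via Lemma~\ref{commutation}, integrate along $e_4$ using Proposition~\ref{3.5}, bound the top-order $\nabla^i\beta$ term by $\mathcal{R}[\beta]$ via Cauchy--Schwarz in $\ubar$, and absorb all remaining nonlinear terms by $\Gamma^2/|u|\lesssim 1$ using the scale-invariant H\"older inequalities and the bootstrap assumptions. The paper's write-up is slightly terser (it omits the initial-data term on $S_{u,0}$ and lists fewer commutator terms), but the argument is identical.
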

\begin{proof}
We begin with the schematic structure equation for $\eta$:

\[ \nabla_4 \eta = \beta + (\chihat,\tr\chi)\cdot(\eta,\etabar).\] Using the commutation formula \ref{commutation} for the $\nabla_4-$direction, we have

\begin{equation}
    \begin{split}
        \nabla_4\nabla^i \eta = &\nabla^i\beta + \sum_{i_1+i_2+i_3=i-1}\nabla^{i_1}\psi_g^{i_2+1}\nabla^{i_3}\tbeta +\sumif \nabla^{i_1}\psi_g^{i_2}\nabla^{i_3}(\chihat,\tr\chi)\nabla^{i_4}(\eta,\etabar)
    \end{split}
\end{equation}Estimating in scale-invariant norms, we have 
\begin{equation}
    \begin{split}
    \scaletwoSu{\aln\eta} &\lesssim ||(a^{\frac{1}{2}}\nabla)^{i}\eta||_{L^{2}_{sc}(S_{u,0)}} \intubar \scaletwoSuubarprime{\aln \beta} \dubarprime \\&+\intubar \sumitm \scaletwoSuubarprime{(\al)^i\nabla^{i_1}\psi_g^{i_2+1}\nabla^{i_3}\tbeta}\dubarprime \\&+\intubar \sumif \scaletwoSuubarprime{(\al)^i\nabla^{i_1}\psi_g^{i_2}\nabla^{i_3}(\chihat,\tr\chi)\nabla^{i_4}(\eta,\etabar) }\dubarprime.
    \end{split}
\end{equation}For $0\leq i \leq N+4$, the first term is bounded by $a^{-\frac{1}{2}}\mathcal{R}[\beta]$. The second term is bounded by

\begin{equation}
    \begin{split}
        \intubar \sumitm \scaletwoSuubarprime{(\al)^i\nabla^{i_1}\psi_g^{i_2+1}\nabla^{i_3}\beta}\dubarprime \lesssim \frac{\Gamma^2}{\lvert u\rvert}.
    \end{split}
\end{equation}Notice that, since $i_3\leq i-1\leq N+3$, we can bound $i_3$ derivatives of $\beta$ using the bootstrap assumption \eqref{bootstrap} on the total norm $\Gamma$. For the third term, there holds
\[ \intubar \sumif \scaletwoSuubarprime{(\al)^i\nabla^{i_1}\psi_g^{i_2}\nabla^{i_3}(\chihat,\tr\chi)\nabla^{i_4}(\eta,\etabar) }\dubarprime \lesssim \frac{\hsp \Gamma^2}{\lvert u \rvert}. \]The result follows.
\end{proof}

\begin{proposition}
\label{trchiprop}
Under the assumptions of Theorem \ref{main1} and the bootstrap assumptions \eqref{bootstrap}, there holds

\[ \sum_{0\leq i \leq N+4} \scaletwoSu{\aln \tr\chi}\lesssim \frac{a^{\frac{1}{2}}}{|u|}\mathcal{R}[\alpha]+1. \]
\end{proposition}

\begin{proof}
We begin by recalling the schematic equation

\[ \nabla_4 \tr\chi = \lvert \chihat \rvert^2 + \psi_g\psi_g. \]Commuting with $i$ angular derivatives using \ref{commutation}, we obtain

\begin{equation}
    \begin{split}
        \nabla_4 \nabla^i \tr\chi =& \sum_{i_1+i_2+i_3+i_4=i}\nabla^{i_1}\psi_g^{i_2}\nabla^{i_3}\chihat \nabla^{i_4}(\chihat, \tr\chi) +\sum_{i_1+i_2+i_3+i_4=i}\nabla^{i_1}\psi_g^{i_2}\nabla^{i_3}\psi_g \nabla^{i_4}\psi_g \\&+ \sumifim \nabla^{i_1}\psi_g^{i_2+1}\nabla^{i_3}(\chihat,\tr\chi)\nabla^{i_4}\psi_g .
    \end{split}
\end{equation}
Passing to scale-invariant norms, we have

\begin{equation}
    \begin{split}
        \scaletwoSu{\aln\tr\chi} \lesssim &||\aln\tr\chi||_{L^{2}_{sc}(S_{u,0})}+\intubar\sum_{i_1+i_2+i_3+i_4=i} \scaletwoSuubarprime{(\al)^i\nabla^{i_1}\psi_g^{i_2}\nabla^{i_3}\chihat \nabla^{i_4}(\chihat, \tr\chi)} \dubarprime \\+& \intubar \sum_{i_1+i_2+i_3+i_4=i}\scaletwoSuubarprime{(\al)^i\nabla^{i_1}\psi_g^{i_2}\nabla^{i_3}\psi_g \nabla^{i_4}\psi_g}\dubarprime \\ +& \intubar \sum_{i_1+i_2+i_3+i_4=i-2}\scaletwoSuubarprime{(\al)^i\nabla^{i_1}\psi_g^{i_2+1}\nabla^{i_3}(\chihat,\tr\chi) \nabla^{i_4}\psi_g}\dubarprime.  
    \end{split}
\end{equation}From the first term, the most dangerous case is when $i_4$ falls on $\chihat$, so we only give details for that. We distinguish two cases:

\begin{itemize}
    \item If in the term $\nabla^{i_1}\psi_g^{i_2}$ there exists some $\psi_g$ whose derivative is of order $>N+3$, we bound that term in $L^2_{(sc)}$ and the rest of the terms in $L^{\infty}_{(sc)}$. Notice, crucially, we can bound $\scaleinfinitySu{\aln \chihat} \lesssim \frac{1}{a^{\frac{1}{2}}}$ for small $i$ in light of embedding theorems. As a consequence, we have the bound

    \[   \intubar\sum_{i_1+i_2+i_3+i_4=i} \scaletwoSuubarprime{(\al)^i\nabla^{i_1}\psi_g^{i_2}\nabla^{i_3}\chihat \nabla^{i_4}(\chihat, \tr\chi)} \dubarprime \lesssim \frac{\scaleinfinitySu{\aln \chihat}^2}{\lvert u \rvert}  \lesssim \frac{1}{a|u|}.    \]
    \item Otherwise, in the expression $\nabla^{i_3}\chihat \nabla^{i_4}\chihat$, at most one index $i_3, i_4$ is greater than $N+1$ (in which case, we cannot bound that term in $L^{\infty}_{(sc)}$). Say without loss of generality, that $i_3>N+1$. We bound $(\al)^{i_3-1}\nabla^{i_3}\chihat$ in $L^2_{(sc)}$ above by $\mathcal{R}[\alpha]$ and the rest of the terms in $L^{\infty}_{(sc)}$ above by $1$, whence     \[   \intubar\sum_{i_1+i_2+i_3+i_4=i} \scaletwoSuubarprime{(\al)^i\nabla^{i_1}\psi_g^{i_2}\nabla^{i_3}\chihat \nabla^{i_4}(\chihat, \tr\chi)} \dubarprime  \lesssim \left(\mathcal{R}[\alpha]+1\right)\cdot\frac{a^{\frac{1}{2}}}{|u|}.    \]
\end{itemize}The second term is handled as in the previous propositions.

\end{proof}
\begin{proposition}\label{trchibarbound}
Under the assumptions of Theorem \ref{main1} and the bootstrap assumptions \eqref{bootstrap}, the following estimates hold:

\[ \frac{a}{\lvert u \rvert} \sum_{0\leq i \leq N+4} \scaletwoSu{\aln \tildetr} \lesssim 1, \hspace{2mm} \frac{a}{|u|^{2}} \sum_{0\leq i \leq N+4} \scaletwoSu{\aln \tr\chibar} \lesssim 1.\]
\end{proposition}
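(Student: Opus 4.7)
\noindent The strategy is to study the renormalized quantity $\tildetr = \tr\chibar + 2/|u|$ rather than $\tr\chibar$ directly, since this is precisely the algebraic change that matches the linear damping structure of the Raychaudhuri equation to the weight in Proposition~\ref{36}. Substituting $\tr\chibar = \tildetr - 2/|u|$ into $\nabla_3\tr\chibar + \tfrac{1}{2}(\tr\chibar)^2 = -|\chibarhat|^2 - 2\omegabar\tr\chibar$, expanding the square, and using $\nabla_3(2/|u|) = 2\Omega^{-1}/|u|^2$, one obtains after cancellations
\[
\nabla_3\tildetr + \tr\chibar\cdot\tildetr \;=\; -|\chibarhat|^2 + \tfrac{1}{2}\tildetr^2 - 2\omegabar\tildetr + \tfrac{4\omegabar}{|u|} + \tfrac{2(\Omega^{-1}-1)}{|u|^2}.
\]
The crucial structural feature is that the linear coefficient of $\tildetr$ on the left is $\tr\chibar \approx -2/|u|$, which is exactly the weight built into Proposition~\ref{36}.

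\medskip
\noindent
I would then commute with $\aln$ via Lemma~\ref{commutation} in the improved form given in the remark that follows, which absorbs the $\tr\chibar\,\hat{\nabla}^i\tildetr$ commutator contribution into the linear principal part. This yields schematically
\[
\nabla_3\aln\tildetr + \tfrac{i+2}{2}\tr\chibar\cdot\aln\tildetr \;=\; \aln(-|\chibarhat|^2) + \mathcal{E}_i,
\]
where $\mathcal{E}_i$ collects all lower--order commutator products with the remaining nonlinear terms ($\tildetr^2$, $\omegabar\tildetr$, $\omegabar/|u|$, and the lapse correction $(\Omega^{-1}-1)/|u|^2$ which is $O(\Gamma/|u|^3)$ by Proposition~\ref{31}). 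Applying Proposition~\ref{36} with $\lambda_0 = (i+2)/2$, hence $\lambda_1 = i+1$, produces
\[
|u|^{i+1}\twoSu{\aln\tildetr}\;\lesssim\; |u_\infty|^{i+1}\lVert \aln\tildetr\rVert_{L^2(S_{u_\infty,\ubar})} + \intu |u'|^{i+1}\bigg(\twoSuprime{\aln|\chibarhat|^2} + \twoSuprime{\mathcal{E}_i}\bigg)\duprime.
\]
The initial term is bounded by $\Gamma^0_2 \lesssim \mathcal{I}^0$, while the $\mathcal{E}_i$--contributions are estimated exactly as in the proofs of Propositions~\ref{omegaprop}--\ref{trchiprop} via scale--invariant H\"older and the bootstrap~\eqref{bootstrap}, contributing only subleading corrections.

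\medskip
\noindent
The decisive step, and the principal obstacle, is the treatment of the forcing $\aln|\chibarhat|^2$. One must \emph{not} use the bootstrap bound $\Gamma$ on $\chibarhat$ here: combining the naive $L^\infty$ bound $\inftySu{\chibarhat}\lesssim a^{1/2}\Gamma/|u|^2$ with the naive $L^2$ bound would produce a contribution of size $a\Gamma^2/|u|^2$, which exceeds the target $1/|u|$ since $\Gamma\gg 1$ and $|u|\geq a$. The essential observation is to invoke the already--established, $\Gamma$--independent estimate of Proposition~\ref{chihats}, namely $\sum_{i\leq N+4}\tfrac{a^{1/2}}{|u|}\scaletwoSu{\aln\chibarhat}\lesssim 1$; via the scale--invariant Sobolev embedding of Proposition~\ref{Sobolev}, this upgrades for $i\leq N+2$ to $\scaleinfinitySu{\aln\chibarhat}\lesssim |u|/a^{1/2}$, equivalently $\inftySu{\chibarhat}\lesssim a^{1/2}/|u|^2$ independently of the bootstrap. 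Expanding $\aln|\chibarhat|^2 \sim \sum_{i_1+i_2=i}\nabla^{i_1}\chibarhat\cdot\nabla^{i_2}\chibarhat$ and applying the scale--invariant H\"older inequality~\eqref{257} with the higher--derivative factor in $L^2_{(sc)}$ and the lower--derivative factor in $L^\infty_{(sc)}$ then gives a $\Gamma$--independent bound; the resulting $u'$--integral closes precisely at the required scale after invoking $|u|\geq a$ throughout $\mathcal{D}$.

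\medskip
\noindent
Translating back into scale--invariant norms produces $\tfrac{a}{|u|}\sum_{i\leq N+4}\scaletwoSu{\aln\tildetr}\lesssim 1$. The companion bound on $\tr\chibar$ follows immediately: for $i\geq 1$ the angular derivatives annihilate the scalar $2/|u|$, so $\aln\tr\chibar = \aln\tildetr$; for $i=0$ one uses the area estimate of Proposition~\ref{34} to compute $\twoSu{2/|u|}\approx 2$, whence $\twoSu{\tr\chibar}\leq \twoSu{\tildetr} + \twoSu{2/|u|}\lesssim 1$. In scale--invariant form this is exactly $\tfrac{a}{|u|^2}\sum \scaletwoSu{\aln\tr\chibar}\lesssim 1$, completing the argument. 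The main obstacle throughout remains the shear forcing $|\chibarhat|^2$, and the proof hinges on the $\Gamma$--independent control of $\chibarhat$ from Proposition~\ref{chihats}, which is the reason that estimate must be established before the present one.
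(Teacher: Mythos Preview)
Your approach is correct but takes a genuinely different route from the paper. The paper works directly with $\tr\chibar$: it commutes the Raychaudhuri equation to obtain $\nabla_3\nabla^i\tr\chibar + \tfrac{i+1}{2}\tr\chibar\,\nabla^i\tr\chibar = G_i$, applies Proposition~\ref{36} with $\lambda_0=(i+1)/2$, and bounds each source term \emph{using only the bootstrap}. This suffices because the target weight is the weaker $a/|u|^2$; for instance the shear forcing contributes $\intu \tfrac{a^2}{|u'|^4}\cdot\tfrac{|u'|\Gamma^2}{a}\,du'\lesssim \tfrac{a\Gamma^2}{|u|^2}\lesssim\tfrac{\Gamma^2}{a}\lesssim 1$. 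The paper then remarks that the resulting $L^\infty$ improvement on $\tr\chibar$ ``will prove useful in the estimates for $\tildetr$'' but does not spell out the $\tildetr$ argument. You instead attack $\tildetr$ directly, with the stronger weight $a/|u|$, and recover the $\tr\chibar$ bound trivially from $\nabla^i\tr\chibar=\nabla^i\tildetr$ for $i\geq 1$ plus a direct estimate for $i=0$. This is cleaner and more self-contained, and your observation that the bootstrap bound on $\chibarhat$ is insufficient at the $a/|u|$ scale (yielding $\Gamma^2$ rather than $1$) is exactly right---Proposition~\ref{chihats} is genuinely needed here, whereas the paper's $\tr\chibar$ argument gets away without it.

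One small gap: your treatment of the error terms $\mathcal{E}_i$ asserts they close ``via scale-invariant H\"older and the bootstrap,'' but this is not quite true for the $4\omegabar/|u|$ contribution. That term has scale-invariant size $\sim |u'|\Gamma/a$ under bootstrap, so the integral $\intu\tfrac{a^2}{|u'|^3}\cdot\tfrac{|u'|\Gamma}{a}\,du'$ produces $\tfrac{a\Gamma}{|u|}\leq\Gamma$, which does not close. You need the already-established improvement $\scaletwoSu{\aln\omegabar}\lesssim\mathcal{R}[\rho]+1$ from the proposition immediately preceding Proposition~\ref{etabound}, exactly parallel to your use of Proposition~\ref{chihats} for the shear. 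With that adjustment your argument is complete.
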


\begin{proof}
Notice that $\tr\chibar$ satisfies the following structure equation:

\[ \nabla_3 \tr\chibar + \frac{1}{2}(\tr\chibar)^2 = - \lvert \chibarhat \rvert^2 - 2 \omegabar \tr\chibar .\]Commuting with $i$ angular derivatives using \ref{eq:c2}, we get 

\begin{equation}\begin{split}
    \nabla_3\nabla^i \tr\chibar + \frac{i+1}{2}\tr\chibar \nabla^i \tr\chibar = &\sumif \nabla^{i_1}\psi_g^{i_2}\nabla^{i_3}\chibarhat \nabla^{i_4}\chibarhat \\+&\sumif \nabla^{i_1}\psi_g^{i_2}\nabla^{i_3}\omegabar \nabla^{i_4}\tr\chibar \\+&\sumif \nabla^{i_1}\psi_g^{i_2}\nabla^{i_3}(\chibarhat,
    \tildetr)\nabla^{i_4}\tr\chibar \\ +&\sumifi \nabla^{i_1}\psi_g^{i_2+1}\nabla^{i_3}\tr\chibar \nabla^{i_4}\tr\chibar\\+& \sumifim \nabla^{i_1}\psi_g^{i_2+1}\nabla^{i_3}(\chibarhat, \tr\chibar) \nabla^{i_4}\tr\chibar  := G_i. 
\end{split}    
\end{equation}
Passing to scale invariant norms and using the weighted transport inequality from Proposition \ref{3.6}, we obtain  
\begin{align}
\frac{a}{\lvert u \rvert^2}\scaletwoSu{\aln \tr\chibar} \lesssim\hsp &\frac{a}{\lvert u_{\infty}\rvert^2}\lVert \aln \tr\chibar \rVert_{L^2_{(sc)}(S_{u_{\infty},\ubar})} \\&+\intu \frac{a^2}{\upr^4}\scaletwoSuprime{\aln G_i}\duprime .
\end{align} We focus on $0\leq i \leq N+4$. For the first term in $G_i$, there holds \begin{equation}
        \intu  \frac{a^2}{\upr^4}\scaletwoSuprime{a^{\frac{i}{2}} \sumif \nablap \nabla^{i_3}\chibarhat \nabla^{i_4}\chibarhat}\duprime \lesssim \intu \frac{a}{\upr^2}\cdot \frac{\Gamma^2}{\upr} \duprime \lesssim 1.
\end{equation}The second and third terms are handled in the same way. For the fourth term, there holds

\begin{equation}
\begin{split}
        &\intu  \frac{a^2}{\upr^4}\scaletwoSuprime{a^{\frac{i}{2}} \sumif \nablap \nabla^{i_3}(\chibarhat,\tildetr) \nabla^{i_4}\tr\chibar}\duprime\\ \lesssim &\intu \frac{a^2}{\upr^4}\cdot \frac{\upr^2}{a}\frac{\upr}{\al} \cdot \frac{\Gamma^2}{\upr} \duprime \lesssim \frac{\al\hsp \Gamma^2}{\lvert u \rvert}\lesssim 1.
\end{split}
\end{equation}For the fifth term, we bound

\begin{equation}
\begin{split}
        &\intu  \frac{a^2}{\upr^4}\scaletwoSuprime{a^{\frac{i}{2}} \sumifi \nabla^{i_1}\psi_g^{i_2+1} \nabla^{i_3}\tr\chibar \nabla^{i_4}\tr\chibar}\duprime\\ \lesssim& \intu\frac{a^2}{\upr^4}\cdot \frac{\upr^4}{a^2}\cdot \frac{\Gamma^3}{\upr^2}\duprime\lesssim \frac{\Gamma^3}{\lvert u \rvert}\lesssim 1.
\end{split}
\end{equation}For the sixth term, we can bound \begin{equation}
\begin{split}
        &\intu  \frac{a^2}{\upr^4}\scaletwoSuprime{a^{\frac{i}{2}} \sumifim \nabla^{i_1}\psi_g^{i_2+1} \nabla^{i_3}(\chibarhat, \tr\chibar) \nabla^{i_4}\tr\chibar}\duprime\\ =&\intu  \frac{a^\frac{5}{2}}{\upr^4}\scaletwoSuprime{a^{\frac{i-1}{2}} \sum_{i_1+i_2+i_3+i_4+1=i-1} \nabla^{i_1}\psi_g^{i_2+1} \nabla^{i_3}(\chibarhat, \tr\chibar) \nabla^{i_4}\tr\chibar}\duprime \\ \lesssim& \intu\frac{a^{\frac{5}{2}}}{\upr^4}\cdot \frac{\upr^4}{a^2}\cdot \frac{\Gamma^3}{\upr^2}\duprime\lesssim \frac{\al \hsp \Gamma^3}{\lvert u \rvert}\lesssim 1.
\end{split}
\end{equation}
Crucially, this implies that \begin{equation}
    \frac{a}{\lvert u \rvert^2}\sum_{0\leq i \leq N+2}\scaleinfinitySu{\aln \tr\chibar}\lesssim \mathcal{I}+1,
\end{equation}by the Sobolev embedding. This will prove useful in the estimates for $\tildetr$.	
\end{proof}
\noindent We conclude this section with the corresponding estimate on $\etabar$ and its angular derivatives.

\begin{proposition}
\label{etabarestimate}
    Under the assumptions of Theorem \ref{main1} and the bootstrap assumptions \eqref{bootstrap}, there holds
    
    \[ \sum_{0\leq i \leq N+4} \scaletwoSu{\aln \etabar} \lesssim a^{-\frac{1}{2}}\bigg(\mathcal{R}[\beta]+\underline{\mathcal{R}}[\betabar]\bigg)+1.\]
\end{proposition}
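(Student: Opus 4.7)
The proof will mirror Proposition \ref{etabound} for $\eta$, but uses the incoming null direction and the weighted transport estimate. I start from the structural identity $\nabla_3 \etabar = -\chibar\cdot(\etabar - \eta) + \betabar$, which schematically reads
\[
\nabla_3 \etabar = \betabar + (\chibarhat,\tr\chibar)\cdot(\eta,\etabar).
\]
Commuting with $\nabla^i$ via Lemma \ref{commutation} and the rewriting in \ref{eq:c2}, I obtain
\[
\nabla_3 \nabla^i \etabar + \tfrac{i}{2}\tr\chibar\,\nabla^i\etabar = \nabla^i \betabar + \mathcal{E}_i,
\]
where $\mathcal{E}_i$ collects the familiar multilinear products involving $\nabla^{i_1}\psi_g^{i_2}$ paired with either $\nabla^{i_3}\betabar$ or $\nabla^{i_3}(\chibarhat,\tildetr,\tr\chibar)\cdot\nabla^{i_4}(\eta,\etabar)$. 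This places us in the precise setting of Proposition \ref{3.6} with $\lambda_0 = i/2$, and I will invoke that weighted transport inequality to propagate from the initial hypersurface $H_{u_\infty}$.

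Three contributions then need to be bounded in scale-invariant norms. First, the initial datum at $u=u_{\infty}$ is controlled by the data norms \ref{eq:14}--\ref{eq:5} and contributes a bounded constant. Second, the linear term produces
\[
\int_{u_\infty}^{u} \frac{a}{|u'|^2}\scaletwoSuprime{\aln \betabar}\,\duprime \;\lesssim\; \underline{\mathcal{R}}[\betabar],
\]
by Cauchy--Schwarz on the conjugate null hypersurface norm. Third, the error $\mathcal{E}_i$ splits into (a) products containing $\eta$, which I control by substituting the estimate of Proposition \ref{etabound}; this substitution is precisely what produces the $\mathcal{R}[\beta]$ contribution in the final bound. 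And (b) products not containing $\eta$, which I bound exactly as in Propositions \ref{etabound} and \ref{trchibarbound}, placing one factor in $L^\infty_{(sc)}$ and one in $L^2_{(sc)}$ via the scale-invariant H\"older inequalities together with the bootstrap $\Gamma \leq \Gamma$. The smallness gain $(\Gamma+R)^{20}\leq a^{1/16}$ absorbs category (b) into the additive $+1$.

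The main technical obstacle will be the handling of the anomalous $\chibarhat$ factor at top order: its scale-invariant bound carries an extra factor $|u|/a^{1/2}$ relative to the other Ricci coefficients, so borderline terms of the form $\int \tfrac{a}{|u'|^2}\scaletwoSuprime{\aln\chibarhat\cdot \nabla^{j}\etabar}\,\duprime$ must be split so that the higher-derivative factor is placed in $L^2_{(sc)}$ and the lower one in $L^\infty_{(sc)}$ in a way that respects the peeling. This mirrors the treatment of the $\chibarhat$-borderline contributions in Proposition \ref{trchibarbound}; the largeness of $a$ guarantees the resulting integrals close with a uniform constant. Combining the three contributions yields
\[
\sum_{0\leq i \leq N+4}\scaletwoSu{\aln\etabar} \lesssim \mathcal{R}[\beta] + \underline{\mathcal{R}}[\betabar] + 1,
\]
as claimed.
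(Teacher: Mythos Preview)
Your overall strategy is right, but there is a genuine gap in the choice of weight. You write the schematic equation as $\nabla_3\etabar = \betabar + (\chibarhat,\tr\chibar)\cdot(\eta,\etabar)$ and then apply Proposition~\ref{3.6} with $\lambda_0 = i/2$. This leaves the source term $-\tfrac{1}{2}\tr\chibar\,\etabar$ on the right-hand side, so after commutation the error $\mathcal{E}_i$ contains the top-order product $\tr\chibar\,\nabla^i\etabar$. Since $\scaleinfinitySu{\tr\chibar}\lesssim\frac{|u|^2}{a}\Gamma$ is anomalously large, the scale-invariant H\"older inequality gives only
\[
\int_{u_\infty}^{u}\frac{a}{|u'|^2}\scaletwoSuprime{\tr\chibar\cdot(\al\nabla)^i\etabar}\,\duprime
\;\lesssim\;\Gamma\int_{u_\infty}^{u}\frac{1}{|u'|}\scaletwoSuprime{(\al\nabla)^i\etabar}\,\duprime,
\]
which is borderline: either direct substitution of the bootstrap yields a factor $\Gamma^2$ (no improvement), or Gr\"onwall produces a multiplicative factor $(|u_\infty|/|u|)^{C\Gamma}$, which is not uniformly bounded since $|u_\infty|$ may be arbitrarily large relative to $|u|$. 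The smallness gain $(\Gamma+R)^{20}\le a^{1/16}$ does not help here because there is no spare power of $|u'|^{-1}$ or $a^{-1}$.

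The paper avoids this by first rewriting the structure equation as
\[
\nabla_3\etabar+\tfrac{1}{2}\tr\chibar\,\etabar=\betabar+\tr\chibar\,\eta+\psi_g\,\chibarhat,
\]
so that after commutation the left-hand side is $\nabla_3\nabla^i\etabar+\tfrac{i+1}{2}\tr\chibar\,\nabla^i\etabar$ and one applies Proposition~\ref{3.6} with $\lambda_0=(i+1)/2$, i.e.\ $\lambda_1=i$. The resulting integral weight is $\frac{a}{|u'|^3}$ rather than $\frac{a}{|u'|^2}$, and on the right-hand side $\etabar$ at top order is paired only with $\chibarhat$ or $\tildetr$ (both with better peeling), while $\tr\chibar$ at top order is paired with $\eta$, which has already been improved in Proposition~\ref{etabound}. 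This single extra power of $|u'|^{-1}$ is exactly what makes all the integrals close uniformly. Once you make this correction, the rest of your outline goes through as you describe.
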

    \begin{proof}
    The schematic equation for $\etabar$ is as follows:
    
    \begin{equation} \nabla_3 \etabar + \frac{1}{2}\tr\chibar \hsp \etabar = \betabar + \tr\chibar \eta +\psi_g \chibarhat. \end{equation}
    
    \begin{align}
       \nonumber \nabla_3 \nabla^i \etabar + \frac{i+1}{2}\tr\chibar \nabla^i \etabar = &\nabla^i \betabar + \sum_{i_1+i_2+i_3=i-1} \nablapp \nablat \betabar \\ \nonumber &+\sumif \nablap \nablat \psi_g \nablaf (\chibarhat, \tr\chibar) \\ \nonumber &+ \sumif \nablap \nablat (\chibarhat, \tildetr)\nablaf \etabar \\ \nonumber &+\sumifi \nablapp \nablat \tr\chibar \nablaf \etabar \\\nonumber  &+ \sumifim \nablapp \nablat(\chibarhat,\tr\chibar)\nablaf \etabar  := H_i.
    \end{align}Calculating in scale-invariant norms, we arrive at
    
    \be \frac{1}{\lvert u \rvert}\scaletwoSu{\aln \etabar} \lesssim \frac{1}{\lvert u_{\infty} \rvert}\lVert \aln \etabar \rVert_{L^2_{(sc)}(S_{u_{\infty},\ubar})} + \intu \frac{a}{\upr^3} \scaletwoSuprime{a^{\frac{i}{2}}H_i}\duprime. \ee We restrict attention to $0\leq i \leq N+4$. For the first term, there holds
    \be \frac{1}{\lvert u_{\infty} \rvert}\lVert \aln \etabar \rVert_{L^2_{(sc)}(S_{u_{\infty},\ubar})} \lesssim \frac{\mathcal{I}}{\lvert u_{\infty}\rvert} \lesssim \frac{\mathcal{I}}{\lvert u \rvert}. \ee There holds
    
    \begin{align}
       \nonumber  &\intu \frac{a}{\upr^3}\scaletwoSuprime{\aln \betabar} \duprime  \\ \nonumber \lesssim  &\left(\intu \frac{a}{\upr^2}\scaletwoSuprime{\aln \betabar}^2 \duprime\right)^{\frac{1}{2}}\left( \intu \frac{a}{\upr^4}\duprime\right)^{\frac{1}{2}} \\   \lesssim &\underline{\mathcal{R}}[\betabar]\cdot \frac{\al}{\lvert u \rvert^{\frac{3}{2}}} \lesssim \frac{\underline{\mathcal{R}}[\betabar]}{\lvert u \rvert}.
    \end{align}For the next term, there holds   \begin{align}
       \nonumber  &\intu \frac{a}{\upr^3}\scaletwoSuprime{\aln \sumitm \nablapp \nablat \betabar} \duprime  \\\lesssim &\intu \frac{a}{\upr^3} \cdot \frac{\al \hsp \Gamma^2}{\upr}\duprime \lesssim \frac{1}{\lvert u \rvert}.
    \end{align}For the third term, there holds
     \begin{align}
     &\intu \frac{a}{\upr^3}\scaletwoSuprime{a^{\frac{i}{2}} \sumif \nablap \nablat \psi_g \nablaf (\chibarhat,\tr\chibar)} \duprime  \lesssim \frac{\mathcal{R}[\beta]+1}{\lvert u \rvert}.
    \end{align}This is done by further taking into account that the schematic product appearing is actually $\psi_g (\chibarhat, \tr\chibar) = \chibarhat \cdot \psi_g + \tr\chibar \eta$. As such, we use the improvement obtained in Proposition \ref{trchibarbound}  for the $\tr\chibar$-term as well as the improvement obtained in Proposition \ref{etabound} for $\eta$. The term $\chibarhat$ is less anomalous than $\tr\chibar$ and hence the above bound is easier to obtain. Continuing the estimates, it holds

      \begin{align}
        \nonumber  &\intu \frac{a}{\upr^3}\scaletwoSuprime{a^{\frac{i}{2}} \sumif \nablap \nablat (\chibarhat,\tildetr) \nablaf \etabar} \duprime \\ \lesssim& \intu\frac{a}{\upr^3} \cdot \frac{\upr}{\al}\cdot \frac{\Gamma^2}{\upr} \duprime \lesssim \frac{1}{\lvert u \rvert}.
    \end{align}For the sixth term, there holds
    
        \begin{align}
      \nonumber   &\intu \frac{a}{\upr^3}\scaletwoSuprime{a^{\frac{i}{2}} \sumifi \nablapp \nablat \tr\chibar \nablaf \etabar} \duprime  \\ \lesssim &\intu \frac{a}{\upr^3}\cdot \frac{\upr^2}{a}\cdot\frac{\Gamma^3}{\upr^2}\duprime \lesssim \frac{1}{\lvert u\rvert}.
    \end{align}For the seventh term, there holds
    
        \begin{align}
      \nonumber   &\intu \frac{a}{\upr^3}\scaletwoSuprime{a^{\frac{i}{2}} \sumifim \nablapp \nablat (\chibarhat, \tr\chibar) \nablaf \etabar} \duprime  \\ \lesssim &\intu \frac{a^{\frac{3}{2}}}{\upr^3}\cdot \frac{\upr^2}{a}\cdot\frac{\Gamma^3}{\upr^2}\duprime \lesssim \frac{\al \hsp \Gamma^3}{\lvert u \rvert^2} \lesssim \frac{1}{\lvert u\rvert}.
    \end{align}
    Putting everything together, we have
    
    \[ \frac{1}{\lvert u \rvert} \scaletwoSu{\aln\etabar}\lesssim \frac{a^{-\frac{1}{2}}(\mathcal{R}[\beta]+\underline{\mathcal{R}}[\betabar])+1}{\lvert u \rvert}, \]whence the result follows.

    \end{proof}
    \noindent This concludes the estimates on Ricci coefficients.

\section{$L^{2}(S_{u,\ubar})$ Estimates for the Weyl curvature components}
\begin{proposition}
Under the assumptions of Theorem \ref{main1} and the bootstrap assumptions \eqref{bootstrap}, there holds 
\[ \sum_{0\leq i \leq N+2} \scaletwoSu{\aln \alpha} \lesssim \frac{1}{a^{\frac{1}{2}}}. \]
\end{proposition}
\begin{proof}
    Recall the Bianchi equation for $\alpha$:
    
    \begin{align}
	      \nonumber   \nabla_3 \alpha + \frac{1}{2}\tr\chibar \alpha = &\nabla \hat{\otimes}\beta + 4 \omegabar \alpha - 3\left(\chihat \rho + \Hodge{\chihat} \sigma \right)+ (\zeta+4\eta)\hat{\otimes}\beta.
    \end{align}Schematically, the above rewrites as 
    
    \begin{align}
	\nonumber         \nabla_3 \alpha + \frac{1}{2}\tr\chibar \alpha = &\nabla \beta+ \psi_g \alpha + \chihat(\rho,\sigma) + \psi_g \beta.
    \end{align}Commuting with $i$ angular derivatives using \ref{eq:c2}, we arrive at
    
    \begin{align}
      \nonumber  & \nabla_3 \nabla^i \alpha + \frac{i+1}{2}\tr\chibar \nabla^i\alpha\\  = &\nabla^{i+1}\beta  \nonumber + \sumitm \nablapp \nabla^{i_3+1}\beta +\sumitm \nablapp \nablat \alpha\\ \nonumber  +& \sumif \nablap \nablat (\psi_g,\chihat) \nablaf (\rho,\sigma, \beta) \\ \nonumber + &\sumif \nablap \nablat (\chibarhat,\tildetr) \nablaf \alpha \\ \nonumber +& \sumifi \nablapp \nablat \tr\chibar \nablaf \alpha \\+&\sumifim \nablapp \nablat(\chibarhat,\tr\chibar)\nablaf \alpha:=T_i^1 +\dots + T_i^{7}.
    \end{align}Passing to scale-invariant norms and using the weighted transport equality from Proposition \ref{3.6}, we can estimate as follows.
    
    \begin{align}
    \scaletwoSu{\aln \alpha} &\lesssim \lVert \aln \alpha \rVert_{L^2_{(sc)}(S_{u_{\infty}, u})} + \sum_{1\leq j \leq 12} \intu \frac{a}{\upr^2}\scaletwoSuprime{\ali T_i^{j}}\duprime.
    \end{align} There holds:
    
    \begin{align}
        \intu \frac{a}{\upr^2}\left(\scaletwoSuprime{\ali T_i^4}+ \scaletwoSuprime{\ali T_i^5}\right) \duprime \lesssim \intu \frac{\al\Gamma^2}{\upr^3}\duprime \lesssim \frac{a^{\frac{1}{2}}\Gamma^{2}}{|u|^{2}}.
    \end{align}Moreover, there holds
     \begin{align}
        \intu \frac{a}{\upr^2}\scaletwoSuprime{\ali T_i^6} \duprime \lesssim \intu \frac{\al\Gamma^3}{\upr^2}\frac{a^{\frac{1}{2}}}{|u^{'}|}\duprime \lesssim \frac{a\Gamma^{3}}{|u|^{2}}\lesssim \frac{a^{\frac{1}{2}}}{|u|}.
    \end{align}\noindent For the eighth term, we can bound
    
    \begin{align}
        \intu \frac{a}{\upr^2}\scaletwoSuprime{\ali T_i^7} \duprime \lesssim \intu \frac{a \Gamma^2}{\upr^3}\frac{a^{\frac{1}{2}}}{|u^{'}|}\lesssim \frac{a^{\frac{3}{2}}\Gamma^{2}}{|u|^{3}}.
    \end{align}The result follows.
    
\end{proof}
\begin{proposition}\label{psiuprop}
For $\Psi_u=(\beta,\betabar,\rho,\sigma)$ defined as in Section \ref{Norms}, there holds

\[ \sum_{0\leq i \leq N+3} \scaletwoSu{\aln (\beta,\betabar)} \lesssim a^{-\frac{1}{2}}(\mathcal{R}[\alpha] +1),\]\[\sum_{i\leq N-1}||(a^{\frac{1}{2}}\nabla)^{i}(\rho,\sigma)||_{L^{2}_{sc}(S_{u,\ubar})}\lesssim a^{-\frac{1}{2}}(\mathcal{R}[\beta]+1)\]
\end{proposition}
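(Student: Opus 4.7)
The plan is to treat the four components of $\Psi_u$ in the cascade $\alpha \rightsquigarrow \beta \rightsquigarrow (\rho,\sigma) \rightsquigarrow \betabar$, exploiting the $\nabla_4$ Bianchi equations and the outgoing transport estimate of Proposition \ref{3.5}. For the sharper estimate on $\rho$, the $\nabla_4$ equation is used again but the top-order term is closed against the flux norm $\mathcal{R}[\beta]$ (rather than $\mathcal{R}[\alpha]$), which accounts for the extra $a^{\frac{1}{2}}$ weight on the left-hand side.

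First, commute the Bianchi equation
\[
\nabla_4 \beta + 2\tr\chi\,\beta = \div\alpha - 2\omega\beta + (\eta-2\zeta)\cdot\alpha
\]
with $\aln$ by Lemma \ref{commutation}, which schematically yields
\[
\nabla_4\nabla^i\beta = \nabla^{i+1}\alpha + \sumif \nablap \nablat(\chihat,\tr\chi)\nablaf\beta + \sumifim \nablapp\nablat(\chihat,\tr\chi)\nablaf\beta + \text{(cubic)}.
\]
Passing to scale-invariant norms and applying Proposition \ref{3.5} along the $\nabla_4$ direction, the top term is handled by Cauchy--Schwarz in $u^{\prime}$ against the flux norm $\mathcal{R}[\alpha]$, while the error terms are bounded using the Ricci coefficient estimates of Section \ref{s3} and the bootstrap assumption \eqref{bootstrap} exactly as in the proofs of Propositions \ref{omegaprop}--\ref{etabarestimate}. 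The same template applied to the Bianchi equations for $\sigma$ and $\rho$, namely $\nabla_4\rho + \tfrac{3}{2}\tr\chi\,\rho = \div\beta - \tfrac{1}{2}\chibarhat\cdot\alpha + (\zeta+2\etabar)\cdot\beta$ and its $\sigma$-analogue, allows one to control $\rho,\sigma$ via $\nabla\beta$ plus the quadratic curvature product $\chibarhat\cdot\alpha$. Finally the $\nabla_4$ equation for $\betabar$, whose leading terms are $-\nabla\rho + \Hodge{\nabla}\sigma + 2\chibarhat\cdot\beta$, closes the chain by appealing to the bounds on $\rho,\sigma$ just obtained.

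For the sharper assertion on $\rho$, I again apply Proposition \ref{3.5} to the $\nabla_4\rho$ equation but now restrict to $i \leq N-1$, where the initial flux norm for $a^{\frac{1}{2}}\rho$ is directly provided by hypothesis \eqref{eq:34}. The $\div\beta$ term is then closed not against $\mathcal{R}[\alpha]$ but against the one-lower-order flux $\mathcal{R}[\beta]$, and the scaling mismatch between the signatures $s_2(\rho)=1$ and $s_2(\beta)=\tfrac{1}{2}$ is exactly what produces the factor $a^{\frac{1}{2}}$ on the left-hand side. The derivative loss from $N+3$ down to $N-1$ is the price paid for this improvement.

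The main obstacle is the borderline quadratic term $\chibarhat\cdot\alpha$ appearing in the $\rho$ and $\sigma$ Bianchi equations: it couples two curvature quantities, one of which ($\alpha$) is only controlled at the level of $\mathcal{R}[\alpha]$ without any additional $a^{\frac{1}{2}}$ gain. The key is to use the bound $\scaleinfinitySu{\aln\chibarhat}\lesssim \lvert u\rvert/\al$ from Proposition \ref{chihats} together with the $u^{\prime}$-integrability of $\mathcal{R}[\alpha]$ along $H_u$; the product of these two gains produces an $O(1)$ contribution after integration in $\nabla_4$, and it is precisely this combinatorial balance that allows all four components of $\Psi_u$ to satisfy the single clean estimate by $\mathcal{R}[\alpha]+1$.
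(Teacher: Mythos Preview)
Your approach is essentially the same as the paper's. The paper packages all four components into a single schematic $\nabla_4\Psi_u=\nabla(\Psi_u,\alpha)+(\psi_g,\chibarhat)(\Psi_u,\alpha)$ and then bounds the top-order piece $\nabla^{i+1}\alpha$ by $\mathcal{R}[\alpha]$ via the outgoing flux while the top-order piece $\nabla^{i+1}\Psi_u$ costs only $\tfrac{1}{\al}\mathcal{R}\lesssim 1$ thanks to the $a^{\frac12}$ weight built into the flux norm; your cascade $\beta\to(\rho,\sigma)\to\betabar$ reproduces this step by step. Your treatment of the second displayed bound (restrict to $i\le N-1$, close $\div\beta$ against $\mathcal{R}[\beta]$) is also the natural one and is not spelled out in the paper.

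One slip to fix: since these are $\nabla_4$ transport equations, Proposition~\ref{3.5} integrates in $\ubar$, and the flux $\mathcal{R}[\alpha]=\scaletwoHu{\aln\alpha}$ is an $L^2$ norm in $\ubar'$ along $H_u$. Your references to ``Cauchy--Schwarz in $u'$'' and ``$u'$-integrability of $\mathcal{R}[\alpha]$ along $H_u$'' should read $\ubar'$ throughout; as written they point to the wrong null direction.
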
 

\begin{proof}
Each of the $\Psi_u$ satisfies the following schematic equation:

\be \nabla_4 \Psi_u = \nabla\left(\Psi_u, \alpha\right) + (\psi, \chibarhat) \left(\Psi_u, \alpha\right). \ee Commuting with $i$ angular derivatives using \ref{commutation}, we obtain
\begin{align}
 \nonumber   \nabla_4 \nabla^i \Psi_u =& \nabla^{i+1}(\Psi_u,\alpha) +\sum_{i_1+i_2+i_3=i-1} \nablapp \nabla^{i_3+1} (\Psi_u,\alpha)\\  \nonumber &+ \sumif \nablap\nablat(\psi,\chibarhat)\nablaf (\Psi_u,\alpha) \\  \nonumber   &+\sumif \nablap \nablat(\chihat,\tr\chi)\nablaf \Psi_u \\  \nonumber &+\sumifim \nablapp\nablat(\chihat,\tr\chi)\nablaf \Psi_u.
\end{align} Passing to scale-invariant norms and estimating, we have

\begin{align}
 \nonumber     &\hspace{7mm}\scaletwoSu{\aln \Psi_u}\lesssim ||(a^{\frac{1}{2}}\nabla)^{i}\Psi_{u}||_{L^{2}_{sc}(S_{u,0})} \\  \nonumber  &+\intubar \scaletwoSuubarprime{a^{\frac{i}{2}} \nabla^{i+1}\alpha} \dubarprime +\intubar \scaletwoSuubarprime{a^{\frac{i}{2}} \nabla^{i+1}\Psi_u} \dubarprime \\  \nonumber + &\intubar \scaletwoSuubarprime{a^{\frac{i}{2}}\sum_{i_1+i_2+i_3=i-1}\nablapp \nabla^{i_3+1}(\Psi_u,\alpha)}\dubarprime \\  \nonumber +&\intubar \scaletwoSuubarprime{a^{\frac{i}{2}}\sumif\nablap \nablat(\psi_g, \chibarhat)\nablaf(\Psi_u,\alpha)} \dubarprime \\  \nonumber  +& \intubar \scaletwoSuubarprime{a^{\frac{i}{2}}\sumif \nablap \nablat(\chihat,\tr\chi)\nablaf\Psi_u }\dubarprime \\  \nonumber +& \intubar \scaletwoSuubarprime{a^{\frac{i}{2}}\sumifim \nablapp \nablat(\chihat,\tr\chi)\nablaf \Psi_u}\dubarprime 
\end{align}We restrict attention to $0\leq i \leq N+3$. For the first term, we have, 

\be \intubar \scaletwoSuubarprime{a^{\frac{i}{2}}\nabla^{i+1}\alpha} \dubarprime \lesssim \left(\intubar \scaletwoSuubarprime{a^{\frac{i}{2}}\nabla^{i+1}\alpha}^2 \dubarprime\right)^{1/2}= \frac{1}{a}\mathcal{R}[\alpha],\ee by using H\"older's inequality. For the second term, since the $\Psi_u$ are regular with respect to scaling, we conclude that 
 \be \intubar \scaletwoSuubarprime{a^{\frac{i}{2}}\nabla^{i+1}\Psi_u} \dubarprime \lesssim \frac{1}{\al} \mathcal{R}+\frac{1}{a}\mathcal{R}[\alpha],\ee by the bootstrap assumptions \eqref{bootstrap}. For the third term, we have $i_3+1 \leq i$, hence everything can be closed using the $\Gamma$ total norm. We have
 
 \be \intubar \scaletwoSuubarprime{\ali \sumitm \nablapp \nabla^{i_3+1}(\Psi_u, \alpha)} \lesssim \frac{\Gamma^2}{\lvert u \rvert} \lesssim a^{-\frac{1}{2}}.\ee The fourth term can be estimated by \be \intubar \scaletwoSuubarprime{a^{\frac{i}{2}}\sumif\nablap \nablat(\psi_g, \chibarhat)\nablaf(\Psi_u,\alpha)} \dubarprime\lesssim \intubar \frac{\lvert u \rvert}{a}\frac{\Gamma^2}{\lvert u \rvert} \dubarprime \lesssim a^{-\frac{1}{2}}.\ee 
  Moreover, for $\Gamma(\tr\chi)$, we make use of Proposition \ref{trchibarbound} to bound $\Gamma(\tr\chibar)\lesssim 1$. For the seventh term, there holds
 
 \begin{align}
     \intubar \scaletwoSuubarprime{\ali \sumif \nablap \nablat(\chihat, \tr\chi)\nablaf \Psi_u} \dubarprime \lesssim \intubar \frac{ \Gamma^2}{\lvert u \rvert}\dubarprime \lesssim a^{-\frac{1}{2}}.
 \end{align}
 
\noindent For the eighth term, there holds
 \begin{align}
   \intubar \scaletwoSuubarprime{\ali \sumifim \nablapp \nablat(\chihat, \tr\chi)\nablaf \Psi_u} \dubarprime \lesssim \intubar \frac{a \Gamma^3}{\lvert u \rvert^2}\dubarprime \lesssim a^{-\frac{1}{2}}.
\end{align}
Putting all of the above together, the result follows.
\end{proof}
\begin{proposition}
Under the assumptions of Theorem \ref{main1} and the bootstrap assumptions \eqref{bootstrap}, there holds

\[\sum_{0\leq i \leq N+2}\scaletwoSu{\aln \alphabar}\lesssim a^{-\frac{1}{2}}. \]
\end{proposition}
\begin{proof}
Recall the Bianchi equation for $\alphabar$:
	\begin{equation}
	    \begin{split}
	        \nabla_4 \alphabar + \frac{1}{2}\tr\chi \hsp \alphabar = &-\nabla \hat{\otimes}\betabar +4 \omega\hsp \alphabar -3\left(\chibarhat \rho - \Hodge{\chibarhat}\sigma \right) +\left(\zeta - 4\etabar\right)\hat{\otimes}\betabar .
	    \end{split}
\end{equation}
Schematically, the above rewrites as

	    \begin{align}
	        \nabla_4\alphabar = \nabla\Psi_u + \psi_g (\alphabar, \tbetabar) +\chibarhat(\rho,\sigma).
	    \end{align}Commuting with $i$ angular derivatives, we arrive at 


\begin{align}
 \nonumber     &\hspace{7mm}\scaletwoSu{\aln \alphabar} \lesssim ||(a^{\frac{1}{2}}\nabla)^{i}\alphabar||_{L^{2}_{sc}(S_{u,0})}\\  \nonumber &+\intubar \scaletwoSuubarprime{a^{\frac{i}{2}} \nabla^{i+1}\alphabar} \dubarprime +\intubar \scaletwoSuubarprime{a^{\frac{i}{2}} \nabla^{i+1}\Psi_u} \dubarprime \\  \nonumber + &\intubar \scaletwoSuubarprime{a^{\frac{i}{2}}\sum_{i_1+i_2+i_3=i-1}\nablapp \nabla^{i_3+1}\Psi_u}\dubarprime \\ \nonumber +&\intubar \scaletwoSuubarprime{a^{\frac{i}{2}}\sumif\nablap \nablat(\psi_g, \chibarhat)\nablaf(\Psi_{\ubar})} \dubarprime \\ \nonumber +& \intubar \scaletwoSuubarprime{a^{\frac{i}{2}}\sumif \nablap \nablat(\chihat,\tr\chi)\nablaf\alphabar }\dubarprime \\ \nonumber +& \intubar \scaletwoSuubarprime{a^{\frac{i}{2}}\sumifim \nablapp \nablat(\chihat,\tr\chi)\nablaf \alphabar}\dubarprime 
\end{align}In the above expression, all terms can be bounded above by $a^{-\frac{1}{2}}$, in the same way as in the preceding Proposition. The result follows. 

\end{proof}

\subsection{Energy Estimates for the Weyl/Riemann Curvature components}

\par \noindent For $(\Psi_1, \Psi_2) \in \begin{Bmatrix} (\alpha, \tbeta) , (\beta, (\rho,\sigma)), ((\rho,\sigma),\betabar), (\tbetabar, \alphabar) \end{Bmatrix}$ the energy estimates are carried out in Bianchi pairs, via the aid of the following proposition:

\begin{proposition}\label{curvprop}
Under the assumptions of Theorem \ref{main1} and the bootstrap assumptions \eqref{bootstrap}, for a Bianchi pair $(\Psi_1, \Psi_2)$ satisfying

\[ \nabla_3 \nabla^i \Psi_1 + \left( \frac{i+1}{2} + s_2(\Psi_1) \right) \tr\chibar \nabla^i \Psi_1 - D\nabla^i \Psi_2 = P_i,\]\[\nabla_4 \nabla^i \Psi_2 - \Hodge{D} \nabla^i \Psi_1 = Q_i, \]the following holds true:

\begin{equation}
\begin{split}
&\int_{\Hu}\scaletwoSuubarprime{\nabla^i \Psi_1}^2 \dubarprime + \int_{\Hbu}\frac{a}{\upr^2} \scaletwoSuprime{\nabla^i \Psi_2}^2 \duprime \\ \lesssim &\int_{H_{u_{\infty}}^{(0,\ubar)}}\scaletwoSuzubarprime{\nabla^i \Psi_1}^2 \dubarprime + \int_{\Hbar_0^{(u_{\infty},u)}} \frac{a}{\upr^2} \scaletwoSuzprime{\nabla^i \Psi_2}^2 \duprime \\&+  \iint_{D_{u,\ubar}}\frac{a}{\upr} \scaleoneSuprimeubarprime{\nabla^i \Psi_1 \cdot P_i}\duprime \dubarprime + \iint_{D_{u,\ubar}}\frac{a}{\upr} \scaleoneSuprimeubarprime{\nabla^i \Psi_2 \cdot Q_i}\duprime \dubarprime.
  \end{split}
\end{equation}
\end{proposition}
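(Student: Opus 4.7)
The argument is a paired Bianchi energy estimate adapted to the scale-invariant hierarchy of Section~\ref{signature}. Write $P_1 := \nabla^i\Psi_1$, $P_2 := \nabla^i\Psi_2$, $\lambda := (i+1)/2 + s_2(\Psi_1)$, and set $w := a/|u|^2$. Starting from the variation-of-area identity $\partial_u\int_S f\,d\mu_\gamma = \int_S \Omega(e_3 f + \tr\chibar f)\,d\mu_\gamma$ with $f=|P_1|^2$, the first equation produces
\[
\partial_u\!\int_{S_{u,\ub}}\!|P_1|^2 \;=\; \int_{S_{u,\ub}}\!\Omega\bigl[(1-2\lambda)\tr\chibar\,|P_1|^2 \,+\, 2\langle P_1,\mathcal{D}P_2\rangle \,+\, 2\langle P_1, P_i\rangle\bigr],
\]
with net coefficient $1-2\lambda = -(i+2s_2(\Psi_1))$. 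Since $\partial_{\ub}w=0$, the $e_4$-variation of area with $f=w|P_2|^2$ combined with the second equation gives
\[
\partial_{\ub}\!\int_{S_{u,\ub}}\!w|P_2|^2 \;=\; w\!\int_{S_{u,\ub}}\!\Omega\bigl[2\langle P_2,\Hodge{\mathcal{D}}P_1\rangle \,+\, 2\langle P_2, Q_i\rangle \,+\, \tr\chi\,|P_2|^2\bigr].
\]
Integrating the first identity over $(u',\ub')\in[u_\infty,u]\times[0,\ub]$ and the second over $[0,\ub]\times[u_\infty,u]$, then summing, produces on the LHS the two fluxes asserted, with the initial fluxes on $H_{u_\infty}$ and $\Hbar_0$ appearing as boundary terms on the RHS.

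\medskip

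The crucial step is the cancellation of the cross-term integrals $\iint\!\int_S \Omega\langle P_1,\mathcal{D}P_2\rangle$ and $\iint\!\int_S w\Omega\langle P_2,\Hodge{\mathcal{D}}P_1\rangle$. On each $S_{u,\ub}$ the operators $\mathcal{D}$ and $\Hodge{\mathcal{D}}$ are formal adjoints with a pair-specific sign (for instance, for $(\alpha,\beta)$ one has $\int_S\langle\alpha,\nab\hot\beta\rangle = -2\int_S\langle\div\alpha,\beta\rangle$). The Bianchi signature identity $s_2(\Psi_2)=s_2(\Psi_1)+\tfrac{1}{2}$ guarantees that, once the intrinsic prefactor $a^{-2s_2(\Psi_1)-i}|u|^{4s_2(\Psi_1)+2i}$ carried by both LHS fluxes is extracted, the unit weight on $P_1$ and the weight $w=a/|u|^2$ on $P_2$ reduce to the \emph{same} function of $u$. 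Integration by parts on $S_{u,\ub}$ then makes the two cross-term contributions exactly cancel.

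\medskip

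What remains on the RHS are the error terms $(1-2\lambda)\tr\chibar|P_1|^2$, $w\tr\chi|P_2|^2$, and the source terms. For the sources, converting the pointwise $L^2$-pairing to the scale-invariant $L^1(S_{u,\ub})$ norm introduces precisely the weight $a/|u'|$ appearing in the statement. For the error terms, decomposing $\tr\chibar = -2/|u| + \tc$ shows that the leading $-2/|u|$ component in $(1-2\lambda)\tr\chibar$ is exactly annihilated by $\partial_u$ acting on the $|u|^{4s_2(\Psi_1)+2i}$ weight built into the flux norms — this is precisely the raison d'\^etre of the coefficient $\lambda$. Only the subleading pieces $O(\tc)|P_1|^2$ and $O((\Omega-1)/|u|)|P_1|^2$ survive, each bounded by $\Gamma|u|^{-2}\,|P_1|^2$ and absorbed via the Gr\"onwall estimate of Proposition~\ref{3.6} (with factor $e^{\Gamma\|u^{-2}\|_{L^1_u}}\lesssim 1$). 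The term $w\tr\chi|P_2|^2 \lesssim a|u|^{-3}|P_2|^2$ is likewise absorbed using $\ub\le\epsilon = O(a^{-1/2})$. \emph{The main technical obstacle} is the case-by-case verification of the precise sign and scalar prefactor in the $S_{u,\ub}$-adjointness identity for each of the four Bianchi pairs $(\alpha,\beta)$, $(\beta,(\rho,\sigma))$, $((\rho,\sigma),\betabar)$, $(\betabar,\alphabar)$, as $\mathcal{D}$ takes a distinct concrete form (traceless symmetric $\nab\hot$, $\div$, pairing with $\Hodge{\nab}$, etc.) on each; once this bookkeeping is completed, the bulk absorption closes routinely using the bounds on $\Omega$ and $\tc$ from Section~\ref{s3}.
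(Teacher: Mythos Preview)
The paper states Proposition~\ref{curvprop} without proof, treating it as a standard tool imported from the prior literature (in particular \cite{A19}, \cite{AnAth}). Your sketch is the correct standard argument: multiply both evolution identities by the common weight $W_1(u)=a^{-2s_2(\Psi_1)-i}|u|^{4s_2(\Psi_1)+2i}$ (which, as you observe via $s_2(\Psi_2)=s_2(\Psi_1)+\tfrac12$, is exactly what the factor $a/|u'|^2$ on the $\Psi_2$-flux encodes), so that the two cross terms carry identical $u$-weights and cancel by the $(\mathcal{D},\Hodge{\mathcal{D}})$ adjointness on $S_{u,\ubar}$; the derivative $\partial_u W_1$ then absorbs the leading $-2/|u|$ piece of $(1-2\lambda)\tr\chibar$, leaving only $\widetilde{\tr\chibar}$, $(\Omega-1)$, $\nabla\Omega$, and $\tr\chi$ errors handled by Gr\"onwall exactly as in Proposition~\ref{3.6}.

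One presentational point: your opening identities are written for the \emph{unweighted} quantities $\int_S|P_1|^2$ and $\int_S w|P_2|^2$, with the weight $W_1$ only introduced verbally later. As written, the cross terms in those two displayed identities carry different $u$-weights ($1$ versus $w$) and do not cancel; the cancellation only occurs once both are multiplied by $W_1$ (equivalently, once you differentiate $W_1\int_S|P_1|^2$ and $W_1\int_S|P_2|^2$ from the outset). It would be cleaner to build $W_1$ into the derivation from the first line rather than ``extracting'' it afterward. Apart from this organizational issue, the argument is sound.
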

\noindent Before embarking on the energy estimates, we provide a final helpful proposition, which can be found for example in \cite{Kl-Rod}.

\begin{proposition}[Two–parameter Grönwall inequality on a rectangle]
\label{prop:2d-gronwall}
Let $x_0,y_0>0$ and set
\[
U := \{(x,y)\in\mathbb R^2:\; 0\le x\le x_0,\; 0\le y\le y_0\}.
\]
Let $f,g:U\to[0,\infty)$ be measurable functions such that for each fixed
$y$ the map $x\mapsto f(x,y)$ is integrable on $[0,x_0]$ and for each fixed
$x$ the map $y\mapsto g(x,y)$ is integrable on $[0,y_0]$.
Assume there exist constants $J,c_1,c_2\ge 0$ and $C_0\ge 1$ such that for all
$(x,y)\in U$ one has
\begin{equation}
\label{eq:2d-gronwall-assump}
f(x,y)+g(x,y)
\le
C_0\!\left(
J
+ c_1 \int_0^{x} f(x',y)\,dx'
+ c_2 \int_0^{y} g(x,y')\,dy'
\right).
\end{equation}
Then for every $(x,y)\in U$ there holds the exponential bound
\begin{equation}
\label{eq:2d-gronwall-concl}
f(x,y)+g(x,y)
\le
C\, J\, \exp\!\big(C(c_1 x + c_2 y)\big),
\end{equation}
where $C\ge 1$ depends only on $C_0$ (in particular, it is independent of
$x_0,y_0,J,c_1,c_2$ and of the functions $f,g$).
\end{proposition}

\noindent With this Proposition as the main tool, we begin with $(\alpha, \beta)$.
\begin{proposition}[Top–order curvature flux estimate for $(\alpha,\beta)$]
\label{prop:alpha-beta-flux}
Let $(\mathcal M,g)$ be a smooth vacuum spacetime endowed with a regular double–null
foliation $(u,\ubar)$ and associated null frame $(e_3,e_4,e_A)$. Assume the hypotheses
of Theorem~\ref{main1} and the bootstrap bounds \eqref{bootstrap} hold on the spacetime
region under consideration. Let $N\ge 0$ denote the maximal derivative level in the
energy hierarchy.

\noindent Then for every integer $0\le i\le N+4$, the following renormalized curvature flux
estimate holds for the extreme null Weyl component $\alpha$ and the adjacent component
$\beta$:
\begin{equation}
\label{eq:alpha-beta-flux}
a^{\frac12}\,\scaletwoHu{\nabla^{i}\alpha}
\;+\;
a^{\frac12}\,\scaletwoHbaru{\nabla^{i}\beta}
\;\lesssim\;
a^{\frac12}\,\scaletwoHzero{\nabla^{i}\alpha}
\;+\;
a^{\frac12}\,\scaletwoHbarzero{\nabla^{i}\beta}
\;+\; 1.
\end{equation}
The implicit constant depends only on $N$, and is uniform
in $a$, $u$, and $\ubar$.
\medskip

\noindent
In particular, the scale–renormalized outgoing flux of $\alpha$ along $H_u$ together
with the incoming flux of $\beta$ along $\Hbar_{\ubar}$ are controlled solely by the
corresponding initial null fluxes on $H_{u_\infty}$ and $\Hbar_{0}$, up to a universal
error bound generated by lower–order curvature components and Ricci coefficients.
\end{proposition}

\begin{proof}
We recall the (schematic) Bianchi equations for $\nabla^i \alpha, \nabla^i \beta$:

    \begin{align}  \nonumber
       & \nabla_3 \nabla^i \alpha + \frac{i+1}{2}\tr\chibar \nabla^i\alpha - D\nabla^{i}\beta \\ \nonumber =& \sumitm \nablapp \nabla^{i_3+1}\tbeta +\sumitm \nablapp \nablat \alpha\\  \nonumber +& \sumif \nablap \nablat (\psi_g,\chihat) \nablaf (\rho,\sigma, \beta) \\ \nonumber &+  \sumif \nablap \nablat (\chibarhat,\tildetr) \nablaf \alpha \\ \nonumber &+ \sumifi \nablapp \nablat \tr\chibar \nablaf \alpha \\ &+\sumifim \nablapp \nablat(\chibarhat,\tr\chibar)\nablaf \alpha:=P_i^1 +\dots + P_i^{6}.
    \end{align}Similarly, we have 
    
    \begin{align}
         \nonumber    \nabla_4 \nabla^i \beta - \Hodge{D}\nabla^i \alpha= &\sumif \nablap \nablat(\psig, \chihat)\nablaf(\beta,\alpha)\\ &+ \sumif \nablap \nablat(\chihat,\tr\chi)\nablaf \beta \\ \nonumber &+ \sumifim \nablapp \nablat (\chihat,\tr\chi)\nablaf \beta:= Q_i^1 + \dots + Q_i^3.
        \end{align}
Applying Proposition \ref{curvprop}, we have

\begin{equation}
    \begin{split}
        &\scaletwoHu{\aln \alpha} + \scaletwoHbaru{\aln \beta} \\ \lesssim & \scaletwoHzero{\aln \alpha} + \scaletwoHbarzero{\aln \beta} \\&+  \intubar \intu  \frac{a}{\upr} \scaleoneSuprimeubarprime{\ali P_i \cdot \aln \alpha }  \duprime \dubarprime
        \\ &+\intubar \intu  \frac{a}{\upr} \scaleoneSuprimeubarprime{\ali Q_i \cdot \aln \beta }  \duprime \dubarprime.
        \end{split}
\end{equation}By H\"older's inequality, one has

\begin{equation}
    \begin{split}
         &\intubar \intu  \frac{a}{\upr} \scaleoneSuprimeubarprime{\ali P_i \cdot \aln \alpha }  \duprime \dubarprime \\\leq&\intu \frac{a}{\upr^2} \sum_{j=1}^{10}\left(  \intubar \scaletwoSuprimeubarprime{\ali P_i^j }^2\dubarprime \right)^{\frac{1}{2}}\duprime
        \cdot \sup_{u^{\prime}} \lVert \aln \alpha \rVert_{L^2_{(sc)}(H_{u^{\prime}}^{(0,\ubar)})},
    \end{split}
\end{equation}Let us focus on the sum in the above line. For the first three terms, there holds

\[ \sum_{j=1}^3 \left( \intubar \scaletwoSuprimeubarprime{\ali P_i^j}^2 \dubarprime \right)^{\frac{1}{2}}\lesssim \frac{\al \Gamma \cdot R}{\upr}. \]For the fourth and fifth terms, there holds

\[ \left( \intubar \scaletwoSuprimeubarprime{\ali P_i^4}^2 \dubarprime \right)^{\frac{1}{2}} + \left( \intubar \scaletwoSuprimeubarprime{\ali P_i^5}^2 \dubarprime \right)^{\frac{1}{2}}\lesssim \frac{\al \Gamma \cdot M}{\upr}. \] For the sixth term, there holds

\[ \left( \intubar \scaletwoSuprimeubarprime{\ali P_i^7}^2 \dubarprime \right)^{\frac{1}{2}}\lesssim \frac{\al \Gamma \cdot M}{\upr}.\] For the eighth term, there holds

\noindent Putting everything together, there holds

\begin{equation} \label{alphabeta1} \intubar \intu \frac{a}{\upr} \scaleoneSuprimeubarprime{\ali P_i \aln \alpha} \duprime \dubarprime\lesssim \frac{a^{\frac{1}{2}}}{|u|}\bigg(\Gamma^3 + \Gamma^2 R + \Gamma R+1\bigg).\end{equation}
Similarly, for the analogous term involving $\beta$, there holds

\begin{equation}
    \begin{split}
         &\intubar \intu  \frac{a}{\upr} \scaleoneSuprimeubarprime{\ali Q_i \cdot \aln \beta }  \duprime \dubarprime \\\leq&\sum_{j=1}^{7} \intu \frac{a}{\upr^2} \left(  \intubar \scaletwoSuprimeubarprime{\ali Q_i^j }^2\dubarprime \right)^{\frac{1}{2}}\duprime
        \cdot \sup_{u^{\prime}} \lVert \aln \beta \rVert_{L^2_{(sc)}(\Hb_{\ubar^{\prime}}^{(u_{\infty},u)})}.
    \end{split}
\end{equation}We estimate term by term. For the first term, there holds

\begin{equation}
    \intu \frac{a}{\upr^2}\left( \intubar \scaletwoSuprimeubarprime{\ali Q_i^1}^2 \dubarprime \right)^{\frac{1}{2}} \duprime \lesssim \frac{a^{\frac{1}{2}}\Gamma (R+\Gamma)}{\lvert u \rvert }
\end{equation}
For the second term, there holds

\begin{equation}
    \intu \frac{a}{\upr^2}\left( \intubar \scaletwoSuprimeubarprime{\ali Q_i^2}^2 \dubarprime \right)^{\frac{1}{2}} \duprime \lesssim \frac{a^{\frac{1}{2}}\Gamma (M+\Gamma)}{\lvert u \rvert }
\end{equation}For the third and fourth terms, there holds 
\begin{equation}
\begin{split}
    &\intu \frac{a}{\upr^2}\left( \intubar \scaletwoSuprimeubarprime{\ali Q_i^3}^2 \dubarprime \right)^{\frac{1}{2}} \duprime +    \intu \frac{a}{\upr^2}\left( \intubar \scaletwoSuprimeubarprime{\ali Q_i^4}^2 \dubarprime \right)^{\frac{1}{2}} \duprime  \\\lesssim& \frac{a^{\frac{1}{2}}\Gamma (M+\Gamma)}{\lvert u \rvert }
    \end{split}
\end{equation}
For the fifth term there holds 

\begin{equation}
    \intu \frac{a}{\upr^2}\left( \intubar \scaletwoSuprimeubarprime{\ali Q_i^5}^2 \dubarprime \right)^{\frac{1}{2}} \duprime \lesssim \frac{\al \Gamma (R+\Gamma)}{\lvert u \rvert}
\end{equation} For the sixth term, we have 
\begin{equation}
    \intu \frac{a}{\upr^2}\left( \intubar \scaletwoSuprimeubarprime{\ali Q_i^6}^2 \dubarprime \right)^{\frac{1}{2}} \duprime \lesssim \frac{a\Gamma^2 (R+\Gamma)}{\lvert u \rvert^2}
\end{equation}For the seventh term, we have
\begin{equation}
    \intu \frac{a}{\upr^2}\left( \intubar \scaletwoSuprimeubarprime{\ali Q_i^7}^2 \dubarprime \right)^{\frac{1}{2}} \duprime \lesssim \frac{\Gamma^2 (R+\Gamma)}{\lvert u \rvert^3}
\end{equation}

\par \noindent Putting everything together, we have

\begin{equation}\label{alphabeta2}
    \sum_{j=1}^7 \intu \frac{a}{\upr^2} \left( \intubar \scaletwoSuprimeubarprime{\ali Q_i^j}^2 \dubarprime\right)^{\frac{1}{2}} \duprime \lesssim \frac{a^{\frac{1}{2}}}{|u|}\bigg( (\Gamma(R+M+\Gamma)+1\bigg).
\end{equation}Combining \eqref{alphabeta1} and \eqref{alphabeta2}, we have 

\begin{equation}
    \begin{split}
        & a^{\frac{1}{2}} \scaletwoHu{\aln \alpha} + a^{\frac{1}{2}}\scaletwoHbaru{\aln \beta} \\ \lesssim & a^{\frac{1}{2}} \scaletwoHzero{\aln \alpha} + a^{\frac{1}{2}} \scaletwoHbarzero{\aln \beta} \\&+  a^{\frac{1}{2}} \intubar \intu  \frac{a}{\upr} \scaleoneSuprimeubarprime{\ali P_i \cdot \aln \alpha }  \duprime \dubarprime
        \\ &+ a^{\frac{1}{2}} \intubar \intu  \frac{a}{\upr} \scaleoneSuprimeubarprime{\ali Q_i \cdot \aln \beta }  \duprime \dubarprime \\ \lesssim &a^{\frac{1}{2}} \scaletwoHzero{\aln \alpha} + a^{\frac{1}{2}} \scaletwoHbarzero{\aln \beta} + \frac{1}{a^{\frac{1}{2}}}.
        \end{split}
\end{equation}The claim follows.
\end{proof}\noindent We now move on to energy estimates for the remaining pairs $(\beta, (\rho,\sigma)), ((\rho,\sigma),\betabar)$ and $(\betabar, \alphabar)$.

\begin{proposition}
\label{prop:coupled-flux}
Let $(\mathcal M,g)$ be a vacuum spacetime equipped with a regular double–null foliation
$(u,\ubar)$ and associated null frame $(e_3,e_4,e_A)$, and assume the bootstrap
assumptions \eqref{bootstrap} and the curvature and Ricci coefficient bounds stated in
Theorem~\ref{main1}. Let $N\ge 0$ be the top derivative order in the energy hierarchy.

\noindent Consider any adjacent pair of Weyl curvature components
\[
(\Psi_1,\Psi_2)\in
\Big\{
a^{\frac12}(\beta,(\rho,\sigma)),\;
a^{\frac12}((\rho,\sigma),\underline\beta),\;
a^{\frac12}(\underline\beta,\underline\alpha)
\Big\},
\]
where the factor $a^{1/2}$ denotes the renormalized scaling used in the curvature
energy norms. Then for every integer $0\le i\le N+4$ there holds the flux bound
\begin{align}
\label{eq:coupled-flux}
\scaletwoHu{\nabla^{i}\Psi_1}
+
\scaletwoHbaru{\nabla^{i}\Psi_2}
\;\lesssim\;
\scaletwoHzero{\nabla^{i}\Psi_1}
+
\scaletwoHbarzero{\nabla^{i}\Psi_2}
+ 1.
\end{align}
The implicit constant depends only on $N$ and is
independent of $a$, $u$, and $\ubar$.
\end{proposition}

\begin{proof}
The schematic equations for $\Psi_1, \Psi_2$ are:

\begin{align}
       \nonumber  \nabla_3 \Psi_1 + \left( \frac{1}{2} + s_2(\Psi_1)\right) \tr\chibar \Psi_1 - D \Psi_2 =& (\psi,\chihat)\Psi,
    \end{align}

\begin{equation}
    \begin{split}
        \nabla_4 \Psi_2- \Hodge{D}\Psi_1 = (\psi,\chibarhat)(\Psi_u, \alpha).
    \end{split}
\end{equation}Commuting with $i$ angular derivatives, for $\Psi_1$, we have:

\begin{equation}
    \begin{split}
        &\nabla_3 \nabla^i \Psi_1 + \left(\frac{i+1}{2}+s_2(\Psi_1)\right)\tr\chibar \nabla^i \Psi_1 - D\nabla^i \Psi_2 \\= &\sumitm \nablapp \nabla^{i_3+1}\Psi_2 +\sumif\nablap \nablat(\psi_g,\chihat)\nablaf\Psi  \\&+ \sumif\nablap \nablat(\chibarhat,\tildetr) \nablaf \Psi_1 + \sumifi \nablapp \nablat \tr\chibar \nablaf \Psi_1 \\ &+\sumifim \nablapp \nablat (\chibarhat,\tr\chibar) \nablaf \Psi_1  := P_i.
    \end{split}
\end{equation}Analogously, for $\Psi_2$, we have 

\begin{equation}
    \begin{split}
        &\nabla_4 \nabla^i \Psi_2 -\Hodge{ D} \nabla^i \Psi_1 \\= &\sumitm \nablapp \nabla^{i_3+1}\Psi_1 + \sumif \nablap \nablat(\psi_g,\chibarhat) \nablaf(\Psi, \alpha)  \\ &+\sumif \nablap \nablat(\chihat,\tr\chi) \nablaf \Psi_2 \\ &+ \sumifim \nablapp \nablat (\chihat,\tr\chi) \nablaf \Psi_2 := Q_i.
    \end{split}
\end{equation}Making use of Proposition \ref{curvprop} once again, we arrive at 

\begin{equation}\label{mainpsu}
    \begin{split}
        &\scaletwoHu{\aln \Psi_1}^2 + \scaletwoHbaru{\aln \Psi_2}^2 \\ \lesssim & \scaletwoHzero{\aln \Psi_1}^2 + \scaletwoHbarzero{\aln \Psi_2}^2 \\&+  \intubar \intu  \frac{a}{\upr} \scaleoneSuprimeubarprime{\ali P_i \cdot \aln \Psi_1 }  \duprime \dubarprime
        \\ &+\intubar \intu  \frac{a}{\upr} \scaleoneSuprimeubarprime{\ali Q_i \cdot \aln \Psi_2 }  \duprime \dubarprime .
        \end{split}
\end{equation}For the first spacetime integral in the above, we estimate

\begin{equation}\label{comb1}
    \begin{split}
       & \intubar \intu  \frac{a}{\upr} \scaleoneSuprimeubarprime{\ali P_i \cdot \aln \Psi_1 }  \duprime \dubarprime \\\lesssim& \intu \frac{a}{\upr^2} \left( \intubar \scaletwoSuprimeubarprime{\ali P_i}^2 \dubarprime \right)^{\frac{1}{2}} \duprime \cdot \left( \intubar \scaletwoSuprimeubarprime{\aln \Psi_1}^2 \dubarprime \right)^{\frac{1}{2}} \duprime.
    \end{split}
\end{equation}For the first term:

\begin{equation}
    \begin{split}
        \intu \frac{a}{\upr^2} \left( \intubar \scaletwoSuprimeubarprime{\ali \sumitm \nablapp \nabla^{i_3+1}\Psi_2 }^2 \dubarprime \right)^{\frac{1}{2}} \duprime,
    \end{split}
\end{equation}if $i_3+1 \geq N+3$, we estimate

\begin{equation}
    \begin{split}
        &\intu \frac{a}{\upr^2} \left( \intubar \scaletwoSuprimeubarprime{\ali \sumitm \nablapp \nabla^{i_3+1}\Psi_2 }^2 \dubarprime \right)^{\frac{1}{2}} \duprime \\ \lesssim & \sup_{0 \leq \ubar^{\prime}\leq \ubar} \intu \frac{a}{\upr^3} \scaletwoSuprimeubarprime{     (\al \nabla)^{i_3+1} \Psi_2} \scaleinfinitySuprimeubarprime{(\al)^{i_1+i_2}\nablapp}  \duprime   
    \end{split}
\end{equation}and we can estimate $\scaleinfinitySuprimeubarprime{(\al)^{i_1+i_2}\nablapp} $ by $\frac{\upr}{\al}$ using the bootstrap assumption \eqref{bootstrap}, to obtain

\begin{equation}
    \begin{split}
         &\intu \frac{a}{\upr^2} \left( \intubar \scaletwoSuprimeubarprime{\ali \sumitm \nablapp \nabla^{i_3+1}\Psi_2 }^2 \dubarprime \right)^{\frac{1}{2}} \duprime \\ \lesssim & \sup_{0 \leq \ubar^{\prime}\leq \ubar}\intu \frac{\al}{\upr^2}\scaletwoSuprimeubarprime{     (\al \nabla)^{i_3+1} \Psi_2}  \duprime \lesssim \frac{R}{\lvert u \rvert^{\frac{1}{2}}}\lesssim 1. 
    \end{split}
\end{equation}If, however, $i_3+1 \leq N+2$, we can control the corresponding $L^2_{(sc)}(S)$ norm just by the bootstrap assumption \eqref{bootstrap} to get the bound 

\begin{equation}
    \begin{split}
         &\intu \frac{a}{\upr^2} \left( \intubar \scaletwoSuprimeubarprime{\ali \sumitm \nablapp \nabla^{i_3+1}\Psi_2 }^2 \dubarprime \right)^{\frac{1}{2}} \duprime \lesssim \frac{a \Gamma^2}{\lvert u \rvert^2} \lesssim 1.
         \end{split}
\end{equation}For the rest of the terms, we estimate using the same philosophy as appropriate. There holds  
\begin{equation}
    \begin{split}
        &\intu \frac{a}{\upr^2} \left( \intubar \scaletwoSuprimeubarprime{\ali \sumif \nablap \nabla^{i_3}(\psi_g,\chihat) \nablaf\Psi }^2 \dubarprime \right)^{\frac{1}{2}} \duprime \\ \lesssim & \frac{\al \Gamma (R+\Gamma)}{\lvert u \rvert}\lesssim 1.
    \end{split}
\end{equation}Here in particular, when $i_4\geq N+3$, we treat the cases $\Psi=\Psi_u$ and $\Psi= \Psi_{\ubar}$ separately. 
For the sixth and seventh terms, we can bound them by one as in previous calculations. For the eighth term, using the fact that $i-2 \leq N+2$ and the improvements from Proposition \ref{psiuprop}  and Proposition \ref{prop:alpha-beta-flux}, we arrive at

\begin{equation}
    \begin{split}
         &\intu \frac{a}{\upr^2} \left( \intubar \scaletwoSuprimeubarprime{\ali \sumifim \nablapp \nablat (\chibarhat,\tr\chibar) \nablaf \Psi_1}^2 \dubarprime \right)^{\frac{1}{2}} \duprime \\ \lesssim& \frac{a}{\lvert u \rvert} \Gamma[\tr\chibar]\Gamma[\alpha]^2\lesssim \frac{a}{\lvert u \rvert}\Gamma[\eta,\etabar] \Gamma[\tr\chibar]\Gamma[\Psi_1] \lesssim \frac{a}{\lvert u \rvert}\scaletwoSu{\aln(\tbeta,\tbetabar)} \cdot 1 \cdot (a^{-\frac{1}{2}}\mathcal{R}[\alpha]+1)\\ \lesssim & (a^{-\frac{1}{2}}\mathcal{R}[\alpha]+1)^2 \lesssim 1,
         \end{split}
\end{equation}
where in the last line we made use of Proposition \ref{prop:alpha-beta-flux}. This completes the estimates for the first spacetime integral in \eqref{mainpsu}. For the second and last one, a double application of H\"older's inequality yields
\begin{equation}
    \begin{split}
        &\intubar \intu  \frac{a}{\upr} \scaleoneSuprimeubarprime{\ali Q_i \cdot \aln \Psi_2 }  \duprime \dubarprime \\ \lesssim & \intubar \left( \intu \frac{a}{\upr^2} \scaletwoSuprimeubarprime{\ali Q_i}^2 \duprime \right)^{\frac{1}{2}} \lVert \aln \Psi_2 \rVert_{L^2_{(sc)}(\Hbar_{\ubar^{\prime}}^{(u_{\infty}, u)})} \\ \lesssim& \left(\intubar \intu \frac{a}{\upr^2} \scaletwoSuprimeubarprime{\ali Q_i}^2 \duprime \dubarprime \right)^{\frac{1}{2}} \left( \intubar \lVert \aln \Psi_2 \rVert^2_{L^2_{(sc)}(\Hbar_{\ubar^{\prime}}^{(u_{\infty}, u)})} \dubarprime \right)^{\frac{1}{2}} \\ \lesssim &\intubar \intu \frac{a}{\upr^2} \scaletwoSuprimeubarprime{\ali Q_i}^2 \duprime \dubarprime + \frac{1}{4} \intubar \lVert \aln \Psi_2 \rVert^2_{L^2_{(sc)}(\Hbar_{\ubar^{\prime}}^{(u_{\infty}, u)})} \dubarprime
    \end{split}
\end{equation}Define $B:= \intubar \intu \frac{a}{\upr^2} \scaletwoSuprimeubarprime{\ali Q_i}^2 \duprime \dubarprime$. We can then  estimate $B$ as follows:

\begin{equation}
\begin{split}
    &\intubar \intu \frac{a}{\upr^2}\scaletwoSuprimeubarprime{\ali \sumitm \nablapp \nabla^{i_3+1}\Psi_1}^2 \duprime \dubarprime \\ \lesssim &\intubar \intu \frac{a}{\upr^2}\scaletwoSuprimeubarprime{\ali \psi_g \nabla^i \Psi_1}^2 \duprime \dubarprime \\+& \intubar \intu \frac{a}{\upr^2}\scaletwoSuprimeubarprime{\ali \nabla \psi_g \nabla^{i-1}\Psi_1 }^2 \duprime \dubarprime \\+& \intubar \intu \frac{a}{\upr^2}\scaletwoSuprimeubarprime{\ali \psi_g \hsp  \psi_g  \nabla^{i-1}\Psi_1}^2 \duprime \dubarprime \\+&\intubar \intu \frac{a}{\upr^2}\scaletwoSuprimeubarprime{\ali \sum_{\substack{i_1+i_2+i_3=i-1\\ i_3< i-2}} \nablapp \nabla^{i_3+1}\Psi_1}^2 \duprime \dubarprime \\ \lesssim &1, 
\end{split}
\end{equation}where in the first three integrals we estimate $\psi_g, \nabla \psi_g$ in $L^{\infty}_{(sc)}(S_{u,\ubar})$ and $\nabla^i \Psi_1$ in the hypersurface norm $L^2_{(sc)}(H_u^{(0,\ubar)})$ and in the last integral, since $i-2\leq N+2$, we can estimate $\nabla^i \Psi_1$ in $L^2_{(sc)}(S_{u,\ubar})$ using the bootstrap assumption on the norm $\Gamma$. For the second term, we similarly have

\begin{align}  \nonumber
    &\intubar \intu \frac{a}{\upr^2}\scaletwoSuprimeubarprime{\ali \sumif \nablap \nablat (\psi_g, \chihat) \nablaf(\Psi, \alpha) }^2 \duprime \dubarprime \\ \lesssim & \left(\mathcal{R}[\alpha]+1 \right)^2 \lesssim 1,
\end{align}where we have used the improvements on $\chihat$ from Proposition \ref{chihats} and the energy estimate from the Proposition \ref{prop:alpha-beta-flux}. The rest of the terms can also be bounded above by $1$, using the same approach. We finally arrive at an estimate of the form 

\begin{equation}\label{comb2}
    \begin{split}
        &\intubar \intu  \frac{a}{\upr} \scaleoneSuprimeubarprime{\ali Q_i \cdot \aln \Psi_2 }  \duprime \dubarprime \\ \lesssim & 1 + \frac{1}{4} \intubar \lVert \aln \Psi_2 \rVert^2_{L^2_{(sc)}(\Hbar_{\ubar^{\prime}}^{(u_{\infty}, u)})} \dubarprime.
        \end{split}
\end{equation}From here, collecting all the terms we arrive at the desired result. This concludes the characteristic development.
\end{proof}

\section{Construction of Cauchy data, Development, and Completion of the argument}
\label{decomp}
\noindent In this section, the remaining two major tasks are executed. First, we prove a local Cauchy development of the initial data on $\mathcal{M}_{-a}:=\mathcal{M}_{1}\cup \mathcal{M}_{2}\cup \mathcal{M}_{ext}$. Most importantly, the task is prove that the interior estimates on $\mathcal{M}_{1}$ can be propagated up to $O(1)$ time without substantial distortion. In particular, the interior thickness should not be reduced too much while measured in terms of the $H-$radius. This is due to the choice of initial data in the interior. The remarkable point to note here is that this interior data is chosen to be consistent with the induced Cauchy data on $\mathcal{M}_{2}$ by the characteristic development $D_{a,1}$. This choice causes the minimal distortion of the induced Riemannian geometry on $\mathcal{M}_{1}$ under evolution for $O(1)$ time. In effect, the choice of initial data for the Characteristic development is vital in the full analysis. Nevertheless, once one understands the Characteristic development completely, one can in a sense forget about the characteristic problem and start prescribing data on the Cauchy slice $\mathcal{M}_{1}\cup \mathcal{M}_{2}\cup\mathcal{M}_{ext}$ altogether with appropriate asymptotically flat solution for the $\mathcal{M}_{ext}$ end.  

\subsection{Geometric setup and decomposition of the initial slice}
\label{CSgluing}

Let $(\mathcal{M},\widehat{g})$ be the Lorentzian manifold constructed in previous sections.  
Let $u,\ubar$ be double-null coordinates associated to the canonical foliation, and let
\[
T := \tfrac12(e_4+e_3)
\]
denote the future-directed unit timelike vector field orthogonal (in the induced sense) to the leaves $S_{u,\ubar}\subset \mathcal{M}_{t=u+\ubar}$.

\noindent The interior region $\mathcal{M}_{1}\subset\mathcal{M}$ admits a foliation by topological $2$--spheres outside a compact subset.  Our analysis requires only the boundary geometry
\(
\partial\mathcal{M}_{1}
\)
and its associated Yau radius.

\noindent Fix times
\[
t_1=-a,\qquad t_2=-a-1/a+\epsilon,
\]
where $\epsilon$ will be chosen to be $\frac{3}{4}$.  
Although we do not require the double-null foliation to describe the interior evolution (and not well defined for the interior), it is convenient near $\partial\mathcal{M}_{1}$.

\noindent On the initial slice $\mathcal{M}_{t=-a}$ we introduce the decomposition
\begin{equation}\label{eq:region-decomposition}
\mathcal{M}_{t=-a}
= \mathcal{M}_1 \cup \mathcal{M}_2 \cup \mathcal{M}_3,
\end{equation}
where
\[
\mathcal{M}_{1}:=\mathcal{M}_{1},\qquad
\mathcal{M}_{2}:=\mathcal{M}_{t=-a}\cap D_{a,1},\qquad
\mathcal{M}_{3}:=\mathcal{M}_{t=-a}\setminus(\mathcal{M}_{1}\cup \mathcal{M}_{2})=\mathcal{M}_{ext}.
\]

\subsection{Specification of the Cauchy data}

The Cauchy data $(g,k)$ on $\mathcal{M}_{t=-a}$, consisting of the induced Riemannian metric $g$ and second fundamental form $k$, are specified as follows:

\begin{enumerate}[label=\textbf{(\alph*)}]
\item \emph{Interior data.}

\item \emph{Matching data in the characteristic region.}  
On $\mathcal{M}_2=\mathcal{M}_{t=-a}\cap D_{a,1}$, the data are induced from the characteristic development on the double-null slab $D_{a,1}$.  
Well-posedness of the characteristic problem on this region was established in the previous section.

\item \emph{Asymptotically Kerr exterior.}  
On $\mathcal{M}_3$, we prescribe smooth data $(g,k)$ asymptotic to a Kerr slice with prescribed mass $m=O(a^{1/2})$ and angular momentum $J=a$.  
This ensures asymptotic flatness.
\end{enumerate}

\noindent A gluing construction in the spirit of~\cite{CS, Li,AL} yields a smooth, asymptotically flat vacuum Cauchy pair $(g,k)$ on $\mathcal{M}_{t=-a}$.

\subsection{Setup, decomposition, constraints, and gauges}
Let $(\mathcal{M},\widehat g)$ be the semi-globally constructed Lorentzian manifold equipped with a canonical double–null foliation $(u,\ubar)$ and Cauchy time $t=u+\ubar$. Fix $a\gg1$. On the initial slice $\mathcal{M}_{t=-a}$ set
\[
\mathcal{M}_{t=-a}=\mathcal{M}_{1}\cup \mathcal M_2\cup \mathcal M_3,
\]
with $\mathcal \mathcal{M}_{1}=\mathcal{M}_{1}$, $\mathcal M_2=\mathcal{M}_{t=-a}\cap\mathcal D_{-a,1}$ (the double–null slab), and $\mathcal M_3=\mathcal{M}_{t=-a}\setminus(\mathcal \mathcal{M}_{1}\cup\mathcal M_2)$. We must solve on $\mathcal{M}_{t=-a}$ the vacuum constraint system
\begin{equation}\label{eq:constraints}
\text{Scal}(g)-|k|_g^2+(\tr_g k)^2=0,\qquad \div_g k - d(\tr_g k)=0,
\end{equation}
subject to (i) prescribed boundary/interface conditions across $\partial\mathcal M_i$, (ii) MOTS–exclusion on $t=-a$, and (iii) asymptotically Kerr behavior on $\mathcal M_3$.
\noindent On $\mathcal M_2=\mathcal{M}_{t=-a}\cap  D_{a,1}$, take the data induced from the double–null development on the slab $\mathcal D_{-a,\epsilon}$, in the normalized frame
\[
e_4=\Omega^{-1}\partial_{\ubar},\qquad e_3=\Omega^{-1}(\partial_u+b^A\partial_{\theta^A}),\qquad g|_{S_{u,\ubar}}=\gamma_{AB}d\theta^A d\theta^B.
\]
Denote the null second fundamental forms by \(\chi=\widehat\chi+\tfrac12(\tr\chi)\gamma\) and \(\chibar=\widehat{\chibar}+\tfrac12(\tr\chibar)\gamma\).
We import the full set of Ricci and curvature components on \(\mathcal M_2\), including the \emph{incoming shear} \(\widehat{\chibar}\), with the scale–critical bounds dictated by the construction of the characteristic initial value problem (on \(H_{0}\) and \(H_{u_\infty}\)):
\begin{align*}
\|\widehat\chi\|_{L^\infty(S_{u,\ubar})}\lesssim a^{-\frac{1}{2}}|u|^{-1},\qquad 
\|\widehat{\chibar}\|_{L^\infty(S_{u,\ubar})}\lesssim a^{\frac{1}{2}}|u|^{-2},\\
|\tr\chi|\lesssim |u|^{-1},\quad |\tr\chibar|\lesssim |u|^{-1},\quad \|\eta\|_{L^\infty}+\|\etabar\|_{L^\infty}\lesssim a^{\frac{1}{2}}|u|^{-2},\quad |\omega|+|\omegabar|\lesssim a^{\frac{1}{2}}|u|^{-2}.
\end{align*}
In particular, on the interface sphere $S_{-a,\ubar}$ the combination
\[
H-|\kappa|=\tfrac12(\tr\chi-\tr\chibar)-\tfrac12\bigl|\tr\chi+\tr\chibar\bigr|
\]
is explicitly computable from these inputs, ensuring (via the chosen profile of incoming shear $\chibarhat$ and the initial incoming expansion $\tr\chibar$) that $c:=\min(H-|\kappa|)$ on the boundary of the interior piece at $t=-a$ is below the Yau threshold, hence no MOTS appear there. The first and second fundamental forms of the slice and of the $S_{u,\ubar}$ leaves are matched in $C^\infty$ across $\partial\mathcal M_{1}\cap\partial\mathcal M_{2}$ by a standard partition–of–unity interpolation inside a thin collar, after which we re-solve the constraints by compactly supported corrections in the next section.

\noindent In this section, we complete the proof of Theorem~\ref{main1} by establishing the formation of an MOTS strictly to the future of the Cauchy hypersurface
\[
\mathcal{M}_{t=-a}\subset \mathcal{M}.
\]
We work in the spacetime harmonic gauge and solve the vacuum Einstein equations on the slab 
\[
[-a,-a-1/a+\epsilon]\times\mathcal{M},
\qquad 0<\epsilon< 1,
\]
with initial data assembled by a gluing procedure.  We then propagate the quasi-local boundary geometry forward in time, ultimately showing that the Yau generalized mean curvature curvature-radius condition is satisfied in the future domain, forcing the existence of an MOTS. In this section, we provide the construction of the data.

\begin{proposition}
\label{cauchy1}
Let \(s\ge 6\) be sufficiently large, \(N\ge s+4\), and \(a\gg1\). Let
\[
        \mathcal M_{-a}=\mathcal M_1\cup\mathcal M_2\cup\mathcal M_3,
        \qquad
        \mathcal M_1=\widetilde{\mathcal M}_1\cup\mathcal A_{12}\cup\mathcal U_S,
        \qquad
        S_{-a,0}:=\partial\mathcal M_1\cap\partial\mathcal M_2,
\]
where \(\mathcal U_S\) is a full collar of \(S_{-a,0}\) and $\mathcal{A}_{12}$ is a further transition region between the collar $\mathcal{U}_{S}$ and $\widetilde{\mathcal{M}}_{1}$. Assume that
\((g_a^{(2)},K_a^{(2)})\) are the vacuum data induced by the characteristic
development on \(\mathcal M_2\cup\mathcal U_S\), satisfying
\[
        \|\partial^{j+1}(g_a^{(2)}-\delta)\|_{L^\infty\cap H^s_{\rm ul}}
        +
        \|\partial^jK_a^{(2)}\|_{L^\infty\cap H^s_{\rm ul}}
        \le C_j a^{-j-\frac32},
        \qquad 0\le j\le N.
\]
Let \((g_a^K,K_a^K)\) be Kerr data on \(\mathcal M_3\), with
\(m_a=O(a^{1/2})\), \(|J_a|=O(a)\), satisfying the same estimates on the
exterior gluing annulus.

\noindent Assume that \(\mathcal M_1\) admits a fill-in family
\[
        (\gamma_{a,\lambda},\sigma_{a,\lambda},\tau_{a,\lambda}),
        \qquad \lambda\in I,
\]
fixed on the boundary collar by
\[
        \gamma_{a,\lambda}=g_a^{(2)},\qquad
        \tau_{a,\lambda}=\operatorname{tr}_{g_a^{(2)}}K_a^{(2)},\qquad
        \sigma_{a,\lambda}=K_a^{(2)}
        -\frac13(\operatorname{tr}_{g_a^{(2)}}K_a^{(2)})g_a^{(2)}
        \quad\text{on }\mathcal U_S,
\]
and satisfying, on \(\widetilde{\mathcal M}_1\),
\[
        \gamma_{a,\lambda}
        =
        \gamma^{\rm fill}_{a,\lambda}
        +
        h^{TT}_{a,\lambda}
        +
        u_{a,\lambda}\delta,
        \qquad
        \operatorname{tr}_{\delta}h^{TT}_{a,\lambda}=0,
        \qquad
        \operatorname{div}_{\delta}h^{TT}_{a,\lambda}=0,
\]
with
\[
        \|\partial^{j+1}(\gamma_{a,\lambda}-\delta)\|_{L^\infty\cap H^s_{\rm ul}}
        \le C_j a^{-j-1},
\]
\[
        \|\partial^j\sigma_{a,\lambda}\|_{L^\infty\cap H^s_{\rm ul}}
        +
        \|\partial^j\tau_{a,\lambda}\|_{L^\infty\cap H^s_{\rm ul}}
        \le C_j a^{-j-\frac32},
\]
and
\[
        \|\partial^j R_{\gamma_{a,\lambda}}\|_{L^\infty\cap H^{s-2}_{\rm ul}}
        \le C_j a^{-j-3},
        \qquad
        \|\partial^j\operatorname{Ric}^{0}_{\gamma_{a,\lambda}}\|_{L^\infty\cap H^{s-2}_{\rm ul}}
        \le C_j a^{-j-2}
\]
in a fixed harmonic coordinate chart.
Assume also that the radius map
\[
        \lambda\mapsto \operatorname{Rad}_{g_{a,\lambda}}(\mathcal M_1)
\]
for the physical data obtained below is continuous and its image contains
\[
        \frac{3\pi}{4}(a-1)+O(a^{-1}),
\]
and that the transition annuli \(\mathcal A_{12}\cup\mathcal A_{23}\)
have no local Killing Initial Data.

\noindent Then one may choose \(\lambda=\lambda_a\) and construct smooth
asymptotically flat vacuum data \((g_a,K_a)\) on \(\mathcal M_{-a}\) such
that
\[
        R_{g_a}-|K_a|_{g_a}^{2}+(\operatorname{tr}_{g_a}K_a)^2=0,
        \qquad
        \operatorname{div}_{g_a}K_a-d(\operatorname{tr}_{g_a}K_a)=0.
\]
Moreover,
\[
        (g_a,K_a)=(g_a^{(2)},K_a^{(2)})\quad\text{on }\mathcal U_S,
        \qquad
        (g_a,K_a)=(g_a^K,K_a^K)\quad\text{near spatial infinity},
\]
and in a fixed harmonic coordinate chart
\[
        \|\partial^{j+1}(g_a-\delta)\|_{L^\infty\cap H^s_{\rm ul}(\widetilde{\mathcal M}_1)}
        \le C_j a^{-j-1},
        \qquad
        \|\partial^jK_a\|_{L^\infty\cap H^s_{\rm ul}(\widetilde{\mathcal M}_1)}
        \le C_j a^{-j-\frac32}.
\]
On \(\mathcal M_2\),
\[
        \|\partial^{j+1}(g_a-\delta)\|_{L^\infty\cap H^s_{\rm ul}(\mathcal M_2)}
        +
        \|\partial^jK_a\|_{L^\infty\cap H^s_{\rm ul}(\mathcal M_2)}
        \le C_j a^{-j-\frac32}.
\]
The Schoen--Yau radius is calibrated by
\[
        \operatorname{Rad}_{g_a}(\mathcal M_1)
        =
        \frac{3\pi}{4}(a-1)+O(a^{-1}),
\]
and the boundary Yau quantity is preserved:
\[
        H_{\partial\mathcal M_1}(g_a)
        -
        \left|\operatorname{tr}_{\partial\mathcal M_1}K_a\right|
        =
        H_{S_{-a,0}}(g_a^{(2)})
        -
        \left|\operatorname{tr}_{S_{-a,0}}K_a^{(2)}\right|.
\]
\end{proposition}
\begin{remark}
Note that we construct the interior fill-in in the next proposition \ref{cauchy2}
\end{remark}

\begin{proof}
The construction of the interior Cauchy data on $\widetilde{\mathcal{M}}_{1}$ is threefold.
First, the background fill-in geometry fixes the Schoen--Yau radius of
\(\mathcal M_1\). Second, a transverse-traceless perturbation produces
large trace-free Ricci curvature at scale \(a^{-2}\), while the scalar
curvature is reduced to the constraint scale \(a^{-3}\) only after a further conformal correction that does not modify the trace-free Ricci curvature. This involves solving an elliptic equation. This distinction
is extremely important: the TT perturbation is not solely responsible for the large
Schoen-Yau radius: it can enhance it, but the radius is a global feature of the chosen fill-in. The main challenge is to solve the constraint equations-in particular the Hamiltonian constraint. 
\noindent The next point to note here is that the metric verifies $|\partial g|\sim a^{-1}$ in the core $\widetilde{\mathcal{M}}_{1}$ and $|k|\sim a^{-3/2}$. Therefore the core has potentially large Brown-York mass (and hence also Wang-Yau mass). In this scaling hierarchy where each spatial derivative costs $a^{-1}$, Ricci curvature verifies $|\text{Ric}|\sim a^{-2}$ while the scalar curvature verifies $|R(g)|\sim a^{-3}$ in light of the Hamiltonian constraint $R(g)-|k|^2+(\tr_{g}k)^{2}=0$. Therefore, the trace-free Ricci curvature $\text{Ric}-\frac{1}{3}R(g)g$ is larger compared to the trace part, and in light of the contracted Bianchi identity, it is almost transverse up to an error term of $O(a^{-4})$. Therefore, the most natural way to construct the data in the core $\widetilde{\mathcal{M}}_{1}$ to perturb a  fill-in metric by high-frequency (in an appropriate scale) $TT-$ perturbations such that the new metric is $C^{0}$ close to the previous metric but differs significantly in higher norms. This generates the new metric with desired estimates on the trace-free Ricci curvature. Then correct the scalar curvature by a further modification involving the solution of an elliptic equation as explained before. In this process, the Yau radius is only modified in lower order. But this does not necessarily solve the vacuum Einstein equations. Therefore, in the last step, we need to perform a conformal transformation to solve the constraints. The key point here is that the conformal transformation does not alter the interior estimates since the former is produced by $TT-$perturbations. Moreover, the estimates verified by the conformal factor show that the Yau radius of the physical metric is unchanged modulo negligible lower-order terms.  Therefore, the physical data (after conformal transformation) verify the Schoen-Yau radius bound and the desired interior estimates. Note that the complete process requires solving two elliptic equations. 

\noindent Let \(a\gg1\). We work in a coordinate chart on the core
\(\widetilde{\mathcal M}_1\) of diameter comparable to \(a\). Let
\[
        y=\frac{x}{a}
\]
be the rescaled coordinate. Let \(\Omega_\ast\subset\mathbb R^3\) be a
fixed smooth model domain. We assume that the core is identified with
\(a\Omega_\ast\), and that the background fill-in metric has the form
\[
        \gamma^{\rm fill}_{a,\lambda}(x)
        =
        \Gamma_\lambda\!\left(\frac{x}{a}\right),
\]
where \(\lambda\) is a fill-in parameter and \(\Gamma_\lambda\) is a
smooth family of metrics on \(\Omega_\ast\). The family
\(\Gamma_\lambda\) is fixed near the rescaled boundary collar and is
chosen so that, after the construction below, the corresponding
Schoen--Yau radius verifies the desired value
\[
        \frac{3\pi}{4}(a-1)+O(a^{-1}).
\]
In particular, for every \(0\le j\le N\),
\begin{equation}
\label{eq:fill-derivative-bound}
        \|\partial_x^{j+1}
        (\gamma^{\rm fill}_{a,\lambda}-\delta)\|_{L^\infty\cap H^s_{\rm ul}}
        \le C_j a^{-j-1}.
\end{equation}
The constants here are uniform for \(\lambda\) in a compact subinterval of
the parameter space.

\noindent We now add a compactly supported TT perturbation in the core. Choose a
smooth symmetric two-tensor \(H\in C_c^\infty(\Omega_\ast;\operatorname{Sym}^2T^*\mathbb R^3)\) (note that this is not to be confused with the mean curvature variable which is also denoted by $H$)
such that
\[
        \operatorname{tr}_{\delta}H=0,
        \qquad
        \operatorname{div}_{\delta}H=0.
\]
Equivalently, one may take
\[
        H
        =
        \Lambda^{-3}\mathcal C_\delta
        \bigl(
        \chi(y)\cos(\Lambda y^1)Q
        \bigr),
\]
where \(\chi\in C_c^\infty(\Omega_\ast)\), \(Q\) is a constant
trace-free 2-tensor, \(\Lambda\ge1\) is fixed, and
\(\mathcal C_\delta\) denotes the linearized Cotton--York operator at
\(\delta\). This operator is defiend as follows 
\[
        (\mathcal C_\delta h)_{ij}
        :=
        \frac12
        \left(
        \epsilon_i{}^{k\ell}\partial_k A'_\delta(h)_{\ell j}
        +
        \epsilon_j{}^{k\ell}\partial_k A'_\delta(h)_{\ell i}
        \right),
\]
where
\[
        A'_\delta(h)_{ij}
        :=
        \operatorname{Ric}'_\delta(h)_{ij}
        -
        \frac14 R'_\delta(h)\delta_{ij},
\]
\[
        \operatorname{Ric}'_\delta(h)_{ij}
        =
        \frac12
        \left(
        \partial^k\partial_i h_{kj}
        +
        \partial^k\partial_j h_{ki}
        -
        \Delta h_{ij}
        -
        \partial_i\partial_j\operatorname{tr}_\delta h
        \right),
\]
and
\[
        R'_\delta(h)
        =
        \partial^i\partial^j h_{ij}
        -
        \Delta\operatorname{tr}_\delta h .
\]

\noindent In either case \(H\) is Euclidean transverse-traceless. Set
\[
        h_a^{TT}(x):=\varepsilon H(x/a),
        \qquad 0<\varepsilon\ll1.
\]
In particular, one may choose $\epsilon =\frac{1}{20}$.
Then, for every \(0\le m\le N+1\),
\begin{equation}
\label{eq:tt-scaling}
        \|\partial_x^m h_a^{TT}\|_{L^\infty\cap H^s_{\rm ul}}
        \le C_m\varepsilon a^{-m}.
\end{equation}
In particular,
\[
        \|\partial_x^{j+1}h_a^{TT}\|_{L^\infty\cap H^s_{\rm ul}}
        \le C_j\varepsilon a^{-j-1}.
\]

\noindent Define the preliminary metric
\[
        \gamma_a^{(0)}
        :=
        \gamma^{\rm fill}_{a,\lambda}+h_a^{TT}.
\]
Then
\[
        \|\partial_x^{j+1}(\gamma_a^{(0)}-\delta)\|_{L^\infty\cap H^s_{\rm ul}}
        \le C_j a^{-j-1}.
\]
On the region where \(\gamma^{\rm fill}_{a,\lambda}=\delta\), the
linearized scalar curvature of the TT perturbation vanishes:
\[
        D R_{\delta}(h_a^{TT})
        =
        \partial^i\partial^j(h_a^{TT})_{ij}
        -
        \Delta_\delta\operatorname{tr}_{\delta}h_a^{TT}
        =
        0.
\]
Therefore the scalar curvature of \(\gamma_a^{(0)}\) contains no linear
TT contribution. More precisely,
\[
        R_{\gamma_a^{(0)}}
        =
        R_{\gamma^{\rm fill}_{a,\lambda}}
        +
        Q_2(h_a^{TT},\partial h_a^{TT},\partial^2h_a^{TT})
        +
        Q_1(\gamma^{\rm fill}_{a,\lambda}-\delta,h_a^{TT}),
\]
where \(Q_2\) is at least quadratic in \(h_a^{TT}\), and \(Q_1\) is
linear in \(h_a^{TT}\) but contains at least one coefficient of
\(\gamma^{\rm fill}_{a,\lambda}-\delta\) or one derivative of it. Thus,
using \eqref{eq:fill-derivative-bound} and \eqref{eq:tt-scaling},
\begin{equation}
\label{eq:raw-scalar-bound}
        \|\partial_x^j R_{\gamma_a^{(0)}}\|_{L^\infty\cap H^{s-2}_{\rm ul}}
        \le C_j a^{-j-2}.
\end{equation}
Generically this is only \(O(a^{-2})\). This is too large for the
Hamiltonian constraint with \(K=O(a^{-3/2})\), since the right-hand side
\[
        |K|^2-(\operatorname{tr}K)^2
\]
is only \(O(a^{-3})\). We therefore perform a scalar-curvature correction. Let \(q_a\) be a prescribed scalar function satisfying
\begin{equation}
\label{eq:q-bound}
        \|\partial_x^j q_a\|_{L^\infty\cap H^{s-2}_{\rm ul}}
        \le C_j a^{-j-3}.
\end{equation}
It will later be chosen to match the \(K\)-terms in the Hamiltonian
constraint. We seek a scalar correction \(u_a\), compactly supported
away from the boundary collar \(\mathcal U_S\), such that
\begin{equation}
\label{eq:scalar-prescription}
        R_{\gamma_a^{(0)}+u_a\delta}
        =
        q_a
        \qquad\text{on }\widetilde{\mathcal M}_1.
\end{equation}
The linearization of scalar curvature at \(\delta\) in the conformal
direction \(u\delta\) is
\[
        D R_\delta(u\delta)
        =
        -2\Delta_\delta u.
\]
More generally,
\[
        D R_{\gamma_a^{(0)}}(u\delta)
        =
        -2\Delta_{\gamma_a^{(0)}}u
        +
        \mathcal B_a^i\partial_i u
        +
        \mathcal C_a u,
\]
where
\[
        \|\partial_x^j\mathcal B_a\|_{L^\infty\cap H^{s-1}_{\rm ul}}
        \le C_j a^{-j-1},
        \qquad
        \|\partial_x^j\mathcal C_a\|_{L^\infty\cap H^{s-2}_{\rm ul}}
        \le C_j a^{-j-2}.
\]
Hence \(D R_{\gamma_a^{(0)}}(\,\cdot\,\delta)\) is a uniformly elliptic
second-order operator whose principal part is \(-2\Delta_{\gamma_a^{(0)}}\).

\noindent We impose homogeneous Dirichlet boundary conditions for \(u_a\) on the
boundary of the scalar-correction region. Equivalently, \(u_a\) is taken
to vanish in a full collar of \(S_{-a,0}\). After rescaling \(x=ay\),
the operator becomes \(a^{-2}\) times a uniformly elliptic operator on a
fixed domain. The right-hand side in
\[
        D R_{\gamma_a^{(0)}}(u_a\delta)
        =
        q_a-R_{\gamma_a^{(0)}}
        -
        \mathcal N_a(u_a)
\]
has size \(O(a^{-2})\), where \(\mathcal N_a(u_a)\) denotes the
quadratic remainder. Therefore the elliptic estimate gives
\[
        \|u_a\|_{L^\infty}\le C,
        \qquad
        \|\partial_x^m u_a\|_{L^\infty\cap H^{s}_{\rm ul}}
        \le C_m a^{-m},
        \qquad 1\le m\le N+1.
\]
Taking \(\varepsilon\) and the rescaled size of the scalar correction
sufficiently small, the nonlinear map is a contraction in the corresponding
uniformly-local Sobolev ball. Thus \eqref{eq:scalar-prescription} has a
solution satisfying
\begin{equation}
\label{eq:u-estimate}
        \|\partial_x^{j+1}u_a\|_{L^\infty\cap H^s_{\rm ul}}
        \le C_j a^{-j-1},
        \qquad 0\le j\le N.
\end{equation}
Since \(u_a\equiv0\) on the boundary collar, this scalar correction does
not change the boundary geometry at \(S_{-a,0}\).

\noindent Define the balanced interior conformal metric by
\[
        \gamma_a
        :=
        \gamma_a^{(0)}+u_a\delta.
\]
Then
\begin{equation}
\label{eq:gamma-balanced-estimate}
        \|\partial_x^{j+1}(\gamma_a-\delta)\|_{L^\infty\cap H^s_{\rm ul}}
        \le C_j a^{-j-1},
\end{equation}
and, by construction,
\begin{equation}
\label{eq:scalar-small}
        \|\partial_x^j R_{\gamma_a}\|_{L^\infty\cap H^{s-2}_{\rm ul}}
        \le C_j a^{-j-3}.
\end{equation}
At the same time, the trace-free Ricci tensor retains the \(a^{-2}\)
scale. Indeed,
\[
        \operatorname{Ric}'_\delta(h_a^{TT})
        =
        -\frac12\Delta_\delta h_a^{TT}
\]
on the Euclidean part of the core, and hence
\[
        \|\partial_x^j\operatorname{Ric}^0_{\gamma_a}\|_{L^\infty\cap H^{s-2}_{\rm ul}}
        \le C_j a^{-j-2}.
\]
For a generic choice of \(H\), this bound is sharp on a subregion of the
core:
\[
        |\operatorname{Ric}^0_{\gamma_a}|
        \ge c a^{-2}
\]
at some point. Thus the metric may have trace-free Ricci curvature of
order \(a^{-2}\), while its scalar curvature is only order \(a^{-3}\).

\noindent We now choose the conformal momentum data. Let
\[
        \tau_a=a^{-3/2}\tau_0(x/a)
\]
on the core, with \(\tau_0\not\equiv0\), and extend \(\tau_a\) smoothly
to agree with
\[
        \operatorname{tr}_{g_a^{(2)}}K_a^{(2)}
\]
on the boundary collar. Let \(\sigma_a\) be a symmetric two-tensor,
trace-free with respect to \(\gamma_a\), satisfying
\begin{equation}
\label{eq:sigma-tau-bounds}
        \|\partial_x^j\sigma_a\|_{L^\infty\cap H^s_{\rm ul}}
        +
        \|\partial_x^j\tau_a\|_{L^\infty\cap H^s_{\rm ul}}
        \le C_j a^{-j-\frac32},
        \qquad 0\le j\le N.
\end{equation}
On the boundary collar we require
\[
        \sigma_a
        =
        K_a^{(2)}
        -
        \frac13
        \left(
        \operatorname{tr}_{g_a^{(2)}}K_a^{(2)}
        \right)
        g_a^{(2)}.
\]
Thus the seed data agree exactly with the characteristic Cauchy data in
a full collar of \(S_{-a,0}\).

\noindent We solve the Lichnerowicz--York system
\begin{equation}
\label{eq:LY-vector-balanced}
        \operatorname{div}_{\gamma_a}
        \mathcal L_{\gamma_a}W_a
        =
        \frac23\varphi_a^6\,d\tau_a
        -
        \operatorname{div}_{\gamma_a}\sigma_a,
\end{equation}
and
\begin{equation}
\label{eq:LY-scalar-balanced}
        -8\Delta_{\gamma_a}\varphi_a
        +
        R_{\gamma_a}\varphi_a
        -
        |\sigma_a+\mathcal L_{\gamma_a}W_a|_{\gamma_a}^2
        \varphi_a^{-7}
        +
        \frac23\tau_a^2\varphi_a^5
        =
        0.
\end{equation}
The solution is required to satisfy
\[
        \varphi_a\equiv1,
        \qquad
        W_a\equiv0
        \qquad\text{on the boundary collar }\mathcal U_S.
\]
Consequently the conformal solve does not change either the induced
metric or the second fundamental form on \(S_{-a,0}\).

\noindent We note the estimates. From \eqref{eq:sigma-tau-bounds},
\[
        d\tau_a=O(a^{-5/2}),
        \qquad
        \operatorname{div}_{\gamma_a}\sigma_a=O(a^{-5/2}).
\]
After rescaling \(x=ay\), the vector operator
\[
        \operatorname{div}_{\gamma_a}\mathcal L_{\gamma_a}
\]
is \(a^{-2}\) times a uniformly elliptic second-order operator on the
fixed rescaled domain. The ellipticity can be understood in terms of symbol calculus as follows. 
Let
\[
        P_\gamma Y:=\operatorname{div}_\gamma\mathcal L_\gamma Y,
        \qquad
        (\mathcal L_\gamma Y)_{ij}
        =
        \nabla_iY_j+\nabla_jY_i
        -
        \frac23(\nabla^kY_k)\gamma_{ij}.
\]
The principal symbol of \(P_\gamma\) is
\[
        \sigma_\xi(P_\gamma)Y
        =
        |\xi|_\gamma^2Y
        -
        \frac13\xi\,\langle \xi,Y\rangle_\gamma .
\]
Hence
\[
        \bigl\langle \sigma_\xi(P_\gamma)Y,Y\bigr\rangle_\gamma
        =
        |\xi|_\gamma^2|Y|_\gamma^2
        -
        \frac13\langle \xi,Y\rangle_\gamma^2
        \ge
        \frac{2}{3}|\xi|_\gamma^2|Y|_\gamma^2 .
\]
Thus \(P_\gamma=\operatorname{div}_\gamma\mathcal L_\gamma\) is strongly
elliptic; the ellipticity constant is uniform for uniformly elliptic
bounded \(C^N\)-families of metrics \(\gamma\). Hence we have the following estimate in light of the fact that the Ricci term is $O(a^{-2})$ self-adjoint ness of $P_{\gamma}$ with Dirichlet boundary condition 
\begin{equation}
\label{eq:W-bound-balanced}
        \|\partial_x^j\mathcal L_{\gamma_a}W_a\|_{L^\infty\cap H^s_{\rm ul}}
        \le C_j a^{-j-\frac32}.
\end{equation}
For the scalar equation, use
\[
        R_{\gamma_a}=O(a^{-3}),
\]
\[
        |\sigma_a+\mathcal L_{\gamma_a}W_a|_{\gamma_a}^2=O(a^{-3}),
\]
and
\[
        \tau_a^2=O(a^{-3}).
\]
Write \[ \varphi_a=1+u_a. \] Substituting into the Lichnerowicz equation gives \[ -8\Delta_{\gamma_a}u_a + \mathcal S_a + V_a u_a + \mathcal Q_a(u_a) = 0, \] where \[ \mathcal S_a := R_{\gamma_a} - |A_a|_{\gamma_a}^2 + \frac23\tau_a^2, \] \[ V_a := R_{\gamma_a} + 7|A_a|_{\gamma_a}^2 + \frac{10}{3}\tau_a^2, \] where \[
A_{a}:=\sigma_{a}+\mathcal{L}_{\gamma_{a}}W_{a}
\] and \[ \begin{aligned} \mathcal Q_a(u) &:= -|A_a|_{\gamma_a}^2 \left[ (1+u)^{-7}-1+7u \right] \\ &\quad + \frac23\tau_a^2 \left[ (1+u)^5-1-5u \right]. \end{aligned} \] The assumptions imply \[ \|\partial^j\mathcal S_a\|_{L^\infty\cap H^{s-2}_{\rm ul}} \le C_j a^{-j-3}, \] and \[ \|\partial^jV_a\|_{L^\infty\cap H^{s-2}_{\rm ul}} \le C_j a^{-j-3}. \] For \(\|u\|_{L^\infty}\le 1/2\), \[ \mathcal Q_a(u)=O(a^{-3}u^2), \] and, more generally, \[ \|\partial^j\mathcal Q_a(u)\|_{H^{s-2}_{\rm ul}} \le C_j a^{-3} \|u\|_{H^s_{\rm ul}}^2 \] with the corresponding tame estimates for differences. Let \[ \mathscr L_a:=-8\Delta_{\gamma_a}+V_a. \] After the rescaling \(x=ay\), the operator \(\mathscr L_a\) becomes \[ a^{-2}\left(-8\Delta_{\Gamma_a}+a^2V_a\right), \] where \(\Gamma_a\) is uniformly controlled in \(C^N\), and \[ a^2V_a=O(a^{-1}). \] Thus, for \(a\gg1\), \(\mathscr L_a\) is uniformly invertible with Dirichlet boundary condition \(u_a=0\) on the boundary of the conformal solution region. Equivalently, in the localized relative construction one uses the same inverse on the compactly supported correction region. The rescaled elliptic estimate gives \[ \|\partial^j\mathscr L_a^{-1}f\|_{L^\infty\cap H^{s+2-j}_{\rm ul}} \le C_j a^{2-j} \|f\|_{L^\infty\cap H^{s-2}_{\rm ul}}, \qquad 0\le j\le N+2. \] Applying this with \(f=-\mathcal S_a-\mathcal Q_a(u)\), and using \[ \mathcal S_a=O(a^{-3}), \] one obtains \[ \|\mathscr L_a^{-1}\mathcal S_a\|_{L^\infty} \le Ca^{-1}. \] The quadratic term satisfies \[ \|\mathscr L_a^{-1}\mathcal Q_a(u)\|_{L^\infty} \le C a^2\cdot a^{-3}\|u\|_{L^\infty}^2 = C a^{-1}\|u\|_{L^\infty}^2. \] Hence the map \[ u\longmapsto -\mathscr L_a^{-1}\left(\mathcal S_a+\mathcal Q_a(u)\right) \] is a contraction on the ball \[ \|u\|_{L^\infty}\le M a^{-1} \] for \(M\) sufficiently large and \(a\gg1\). The higher-order estimates follow from the same rescaled elliptic estimates and the product bounds. Positivity of $\varphi_{a}$ follows from \(\|u_a\|_{L^\infty}\le Ca^{-1}\).

\noindent Define the physical data by
\[
        g_a:=\varphi_a^4\gamma_a,
\]
and
\[
        K_a
        :=
        \varphi_a^{-2}
        \left(
        \sigma_a+\mathcal L_{\gamma_a}W_a
        \right)
        +
        \frac13\tau_a\varphi_a^4\gamma_a.
\]
By the conformal method,
\[
        R_{g_a}-|K_a|_{g_a}^2+(\operatorname{tr}_{g_a}K_a)^2=0,
\]
and
\[
        \operatorname{div}_{g_a}K_a-d(\operatorname{tr}_{g_a}K_a)=0.
\]
Moreover,
\[
        \operatorname{tr}_{g_a}K_a=\tau_a,
\]
so the data are non-maximal whenever \(\tau_0\not\equiv0\).This is in general true since the $t=u+\ubar$ slices are not expected to be maximal. In this case, note that $\tau_{0}$ is indeed non-vanishing.

\noindent It remains to verify the estimates. Since
\[
        g_a-\delta
        =
        (\varphi_a^4-1)\delta
        +
        \varphi_a^4(\gamma_a-\delta),
\]
the product rule, \eqref{eq:gamma-balanced-estimate} give
\[
        \|\partial_x^{j+1}(g_a-\delta)\|_{L^\infty\cap H^s_{\rm ul}}
        \le C_j a^{-j-1}.
\]
Similarly, using \eqref{eq:sigma-tau-bounds},
\eqref{eq:W-bound-balanced} we get
\[
        \|\partial_x^jK_a\|_{L^\infty\cap H^s_{\rm ul}}
        \le C_j a^{-j-\frac32}.
\]

\noindent Finally, the boundary collar is unchanged. Since
\[
        u_a=0,\qquad
        \varphi_a=1,\qquad
        W_a=0
        \qquad\text{on }\mathcal U_S,
\]
and since the seed data equal the characteristic seed there, the final
physical data satisfy
\[
        (g_a,K_a)=(g_a^{(2)},K_a^{(2)})
        \qquad\text{on }\mathcal U_S.
\]
Therefore the first fundamental form of \(S_{-a,0}\), its mean
curvature, and the tangential trace of \(K_a\) are exactly the
characteristic ones. In particular,
\[
        H_{\partial\mathcal M_1}(g_a)
        -
        \left|
        \operatorname{tr}_{\partial\mathcal M_1}K_a
        \right|
        =
        H_{S_{-a,0}}(g_a^{(2)})
        -
        \left|
        \operatorname{tr}_{S_{-a,0}}K_a^{(2)}
        \right|.
\]

\noindent The Schoen--Yau radius is fixed independently by the fill-in parameter.
More precisely, the radius of \((\mathcal M_1,g_a)\) depends
continuously on the parameter \(\lambda\), because the metric depends
continuously on \(\lambda\) in \(C^1\) and the correction terms above
are supported away from the boundary collar and satisfy the stated
uniform bounds. By the admissibility of the fill-in family, choose
\(\lambda=\lambda_a\) so that
\[
        \operatorname{Rad}_{g_a}(\mathcal M_1)
        =
        \frac{3\pi}{4}(a-1)+O(a^{-1}).
\]
Thus the interior data simultaneously have
\[
        |\operatorname{Ric}^0_{g_a}|=O(a^{-2}),
        \qquad
        R_{g_a}=O(a^{-3}),
        \qquad
        |K_a|=O(a^{-3/2}),
\]
together with the required boundary matching and Schoen--Yau radius.

\noindent Now we initialize the spacetime harmonic gauge on $\mathcal{M}_{-a}$.
The geometric data \((g_a,K_a)\) are now exact vacuum data. To evolve
them in spacetime harmonic gauge, choose initial lapse and shift
\[
        N_a=1+\nu_a,
        \qquad
        X_a,
\]
with
\[
        \|\partial^j\nu_a\|_{L^\infty\cap H^s_{\mathrm{loc}}}
        \le
        C_j a^{-j-\frac32},~\|\partial^j X_{a}\|_{L^\infty\cap H^s_{\mathrm{loc}}}
        \le
        C_j a^{-j-\frac32}
\]
The harmonic coordinate conditions
\[
        \Box_{\mathbf g}x^\mu=0
\]
determine the initial time derivatives of \(N_a\) and \(X_a\). Writing
\[
        \mathbf g
        =
        -N_a^2dt^2
        +
        (g_a)_{ij}(dx^i+X_a^idt)(dx^j+X_a^jdt),
\]
one obtains on \(M_{-a}\)
\[
        \partial_tN_a-X_a^i\partial_iN_a
        =
        -N_a^2\operatorname{tr}_{g_a}K_a,
\]
and
\[
        \partial_tX_a^i-X_a^j\partial_jX_a^i
        =
        N_a^2(g_a)^{jk}\Gamma^i_{jk}(g_a)
        -
        N_a(g_a)^{ij}\partial_jN_a
        -
        N_aX_a^i\operatorname{tr}_{g_a}K_a.
\]
The spatial estimates give
\[
        \|\partial^j(N_a-1)\|_{L^\infty\cap H^s_{\mathrm{loc}}}
        +
        \|\partial^jX_a\|_{L^\infty\cap H^s_{\mathrm{loc}}}
        \le
        C_j a^{-j-\frac32}.
\]
Moreover,
\[
        \operatorname{tr}_{g_a}K_a=\tau_a\not\equiv0
\]
on the interior core, so the construction is in harmonic gauge and not
in maximal gauge and importantly one can not simply choose to work with the maximal slicing condition.

\end{proof}

\begin{remark}
The non-Killing Initial Data (KID) hypothesis on the gluing annuli is the standard Fredholm
condition for compactly supported deformation of the Einstein
constraints. Let
\[
        L:=D\Phi_{(\widetilde g,\widetilde K)}
\]
be the linearized constraint operator on a transition annulus
\(\mathcal A\). If \((N,Y)\in\ker L^*\), then for every compactly
supported correction \((h,p)\) one has
\[
        \int_{\mathcal A}
        \left\langle L(h,p),(N,Y)\right\rangle\,d\mu_{\widetilde g}
        =
        \int_{\mathcal A}
        \left\langle (h,p),L^*(N,Y)\right\rangle\,d\mu_{\widetilde g}
        =
        0 .
\]
Thus a prescribed constraint error \(f\) can be removed by a compactly
supported correction only if
\[
        \int_{\mathcal A}\langle f,(N,Y)\rangle\,d\mu_{\widetilde g}=0
\]
for every local Killing initial data  \((N,Y)\). In general these orthogonality
conditions fail. The assumption that the annulus has no local KIDs
therefore makes the operator
\[
        L\eta L^*
\]
invertible, after imposing homogeneous boundary conditions, and permits
the localized Corvino--Schoen correction. Equivalently, local KIDs
are precisely the infinitesimal spacetime symmetries of the initial
data; the gluing construction requires that no such symmetry survive in
the correction region.
\end{remark}

\begin{remark}
We note why the conformal correction must not be performed
globally after the characteristic boundary geometry has been fixed (in a sense this black hole formation framework is too rigid for the conformal method to be applied globally). If
\[
        \widehat g=\phi^4g
\]
on a three-dimensional initial slice, then the mean curvature of a
two-surface \(S\subset M\) transforms by
\[
        \widehat H
        =
        \phi^{-2}
        \left(
        H+4\nu(\log\phi)
        \right),
\]
where \(\nu\) is the outward unit normal to \(S\) with respect to \(g\).
Thus, unless
\[
        \phi=1+O(a^{-3/2}),
        \qquad
        \nu(\phi)=O(a^{-5/2})
\]
at the interface sphere \(S_{-a,0}\), the quantity
\[
        H-|\operatorname{tr}_{S}K|
\]
is changed at order \(a^{-1}\). The Yau-threshold comparison in the
present construction is an \(a^{-2}\)-scale comparison. Hence a
conformal factor solving the Lichnerowicz equation globally would destroy the sharp boundary inequality unless
the seed data already solved the constraints to sufficiently high order
near \(S_{-a,0}\). This is the reason the constraints are solved on the
interior core first and the final correction is compactly supported in
the transition collars away from the boundary where Yau condition is to be applied.
\end{remark}

\noindent Now we construct the geometric fill-in used in the interior region
\(\mathcal M_1\). The construction is independent of the subsequent TT
perturbation and of the Lichnerowicz-York solution. Its only purpose is
to provide a relative interior geometry, fixed near the characteristic
boundary collar, whose Schoen--Yau radius can be calibrated to the
prescribed value. This was assumed in the previous theorem \ref{cauchy1}. We provide the proof in the following theorem.

\begin{proposition}
\label{cauchy2}
Let \(s\ge 6\), \(N\ge s+4\), and \(a\gg1\). Let
\(\mathcal U_S\) be a collar of the interface sphere \(S_{-a,0}\), and
let \(g_a^{(2)}\) be the metric induced there from the characteristic
development, satisfying
\[
        \|\partial^{j+1}(g_a^{(2)}-\delta)\|_{L^\infty\cap H^s_{\rm ul}(\mathcal U_S)}
        \le C_j a^{-j-\frac32},
        \qquad 0\le j\le N .
\]
Assume there is a smooth one-parameter family of model fill-ins
\((\Omega_\lambda,\Gamma_\lambda)\), \(\lambda\in I\), each diffeomorphic
to a three-ball and fixed on a boundary collar, such that
\[
        \lambda\mapsto \operatorname{Rad}_{\Gamma_\lambda}(\Omega_\lambda)
\]
is continuous and its image contains \(3\pi/4\).

\noindent Then, for \(a\gg1\), there exists \(\lambda_a\in I\) and a smooth metric
\(\gamma_a^{\rm fill}\) on \(\mathcal M_1\) with
\[
        \mathcal M_1=\widetilde{\mathcal M}_1\cup\mathcal A_{12}\cup\mathcal U_S
\]
such that
\[
        \gamma_a^{\rm fill}=g_a^{(2)}
        \qquad\text{on }\mathcal U_S,
\]
and, on the interior and transition region,
\[
        \gamma_a^{\rm fill}
        =
        a^2D_a^*\Gamma_{\lambda_a},
        \qquad
        D_a(x)=x/a .
\]
Moreover, for every \(0\le j\le N\),
\[
        \|\partial^{j+1}(\gamma_a^{\rm fill}-\delta)\|_{L^\infty\cap H^s_{\rm ul}(\widetilde{\mathcal{M}}_1)}
        \le C_j a^{-j-1},
\]
and the Schoen--Yau radius is calibrated by
\[
        \operatorname{Rad}_{\gamma_a^{\rm fill}}(\mathcal M_1)
        =
        \frac{3\pi}{4}(a-1)+O(a^{-1}).
\]
Finally,
\[
        \|\partial^jR_{\gamma_a^{\rm fill}}\|_{L^\infty\cap H^{s-2}_{\rm ul}}
        \le C_j a^{-j-2}.
\]
In particular, the fill-in produces the prescribed radius while
preserving the characteristic collar and the natural derivative scale
\(|\partial g|=O(a^{-1})\).
\end{proposition}

\begin{proof}
\noindent Let \(\Omega_0\) be a smooth three-ball with a fixed collar\[
        \mathcal U_*
        \simeq [0,\ell_*]\times S^2,
\]
where \(t=0\) corresponds to the outer boundary. Let
\[
        I=[\lambda_-,\lambda_+]\subset\mathbb R
\]
be a compact interval. We choose a smooth family of metrics
\[
        \Gamma_\lambda,\qquad \lambda\in I,
\]
on \(\Omega_0\), with the following properties.

\noindent First, the family is fixed on the boundary collar:
\begin{equation}
\label{eq:fixed-model-collar}
        \Gamma_\lambda=\Gamma_{\rm col}
        \qquad\text{on }\mathcal U_*,
        \qquad\text{for every }\lambda\in I .
\end{equation}
Here \(\Gamma_{\rm col}\) is a fixed smooth collar metric. In the model
Euclidean case one may take
\[
        \Gamma_{\rm col}
        =
        dt^2+(1-t)^2g_{\mathbb S^2},
        \qquad 0\le t\le \ell_*,
\]
so that the boundary sphere has area radius \(1\) and mean curvature
\(2\) with respect to the outward normal. In the actual construction
\(\Gamma_{\rm col}\) is replaced by the rescaled characteristic collar
metric.

\noindent Second, the family contains an interior chamber. More precisely, for
each \(\lambda\in I\) there is a subdomain
\[
        \mathcal B_\lambda\Subset \Omega_0\setminus\mathcal U_*
\]
which is isometric to a Euclidean ball
\[
        (B_{\rho(\lambda)}(0),\delta),
\]
where \(\rho:I\to(0,\infty)\) is smooth and strictly increasing. The
transition between \(\mathcal B_\lambda\) and the fixed collar is made
through a smooth annulus on which all rescaled derivatives are uniformly
bounded:
\begin{equation}
\label{eq:model-uniform-bounds}
        \sup_{\lambda\in I}
        \|\Gamma_\lambda\|_{C^{N+2}(\Omega_0)}
        +
        \sup_{\lambda\in I}
        \|\Gamma_\lambda^{-1}\|_{C^{N+2}(\Omega_0)}
        \le C_N .
\end{equation}
The endpoints \(\lambda_\pm\) are chosen so that
\begin{equation}
\label{eq:model-radius-crossing}
        \operatorname{Rad}_{\Gamma_{\lambda_-}}(\Omega_0)
        <
        \frac{3\pi}{4}
        <
        \operatorname{Rad}_{\Gamma_{\lambda_+}}(\Omega_0).
\end{equation}
This can be achieved by increasing the chamber radius
\(\rho(\lambda)\), since the Euclidean chamber gives
\[
        \operatorname{Rad}_{\Gamma_\lambda}(\Omega_0)
        \ge
        \operatorname{Rad}_{\delta}(B_{\rho(\lambda)})
        =
        \frac{\rho(\lambda)}{2}.
\]

\noindent We now pass to the physical scale \(a\). Let
\[
        D_a:\mathcal M_{1,\lambda}\longrightarrow \Omega_0
\]
be the dilation map, \(D_a(x)=x/a\), where
\[
        \mathcal M_{1,\lambda}:=a\,\Omega_0 .
\]
Define the physical fill-in metric by
\begin{equation}
\label{eq:physical-fill-in-definition}
        \gamma^{\rm fill}_{a,\lambda}
        :=
        a^2D_a^*\Gamma_\lambda .
\end{equation}
In local coordinates \(x=ay\), this reads
\[
        (\gamma^{\rm fill}_{a,\lambda})_{ij}(x)
        =
        (\Gamma_\lambda)_{ij}(y),
        \qquad y=\frac{x}{a}.
\]
Therefore, for every \(0\le j\le N\),
\begin{equation}
\label{eq:fill-in-derivative-estimate}
        \|\partial^{j+1}
        \gamma^{\rm fill}_{a,\lambda}\|_{L^\infty\cap H^s_{\rm ul}}
        \le C_j a^{-j-1}.
\end{equation}
Equivalently, after subtracting the Euclidean metric in the chosen
uniformly local chart,
\begin{equation}
\label{eq:fill-in-minus-euclidean}
        \|\partial^{j+1}
        (\gamma^{\rm fill}_{a,\lambda}-\delta)\|_{L^\infty\cap H^s_{\rm ul}}
        \le C_j a^{-j-1}.
\end{equation}
Thus an order-one variation of the rescaled geometry produces only an
\(O(a^{-1})\) first derivative in physical coordinates. This is the
natural scale for a nontrivial fill-in over a region of diameter
\(O(a)\).

\noindent The Schoen--Yau radius scales linearly under this dilation:
\begin{equation}
\label{eq:radius-scaling}
        \operatorname{Rad}_{\gamma^{\rm fill}_{a,\lambda}}
        (\mathcal M_{1,\lambda})
        =
        a\,
        \operatorname{Rad}_{\Gamma_\lambda}(\Omega_0).
\end{equation}
Indeed, all curve lengths and all distance quantities entering the
definition of \(\operatorname{Rad}\) are multiplied by \(a\) under
\(\gamma\mapsto a^2\gamma\).

\noindent The map
\[
        \lambda\longmapsto
        \operatorname{Rad}_{\Gamma_\lambda}(\Omega_0)
\]
is continuous. To see this, pull the family back to a fixed smooth ball.
If \(\Gamma_{\lambda'}\to\Gamma_\lambda\) in \(C^0\), then for
\(\lambda'\) sufficiently close to \(\lambda\),
\[
        (1-\eta)\Gamma_\lambda
        \le
        \Gamma_{\lambda'}
        \le
        (1+\eta)\Gamma_\lambda .
\]
Consequently every length, and hence the Schoen--Yau radius, changes by
at most a factor \(1+O(\eta)\). This proves continuity. By
\eqref{eq:model-radius-crossing}, the intermediate value theorem gives,
for each sufficiently large \(a\), a parameter
\(\lambda=\lambda_a\in I\) such that
\begin{equation}
\label{eq:fill-in-radius-calibration}
        \operatorname{Rad}_{\gamma^{\rm fill}_{a,\lambda_a}}
        (\mathcal M_{1,\lambda_a})
        =
        \frac{3\pi}{4}(a-1)
        +
        O(a^{-1}).
\end{equation}
The \(O(a^{-1})\) is obtained by varying \(\lambda\) on an
interval of size \(O(a^{-2})\), since the derivative of the physical
radius with respect to \(\lambda\) is \(a\) times the derivative of the
model radius.

\noindent We next incorporate the actual characteristic collar. Let
\((g_a^{(2)},K_a^{(2)})\) denote the data induced from the characteristic
region on a full collar \(\mathcal U_S\) of \(S_{-a,0}\). The metric
\(g_a^{(2)}\) satisfies
\[
        \|\partial^{j+1}(g_a^{(2)}-\delta)\|_{L^\infty\cap H^s_{\rm ul}}
        \le C_j a^{-j-\frac32}.
\]
On the physical collar \(a\mathcal U_*\), replace
\(\gamma^{\rm fill}_{a,\lambda_a}\) by \(g_a^{(2)}\). Since both metrics
are uniformly close to the same rescaled collar geometry, and since
their difference satisfies the stronger characteristic estimate, this
replacement can be made through a cutoff supported in an annulus of
thickness \(O(a)\). If \(\chi_a\) is the cutoff, then
\[
        |\partial^j\chi_a|\le C_j a^{-j}.
\]
The resulting metric, still denoted by
\(\gamma^{\rm fill}_{a,\lambda_a}\), satisfies
\begin{equation}
\label{eq:fill-in-equals-characteristic-collar}
        \gamma^{\rm fill}_{a,\lambda_a}
        =
        g_a^{(2)}
        \qquad\text{on }\mathcal U_S,
\end{equation}
and the estimate
\[
        \|\partial^{j+1}
        (\gamma^{\rm fill}_{a,\lambda_a}-\delta)\|_{L^\infty\cap H^s_{\rm ul}}
        \le C_j a^{-j-1}
\]
remains valid. The radius remains calibrated after this collar
replacement by choosing \(\lambda_a\) after the replacement; the
dependence on \(\lambda\) is still continuous, and the collar is fixed
independently of \(\lambda\).

\noindent Finally, the scalar curvature of the fill-in satisfies the natural
scale
\begin{equation}
\label{eq:fill-in-scalar-natural-scale}
        \|\partial^jR_{\gamma^{\rm fill}_{a,\lambda_a}}\|_{L^\infty\cap H^{s-2}_{\rm ul}}
        \le C_j a^{-j-2}.
\end{equation}
This follows from the fact that \(R\) contains two derivatives of the
metric and quadratic first-derivative terms. The estimate
\eqref{eq:fill-in-scalar-natural-scale} is not yet the constraint scale
required for \(K=O(a^{-3/2})\). Therefore the fill-in is subsequently
balanced by adding a compactly supported TT perturbation and a scalar
curvature correction as in the proof of the previous theorem \ref{cauchy1},
\[
        \gamma_a
        =
        \gamma^{\rm fill}_{a,\lambda_a}
        +
        h_a^{TT}
        +
        u_a\delta,
\]
chosen so that
\[
        \|\partial^jR_{\gamma_a}\|_{L^\infty\cap H^{s-2}_{\rm ul}}
        \le C_j a^{-j-3},
        \qquad
        \|\partial^j\operatorname{Ric}^0_{\gamma_a}\|_{L^\infty\cap H^{s-2}_{\rm ul}}
        \le C_j a^{-j-2}.
\]
The terms \(h_a^{TT}\) and \(u_a\delta\) are supported away from
\(\mathcal U_S\); hence they do not affect the equality
\[
        \gamma_a=g_a^{(2)}
        \qquad\text{on }\mathcal U_S.
\]
Thus the fill-in simultaneously provides the prescribed
Schoen--Yau radius, preserves the characteristic boundary collar, and
retains the interior derivative hierarchy
\[
        \|\partial^{j+1}(\gamma_a-\delta)\|_{L^\infty\cap H^s_{\rm ul}}
        \le C_j a^{-j-1}.
\]
\end{proof}

\subsection{Strict Yau–radius subcriticality with boundary control}
This is achieved by a refined geometric decomposition and constraint–compatible gluing of the interior Cauchy data. We decompose
\[
\mathcal M_{1}=\widetilde{\mathcal M}_{1}\cup\Big(\mathcal M_{1}\setminus\widetilde{\mathcal M}_{1}\Big),
\]
where $\widetilde{\mathcal M}_{1}\Subset\mathcal M_{1}$ is a compact subdomain with smooth boundary and the complement 
$\mathcal M_{1}\setminus\widetilde{\mathcal M}_{1}$ is a collar (transition) region of uniformly bounded thickness $O(1)$ measured with respect to the background Euclidean metric. The decomposition is chosen so that
$\partial\widetilde{\mathcal M}_{1}$ lies a fixed positive distance away from the outer interface $\partial\mathcal M_{1}$ and from the gluing interface with the characteristic region $D_{a,1}$.

\noindent On $\widetilde{\mathcal M}_{1}$ we prescribe vacuum constraint data $(g,k)$ satisfying the Einstein constraint equations and the quantitative smallness bounds
\begin{align}
\|k\|_{L^\infty(\widetilde{\mathcal M}_{1})}
\leq C\,a^{-3/2},~
\|\partial g\|_{L^\infty(\widetilde{\mathcal M}_{1})}
&\le C a^{-1}, 
\label{eq:int-Linf}\\
\|k\|_{H^{s-1}_{\mathrm{ul}}(\widetilde{\mathcal M}_{1})}
\leq C\,a^{-3/2},~
\|\partial g\|_{H^{s-1}_{\mathrm{ul}}(\widetilde{\mathcal M}_{1})}
&\le C a^{-1},
\label{eq:int-Sob}
\end{align}
for some fixed $s\gg1$, where $H^{s-1}_{\mathrm{ul}}$ denotes the uniformly–local Sobolev space (defined via unit–scale coordinate balls). The same bounds hold for the lapse $N$ and shift $X$,
\begin{equation}
\|N-1\|_{L^\infty}+\|X\|_{L^\infty}
+\|N-1\|_{H^{s}_{\mathrm{ul}}}+\|X\|_{H^{s}_{\mathrm{ul}}}
\le C a^{-3/2},
\end{equation}
so that the data are a quantitatively small perturbation of Euclidean data (in homogeneous Sobolev norm) on each unit ball, with constants independent of the total $H-$radius $O(a)$ of $\mathcal M_{1}$. In particular, the local constraint norms remain $O(a^{-3/2})$ while global Sobolev norms may be $O(1)$ due to volume growth; all arguments below are formulated in uniformly–local norms to avoid this scaling loss.

\noindent The collar region $\mathcal M_{1}\setminus\widetilde{\mathcal M}_{1}$ is used as a gluing zone in which $(g,k)$ is constructed by a constraint–preserving interpolation between the interior data on $\widetilde{\mathcal M}_{1}$ and the induced Cauchy data coming from the characteristic development $D_{a,1}$ on $\mathcal M_{2}$. Using a Corvino–Schoen type localized deformation together with a partition of unity and solvability of the linearized constraint operator with compact support, one obtains smooth vacuum data on $\mathcal M_{1}$ that agree exactly with the interior data on $\widetilde{\mathcal M}_{1}$ and with the characteristic–induced data near $\partial\mathcal M_{1}$, with all corrections supported strictly inside the collar (a relevant proposition \ref{prop:constraint-data-gluing} and a sketch of the proof. Since the collar has thickness $O(1)$, these localized corrections do not alter any geometric radius quantity of order $a$ except by an additive $O(1)$ error. Consequently, if $\text{Rad}_g(\widetilde{\mathcal M}_{1})$ denotes the Schoen–Yau ($H$–)radius computed with respect to the interior metric $g$, then
\begin{equation}
\text{Rad}(\mathcal M_{1})=\text{Rad}_g(\widetilde{\mathcal M}_{1})+O(1),
\end{equation}
with the $O(1)$ constant depending only on the collar geometry and the fixed gluing profile.

\medskip

\noindent Set $\Omega:=\mathcal M_{1}$ and denote its outer boundary by $\partial\Omega=\partial\mathcal M_{1}$. Let $H$ be the mean curvature of $\partial\Omega$ in the initial slice $(\Omega,g)$ with respect to the outward unit normal, and let $k$ be the second fundamental form of the slice in spacetime. Because the gluing corrections are supported away from $\partial\Omega$, the induced pair $(\gamma,k^\top)$ on $\partial\Omega$ coincides with that coming from the double–null interface data. In particular, there is a strictly positive barrier gap
\begin{equation}\label{eq:Yau-gap-annals}
c_*:=\inf_{\partial\Omega}\Big(H-|\tr_{\partial\Omega}k|\Big)>0.
\end{equation}
Fix a target radius bound
\begin{equation}
0<\text{Rad}(\Omega)\le R_*<R_{\mathrm{crit}}(c_*),
\end{equation}
strictly below the Schoen–Yau critical radius associated to the gap $c_*$.

\noindent Following Schoen–Yau, consider the mixed boundary value problem for the operator
\[
\mathcal L:=-\Delta+\tfrac12 R-h
\]
on $\Omega$, with Robin boundary condition
\[
\partial_\nu f + (\tr_{\partial\Omega}k)\, f=0
\quad\text{on }\partial\Omega,
\]
where $R$ is the scalar curvature of $(\Omega,g)$ and $h$ is the quadratic form in $k$ appearing in the stability operator for marginally outer trapped surfaces (see \cite{yau} for the precise formula). Standard elliptic theory yields a strictly positive first eigenfunction $f>0$.

\noindent Let $\Gamma\subset\partial\Omega$ be a boundary curve realizing $\text{Rad}(\Omega)$ up to $o(1)$, and let $\Sigma\subset\Omega$ be a spanning disk with $\partial\Sigma=\Gamma$. Define the weighted functional
\[
\mathcal L_f(\Sigma)
=
\int_{\Sigma} f \, d\mu_\Sigma
-
c\int_{\Omega_\Sigma} f \, d\mu_g,
\qquad c\in(0,c_*],
\]
where $\Omega_\Sigma$ is the region enclosed by $\Sigma$ and $\Gamma$. Using first and second variation formulas for $\mathcal L_f$, together with the strict boundary barrier \eqref{eq:Yau-gap-annals} and the distance–to–boundary foliation with explicit supersolution/subsolution profiles $\varphi(d)$, one obtains existence of a minimizing surface with $\partial\Omega$ acting as a strict barrier and derives the quantitative curvature–radius inequality of Schoen–Yau type. In particular, if $\text{Rad}(\Omega)<R_{\mathrm{crit}}(c_*)$, then no marginally outer trapped surface can be contained in $\Omega$.

\noindent Applying this criterion to the glued data $(\tilde g,\tilde k)$ on $\Omega=\mathcal M_{1}$ yields
\[
\min_{\partial\Omega}\Big(H(\tilde g)-|\tr_{\partial\Omega}\tilde k|\Big)\ge c_*>0,
\qquad
\text{Rad}(\Omega)\le R_*<R_{\mathrm{crit}}(c_*),
\]
and therefore $\mathcal M_{1}$ contains no MOTS on the initial slice $t=-a$. Since all subsequent smoothing and interface adjustments are performed by compactly supported deformations away from $\partial\Omega$ with size $O(a^{-3/2})$ in uniformly–local norms, both the boundary gap and the radius bound are stable under these operations.

\noindent It remains to ensure that the same nonexistence property holds on the full initial slice $\mathcal M_{-a}$. This is provided by the following proposition, which shows that the Schoen–Yau barrier–radius inequality persists across the glued regions and excludes trapped or marginally outer trapped surfaces on $\mathcal M_{-a}$.

\begin{proposition}[Absence of MOTS on the initial slice]
\label{prop:noMOTS-initial-slice}
Let $\mathcal M_{-a}=\mathcal{M}_{1}\cup \mathcal M_2\cup \mathcal M_3$ be the initial Cauchy slice constructed as follows:
(i) $\mathcal \mathcal{M}_{1}$ is the interior region with boundary sphere $S_{-a,0}=\partial\mathcal \mathcal{M}_{1}$ carrying the prescribed boundary geometry from the double-null interface;
(ii) $\mathcal{M}_2=\mathcal M_{-a}\cap D_{a,1}$ is the characteristic gluing region with induced data from the characteristic development $D_{a,1}$;
(iii) $\mathcal M_3=\mathcal M_{ext}$ is an exterior Kerr slice (in the domain of outer communication).
Assume that the boundary data on $S_{-a,0}$ satisfy the strict Yau--barrier inequality in the form of condition {\rm(c)} of Theorem~\ref{main1}, and that $\text{Rad}(\mathcal{M}_{1})$ is chosen by
\begin{equation}\label{eq:achoice}
 \text{Rad}(\mathcal{M}_{1})=\frac{3\pi}{4}(a-1)+O(a^{-1}),~a\gg1
\end{equation}
Moreover, let \((\Omega_a,g_a,K_a)\), \(\Omega_a=\mathcal M_1\), be the initial
interior region. Let the boundary of $\Omega$ $\Sigma$ and define
\[
        \kappa_\Sigma:=\operatorname{tr}_{\Sigma}K_a,
        \qquad
        \theta_\pm(\Sigma):=\kappa_\Sigma\pm H_\Sigma,
\]
where \(H_\Sigma\) is the mean curvature of \(\Sigma\subset(\Omega_a,g_a)\)
with respect to a chosen unit normal.
\noindent Assume that \(\Omega_a\) admits a strictly two-convex
exhaustion function \(\rho_a\in C^\infty(\overline{\Omega_a})\) such
that
\begin{equation}
\label{eq:two-convex-exhaustion}
        \Lambda_a
        :=
        \inf_{p\in\Omega_a}
        \inf_{\Pi\in \operatorname{Gr}_2(T_p\Omega_a)}
        \operatorname{tr}_{\Pi}\nabla^2_{g_a}\rho_a
        >0,
\end{equation}
and
\begin{equation}
\label{eq:K-small-barrier}
        2\|K_a\|_{L^\infty(\Omega_a,g_a)}
        \|\nabla\rho_a\|_{L^\infty(\Omega_a,g_a)}
        <
        \Lambda_a .
\end{equation}
Then \(\operatorname{int}(\Omega_a)\) contains neither a smooth closed
MOTS nor a smooth closed future trapped surface.
Then $\mathcal M_{-a}$ contains no closed marginally outer trapped surface and no closed trapped surface.
\end{proposition}

\begin{proof}
The subcriticality assumption (i.e., failure to meet the Yau inequality for the MOTS to exist in the interior) notes that
the Schoen-Yau sufficient condition for producing an interior MOTS is
not active on the initial slice. The actual exclusion of closed MOTS is
the following maximum-principle argument.

\noindent Suppose, first, that \(\Sigma\subset\operatorname{int}(\Omega_a)\) is a
smooth closed MOTS. Thus, for one choice of unit normal \(\nu\),
\[
        \theta_+(\Sigma)
        =
        H_\Sigma+\kappa_\Sigma
        =
        0.
\]
Let \(p\in\Sigma\) be a point where \(\rho_a|_\Sigma\) attains its
maximum. Then
\[
        \Delta_\Sigma\rho_a(p)\le0.
\]
On the other hand, at \(p\),
\[
        \Delta_\Sigma\rho_a
        =
        \operatorname{tr}_{T\Sigma}\nabla^2_{g_a}\rho_a
        +
        H_\Sigma\,\nu(\rho_a).
\]
Since \(H_\Sigma=-\kappa_\Sigma\), we obtain
\[
\begin{aligned}
        \Delta_\Sigma\rho_a(p)
        &=
        \operatorname{tr}_{T_p\Sigma}\nabla^2_{g_a}\rho_a
        -
        \kappa_\Sigma\,\nu(\rho_a)                                      \\
        &\ge
        \Lambda_a
        -
        |\kappa_\Sigma|\,|\nabla\rho_a|_{g_a}.
\end{aligned}
\]
Because
\[
        |\kappa_\Sigma|
        =
        |\operatorname{tr}_{\Sigma}K_a|
        \le
        2|K_a|_{g_a},
\]
condition \eqref{eq:K-small-barrier} gives
\[
        \Delta_\Sigma\rho_a(p)>0,
\]
contradicting \(\Delta_\Sigma\rho_a(p)\le0\). Hence no closed MOTS is
contained in \(\operatorname{int}(\Omega_a)\).

\noindent Now suppose that \(\Sigma\subset\operatorname{int}(\Omega_a)\) is a
closed future trapped surface. Then, for some unit normal \(\nu\),
\[
        \theta_+
        =
        \kappa_\Sigma+H_\Sigma
        <0,
        \qquad
        \theta_-
        =
        \kappa_\Sigma-H_\Sigma
        <0.
\]
These two inequalities imply
\[
        |H_\Sigma|<-\kappa_\Sigma\le |\kappa_\Sigma|.
\]
At a maximum point \(p\) of \(\rho_a|_\Sigma\),
\[
\begin{aligned}
        \Delta_\Sigma\rho_a(p)
        &=
        \operatorname{tr}_{T_p\Sigma}\nabla^2_{g_a}\rho_a
        +
        H_\Sigma\,\nu(\rho_a)                                      \\
        &\ge
        \Lambda_a
        -
        |H_\Sigma|\,|\nabla\rho_a|_{g_a}                            \\
        &>
        \Lambda_a
        -
        |\kappa_\Sigma|\,|\nabla\rho_a|_{g_a}                       \\
        &\ge
        \Lambda_a
        -
        2\|K_a\|_{L^\infty}
        \|\nabla\rho_a\|_{L^\infty}
        >
        0,
\end{aligned}
\]
again contradicting \(\Delta_\Sigma\rho_a(p)\le0\). Therefore
\(\operatorname{int}(\Omega_a)\) contains no closed future trapped
surface.

\noindent With the choice \eqref{eq:achoice}, a direct expansion gives
\begin{equation}\label{eq:rad-expand-1}
\frac{3\pi}{2\text{Rad}(\mathcal \mathcal{M}_{1})}-\frac{2}{a}
=\frac{2}{ a^2}+O(a^{-3}),
\end{equation}
and likewise
\begin{equation}\label{eq:rad-expand-2}
\frac{3\pi}{2\text{Rad}(\mathcal \mathcal{M}_{1})\sqrt{1+\frac{1}{10a}}}-\frac{2}{a}
=\frac{8}{5 a^2}+O(a^{-3}).
\end{equation}
Under condition {\rm(c)} of Theorem~\ref{main1}, i.e.
\begin{equation}\label{eq:condc}
\frac{17}{10 a^{2}}
<
\frac{9}{10a}\int_{u_{\infty}}^{-a}|u'||\underline{\hat\chi}|^{2}(u',\epsilon)\,du'
+\frac{9}{10a}\int_{u_{\infty}}^{-a}\frac{1}{|u'|^{2}}\int_{u_{\infty}}^{u'}|u''|^{2}|\underline{\hat\chi}|^{2}(u'',\epsilon)\,du''\,du'
<
\frac{19}{10 a^{2}},
\end{equation}
the computation carried out in the proof of Theorem~\ref{main1} yields the quantitative defect
\begin{equation}\label{eq:yau-defect}
\Big(H-|\kappa|\Big)\big|_{S_{-a,0}}
-\frac{3\pi}{2\text{Rad}(\mathcal \mathcal{M}_{1})}
\le
-\frac{1}{10 a^2}+O(a^{-5/2}).
\end{equation}
In particular, for $a\gg1$ the right-hand side is strictly negative. Hence $S_{-a,0}$ violates the Yau criterion (cf.\ \cite{yau} and the formulation \ref{motivation}) with a strict margin. By the Yau barrier theorem, $\mathcal{M}_{1}$ contains no closed MOTS (and thus no closed trapped surface).

\medskip
\noindent By construction, $\mathcal M_3$ is a Kerr slice lying in the domain of outer communication; in particular it contains no closed trapped surface and no closed MOTS.

\medskip
\noindent Let $S\subset \mathcal M_2$ be any embedded $2$-sphere arising as a section of the outgoing null foliation in $D_{a,1}$. Along each incoming null generator, $\tr\chi$ satisfies the Raychaudhuri transport equation
\begin{equation}\label{eq:ray-trchi}
\nabla_{3}\tr\chi+\frac{1}{2}\tr\underline{\chi}\,\tr\chi
=
2\underline{\omega}\,\tr\chi+2\div\eta+2|\eta|^{2}+2\rho-\hat{\chi}\cdot \underline{\hat{\chi}} .
\end{equation}
Using the bootstrap bounds available in $D_{a,1}$ (in particular the standard estimates on $\underline{\omega},\eta,\rho,\hat\chi,\underline{\hat\chi}$ inherited from the characteristic construction) and Grönwall along the $\nabla_3$--flow, one obtains the quantitative perturbative estimate
\begin{equation}\label{eq:trchi-approx}
\Big|\tr\chi-\frac{2}{|u|}\Big|
\lesssim a^{-1/2}|u|^{-1}
\qquad\text{throughout }\mathcal M_2 .
\end{equation}
On $\mathcal M_2$ one has $|u|\sim a$ (since $\mathcal M_2=\mathcal M_{-a}\cap D_{a,1}$ is localized near the interface), hence \eqref{eq:trchi-approx} improves to
\begin{equation}\label{eq:trchi-approx-a}
\Big|\tr\chi-\frac{2}{|u|}\Big|
\lesssim a^{-3/2},
\qquad\text{on }\mathcal M_2 .
\end{equation}
Therefore, for $a\gg1$, $0>u\in[u_{\infty},-a]$
\[
\tr\chi \ge \frac{2}{|u|}-C a^{-3/2} \ge c a^{-1}>0
\qquad\text{on every section }S\subset\mathcal M_2,
\]
with $c>0$ universal. In particular, no $S\subset\mathcal M_2$ can satisfy $\tr\chi=0$, so $\mathcal M_2$ contains no closed MOTS; moreover, $\tr\chi$ cannot be negative on such a closed section, so there are no closed trapped surfaces in $\mathcal M_2$. This completes the proof. 
\end{proof}

\begin{remark}
In the constructed data, one chooses the fill-in so that
\[
        \Lambda_a\ge \Lambda_0>0,
        \qquad
        \|\nabla\rho_a\|_{L^\infty}\le C_\rho a,
\]
while
\[
        \|K_a\|_{L^\infty}\le C_K a^{-3/2}.
\]
Hence
\[
        2\|K_a\|_{L^\infty}\|\nabla\rho_a\|_{L^\infty}
        \le
        2C_KC_\rho a^{-1/2}
        <
        \Lambda_0
\]
for \(a\gg1\). Thus the barrier condition
\eqref{eq:K-small-barrier} holds. The Yau-subcriticality condition
\eqref{eq:yau-defect} is imposed separately through the
characteristic boundary data and the radius calibration. It is not, by
itself, a nonexistence theorem; the nonexistence of initial MOTS comes
from the two-convex exhaustion argument above.
\end{remark}
\noindent Now, in the construction of the interior fill-in data, one needs to make sure that such a two convex function $\rho_{a}$ exists when the Yau inequality is in the sub-critical range and such data indeed forms an open set. We prove this is the next two lemmas.  

\begin{proposition}
\label{lem:two-convex-exhaustion-fill-in}
Let \((\Omega_\lambda,\Gamma_\lambda)\), \(\lambda\in I\), be the
rescaled model fill-in family. Assume that there exists
\[
        \rho_\lambda\in C^\infty(\overline{\Omega_\lambda})
\]
and constants \(c_0,C_0>0\), independent of \(\lambda\), such that
\[
        \inf_{p\in\Omega_\lambda}
        \inf_{\Pi\in {\rm Gr}_2(T_p\Omega_\lambda)}
        \operatorname{tr}_{\Pi}\nabla^2_{\Gamma_\lambda}\rho_\lambda
        \ge 2c_0,
\]
and
\[
        |\nabla^{\Gamma_\lambda}\rho_\lambda|_{\Gamma_\lambda}
        \le C_0 .
\]
Let
\[
        D_a:\mathcal M_{1,\lambda}\to\Omega_\lambda,
        \qquad
        D_a(x)=x/a,
\]
and set the pull-back
\[
        \gamma^{\rm fill}_{a,\lambda}:=a^2D_a^*\Gamma_\lambda,
        \qquad
        \rho_{a,\lambda}:=a^2D_a^*\rho_\lambda .
\]
Then
\[
        \inf_{p\in\mathcal M_{1,\lambda}}
        \inf_{\Pi\in{\rm Gr}_2(T_p\mathcal M_{1,\lambda})}
        \operatorname{tr}_{\Pi}
        \nabla^2_{\gamma^{\rm fill}_{a,\lambda}}\rho_{a,\lambda}
        \ge 2c_0,
\]
and
\[
        |\nabla^{\gamma^{\rm fill}_{a,\lambda}}\rho_{a,\lambda}|
        _{\gamma^{\rm fill}_{a,\lambda}}
        \le C_0 a .
\]
Moreover, if the final metric \(g_a\) satisfies
\[
        \|g_a-\gamma^{\rm fill}_{a,\lambda}\|_{C^1_{\rm ul}}
        \le \varepsilon_0
\]
with \(\varepsilon_0>0\) sufficiently small, then
\[
        \inf_{p\in\mathcal M_{1,\lambda}}
        \inf_{\Pi\in{\rm Gr}_2(T_p\mathcal M_{1,\lambda})}
        \operatorname{tr}_{\Pi}\nabla^2_{g_a}\rho_{a,\lambda}
        \ge c_0,
\]
and
\[
        |\nabla^{g_a}\rho_{a,\lambda}|_{g_a}\le C a .
\]
\end{proposition}

\begin{proof}
The first two estimates are immediate from scaling. Indeed, if
\(y=D_a(x)=x/a\), then the pull back
\[
        \gamma^{\rm fill}_{a,\lambda}=a^2D_a^*\Gamma_\lambda,
        \qquad
        \rho_{a,\lambda}=a^2D_a^*\rho_\lambda .
\]
For tangent vectors \(X,Y\in T_x\mathcal M_{1,\lambda}\),
\[
        \nabla^2_{\gamma^{\rm fill}_{a,\lambda}}\rho_{a,\lambda}(X,Y)
        =
        \nabla^2_{\Gamma_\lambda}\rho_\lambda(dD_aX,dD_aY)\,a^2.
\]
Since \(dD_aX=a^{-1}X\), the factor \(a^2\) cancels the two factors
\(a^{-1}\). Thus the Hessian is scale-invariant:
\[
        \nabla^2_{\gamma^{\rm fill}_{a,\lambda}}\rho_{a,\lambda}
        =
        D_a^*(\nabla^2_{\Gamma_\lambda}\rho_\lambda).
\]
Therefore the two-convex lower bound is unchanged.

\noindent For the gradient,
\[
        |\nabla^{\gamma^{\rm fill}_{a,\lambda}}\rho_{a,\lambda}|_{\gamma^{\rm fill}_{a,\lambda}}
        =
        a\,|\nabla^{\Gamma_\lambda}\rho_\lambda|_{\Gamma_\lambda}
        \le C_0a.
\]

\noindent Finally, the Hessian depends continuously on the metric in the \(C^1\)
topology. Hence if \(g_a\) is \(C^1_{\rm ul}\)-close to
\(\gamma^{\rm fill}_{a,\lambda}\), the two-plane trace of the Hessian
changes by at most \(C\varepsilon_0\). Taking \(\varepsilon_0\) small
enough gives the lower bound \(c_0\). The gradient estimate is stable
in the same way.
\end{proof}

\begin{remark}
One may impose the two-convexity at the level of the rescaled fill-in.
For example, choose the model metric in the form
\[
        \Gamma_\lambda=d\rho^2+G_{\lambda,\rho}
\]
on a region foliated by level sets of \(\rho\), and require
\[
        \partial_\rho G_{\lambda,\rho}
        \ge 2c_0\,G_{\lambda,\rho}
\]
as quadratic forms. Then
\[
        \nabla^2_{\Gamma_\lambda}\rho(\partial_\rho,\cdot)=0,
        \qquad
        \nabla^2_{\Gamma_\lambda}\rho(X,Y)
        =
        \frac12\partial_\rho G_{\lambda,\rho}(X,Y)
\]
for \(X,Y\) tangent to the level sets. Hence every two-plane has
positive Hessian trace bounded below by \(c_0\). After scaling,
\[
        \rho_a=a^2\rho(x/a)
\]
is the required two-convex exhaustion on the physical fill-in.
\end{remark}

\noindent As mentioned before, in the next lemma, we prove a vital result that the data that are strictly Yau-subcritical and admit an admissible two-convex barrier is actually open. 

\begin{proposition}
\label{subcritical-barrier-open}
Let \((\Omega,g,K)\) be a smooth compact initial data region with smooth
boundary. Set
\[
        c(g,K)
        :=
        \inf_{\partial\Omega}
        \left(
        H_{\partial\Omega}(g)
        -
        \left|
        \operatorname{tr}_{\partial\Omega}K
        \right|
        \right),
        \qquad
        \mathcal Y(g)
        :=
        \frac{3\pi}{2\operatorname{Rad}_g(\Omega)} .
\]
Let \(\rho\in C^\infty(\overline\Omega)\), and define
\[
        \Lambda_\rho(g)
        :=
        \inf_{p\in\Omega}
        \inf_{\Pi\in {\rm Gr}_2(T_p\Omega)}
        \operatorname{tr}_{\Pi}\nabla_g^2\rho .
\]
Assume that, for some \(\mu>0\),
\begin{equation}
\label{eq:strict-subcritical-margin}
        c(g,K)\le \mathcal Y(g)-\mu,
\end{equation}
i.e., the Yau strict-subcritical range is achieved
and
\begin{equation}
\label{eq:strict-barrier-margin}
        \Lambda_\rho(g)
        -
        2\|K\|_{L^\infty(\Omega,g)}
        \|\nabla^g\rho\|_{L^\infty(\Omega,g)}
        \ge \mu .
\end{equation}
Then there exists \(\varepsilon>0\) such that, whenever
\[
        \|\widetilde g-g\|_{C^2(\overline\Omega)}
        +
        \|\widetilde K-K\|_{C^0(\overline\Omega)}
        <\varepsilon,
\]
one has
\[
        c(\widetilde g,\widetilde K)
        \le
        \mathcal Y(\widetilde g)-\frac{\mu}{2},
\]
and
\[
        \Lambda_\rho(\widetilde g)
        -
        2\|\widetilde K\|_{L^\infty(\Omega,\widetilde g)}
        \|\nabla^{\widetilde g}\rho\|_{L^\infty(\Omega,\widetilde g)}
        \ge
        \frac{\mu}{2}.
\]
In particular, the class of data which are strictly Schoen--Yau
subcritical and admit an admissible two-convex barrier is open.

\noindent Moreover, if the Schoen--Yau supercritical inequality holds on
\((\Omega,g,K)\) and the hypotheses of the Schoen--Yau existence theorem
are satisfied, then no \(\rho\) can satisfy
\[
        \Lambda_\rho(g)
        >
        2\|K\|_{L^\infty(\Omega,g)}
        \|\nabla^g\rho\|_{L^\infty(\Omega,g)} .
\]
\end{proposition}

\begin{proof}
The quantities
\[
        H_{\partial\Omega}(g),
        \qquad
        \operatorname{tr}_{\partial\Omega}K,
        \qquad
        \Lambda_\rho(g),
        \qquad
        \|K\|_{L^\infty(\Omega,g)},
        \qquad
        \|\nabla^g\rho\|_{L^\infty(\Omega,g)}
\]
depend continuously on \((g,K)\) in the \(C^1\times C^0\) topology.
Hence the barrier inequality \eqref{eq:strict-barrier-margin} persists
with margin \(\mu/2\) for all sufficiently small perturbations.

\noindent It remains only to note the continuity of the Schoen--Yau radius. If
\(\widetilde g\) is \(C^0\)-close to \(g\), then for \(\varepsilon\ll1\),
\[
        (1-\varepsilon)g\le \widetilde g\le (1+\varepsilon)g .
\]
Thus every length, and hence every radius quantity entering
\(\operatorname{Rad}\), changes by a multiplicative factor
\(1+O(\varepsilon)\). Consequently
\[
        \operatorname{Rad}_{\widetilde g}(\Omega)
        =
        \operatorname{Rad}_{g}(\Omega)+o(1),
\]
and therefore
\[
        \mathcal Y(\widetilde g)=\mathcal Y(g)+o(1).
\]
Together with the continuity of \(c(g,K)\), the strict inequality
\eqref{eq:strict-subcritical-margin} also persists with margin
\(\mu/2\).

\noindent Finally suppose, by contradiction, that the Schoen--Yau supercritical
inequality holds and that an admissible two-convex barrier \(\rho\)
exists. By the Schoen--Yau theorem there is a closed MOTS
\(\Sigma\subset\operatorname{int}\Omega\). Choose the normal \(\nu\) so
that
\[
        H_\Sigma+\operatorname{tr}_{\Sigma}K=0 .
\]
Let \(p\in\Sigma\) be a maximum point of \(\rho|_\Sigma\). Then
\[
        \Delta_\Sigma\rho(p)\le0.
\]
On the other hand,
\[
        \Delta_\Sigma\rho
        =
        \operatorname{tr}_{T\Sigma}\nabla_g^2\rho
        +
        H_\Sigma\,\nu(\rho).
\]
Using \(H_\Sigma=-\operatorname{tr}_{\Sigma}K\), we obtain
\[
\begin{aligned}
        \Delta_\Sigma\rho(p)
        &\ge
        \Lambda_\rho(g)
        -
        |\operatorname{tr}_{\Sigma}K|\,
        |\nabla^g\rho|_g                                      \\
        &\ge
        \Lambda_\rho(g)
        -
        2\|K\|_{L^\infty(\Omega,g)}
        \|\nabla^g\rho\|_{L^\infty(\Omega,g)}
        >
        0,
\end{aligned}
\]
contradicting \(\Delta_\Sigma\rho(p)\le0\). Hence no admissible
two-convex barrier can exist in the supercritical regime in which the
Schoen--Yau theorem produces a MOTS.
\end{proof}

\noindent 

\section{Control of the Second Fundamental Form and the H-radius}
\label{radius_estimate}
\noindent This completes the central part of this work. In particular, we prove that there exists an open set of initial data that is almost compatible with the scaling of \cite{AnThesis} (with some differences handled by new scaling) and is capable of producing a semi-global characteristic development of the vacuum Einstein's equations. Simultaneously, we also obtain concentrated generalized Yau mean curvature $c$ (as defined in Theorem \ref{motivation}) along the null hypersurface $u=-a$. The idea then is to show that the radius of the boundary $S_{-a,\epsilon}$ is large enough (compared to that of $S_{-a,0}$) so that the condition stated in the theorem \ref{motivation} is met. The last part involves constructing the data on the Cauchy slice $\mathcal{M}_{t=-a}$ and evolving it for a short enough time $\epsilon>0$ (see the diagram \ref{fig:1} for clarity). The motivation behind solving the characteristic initial value problem is to concentrate the mean curvature along the incoming direction by means of large conjugate shear $\chibarhat$ concentrated along the initial null-hypersurface $\Hbar_{0}$ starting from dispersed data at past null-infinity.    

\subsection{Norms}
In this section, we control the interior geometry. In particular, we prove uniform estimates for the spacetime Weyl curvature through the use of Bel-Robinson energy and its higher-order analogue. We provide the initial data on the interior which is smoothly matched with the data on the exterior domain. Then we prove the interior development $J^{+}(\mathcal{M}_{1})$ up to $O(1)$ time where the spacetime geometry is uniformly controlled. In practice, we consider a slightly bigger domain $\mathcal{M}^{1/a}_{int}:=\mathcal{M}_{1}\cup $

\noindent Let $(\mathcal M^{3+1},\mathbf g)$ be a smooth, time-oriented Lorentzian manifold
with Levi-Civita connection $\mathbf D$ and volume form $dV$.
Assume \emph{vacuum}:
\begin{equation}\label{eq:vacuum}
\text{Ric}(\mathbf g)=0.
\end{equation}
Let $W$ denote the Weyl tensor of $\mathbf g$; under \eqref{eq:vacuum}, $W=\text{Riem}(\mathbf g)$ and
the vacuum Bianchi identities are
\begin{equation}\label{eq:bianchi}
\mathbf D^\alpha W_{\alpha\beta\gamma\delta}=0,
\qquad
\mathbf D^\alpha {}^\star W_{\alpha\beta\gamma\delta}=0.
\end{equation}

\noindent Fix $p\in\mathcal M$ and let $\mathcal N^-(p)$ be the past null cone from $p$.
For $0<\tau\le 1$ define the cone domain
\begin{equation}\label{eq:Dtau}
\mathcal D_\tau(p):=J^-(p)\cap J^+(\Sigma_{t(p)-\tau}),
\qquad
\partial\mathcal D_\tau(p)=\Sigma_{t(p)-\tau}\cup\mathcal N^-_\tau(p),
\end{equation}
where $\Sigma_{t(p)-\tau}$ is a spacelike cut of the cone and $\mathcal N^-_\tau(p)$ is the portion
of $\mathcal N^-(p)$ between $\Sigma_{t(p)-\tau}$ and $p$.

\noindent On $\mathcal N^-_\tau(p)$ fix a null frame $(e_3,e_4,e_A)$, $A=1,2$, with
\begin{equation}\label{eq:nullframe}
e_4=L \ \text{tangent to generators of }\mathcal N^-(p),\qquad
\mathbf g(e_3,e_4)=-2,\qquad
\mathbf g(e_3,e_3)=\mathbf g(e_4,e_4)=0,\qquad
e_A\in TS,
\end{equation}
and define the canonical timelike vector
\begin{equation}\label{eq:Tcanon}
T:=\frac12(e_3+e_4).
\end{equation}
Let $n$ denote the future unit normal to $\Sigma_{t(p)-\tau}$ and $d\mu_{\Sigma}$ its induced volume form.
Let $d\mu_{\mathcal N}$ denote the induced null measure on $\mathcal N^-_\tau(p)$ (any fixed normalization
consistent with \eqref{eq:nullframe}; the identities below are independent of the parametrization).

\subsection{Estimates on the metric}
Under the bootstrap assumption, we will prove the uniform control of the metric on the time interval $[-a,-a+\frac{3}{4}]$ in this section. This is necessary since we will define the norm with respect to the dynamical metric $g$. 
\begin{proposition}
\label{apriori}
  Let $[-a,-a+1]\times \mathcal{M}$ be a globally hyperbolic spacetime foliated by constant time slices $\mathcal{M}_{T=\text{constant}}$ and let $g(x,T)$ be the Riemannian metric on $\mathcal{M}_{T}$. Denote $\lambda_{1}(T)$ and $\lambda_{2}(T)$ by the maximal and minimal eigenvalues, respectively, of the symmetric bilinear form $g(x,T)$ with respect to the initial metric $g_{-a}$. Then, under the bootstrap assumption (\ref{eq:boot1}-\ref{eq:boot3}), we have the estimate 
  \begin{align}
   |\lambda_{1}(T)-1|+|1-\lambda_{2}(T)|\lesssim a^{-\frac{3}{2}+\delta} ~on~[-a,-a+1]  
  \end{align}
\end{proposition}
\begin{proof}
    Fix a point $x \in \mathcal{M}$, and consider the symmetric bilinear form $g(T,x)$ on $T_x M$. Define the maximal and minimal eigenvalues of $g(T,x)$ with respect to $g(-a,x)$ at the point $x$ as
\begin{align}
\label{eq:eigdef}
\lambda_{1}(x,T) := \sup_{0 \neq Y \in T_x M} \frac{g(T,x)(Y,Y)}{g(-a,x)(Y,Y)}, \quad
\lambda_{2}(x,T) := \inf_{0 \neq Y \in T_x M} \frac{g(T,x)(Y,Y)}{g(-a,x)(Y,Y)}.
\end{align}
We aim to estimate $|\lambda_{1}(x,T)-1|$ and $|1-\lambda_{2}(x,T)|$ uniformly in $x$ and $T \in [-a,-a+1]$.
Recall the evolution equation for the metric
\[
\partial_{t}g_{ij}=-2Nk_{ij}+(\mathscr{L}_{X}g)_{ij},
\]
where $N$ is the lapse function, $k_{ij}$ is the second fundamental form of the constant time slice, and $X$ is the shift vector field. For fixed tangent vector $Y\in T_{x}M$, one observes 
\[
\partial_{t}g(Y,Y)=-2Nk(Y,Y)+(\mathscr{L}_{X}g)(Y,Y).
\]
To estimate the right-hand side, we begin with the bound 
\[
|k(Y,Y)|\leq \sqrt{g(-a,x)^{ik} g(-a,x)^{jl} k_{ij} k_{kl}} \cdot g(-a,x)(Y,Y)
\]
and therefore, under the bootstrap assumption (\ref{eq:boot1}-\ref{eq:boot3})
\[
|k(Y,Y)|\lesssim a^{-\frac{3}{2}+\delta}g(-a,x)(Y,Y)
\]
and 
\[
(\mathscr{L}_{X}g)(Y,Y)\lesssim a^{-2+\delta}g(-a,x)(Y,Y)
\]
and so 
\[
|g(T,x)(Y,Y) - g(-a,x)(Y,Y)| \lesssim a^{-\frac{3}{2}+\delta} g(-a,x)(Y,Y).
\]
To improve this to an estimate of relative deviation, we write
\[
\left| \frac{g(T,x)(Y,Y)}{g(-a,x)(Y,Y)} - 1 \right| \lesssim a^{-\frac{3}{2}+\delta}~\forall T\in [-a,-a+1].
\]
This completes the proof.
\end{proof} 

\noindent Now let us focus on obtaining interior estimates. 
For any $(0,4)$-tensor $U$ with the algebraic symmetries of a Weyl tensor, define the Bel-Robinson tensor
\begin{equation}\label{eq:BRdef}
Q_{\alpha\beta\gamma\delta}[U]
:=
U_{\alpha\mu\gamma\nu}U_{\beta}{}^{\mu}{}_{\delta}{}^{\nu}
+
{}^\star U_{\alpha\mu\gamma\nu}\,{}^\star U_{\beta}{}^{\mu}{}_{\delta}{}^{\nu}.
\end{equation}

\paragraph{Null components.}
For $U=W$, recall the definition of standard null curvature components (relative to \eqref{eq:nullframe}):
\begin{align}
\alpha_{AB}&:=W(e_4,e_A,e_4,e_B),\label{eq:alpha_def}\\
\beta_A&:=\tfrac12 W(e_4,e_A,e_4,e_3),\label{eq:beta_def}\\
\rho&:=\tfrac14 W(e_4,e_3,e_4,e_3),\qquad
\sigma:=\tfrac14 {}^\star W(e_4,e_3,e_4,e_3),\label{eq:rhosigma_def}\\
\underline\beta_A&:=\tfrac12 W(e_3,e_A,e_3,e_4),\qquad
\underline\alpha_{AB}:=W(e_3,e_A,e_3,e_B).\label{eq:under_def}
\end{align}
Also recall the definition of the electric and magnetic parts (adapted to the Cauchy slice $T=u+\ubar$) of the Weyl curvature 
\[
E:=W(T,\cdot,T,\cdot),~B:=^{*}W(T,\cdot,T,\cdot)
\]
Then (by direct contraction of \eqref{eq:BRdef} with \eqref{eq:Tcanon} and \eqref{eq:nullframe}) one has
\begin{equation}\label{eq:flux_explicit}
Q[W](e_4,T,T,T)
=
\frac14|\alpha|^2+\frac12|\beta|^2+\frac12(\rho^2+\sigma^2)
+\frac12|\underline\beta|^2,
\end{equation}
and in terms of the electric and magnetic fields 
\begin{align}
    |E|^{2}+|B|^{2}\lesssim Q[W](T,T,T,T)\lesssim |E|^{2}+|B|^{2}
\end{align}
Let us now define the locally uniform Sobolev norms for the interior $\mathcal{M}^{1/a}_{int}$. Fix a smooth cutoff $\chi\in C_c^\infty(B_2(0))$ with $\chi\equiv1$ on $B_1(0)$, and for each
$y\in\mathcal{M}^{1/a}_{int}$ define $\chi_y(x):=\chi(x-y)$. For an integer $s$ sufficiently large, set
\[
\|u(t)\|_{H^s_{\mathrm{ul}}(\mathcal{M}^{1/a}_{int}))}
:=\sup_{y\in\mathcal{M}^{1/a}_{int}}\ \|\chi_y\,u(t)\|_{H^s(\mathcal{M}^{1/a}_{int})}
\]
(and similarly for $L^2_{\mathrm{ul}},\,L^\infty_{\mathrm{ul}}$). 
\begin{proposition}[Propagation of interior estimates up to unit time]
\label{prop:interior-propagation}
Fix integers $s,N$ sufficiently large. Let $(\mathcal{M},g)$ be a smooth vacuum spacetime written in
spacetime harmonic gauge
\[
\Box_{g}x^{\mu}=0,
\]
so that the Einstein vacuum equations reduce to a quasilinear wave system
\[
\Box_{g} g_{\mu\nu}=Q_{\mu\nu}(\partial g,\partial g).
\]
Let $\mathcal{M}_{-a}$ be a Cauchy hypersurface with induced data $(g,k)$ satisfying the constraint equations,
and suppose the metric is expressed in ADM form
\[
g=-N^{2}dt^{2}+g_{ij}(dx^{i}+X^{i}dt)(dx^{j}+X^{j}dt).
\]

\noindent Assume that $\mathcal{M}_{-a}$ admits a decomposition
\[
\mathcal{M}_{-a}=\mathcal{M}^{1/a}_{\mathrm{int}}\cup \mathcal{M}^{'}_{\mathrm{ext}},
\]
where $\mathcal{M}^{1/a}_{\mathrm{int}}$ is a connected interior region of $H-$radius $O(a)$ ($\approx \frac{3\pi a}{4}$), whose future domain of
dependence intersects the characteristic development domain $D_{a,1}$ constructed in
Section~\ref{semiglobal}. Assume that the induced data on the interface
\[
\Sigma_{\mathrm{int}}:=\mathcal{M}_{-a}\cap D_{a,1}
\]
agree with the data obtained from the characteristic development, to order $H^{s}$.

\medskip

\noindent Define the uniformly local Sobolev norm by
\[
\|F\|_{H^{m}_{\mathrm{ul}}(\mathcal{M}^{1/a}_{\mathrm{int}})}
:=
\sup_{p\in \mathcal{M}^{1/a}_{\mathrm{int}}}
\|F\|_{H^{m}(B_{1}(p))},
\]
where $B_{1}(p)$ denotes the geodesic unit ball (with respect to $g_{ij}(-a)$).

\medskip
\noindent
Assume that on $\widetilde{\mathcal{M}}_{1}\subset \mathcal{M}^{1/a}_{\mathrm{int}}$ the initial data satisfy the smallness bounds
\begin{align}
 \|\partial^{k+1} g(-a)\|_{L^\infty(\widetilde{\mathcal{M}}_{1})}\ \lesssim\ a^{-k-1},~
\|\partial^{k}k(-a)\|_{L^\infty(\widetilde{\mathcal{M}}_{1})}\ \lesssim\ a^{-k-3/2},\\\nonumber
||\partial^{k}(N-1)||_{L^{\infty}(\widetilde{\mathcal{M}}_{1})} \lesssim\ a^{-k-3/2},~ ||\partial^{k}X||_{L^{\infty}(\widetilde{\mathcal{M}}_{1})} \lesssim\ a^{-k-3/2},\\
\|\partial^{k+1} g(-a)\|_{H^{s-1}_{ul}(\widetilde{\mathcal{M}}_{1})}\ \lesssim\ a^{-k-1},~
\|\partial^{k}k(-a)\|_{H^{s-1}_{ul}(\widetilde{\mathcal{M}}_{1})}\ \lesssim\ a^{-k-3/2},~k\geq 0,
\end{align}
where on the collar $\mathcal{M}^{1/a}_{int}\setminus \widetilde{\mathcal{M}}_{1}$ of thickness $O(\widetilde{\epsilon} a)$ for $\widetilde{\epsilon}\in [\frac{1}{1000},\frac{1}{100}]$, the estimate reads 
\begin{align}
\label{eq:k1}
\|\partial^{k}k\|_{L^\infty(\mathcal{M}^{1/a}_{int}\setminus \widetilde{\mathcal{M}}_{1})}
\le C a^{-\frac{3}{2}-k},\\
||\partial^{k}(N-1)||_{L^{\infty}(\mathcal{M}^{1/a}_{int}\setminus \widetilde{\mathcal{M}}_{1})} \lesssim\ a^{-k-3/2},~ ||\partial^{k}X||_{L^{\infty}(\mathcal{M}^{1/a}_{int}\setminus \widetilde{\mathcal{M}}_{1})} \lesssim\ a^{-k-3/2},\\
\label{eq:k2}
\|\partial^{k}k\|_{H^{s-1}_{\mathrm{ul}}(\mathcal{M}^{1/a}_{int}\setminus \widetilde{\mathcal{M}}_{1})}
\le C a^{-k-3/2},~k\geq 0,
\end{align}
and $\partial g$ smoothly interpolates between the data on $\widetilde{\mathcal{M}}_{1}$ and the data on $\Sigma_{int}=\mathcal{M}_{-1}\cap D_{a,1}$.
Assume moreover that the harmonic gauge constraints and the Einstein constraint equations hold initially.

\medskip

\noindent Then for every fixed $\delta\in(0,\tfrac12)$ there exists $a_{0}=a_{0}(\delta,s,C)$ such that for all
$a\ge a_{0}$ the corresponding spacetime solution exists on the slab
\[
\mathcal{R}_{a}:=
J^{+}(\mathcal{M}^{1/a}_{\mathrm{int}})
\cap \{-a \le t \le -a+1\},
\]
and throughout $\mathcal{R}_{a}$ the following bounds hold:
\begin{align}
\|k\|_{L^\infty(\mathcal{R}_{a})}
&\le C\, a^{-3/2+\delta}, \\\nonumber
\|X\|_{L^\infty(\mathcal{R}_{a})}
&\le C\, a^{-3/2+\delta}, \\\nonumber
\|N-1\|_{L^\infty(\mathcal{R}_{a})}
&\le C\, a^{-3/2+\delta}.
\end{align}
The constant $C$ is a numerical and
independent of $a$.
\end{proposition}

\begin{proof}
\noindent
We begin by recording the gauge--invariant curvature bounds which will be used, later, as an input in the
spacetime harmonic gauge estimates for $(g_{ij},k_{ij},N,X)$.  Although the local Cauchy development on the slab
$\{-a\le t\le -a+1\}$ can be obtained directly from the reduced Einstein system in harmonic gauge, we include the
Bel--Robinson argument both for completeness and because it yields a canonical, coordinate--free control of the
Weyl tensor on truncated null cones.

\medskip

\noindent
Throughout the region
\[
\mathcal{R}_{a}:=
J^{+}(\mathcal{M}^{1/a}_{\mathrm{int}})
\cap \{-a \le t \le -a+1\},
\]
we assume the bootstrap bounds
\begin{align}
\label{eq:boot1}
\|k\|_{L^\infty(\mathcal{R}_{a})}+\|\partial g\|_{L^\infty(\mathcal{R}_{a})}
&\le C\, a^{-1+\delta}, \\
\label{eq:boot2}
\|X\|_{L^\infty(\mathcal{R}_{a})}
&\le C\, a^{-3/2+\delta}, \\
\label{eq:boot3}
\|N-1\|_{L^\infty(\mathcal{R}_{a})}
&\le C\, a^{-3/2+\delta}.
\end{align}
The bootstrap will be closed by proving the corresponding improvements with the constant $\tfrac12 C$ on the
right-hand side (after choosing $a\ge a_{0}(\delta)$ sufficiently large).

\medskip

\noindent
Fix a point $p\in \mathcal{R}_{a}$ and a parameter $\tau\in(0,\tau_{*}]$, where $\tau_{*}\le 1$ is chosen so that
the causal past $J^{-}(p)\cap\{t(p)-\tau\le t\le t(p)\}$ remains in $\mathcal{R}_{a}$.  Let $D_{\tau}(p)$
denote the truncated past domain
\[
D_{\tau}(p):=J^{-}(p)\cap\{t(p)-\tau\le t\le t(p)\},
\]
with boundary decomposition
\[
\partial D_{\tau}(p)
=\Sigma_{t(p)}(p)\ \cup\ \Sigma_{t(p)-\tau}(p)\ \cup\ \mathcal{N}^{-}_{\tau}(p),
\]
where $\Sigma_{t}(p):=\Sigma_{t}\cap J^{-}(p)$ and $\mathcal{N}^{-}_{\tau}(p)$ is the portion of the past null cone
emanating from $p$ between $t=t(p)-\tau$ and $t=t(p)$.

\medskip

\noindent
Let $T=\partial_{t}$ be the time--translation vector field associated to the ADM foliation, and let $n$ be the
future unit normal to $\Sigma_{t}$.  The deformation tensor of $T$ is
\[
\pi_{T\,\alpha\beta}:=\mathbf{D}_{\alpha}T_{\beta}+\mathbf{D}_{\beta}T_{\alpha}.
\]
In ADM variables one has the schematic identity
\begin{equation}
\label{eq:def-tensor-est}
|\pi_{T}|\ \lesssim\ |k|+|\nabla N|+|\nabla X|,
\end{equation}
and hence, under \eqref{eq:boot1}--\eqref{eq:boot3} together with the corresponding a priori control on
$\nabla N$ and $\nabla X$ (proved later from the harmonic gauge system),
\begin{equation}
\label{eq:piT-bootstrap}
\|\pi_{T}\|_{L^{\infty}(\mathcal{R}_{a})}\ \le\ C\,a^{-3/2+\delta}.
\end{equation}

\medskip

\noindent
Let $W$ denote the spacetime Weyl tensor.  The Bel--Robinson tensor $Q[W]$ is defined by
\[
Q_{\alpha\beta\gamma\delta}[W]
:=
W_{\alpha\mu\gamma\nu}W_{\beta}{}^{\mu}{}_{\delta}{}^{\nu}
+{}^{\ast}W_{\alpha\mu\gamma\nu}\,{}^{\ast}W_{\beta}{}^{\mu}{}_{\delta}{}^{\nu},
\]
where ${}^{\ast}W$ is the Hodge dual.  In vacuum, $W$ satisfies the Bianchi system
$\mathbf{D}^{\alpha}W_{\alpha\beta\gamma\delta}=0$, and consequently
\begin{equation}
\label{eq:BR-divfree}
\mathbf{D}^{\alpha}Q_{\alpha\beta\gamma\delta}[W]=0.
\end{equation}
Define the Bel--Robinson current associated to $T$ by
\[
P^{\alpha}:=Q^{\alpha}{}_{\beta\gamma\delta}[W]\;T^{\beta}T^{\gamma}T^{\delta}.
\]
A direct computation using \eqref{eq:BR-divfree} and the symmetry of $Q$ in the first index pair yields the exact
divergence identity
\begin{equation}
\label{eq:divP}
\mathbf{D}_{\alpha}P^{\alpha}
=
\frac{3}{2}\,Q_{\alpha\beta\gamma\delta}[W]\;\pi_{T}^{\alpha\beta}T^{\gamma}T^{\delta}.
\end{equation}

\medskip

\noindent
Applying the divergence theorem to $D_{\tau}(p)$ gives
\begin{equation}
\label{eq:stokes}
\int_{D_{\tau}(p)}\mathbf{D}_{\alpha}P^{\alpha}\,dV
=
\int_{\partial D_{\tau}(p)} P\cdot \nu \,d\mu_{\partial},
\end{equation}
where $\nu$ is the future/outward normal density on each boundary component.  On the spacelike pieces
$\Sigma_{t}(p)$ one has $\nu=n$ and $d\mu_{\partial}=d\mu_{\Sigma_{t}}$, whereas on the null piece
$\mathcal{N}^{-}_{\tau}(p)$ we take the standard normalization $\nu=L$ where $L$ is the null generator of
$\mathcal{N}^{-}_{\tau}(p)$, and $d\mu_{\partial}=d\mu_{\mathcal{N}}$ is the induced null measure.

\noindent Combining \eqref{eq:divP} and \eqref{eq:stokes} yields the exact energy--flux identity
\begin{align}
\label{eq:energy-flux-identity}
&\int_{\Sigma_{t(p)}(p)} Q[W](n,T,T,T)\,d\mu_{\Sigma_{t(p)}}
+\int_{\mathcal{N}^{-}_{\tau}(p)} Q[W](L,T,T,T)\,d\mu_{\mathcal{N}}
\nonumber\\
&\qquad =
\int_{\Sigma_{t(p)-\tau}(p)} Q[W](n,T,T,T)\,d\mu_{\Sigma_{t(p)-\tau}}
+\frac{3}{2}\int_{D_{\tau}(p)}
Q_{\alpha\beta\gamma\delta}[W]\;\pi_{T}^{\alpha\beta}T^{\gamma}T^{\delta}\,dV.
\end{align}
Define the curvature energy and null flux by
\[
E_{0}(\tau;p):=\int_{\Sigma_{t(p)-\tau}(p)} Q[W](n,T,T,T)\,d\mu_{\Sigma_{t(p)-\tau}},\qquad
F_{0}(\tau;p):=\int_{\mathcal{N}^{-}_{\tau}(p)} Q[W](L,T,T,T)\,d\mu_{\mathcal{N}}.
\]
Then \eqref{eq:energy-flux-identity} becomes
\begin{equation}
\label{eq:EplusF}
E_{0}(0;p)+F_{0}(\tau;p)
=
E_{0}(\tau;p)
+\frac{3}{2}\int_{D_{\tau}(p)}
Q_{\alpha\beta\gamma\delta}[W]\;\pi_{T}^{\alpha\beta}T^{\gamma}T^{\delta}\,dV.
\end{equation}

\medskip

\noindent
We now estimate the bulk term.  By the dominant property of $Q[W]$ and the Cauchy--Schwarz inequality for
quadratic forms, there exists a universal numerical constant $C_{0}$ such that
\begin{equation}
\label{eq:bulk-domination}
\big|Q_{\alpha\beta\gamma\delta}[W]\;\pi_{T}^{\alpha\beta}T^{\gamma}T^{\delta}\big|
\le
C_{0}\,|\pi_{T}|\,Q[W](T,T,T,T)
\le
C_{0}\,|\pi_{T}|\,Q[W](n,T,T,T).
\end{equation}
Using $dV=N\,dt\,d\mu_{\Sigma_{t}}$ and \eqref{eq:bulk-domination}, we infer from \eqref{eq:EplusF} that
\begin{equation}
\label{eq:energy-ineq-pre}
E_{0}(0;p)+F_{0}(\tau;p)
\le
E_{0}(\tau;p)
+
C\int_{0}^{\tau}
\|\pi_{T}\|_{L^{\infty}(\Sigma_{t(p)-s}(p))}\,E_{0}(s;p)\,ds,
\end{equation}
for a numerical constant $C$.  Since $F_{0}(\tau;p)\lesssim  a^{-1}$, we obtain the standard Gr\"onwall inequality
\begin{equation}
\label{eq:gronwall}
E_{0}(\tau;p)
\le
E_{0}(0;p)\,
\exp\!\left(
C\int_{0}^{\tau}
\|\pi_{T}\|_{L^{\infty}(\Sigma_{t(p)-s}(p))}\,ds
\right)+Ca^{-1}
\end{equation}
In particular, using \eqref{eq:piT-bootstrap} and $\tau\le 1$,
\begin{equation}
\label{eq:E0-control}
E_{0}(\tau;p)\ \le\ E_{0}(0;p)\,\exp\!\big(Ca^{-3/2+\delta}\big)+Ca^{-1}
\ \le\ (1+o_{a\to\infty}(1))\,E_{0}(0;p)+Ca^{-1}
\end{equation}
uniformly for $\tau\le 1$.

\medskip

\noindent
Finally, we recall the null--decomposition of the Weyl tensor relative to a null frame
$\{e_{4}=L,e_{3}=\underline{L},e_{1},e_{2}\}$ adapted to $\mathcal{N}^{-}_{\tau}(p)$.
With the standard normalization $g(L,\underline{L})=-2$, the integrand $Q[W](L,T,T,T)$ controls, up to universal
weights depending only on the frame normalization, the square sum of null curvature components
$(\alpha,\beta,\rho,\sigma,\underline{\beta},\underline{\alpha})$ along $\mathcal{N}^{-}_{\tau}(p)$.  In particular,
One has the coercivity estimate
\begin{equation}
\label{eq:flux-coercive}
\int_{\mathcal{N}^{-}_{\tau}(p)} Q[W](L,T,T,T)\,d\mu_{\mathcal{N}}
\ \gtrsim\
\int_{\mathcal{N}^{-}_{\tau}(p)}
\Big(|\alpha|^{2}+|\beta|^{2}+|\rho|^{2}+|\sigma|^{2}+|\betabar|^{2}\Big)\,d\mu_{\mathcal{N}},
\end{equation}
where the implied constant is numerical and depends only on the choice of normalization.

\medskip

\noindent
At this stage, the control of $E_{0}(0;p)$ and of the initial flux entering \eqref{eq:flux-coercive} is supplied
by the characteristic development in $D_{a,1}$ (Section~\ref{semiglobal}), which provides pointwise bounds for the
null curvature components near $u=-a$ and hence an a priori bound on the initial Bel--Robinson energy.  Combined
with \eqref{eq:E0-control}--\eqref{eq:F0-control}, this yields a uniform $L^{2}$--control of the Weyl curvature on
truncated null cones within $\mathcal{R}_{a}$.

\begin{center}
\begin{figure}
\begin{center}
\includegraphics[width=17cm,height=68cm,keepaspectratio,keepaspectratio]{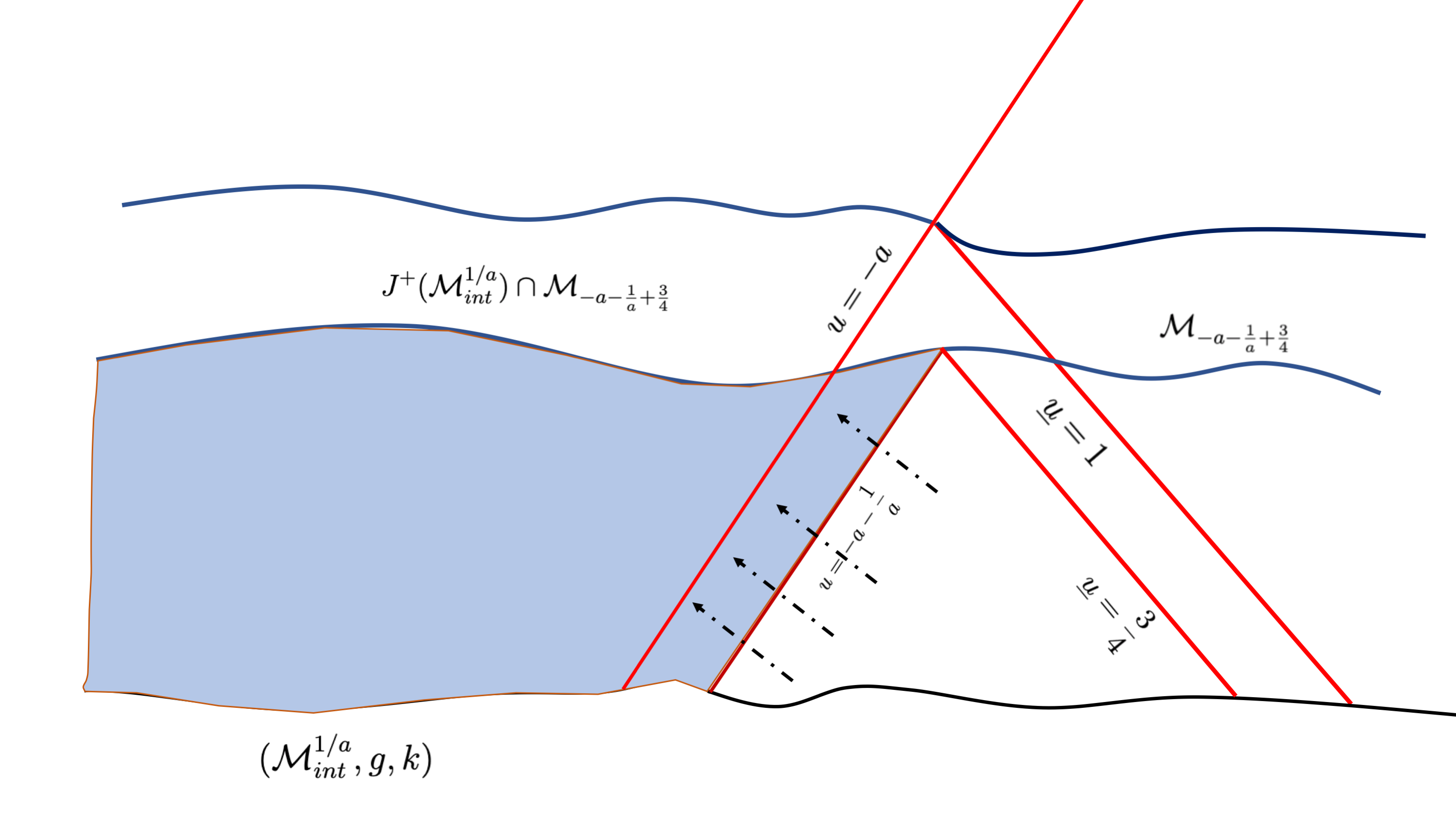}
\end{center}
\begin{center}
\caption{Shaded region is the domain of bulk integration}
\label{fig:2}
\end{center}
\end{figure}
\end{center}

\noindent
We next derive energy--flux estimates for \emph{covariant derivatives} of the Weyl curvature.  The key point is
that, upon commuting the Bianchi system, the commuted Weyl fields remain algebraically Weyl but acquire
nontrivial divergence currents which are quadratic in lower derivatives of $W$.  These error currents enter the
divergence of the Bel--Robinson tensor and must be treated as source terms in the cone energy identity.

\medskip

\noindent We do this now.
Let $U$ be a Weyl field, i.e.\ a $(0,4)$--tensor satisfying the algebraic symmetries of the Weyl curvature
(traceless, alternating in each index pair, symmetric under exchange of pairs, and satisfying the algebraic
Bianchi identity).  Suppose $U$ satisfies an inhomogeneous Bianchi system
\begin{equation}\label{eq:UdivJ}
\mathbf D^\alpha U_{\alpha\beta\gamma\delta}=J_{\beta\gamma\delta},
\qquad
\mathbf D^\alpha {}^\star U_{\alpha\beta\gamma\delta}={}^\star J_{\beta\gamma\delta},
\end{equation}
for some $(0,3)$--tensors $J,{}^\star J$.  Then, expanding $\mathbf D^\alpha Q_{\alpha\beta\gamma\delta}[U]$
directly from the definition of $Q[U]$ by the Leibniz rule and using \eqref{eq:UdivJ}, one obtains the exact identity
\begin{align}
\label{eq:divQ_withJ_exact}
\mathbf D^\alpha Q_{\alpha\beta\gamma\delta}[U]
&=
U_{\beta}{}^{\mu}{}_{\delta}{}^{\nu}\,J_{\gamma\mu\nu}
+
U_{\beta}{}^{\mu}{}_{\gamma}{}^{\nu}\,J_{\delta\mu\nu}
+
U_{\gamma}{}^{\mu}{}_{\delta}{}^{\nu}\,J_{\beta\mu\nu}\nonumber\\
&\quad
+
{}^\star U_{\beta}{}^{\mu}{}_{\delta}{}^{\nu}\,{}^\star J_{\gamma\mu\nu}
+
{}^\star U_{\beta}{}^{\mu}{}_{\gamma}{}^{\nu}\,{}^\star J_{\delta\mu\nu}
+
{}^\star U_{\gamma}{}^{\mu}{}_{\delta}{}^{\nu}\,{}^\star J_{\beta\mu\nu}.
\end{align}
In particular, for any future timelike vector $T$ satisfying $-g(T,T)\ge c_{0}>0$ in the region of interest,
there exists a constant $C=C(c_{0})$ such that
\begin{equation}
\label{eq:divQ_contraction_bound}
\big|\mathbf D^\alpha Q_{\alpha\beta\gamma\delta}[U]\;T^\beta T^\gamma T^\delta\big|
\le C(c_{0})\,|U|\,|J|.
\end{equation}

\medskip
\noindent Now we schematically obtain the commuted Bianchi system.
For an integer $m\ge 1$, define the $m$--th commuted Weyl field
\[
W^{(m)}:=\mathbf D^{m}W,
\]
where $\mathbf D$ denotes the spacetime Levi--Civita connection and $\mathbf D^{m}$ indicates iterated covariant
differentiation.  Since $\mathbf D$ preserves the metric and the Hodge operator, $W^{(m)}$ retains the Weyl
algebraic symmetries.  However, commuting the Bianchi system yields a nontrivial divergence:
\begin{equation}
\label{eq:Wm_div}
\mathbf D^\alpha W^{(m)}_{\alpha\beta\gamma\delta}=J^{(m)}_{\beta\gamma\delta},
\qquad
\mathbf D^\alpha {}^\star W^{(m)}_{\alpha\beta\gamma\delta}={}^\star J^{(m)}_{\beta\gamma\delta}.
\end{equation}
Using the Ricci commutator and vacuum $\mathrm{Rm}=W$, one obtains the schematic structure
\begin{equation}
\label{eq:Jm_structure}
J^{(m)}=\sum_{i=0}^{m-1}\mathbf D^{\,i}W * \mathbf D^{\,m-1-i}W,
\qquad
{}^\star J^{(m)}=\sum_{i=0}^{m-1}\mathbf D^{\,i}W * \mathbf D^{\,m-1-i}W,
\end{equation}
where $*$ denotes a universal finite sum of contractions depending only on $m$.
Consequently,
\begin{equation}
\label{eq:Jm_pointwise_bound}
|J^{(m)}|\le C_{m}\sum_{i=0}^{m-1}|\mathbf D^{\,i}W|\,|\mathbf D^{\,m-1-i}W|,
\end{equation}
for a constant $C_{m}$ depending only on $m$.

\medskip

\noindent Now we control the commuted Bel-Robinson current in the cone integral.
Fix $p\in\mathcal{R}_{a}$ and $\tau\in(0,\tau_{*}]$ as in the zeroth--order argument, and set
$D_{\tau}(p)$ to be the truncated causal past domain with boundary
$\Sigma_{t(p)}(p)\cup \Sigma_{t(p)-\tau}(p)\cup \mathcal{N}^{-}_{\tau}(p)$.
Let $T=\partial_{t}$ be the time vector field associated to the ADM foliation, and let $\pi_{T}$ denote its
deformation tensor.  Define the commuted Bel--Robinson current
\[
P^{(m)\alpha}:=Q^\alpha{}_{\beta\gamma\delta}[W^{(m)}]\;T^\beta T^\gamma T^\delta.
\]
Expanding $\mathbf D_\alpha P^{(m)\alpha}$ by the Leibniz rule and using symmetry of $Q$, one has the exact identity
\begin{align}
\label{eq:divPm_exact}
\mathbf D_\alpha P^{(m)\alpha}
&=
\big(\mathbf D^\alpha Q_{\alpha\beta\gamma\delta}[W^{(m)}]\big)\,T^\beta T^\gamma T^\delta
+\frac{3}{2}\,Q[W^{(m)}]_{\alpha\beta\gamma\delta}\,\pi_T^{\alpha\beta}T^\gamma T^\delta.
\end{align}
Define the source term
\[
\mathcal S_{m}:=
\big(\mathbf D^\alpha Q_{\alpha\beta\gamma\delta}[W^{(m)}]\big)\,T^\beta T^\gamma T^\delta.
\]
By \eqref{eq:divQ_contraction_bound} with $U=W^{(m)}$ and \eqref{eq:Wm_div}, we have the pointwise bound
\begin{equation}
\label{eq:Sm_exact_bound}
|\mathcal S_{m}|\le C(c_{0})\,|W^{(m)}|\,|J^{(m)}|.
\end{equation}

\medskip

\noindent
Define the $m$--th order curvature energy and null flux by
\[
E_m(\tau;p):=\int_{\Sigma_{t(p)-\tau}(p)} Q[W^{(m)}](n,T,T,T)\,d\mu_{\Sigma_{t(p)-\tau}},
\qquad
F_m(\tau;p):=\int_{\mathcal N^-_\tau(p)} Q[W^{(m)}](L,T,T,T)\,d\mu_{\mathcal N}.
\]
Applying the divergence theorem to \eqref{eq:divPm_exact} on $D_{\tau}(p)$ yields the \emph{exact}
commuted energy--flux identity
\begin{equation}
\label{eq:cone_identity_m_exact}
E_m(0;p)+F_m(\tau;p)
=
E_m(\tau;p)
+\frac{3}{2}\int_{\mathcal D_\tau(p)} Q[W^{(m)}]_{\alpha\beta\gamma\delta}\,\pi_T^{\alpha\beta}T^\gamma T^\delta\,dV
+\int_{\mathcal D_\tau(p)}\mathcal S_m\,dV.
\end{equation}

\medskip
\noindent As in the zeroth-order case, the dominant property of $Q$ implies the pointwise inequality
\begin{equation}
\label{eq:defterm_m_pointwise}
\big|Q[W^{(m)}]_{\alpha\beta\gamma\delta}\,\pi_T^{\alpha\beta}T^\gamma T^\delta\big|
\le
C\,|\pi_T|\,Q[W^{(m)}](n,T,T,T),
\end{equation}
for a universal constant $C$.  Using $dV=N\,dt\,d\mu_{\Sigma_t}$, we obtain
\begin{equation}
\label{eq:defterm_m}
\left|\int_{\mathcal D_\tau(p)} Q[W^{(m)}]_{\alpha\beta\gamma\delta}\,\pi_T^{\alpha\beta}T^\gamma T^\delta\,dV\right|
\le
C\int_0^\tau \|\pi_T\|_{L^\infty(\Sigma_{t(p)-s}(p))}\,E_m(s;p)\,ds.
\end{equation}

\medskip

\noindent
It remains to estimate the source contribution.  By \eqref{eq:Sm_exact_bound} and \eqref{eq:Jm_pointwise_bound},
\begin{align}
\label{eq:Sm_integrated_1}
\int_{\mathcal D_\tau(p)}|\mathcal S_m|\,dV
&\le C_m\sum_{i=0}^{m-1}\int_{\mathcal D_\tau(p)} |W^{(m)}|\,|\mathbf D^{\,i}W|\,|\mathbf D^{\,m-1-i}W|\,dV.
\end{align}
To close the estimate one uses the following standard splitting: two factors are placed in $L^{2}$ and the
remaining factors are placed in $L^{\infty}$, supported by the bootstrap bounds and the Sobolev inequality
bounds.  It follows from Cauchy--Schwarz in $dV$ that
\begin{equation}
\label{eq:boots_rigorous}
\int_{\mathcal D_\tau(p)}|\mathcal S_m|\,dV
\le
C_m\,a^{-3/2+\delta}\int_{0}^{\tau}E_m(s;p)\,ds.
\end{equation}
(Here $C_m$ depends only on $m$, but is independent of $a$.)
Combining \eqref{eq:cone_identity_m_exact}, \eqref{eq:defterm_m}, and \eqref{eq:boots_rigorous}, and discarding the
nonnegative flux term, yields
\begin{equation}
\label{eq:Em_ineq_final}
E_m(\tau;p)
\le
E_m(0;p)
+
C\int_0^\tau \|\pi_T\|_{L^\infty(\Sigma_{t(p)-s}(p))}\,E_m(s;p)\,ds
+
C_m\,a^{-3/2+\delta}\int_{0}^{\tau}E_m(s;p)\,ds.
\end{equation}
Invoking \eqref{eq:piT-bootstrap} and $\tau\le 1$, we obtain
\begin{equation}
\label{eq:Em_gronwall}
E_m(\tau;p)\ \le\ E_m(0;p)\,
\exp\!\big(Ca^{-3/2+\delta}\big),
\end{equation}
and consequently, for instance at $\tau=\tfrac34$,
\[
E_m\!\left(\tfrac34;p\right)\ \le\ (1+o_{a\to\infty}(1))\,E_m(0;p).
\]

\medskip

\noindent
This completes the gauge--invariant $L^{2}$--control of $\mathbf D^{m}W$ on truncated null cones in the interior.
We now turn to the evolution estimates for the metric coefficients $(g_{ij},k_{ij},N,X)$, which will be carried
out in spacetime harmonic gauge.

\noindent Set
\[
\Omega:=J^{+}(\mathcal{M}^{1/a}_{int})\cap\{-a\le t\le -a+1\}.
\]
We argue in \emph{spacetime harmonic gauge} on $\Omega$, i.e.\ we work with spacetime harmonic coordinates
\[
\Box_{g}x^\mu=0\qquad(\mu=0,1,2,3),
\]
equivalently $\Gamma^\mu:=g^{\alpha\beta}\Gamma^\mu_{\alpha\beta}=0$. In this gauge, the vacuum Einstein
equations reduce to a quasilinear wave system
\begin{equation}\label{eq:redEin}
\Box_g g_{\mu\nu}=Q_{\mu\nu}(\partial g,\partial g),
\end{equation}
where $Q_{\mu\nu}$ is a universal quadratic form (no zeroth order terms). Moreover the constraints
$\Gamma^\mu=0$ propagate (standard: $\Gamma^\mu$ satisfies a homogeneous linear wave equation with
coefficients depending on $g$), hence \eqref{eq:redEin} is equivalent to $\text{Ric}(g)=0$ on $\Omega$.

\noindent Because $\mathcal{M}^{1/a}_{int}$ has diameter $\simeq a$ in the chosen interior chart (recall that the $H-$radius of $\mathcal{M}^{1/a}_{int})$ is $O(a)$), global $L^2$-based homogeneous Sobolev
norms on $\mathcal{M}^{1/a}_{int}$ can be $\mathcal{O}(a)$ even if $\|\partial g\|_{L^\infty}$ is $\mathcal{O}(a^{-1})$.
To obtain pointwise control without losing powers of $a$, we work with \emph{uniformly local} Sobolev norms.

\noindent Fix a smooth cutoff $\chi\in C_c^\infty(B_2(0))$ with $\chi\equiv1$ on $B_1(0)$, and for each
$y\in\mathbb{R}^3$ define $\chi_y(x):=\chi(x-y)$. For a sufficiently large integer $s$ set
\[
\|u(t)\|_{H^s_{\mathrm{ul}}(\Sigma_t\cap J^{+}(\widetilde{\mathcal{M}}_{1}))}
:=\sup_{y\in\mathbb{R}^3}\ \|\chi_y\,u(t)\|_{H^s(\Sigma_t\cap J^{+}(\mathcal{M}^{1/a}_{int}))}
\]
(and similarly for $L^2_{\mathrm{ul}},\,L^\infty_{\mathrm{ul}}$). The crucial point is that the Sobolev embedding
\begin{equation}\label{eq:ulSobolev}
\|u(t)\|_{L^\infty_{\mathrm{ul}}}\ \lesssim\ \|u(t)\|_{H^2_{\mathrm{ul}}}
\end{equation}
has a constant depending only on $\chi$ and the \emph{local} geometry, and is therefore \emph{independent of $a$}
(the estimate is just the standard $H^2(B_2)\hookrightarrow L^\infty(B_1)$ after localization).

\medskip
\noindent Write $m$ for Minkowski in spacetime harmonic coordinates and $h:=g-m$. On $\Sigma_{-a}\cap\mathcal{M}^{1/a}_{int}$,
the assumptions \eqref{eq:boot1}--\eqref{eq:boot3} give
\begin{equation}\label{eq:initLinfty}
\|\partial g(-a)\|_{L^\infty(\widetilde{\mathcal{M}}_{1})}\ \lesssim\ a^{-1},
\qquad
\|k(-a)\|_{L^\infty(\widetilde{\mathcal{M}}_{1})}\ \lesssim\ a^{-3/2},
\qquad
\|N(-a)-1\|_{L^\infty}+\|\nabla N(-a)\|_{L^\infty}\ \lesssim\ a^{-3/2},
\end{equation}
and for he higher derivatives, we have 
\begin{align}
 \|\partial^{k+1} g(-a)\|_{L^\infty(\widetilde{\mathcal{M}}_{1})}\ \lesssim\ a^{-k-1},~
\|\partial^{k}k(-a)\|_{L^\infty(\widetilde{\mathcal{M}}_{1})}\ \lesssim\ a^{-k-3/2},\\\nonumber
||\partial^{k}(N-1)||_{L^{\infty}(\widetilde{\mathcal{M}}_{1})} \lesssim\ a^{-k-3/2},~ ||\partial^{k}X||_{L^{\infty}(\widetilde{\mathcal{M}}_{1})} \lesssim\ a^{-k-3/2},~k\geq 0. 
\end{align}

\noindent To run a quasi-linear energy argument, we must control a finite number of derivatives in $L^2_{\rm ul}$.
This is supplied by the smoothness of the data together with the curvature control coming from the characteristic
development on $H_{-a-1/a}$ (as assumed in the statement). Concretely, we use the following standard input:

\smallskip
\noindent
Because $\Gamma^\mu=0$, the deformation tensor of $\partial_\alpha$ is $\mathcal{O}(\partial g)$, and
one has local (unit-scale) hyperbolic estimates relating $\partial^2 g$ to $\text{Rm}(g)$ plus quadratic terms
in $\partial g$. Therefore, the Bel-Robinson energy bounds available from the characteristic development
(and their propagated Cauchy analogues on $\Sigma_{-a}$ restricted to $\mathcal{M}^{1/a}_{int}$) imply that for 
$s\le 3$,
\begin{eqnarray}\label{eq:initHul}
\|\partial g(-a)\|_{H^{s-1}_{\mathrm{ul}}(\mathcal{M}^{1/a}_{int})}+\|\partial_t g(-a)\|_{H^{s-1}_{\mathrm{ul}}(\mathcal{M}^{1/a}_{int})}
\ \le\ Ca^{-1},\\
\|\partial^{k+1} g(-a)\|_{H^{s-1}_{\mathrm{ul}}(\mathcal{M}^{1/a}_{int})}+\|\partial^{k}\partial_t g(-a)\|_{H^{s-1}_{\mathrm{ul}}(\mathcal{M}^{1/a}_{int})}
\ \le\ Ca^{-k-1},~k\geq 2
\end{eqnarray}
where $C$ is numerical and \emph{independent of $a$}.
(global $L^2$ norms may scale like $O(a)$, but the supremum over unit balls is $O(a^{-1})$)

\noindent 
For each $y\in\mathbb R^3$ define the localized energy on $\Sigma_t$:
\[
E_{s,y}(t):=\sum_{|\alpha|\le s-1}\int_{\Sigma_t\cap J^{+}(\mathcal{M}^{1/a}_{int})}
\Big(|\partial_t(\chi_y\partial^\alpha h)|^2+|\nabla(\chi_y\partial^\alpha h)|^2\Big)\,dx,
\]
and set the uniformly local energy
\[
E_s^{\mathrm{ul}}(t):=\sup_{y\in\mathcal{M}^{1/a}_{int}} E_{s,y}(t).
\]
We bootstrap, on $t\in[-a,-a+1]$, the smallness condition
\begin{equation}\label{eq:bootSmall}
\|\partial g(t)\|_{L^\infty_{\mathrm{ul}}(\Sigma_t\cap J^{+}(\mathcal{M}^{1/a}_{int}))}\ \le\ C_0\,a^{-1+\delta},
\end{equation}
for a large numerical $C_0$ to be chosen.

\smallskip
\noindent
Commuting \eqref{eq:redEin} with $\partial^\alpha$ and multiplying by $\partial_t(\chi_y\partial^\alpha h)$, one obtains,
after integration by parts on the (truncated) domain of dependence of $\supp\chi_y$ and using that the cutoff derivatives
produce only lower-order terms supported in $B_2(y)\setminus B_1(y)$,
\begin{equation}\label{eq:locEnergyIneq}
\frac{d}{dt}E_{s,y}(t)\ \le\ C\Big(\|\partial g(t)\|_{L^\infty_{\mathrm{ul}}}\,+\,1\Big)\,E_{s,y}(t),
\end{equation}
where $C$ is numerical. Here the ``$+1$'' absorbs harmless commutators from $\chi_y$ (independent of $a$).
Taking the supremum over $y$ yields
\begin{equation}\label{eq:ulEnergyIneq}
\frac{d}{dt}E_s^{\mathrm{ul}}(t)\ \le\ C\Big(\|\partial g(t)\|_{L^\infty_{\mathrm{ul}}}\,+\,1\Big)\,E_s^{\mathrm{ul}}(t).
\end{equation}
Under the bootstrap \eqref{eq:bootSmall} and for $a\gg 1$ (so that $a^{-1+\delta}\ll 1$), we obtain
\[
\frac{d}{dt}E_s^{\mathrm{ul}}(t)\ \le\ C^{'}(a^{-1+\delta}+1)\,E_s^{\mathrm{ul}}(t),
\]
with $C'$ numerical. Hence, by Grönwall on an interval of length $1$,
\begin{equation}\label{eq:EsulBound}
E_s^{\mathrm{ul}}(t)\ \le\ e^{C'(a^{-1+\delta}+1)}\,E_s^{\mathrm{ul}}(-a)\ \qquad\text{for all }t\in[-a,-a+1],
\end{equation}
where $C'$ is numerical and independent of $a$. Now using \eqref{eq:initHul})
and uniformly local Sobolev \eqref{eq:ulSobolev} applied to $\partial g(t)$, and the definition of $E_s^{\rm ul}$,
\begin{equation}\label{eq:dhFromE}
\|\partial g(t)\|_{L^\infty_{\mathrm{ul}}}\ \lesssim\ \|\partial g(t)\|_{H^{2}_{\mathrm{ul}}}
\ \lesssim\ \big(E_s^{\mathrm{ul}}(t)\big)^{1/2}\ \lesssim\ a^{-1}\lesssim a^{-1+\delta}a^{-\delta}\leq \frac{1}{2}a^{-1+\delta}
\end{equation}
and similarly 
\begin{align}
\|\partial^{k} g(t)\|_{L^\infty_{\mathrm{ul}}}\ \lesssim\ \|\partial^{k} g(t)\|_{H^{2}_{\mathrm{ul}}}
\ \lesssim\ \big(E_s^{\mathrm{ul}}(t)\big)^{1/2}\ \lesssim\ a^{-k}\lesssim a^{-k+\delta}a^{-\delta}\leq \frac{1}{2}a^{-k+\delta}    
\end{align}
for sufficiently large $a\gg1$.

\noindent 
Consequently, choosing $C_0$ large and $a\ge a_0(\delta)\gg 1$, we improve \eqref{eq:bootSmall} and close the bootstrap:
\begin{eqnarray}\label{eq:dhFinal}
\|\partial g\|_{L^\infty_{\mathrm{ul}}(\Omega)}\ \lesssim\ a^{-1+\delta},\\
\|\partial^{k+1} g\|_{L^\infty_{\mathrm{ul}}(\Omega)}\ \lesssim\ a^{-k-1+\delta}
\end{eqnarray}
In particular, the same bound holds for the usual $L^\infty$ norm on $\Omega$. 
\medskip
\noindent
We now relate $(N,X)$ to the metric components. In ADM form with shift,
\[
g_{00}=-N^2+g_{ij}X^iX^j,\qquad g_{0i}=g_{ij}X^j.
\]
Since $g_{ij}$ remains uniformly equivalent to $\delta_{ij}$ on $\Omega$ (by integrating \eqref{eq:dhFinal} in time and
using the initialization), we can solve $X^j=g^{ji}g_{0i}$ and obtain
\begin{equation}\label{eq:Xfromg0i}
\|X\|_{L^\infty(\Omega)}\ \lesssim\ \|g_{0i}\|_{L^\infty(\Omega)}.
\end{equation}
But $g_{0i}$ satisfies the same reduced wave equation \eqref{eq:redEin} and has the same initial smallness
(in the interior construction one has $g_{0i}(-a)=a^{-1}$ and $\partial g_{0i}(-a)=\mathcal{O}(a^{-1+\delta})$), hence
\eqref{eq:dhFinal} implies, after integrating $\partial_t g_{0i}$ over a unit time interval,
\[
\|g_{0i}\|_{L^\infty(\Omega)}\ \lesssim\ a^{-1+\delta}.
\]
Thus
\begin{equation}\label{eq:XFinalStrong}
\|X\|_{L^\infty(\Omega)}\ \lesssim\ a^{-1+\delta},
\end{equation}
which in particular yields the stated $\|X\|_{L^\infty}\lesssim a^{-1+\delta}$. But, we can obtain a better estimate given the initial estimate $||X||_{L^{\infty}(\mathcal{M}^{1/a}_{int})}\lesssim a^{-3/2}$. This follows from the fact that $X=g_{0i}$ verifies 
\[
\Box X_{i}=Q_{i}(\partial g,\partial g)
\]
and therefore the wave equation higher order energy estimate and Sobolev inequality implies 
\[
||X||_{L^{\infty}(\Omega)}\lesssim a^{-\frac{3}{2}}
\]

\noindent Similarly, from $g_{00}+1=-(N^2-1)+g_{ij}X^iX^j$, we obtain
\[
\|N-1\|_{L^\infty(\Omega)}
\ \lesssim\ \|g_{00}+1\|_{L^\infty(\Omega)}+\|X\|_{L^\infty(\Omega)}^2
\ \lesssim\ a^{-3/2},
\]
which is the middle estimate in the proposition.

\medskip
\noindent
Use the exact ADM identity
\begin{equation}\label{eq:kADM}
k_{ij}=-\frac{1}{2N}\big(\partial_t g_{ij}-\nabla_i X_j-\nabla_j X_i\big).
\end{equation}
one obtains 
\[
||k||_{L^{\infty}(\Omega)}\lesssim a^{-1+\delta}.
\]
But this is not optimal for our purpose. We can obtain a better estimate by directly integrating the evolution equation for $k$ 
and the evolution equation for $k$
\[
\partial_{t}k_{ij}=-\nabla_{i}\nabla_{j}N+N(\text{Ric}_{ij}+k_{ik}k^{k}~_{j}-\tr_{g}kk_{ij})+L_{X}k_{ij}
\]
which yields the improved estimate 
\[
||k||_{L^{\infty}(\Omega)}\lesssim a^{-\frac{3}{2}+\delta}
\]
since $||\text{Ric}||_{L^{\infty}(\Omega)}\lesssim a^{-2+\delta}$.

\medskip
\noindent Combiningthe estimates yields, throughout $\Omega$,
\[
\|k\|_{L^\infty}\lesssim a^{-3/2+\delta},\qquad
\|N-1\|_{L^\infty}\lesssim a^{-3/2+\delta},\qquad
\|X\|_{L^\infty}\lesssim a^{-3/2+\delta},
\]
for any fixed $\delta\in(0,\tfrac12)$, provided $a\ge a_0(\delta)\gg 1$. All implicit constants are numerical (they depend
only on the fixed cutoff $\chi$ and unit-scale Sobolev constants, hence do not depend on $a$).
This completes the proof.
\end{proof}
\begin{remark}
Notice that estimates $|k|\lesssim a^{-\frac{3}{2}+\delta}$ is only true in the time range $[-a,-a+O(1)]$ given initial data $|k|\lesssim a^{-\frac{3}{2}}$. For a larger time interval e.g., of length $O(a)$ would trivially ruin this due to large deformation. 
\end{remark}

\begin{remark}
In principle, one could also perform the analysis on the spacetime domain\\ $\bigcup_{t\in [-a,-a-1/a+3/4]}\Phi_{t}\bigg(\Phi^{-1}_{t=-a-1/a+\frac{3}{4}}(J^+(\mathcal M^{1/a}_{\mathrm{int}})
\cap
\mathcal M_{-a-1/a+3/4})\bigg)$ since one already posses the estimates (\ref{eq:k1}-\ref{eq:k2}) on the initial domain  $\Phi^{-1}_{t=-a-1/a+\frac{3}{4}}(J^+(\mathcal M^{1/a}_{\mathrm{int}})
\cap
\mathcal M_{-a-1/a+3/4})$ by the Characteristic development.     
\end{remark}

\noindent

\begin{corollary}[Quantitative increase of the Schoen--Yau radius]
\label{Hradius}
Let $a\gg1$ and let $\mathcal M^{1/a}_{\mathrm{int}}\subset\mathcal M_{-a}$ be the interior region constructed in Proposition~\ref{prop:interior-propagation}. There exists $a_0\gg1$ such that for all $a\ge a_0$,
\[
\mathrm{Rad}\!\left(J^+(\mathcal M^{1/a}_{\mathrm{int}})\cap \mathcal M_{-a-1/a+3/4}\right)
\;\ge\;
\mathrm{Rad}(\mathcal M^{1/a}_{\mathrm{int}})
\left(1+\frac{1}{10a}\right).
\]
\end{corollary}

\begin{proof}
Let $T$ denote the future--directed unit normal to the foliation $\{\mathcal M_t\}$, and let $\Phi_T(t,\cdot)$ be its flow. For $t\in[0,3/4]$, set
\[
g(t):=(\Phi_T^{-1})^{*}g
\]
to be the induced metric on $\Phi_T(t,\mathcal M^{1/a}_{\mathrm{int}})\subset\mathcal M_{-a-1/a+t}$. Along the flow of $T=\partial_{t}$, the metric satisfies the exact evolution equation
\begin{equation}\label{eq:metric-transport}
\partial_t g_{ij}=-2Nk_{ij}+\mathscr L_X g_{ij}.
\end{equation}

\noindent By Proposition~\ref{prop:interior-propagation}, on $\Phi_T(t,\mathcal M^{1/a}_{\mathrm{int}})$ we have the uniform bounds
\[
|k|\lesssim a^{-3/2+\delta},
\qquad
|N-1|+|X|\lesssim a^{-3/2+\delta},
\qquad t\in[0,3/4].
\]
Integrating~\eqref{eq:metric-transport} in time yields the bilinear--form comparison
\begin{equation}\label{eq:metric-quasi-isometry}
(1-Ca^{-3/2+\delta})\,g(0)
\;\le\;
g(3/4)
\;\le\;
(1+Ca^{-3/2+\delta})\,g(0),
\end{equation}
where the inequalities are understood in the sense of quadratic forms.

\noindent Since the Schoen--Yau radius is stable under $C^0$ quasi--isometries, it follows from~\eqref{eq:metric-quasi-isometry} that
\begin{equation}\label{eq:radius-transport}
\mathrm{Rad}\!\left(\Phi_T(3/4,\mathcal M^{1/a}_{\mathrm{int}})\right)
\;\ge\;
\mathrm{Rad}(\mathcal M^{1/a}_{\mathrm{int}})
\bigl(1-O(a^{-3/2+\delta})\bigr).
\end{equation}

\noindent Next, observe that $\Phi_T(3/4,\mathcal M^{1/a}_{\mathrm{int}})$ is a proper subset of
\[
J^+(\mathcal M^{1/a}_{\mathrm{int}})\cap \mathcal M_{-a-1/a+3/4},
\]
since $\Phi_T$ follows the timelike normal flow rather than null generators. To quantify the additional geometric thickness, we work near $u=-a$ in double--null coordinates $(u,\ubar,\theta^A)$ and define the radial function
\[
r:=\ubar-u.
\]
From the null--gauge estimates along the initial hypersurface,
\[
|\Omega-1|\lesssim a^{-3/2},
\qquad
|b|\lesssim a^{-3/2},
\]
it follows that radial curves are uniformly close to unit--speed geodesics. Consequently,
\begin{equation}\label{eq:thickness}
\inf_{\theta^A\in \mathbb{S}^{2}}
\int_{a}^{a+3/4}
\sqrt{g_{ij}\!\left(\frac{dx^i}{dr},\frac{dx^j}{dr}\right)}\,dr
=
\frac{3}{4}+O(a^{-3/2}).
\end{equation}

\noindent Using the normalization of the interior region,
\[
\mathrm{Rad}(\mathcal M^{1/a}_{\mathrm{int}})= \frac{3\pi}{4}\,a+O(a^{-2}),
\]
the thickness~\eqref{eq:thickness} may be rewritten as
$\frac{3}{4}+O(a^{-\frac{3}{2}})
=
\frac{4}{3\pi a}\,
\mathrm{Rad}(\mathcal M^{1/a}_{\mathrm{int}})
+O(a^{-3/2})$.
We conclude that
\[
\mathrm{Rad}\!\left(J^+(\mathcal M^{1/a}_{\mathrm{int}})\cap \mathcal M_{-a-1/a+3/4}\right)
\;\ge\;
\mathrm{Rad}(\mathcal M^{1/a}_{\mathrm{int}})
\Bigl(1+\frac{4}{3\pi a}-O(a^{-3/2+\delta})\Bigr).
\]
For $a$ sufficiently large, the error term is dominated by the linear gain, yielding
\[
\mathrm{Rad}\!\left(J^+(\mathcal M^{1/a}_{\mathrm{int}})\cap \mathcal M_{-a-1/a+3/4}\right)
\;\ge\;
\mathrm{Rad}(\mathcal M^{1/a}_{\mathrm{int}})
\left(1+\frac{1}{10a}\right),
\]
which completes the proof.
\end{proof}

\subsection{Completion of the proof of the main theorem}
\noindent
By the interior gluing/construction, the correction tensors are supported in a compact subset of $\Omega$ that is a positive distance away from $\partial\Omega$. In particular, the induced metric $\gamma$ and second fundamental form $k^\top$ on $\partial\Omega$ coincide \emph{exactly} with those inherited from the double--null interface sphere, hence so do the null expansions. Therefore the generalized mean--curvature barrier quantity
\[
c_* \;:=\; \min_{\partial\Omega}\Big(H_{\partial\Omega}(\gamma)-\big|\tr_{\partial\Omega}k\big|\Big)
\;>\;0
\]
and the Schoen--Yau ($H$--)radius bound
\[
\text{Rad}(\Omega)=R_* \;<\; \frac{3\pi}{2c_*}
\]
hold on $\partial\Omega$ with the same constants as in the interface geometry.

\noindent We now apply the Schoen--Yau barrier criterion (in the form used in Theorem~\ref{main1}), which asserts: if a compact initial domain $U\subset \mathcal M_{-a}$ has boundary satisfying
\begin{equation}\label{eq:SYbarrier}
\min_{\partial U}\big(H_{\partial U}-|\tr_{\partial U}k|\big)\;<\;\frac{3\pi}{2\,\text{Rad}(U)},
\end{equation}
then $U$ contains \emph{no} marginally outer trapped surface. Since \eqref{eq:SYbarrier} holds for $U=\Omega$ by the preceding paragraph, it follows that there is no MOTS contained in $\Omega\subset\mathcal M_{-a}$.

\medskip

\noindent
Let $D'\subset D_{a,1}$ denote the truncated characteristic development bounded by the outgoing null hypersurface $H_{-a-1/a}$. Define
\[
\mathcal M^{1/a}_{\mathrm{int}}
\;:=\;
\big(\mathcal M_{-a}\cap (D_{a,1}\setminus D')\big)\;\cup\;\mathcal M_{1}.
\]
On $\mathcal M^{1/a}_{\mathrm{int}}$ we have $\tr\chi>0$ by construction of the semi--global double--null region (and the choice of truncation), and the boundary $\partial\mathcal M^{1/a}_{\mathrm{int}}$ is a union of interface spheres along which the barrier geometry is inherited from the interface data up to the compactly supported interior corrections (which, again, do not touch the boundary). In particular, the initial slice contains no MOTS in $\mathcal M^{1/a}_{\mathrm{int}}$: indeed, a MOTS would have vanishing outward null expansion and hence cannot be contained in a region foliated by spheres with strictly positive outward expansion; equivalently, one may use the same Schoen--Yau barrier exclusion on the compact domains bounded by the relevant interface spheres (as above) to preclude the existence of a MOTS in $\mathcal M^{1/a}_{\mathrm{int}}$.

\medskip

\noindent In the next step we study the quantitative radius gain under controlled evolution.
By Corollary~\ref{Hradius} (propagation of the Schoen--Yau radius under evolution in spacetime harmonic gauge, using the interior propagation estimates), for all sufficiently large $a$ one has
\begin{equation}\label{eq:radius-growth-final-annals}
\text{Rad}\!\left(
J^+(\mathcal M^{1/a}_{\mathrm{int}})
\cap
\mathcal M_{-a-1/a+3/4}
\right)
\;\ge\;
\text{Rad}(\mathcal M^{1/a}_{\mathrm{int}})
\Big(1+\frac{1}{10a}\Big).
\end{equation}
Denote the evolved compact domain by
\[
\Omega_{3/4}
\;:=\;
J^+(\mathcal M^{1/a}_{\mathrm{int}})\cap \mathcal M_{-a-1/a+3/4}.
\]
Then \eqref{eq:radius-growth-final-annals} reads $\text{Rad}(\Omega_{3/4})\ge \text{Rad}(\mathcal M^{1/a}_{\mathrm{int}})(1+\frac{1}{10a})$.

\medskip

\noindent
Now we study the stability of the boundary barrier functional along the transported interface.
We next quantify the change of the boundary barrier quantity $H-|\kappa|$ along the interface spheres transported from $S_{-a-1/a,\,1/a}$ to $S_{-a-1/a,\,3/4}$ through a time slab of length $3/4$. Here $\kappa:=\tr_{\partial}k$ is the trace of $k$ tangential to the interface sphere.

\noindent On the semi-global double-null region, the null structure equations and the established point-wise bounds yield
\[
|\widehat{\underline\chi}| \lesssim a^{1/2}|u|^{-2},
\qquad
\tr\underline\chi = -\frac{2}{|u|} + O(|u|^{-2}),
\]
along the relevant generators. Together with the curvature--flux control (already proved in Section~\ref{semiglobal}), the generalized mean--curvature functional $\mathbf H$ associated to the transported interface admits the representation
\begin{align}
\mathbf H
&=
-\frac{|u_\infty|}{a}\,\tr\underline\chi(u_\infty,0)
+\frac{9}{10a}\int_{u_\infty}^{-a} |u'|\,|\widehat{\underline\chi}|^{2}(u',0)\,du'
\nonumber\\
&\quad
+\frac{9}{10a}\int_{u_\infty}^{-a}\frac{1}{|u'|^{2}}
\int_{u_\infty}^{u'} |u''|^{2}|\widehat{\underline\chi}|^{2}(u'',0)\,du''\,du'.
\label{eq:H-functional-final-annals}
\end{align}
Moreover, on the interior Cauchy development in spacetime harmonic gauge, we have the pointwise bounds (for any fixed $\delta\in(0,\frac12)$)
\[
|k| \le C a^{-3/2+\delta},
\qquad
|\partial g| \le C a^{-1+\delta}
\]
on the relevant slab, and the transport equation $\partial_t g_{ij}=-2Nk_{ij}+\mathscr L_X g_{ij}$ together with the corresponding evolution equation for $k$ implies that the induced metric and second fundamental form on the transported interface spheres vary by $O(a^{-3/2+\delta})$ in $C^1$ and $O(a^{-5/2+\delta})$ in the scalar barrier functional over unit time. In particular, for the two interface spheres $S_{-a-1/a,\,1/a}$ and $S_{-a-1/a,\,3/4}$ by direct integration and the estimates on the Ricci coefficients,
\begin{equation}\label{eq:barrier-stability-final-annals}
\Big|
(H-|\kappa|)\big|_{S_{-a-1/a,\,3/4}}
-
(H-|\kappa|)\big|_{S_{-a-1/a,\,1/a}}
\Big|
\;\le\;
C a^{-5/2},
\end{equation}
and the comparison between the geometric barrier quantity and $\mathbf H$ at the initial and transported interfaces takes the form
\begin{align}
(H-|\kappa|)\big|_{\partial \Omega_{3/4}}
&=
\mathbf H + O(a^{-5/2}),
\nonumber\\
(H-|\kappa|)\big|_{\partial \mathcal M^{1/a}_{\mathrm{int}}}
&=
\mathbf H + O(a^{-5/2}).
\label{eq:barrier-compare-final-annals}
\end{align}
All error terms are uniform in $a$ (for $a$ large) and depend only on the bootstrap constants already fixed.

\medskip

\noindent
We now use the quantitative form of the criterion in Theorem~\ref{main1}(c): any compact domain $U$ whose boundary satisfies
\begin{equation}\label{eq:criterion-c-annals}
(H-|\kappa|)\big|_{\partial U} \;>\; \frac{3\pi}{2\,\text{Rad}(U)}
\end{equation}
must contain a MOTS in its interior a l\'a \cite{yau}.

\noindent We compare \eqref{eq:criterion-c-annals} for the initial domain $U=\mathcal M^{1/a}_{\mathrm{int}}$ and the evolved domain $U=\Omega_{3/4}$.

\smallskip

\noindent First we prove the \emph{(i) failure of \eqref{eq:criterion-c-annals} on the initial slice.}
From the explicit interface computation (using \eqref{eq:H-functional-final-annals}) and the initial radius normalization 
\[\text{Rad}(\mathcal{M}_{1})=\frac{3\pi}{4}(a-1)+O(a^{-1}),\] one has
\begin{equation}\label{eq:1-annals}
\frac{3\pi}{2\text{Rad}(\mathcal \mathcal{M}_{1})}-\frac{2}{a}
=\frac{2}{ a^2}+O(a^{-3}),
\end{equation}
and also
\begin{equation}\label{eq:2-annals}
\frac{3\pi}{2\text{Rad}(\mathcal \mathcal{M}_{1})\sqrt{1+\frac{1}{10a}}}-\frac{2}{a}
=\frac{8}{5 a^2}+O(a^{-3}).
\end{equation}
Assume condition~(c) of Theorem~\ref{main1}, namely
\begin{align}\label{eq:condc-annals}
\frac{17}{10 a^{2}}
<
\frac{9}{10a}\int_{u_{\infty}}^{-a}|u'||\underline{\hat\chi}|^{2}(u',\epsilon)\,du'
+\frac{9}{10a}\int_{u_{\infty}}^{-a}\frac{1}{|u'|^{2}}\int_{u_{\infty}}^{u'}|u''|^{2}|\underline{\hat\chi}|^{2}(u'',\epsilon)\,du''\,du'
<
\frac{19}{10 a^{2}}.
\end{align}
Combining \eqref{eq:condc-annals} with \eqref{eq:H-functional-final-annals} and the comparison \eqref{eq:barrier-compare-final-annals} on the initial interface yields the quantitative defect
\begin{equation}\label{eq:yau-defect-annals}
\Big(H-|\kappa|\Big)\big|_{S_{-a,0}}
-\frac{3\pi}{2\text{Rad}(\mathcal \mathcal{M}_{1})}
\le
-\frac{1}{10 a^2}+O(a^{-5/2}).
\end{equation}
Hence, for $a$ sufficiently large so that the $O(a^{-5/2})$ error is dominated by $\frac{1}{10 a^2}$, we obtain
\begin{equation}\label{eq:initial-barrier-ineq-final-annals}
(H-|\kappa|)\big|_{\partial \mathcal M_{1}}
<
\frac{3\pi}{2\,\text{Rad}(\mathcal M_{1})}.
\end{equation}
In particular, the strict inequality \eqref{eq:criterion-c-annals} fails for $\mathcal \mathcal{M}_{1}$ (and therefore for $\mathcal M^{1/a}_{\mathrm{int}}$), and by Steps~1--2 there is no MOTS in $\mathcal M^{1/a}_{\mathrm{int}}$.

\smallskip

\noindent Now we prove the \emph{(ii) validity of \eqref{eq:criterion-c-annals} on the evolved domain.}
By the radius gain \eqref{eq:radius-growth-final-annals}, the right-hand side in \eqref{eq:criterion-c-annals} decreases by an amount of order $a^{-2}$ when passing from $\mathcal M^{1/a}_{\mathrm{int}}$ to $\Omega_{3/4}$. More precisely, using \eqref{eq:2-annals} together with \eqref{eq:radius-growth-final-annals} and the comparison \eqref{eq:barrier-compare-final-annals} at time $-a-1/a+3/4$, we infer
\[
\big(H-|\kappa|\big)\big|_{\partial \Omega_{3/4}}-\frac{3\pi}{2\text{Rad}(\mathcal M^{1/a}_{\mathrm{int}})}
\;\ge\;
\frac{1}{10 a^{2}}+O(a^{-5/2}).
\]
Therefore, for $a$ sufficiently large,
\begin{equation}\label{eq:future-barrier-ineq-final-annals}
(H-|\kappa|)\big|_{\partial \Omega_{3/4}}
>
\frac{3\pi}{2\,\text{Rad}(\Omega_{3/4})}.
\end{equation}
Thus \eqref{eq:criterion-c-annals} holds strictly on $\Omega_{3/4}$.

\smallskip

\noindent
 the compact domain $\Omega_{3/4}$ and using \eqref{eq:future-barrier-ineq-final-annals}, we conclude that $\Omega_{3/4}$ contains a marginally outer trapped surface in its interior. By Steps~1--2, no MOTS is present in $\mathcal M^{1/a}_{\mathrm{int}}\subset \mathcal M_{-a}$ on the initial slice. Therefore this MOTS is created strictly by the vacuum Einstein evolution from regular initial data, i.e.\ it is a \emph{dynamically formed} MOTS.

\medskip

\noindent
This completes the proof of the main theorem.

\qed

\end{document}